\documentclass[10pt]{article}

\usepackage{geometry} 
\usepackage{amssymb} 
\usepackage{amsmath, bm} 
\usepackage{amsthm} 
\usepackage{mathtools}
\usepackage{scrextend} 
\usepackage[framemethod=tikz]{mdframed}  
\usepackage{enumitem} 
\usepackage{slashed} %for slashed nabla
\usepackage{comment} 
\usepackage{verbatim} % to use comments. \comment{}
\setlist[enumerate]{itemsep=0.5pt, topsep=4pt}

\usepackage{esint}

\usepackage{multicol} % for multicols
\usepackage{cancel} %for slashing divergence div

\usepackage{float}

\usepackage{mathrsfs}

\usepackage{xcolor}

\usepackage[T1]{fontenc}
\usepackage[utf8]{inputenc}

\usepackage{cite}
\usepackage{hyperref}
\hypersetup{colorlinks=true, linkcolor=blue, citecolor=blue, urlcolor=blue}

\newcommand{\vv}{\vspace{0.5cm}}

\newcommand{\R}{\mathbb{R}}

\newcommand{\red}[1]{\textcolor{red}{#1}}
\newcommand{\ul}[1]{\underline{#1}}

\newcommand{\gtime}{{\mathfrak{g}^{(4)}}}

\newcommand{\mathc}{\mathcal{C}}
\newcommand{\ct}{\mathcal{C}^{2+\alpha, 1+\alpha/2}}
\newcommand{\cz}{\mathcal{C}^{\alpha,\alpha/2}}
\newcommand{\ctS}{\mathcal{S}^{2+\alpha, 1+\alpha/2}}

\newcommand{\tr}{\mathrm{tr}}

\newcommand{\mathcos}{\mathcal{O}_{\mathcal S, r_1}}

\numberwithin{equation}{section}
\newtheorem{thm}{Theorem}[section]
\newtheorem{lem}[thm]{Lemma}
\newtheorem{prop}[thm]{Proposition} 
\newtheorem*{prop*}{Proposition}
\newtheorem {cor}[thm]{Corollary}  

\newtheorem*{question*}{Question}

\newtheorem*{mainthm}{Main Theorem}{\bf}{\it}
{\bf}{\it}

\theoremstyle{definition}
\newtheorem{defn}[thm]{Definition}

\theoremstyle{remark}
\newtheorem{ex}{Example} 
\newtheorem{remark}[thm]{Remark}

\newif\ifshownotes
\shownotesfalse % false hides

\title{The spacetime Penrose inequality under a quasi final state hypothesis}

\author{Ahmed Ellithy \thanks{Department of Mathematics, Uppsala University. Email: \texttt{ahmed.ellithy@math.uu.se}}}
\date{}

\begin{document}

\maketitle

\begin{abstract}

Penrose's original heuristic for his eponymous spacetime inequality---a conjectured lower bound on the ADM mass in terms of the area of a horizon cross-section---relies on the black hole final state conjecture. In this paper we isolate a substantially weaker but
precise late-time condition, which we call the \emph{quasi final state hypothesis} and prove the spacetime Penrose inequality under this hypothesis. More precisely, for an asymptotically flat globally hyperbolic spacetime with a black-hole-type apparent horizon tube $\mathcal{H}_{\mathrm{app}}$  satisfying the dominant energy condition and the quasi final state hypothesis, we show that every asymptotically flat initial data set whose boundary is a MOTS cross-section
of $\mathcal{H}_{\mathrm{app}}$ satisfies the spacetime Penrose inequality. The quasi final state hypothesis requires only a late-time decay condition on the normal component of the shift and the ratio of timelike to spacelike mean curvature, together with convergence of the cross-sectional areas of $\mathcal{H}_{\mathrm{app}}$ to a finite limit. 

Our approach is new and formulated directly in spacetime. The main geometric object is what we call a \emph{tangentially maximal} hypersurface, carrying a foliation by spacelike spheres whose timelike mean curvature vanishes. We show that these hypersurfaces are governed by a quasilinear inward-parabolic PDE, and we develop the corresponding a priori theory and prove global existence. On these hypersurfaces, the spacetime Hawking mass reduces to the Riemannian Hawking mass, and the dominant energy condition gives nonnegative scalar curvature. The Riemannian Penrose inequality, combined with the area laws for dynamical and isolated horizons, then yields the result.

\end{abstract}

\tableofcontents

\section{Introduction}

One of the central open problems in mathematical general relativity is the \emph{Penrose inequality}, a conjectured geometric lower bound on the total mass of a gravitational system in terms of the area of its black holes. The conjecture originates in Penrose's celebrated 1973 heuristic argument~\cite{penrose-naked}, which combines two foundational pillars of black hole physics: the \emph{weak cosmic censorship conjecture} and the \emph{final state conjecture}.

The weak cosmic censorship conjecture, proposed by Penrose~\cite{penrose-naked}, asserts that generic asymptotically flat initial data for the Einstein equations with physically reasonable matter give rise to spacetimes in which singularities are hidden behind event horizons. The final state conjecture asserts that, at late times, the exterior region of such a spacetime settles down to a member of the Kerr family of stationary black hole solutions. Combined with Hawking's area theorem~\cite{hawking-area}, which states that the area of the event horizon is nondecreasing in time under the dominant energy condition, Penrose's heuristic proceeds as follows.

Suppose that the initial data set $(M,g,K)$ contains an outermost marginally outer trapped surface (MOTS) $\Sigma_0$ of area $A$. Weak cosmic censorship predicts that \(\Sigma_0\) lies inside the black
hole region, and hence inside the event horizon. Let \(A_{\mathcal H}(0)\)
denote the area of the corresponding initial cross-section of the event
horizon. The heuristic comparison between the initial apparent horizon
and the event horizon gives $A_{\mathcal H}(0)\ge A$,
while Hawking's area theorem gives $A_f\ge A_{\mathcal H}(0)$,
where \(A_f\) is the area of the final event horizon.

If the exterior
settles down to a Kerr solution of mass \(m_f\) and angular momentum
\(J_f\), then
\[
        A_f
        =
        8\pi\big(m_f^2+\sqrt{m_f^4-J_f^2}\big).
\]
The Bondi mass-loss formula implies that the final Kerr mass cannot exceed
the initial ADM mass:
\[
        m_{\mathrm{ADM}}\ge m_f.
\]
Since $A_f \le 16\pi m_f^2$ for Kerr, this gives
\[
m_{ADM} \ge m_f \ge \sqrt{\frac{A_f}{16\pi}} \ge \sqrt{\frac{A}{16\pi}}.
\]
The resulting inequality
\begin{equation}\label{eq:PI-conjecture}
m_{ADM} \ge \sqrt{\frac{|\Sigma_0|}{16\pi}}
\end{equation}
is the \emph{Penrose inequality}. Its validity would provide strong evidence for the compatibility of the weak cosmic censorship, black-hole area increase, and late-time relaxation to Kerr; a counterexample would force at least one part of the argument to fail.

In fact, the heuristic gives the
stronger spacetime statement that the limiting Bondi mass,
equivalently the mass \(m_f\) of the final Kerr exterior, should already
satisfy
\begin{equation}\label{eq:PI-conjecture2}
        m_{\mathrm{Bondi}}^{+}
        =
        m_f
        \ge
        \sqrt{\frac{|\Sigma_0|}{16\pi}}.
\end{equation}
The ADM version \eqref{eq:PI-conjecture} then follows from Bondi mass
monotonicity, \(m_{\mathrm{ADM}}\ge m_{\mathrm{Bondi}}^{+}\).

\medskip 

A first subtlety is already visible in this argument: it is intrinsically a \emph{spacetime} argument, involving an event horizon and the late-time exterior geometry, whereas the inequality is usually formulated on a single initial data set. In this paper we keep the spacetime point of view. We isolate a precise late-time hypothesis that is substantially weaker than full convergence to Kerr, but still captures the features of the exterior geometry that the proof actually uses. Under this hypothesis we construct a new class of hypersurfaces at late times and use them to prove the spacetime Penrose inequality.

\subsection{Known results}

The Penrose inequality is completely understood in the \emph{time-symmetric} case, that is, on an initial data set for which $K=0$. In that setting the future null expansion of a closed surface is just its scalar mean curvature, so a MOTS is exactly a minimal surface on the initial data set. Thus an outermost MOTS is simply an outermost minimal surface. Furthermore, the dominant energy condition reduces to the Riemannian requirement $R_g\ge 0$. The inequality therefore becomes a purely Riemannian statement. Huisken and Ilmanen~\cite{H-I} proved \eqref{eq:PI-conjecture} for connected horizons using a weak formulation of inverse mean curvature flow (IMCF), and Bray~\cite{bray-conformal} proved the general case, allowing multiple horizon components, by his conformal flow of metrics. His approach was later extended to dimensions $n<8$ in \cite{BrayLeeRPI}. In both approaches the optimal constant is obtained, and equality characterizes the Schwarzschild geometry.

Outside time symmetry, the picture is far less complete. Broadly speaking, there are two main strategies in the literature.

The first strategy seeks to reduce the general case to the Riemannian one by deforming the initial data. The prototype is the classical Jang equation, introduced by Jang and used by Schoen and Yau in their proof of the positive mass theorem~\cite{schoen-yau-jang}. For the Penrose inequality, however, the classical Jang equation is not sufficient: the blow-up behavior at apparent horizons and the resulting cylindrical ends do not by themselves produce a usable reduction to the Riemannian Penrose inequality; see in particular~\cite{malec-jang}. In response, Bray and Khuri introduced a \emph{generalized} Jang equation tailored to the Penrose problem, together with geometrically motivated couplings to inverse mean curvature flow and to Bray's conformal flow~\cite{bray-khuri-jang}. Their work shows that, if the corresponding coupled system can be solved with the right boundary behavior, then the Penrose conjecture follows. This program is successful in spherical symmetry, but in general it leads to a highly nontrivial coupled system whose existence and regularity theory remain open.

The second strategy is to look for monotone quasi-local masses along flows or foliations built directly in spacetime. In the Riemannian case, Huisken--Ilmanen's weak IMCF is the model example. In spacetime, the mixed signature of the normal bundle of a codimension-two surface makes the analogous problem far more delicate. Bray, Hayward, Mars, and Simon identified a class of \emph{uniformly expanding} spacetime flows for which the Hawking mass is monotone~\cite{bray-hayward-mars-simon}. In null directions local existence is available, but in spacelike directions the resulting system leaves a one-parameter freedom and becomes \emph{forward-backward parabolic}, so one cannot expect a general local existence theory of the usual type. Later work of Bray and Jauregui, and of Bray, Jauregui, and Mars, showed that the spacetime Hawking mass is monotone under uniformly area-expanding flows of \emph{time-flat} surfaces~\cite{timeflat,timeflat2}. More recently, Hirsch introduced PDE systems on initial data sets modeling double-null foliations, proved Hawking-mass monotonicity in that framework, and established an accompanying existence theory~\cite{sven}. These developments are promising, but they still fall short of a general existence theory for a globally defined spacelike monotone flow reaching infinity.

Several additional partial results are known. In spherical symmetry the full inequality is classical, going back to work of Malec and \'O Murchadha and Hayward~\cite{malec-omurchadha,hayward-PI}; Bray and Khuri later recovered this regime via the generalized Jang equation~\cite{bray-khuri-jang}. Recently, Allen, Bryden, Kazaras, and Khuri proved a Penrose-type inequality with a universal but nonsharp constant for general three-dimensional asymptotically flat and asymptotically hyperboloidal initial data~\cite{allen-bryden-kazaras-khuri}. Khuri and Kunduri proved the sharp spacetime Penrose inequality under a higher-dimensional cohomogeneity-one symmetry assumption~\cite{khuri-kunduri}. Recent work of An and He proves spacetime Penrose inequalities in perturbative
black-hole-formation and Kerr regimes by hyperbolic evolution methods~\cite{an-he-kerr-black-hole-formation-penrose}. Nevertheless, a proof of the full sharp inequality in the general three-dimensional non-time-symmetric setting remains open.

A line of work closer in spirit to the present paper treats Penrose-type inequalities directly in spacetime, especially along null hypersurfaces. Building on \cite{sauter-null-penrose}, Alexakis proved a perturbative spacetime/null Penrose inequality for black-hole spacetimes close to the Schwarzschild exterior, using a deformation of an initial null hypersurface so that the associated luminosity foliation has monotone Hawking mass and becomes asymptotically round~\cite{alexakis-schwarzschild-penrose} (see also \cite{LeNullPenrose2024} for related results). This is similar to the present argument in that both approaches obtain the inequality by replacing the original hypersurface with a better-adapted comparison hypersurface. The two constructions, however, are quite different: Alexakis deforms a null hypersurface in order to obtain the desired asymptotic structure at null infinity, whereas here we construct spacelike comparison hypersurfaces by solving a prescribed geometric equation on the exterior spacetime. Related null-hypersurface approaches were developed by Mars--Soria, who studied monotonicity and asymptotic behavior of Hawking-type energies along null foliations~\cite{MarsSoria15,mars-soria-null-penrose}. Roesch subsequently introduced a new quasi-local mass, monotone along suitably convex foliations of null cones, and used it to prove a null Penrose inequality under broad convexity hypotheses~\cite{roesch-null-penrose-quasilocal-mass}; in later work he established stability results for quasi-round MOTS and for the Schwarzschild null Penrose inequality~\cite{roesch-quasiround-mots}.

It has also been observed that there are in fact two natural formulations of the Penrose inequality: an \emph{initial data formulation}, which posits the inequality as a statement about a single spacelike hypersurface $(M,g,K)$, and a \emph{spacetime formulation}, which uses the global causal structure of the ambient spacetime. Frauendiener already emphasized that the four-dimensional spacetime structure should be used in an essential way~\cite{frauendiener-PI}, and Mars' survey explains in detail why the initial-data formulation is ambiguous away from time symmetry~\cite{mars-PI}. Specifically, Mars~\cite{mars-PI} provided a comprehensive analysis of the issue, emphasizing that the initial-data formulation requires careful specification of which surface (MOTS, minimal surface, generalized apparent horizon) should appear in the inequality, and that careless choices lead to false statements.

Indeed, Ben--Dov~\cite{ben-dov} constructed a spherically symmetric
counterexample to a commonly stated apparent-horizon version of the Penrose
inequality, showing that the ADM mass need not bound the square root of the
area of an arbitrary outermost MOTS on a single initial data set. This is
particularly relevant for us, since the theorem proved below has a superficially
similar formulation involving an outermost MOTS. The distinction is that our
surface is tied to a future spacetime horizon structure and a quasi final state
hypothesis; we explain in subsection~\ref{sub:intro-bendov} and more in  subsection~\ref{sub:bendov-counterexamples} how this
excludes the mechanism in Ben--Dov's example.

Carrasco and Mars~\cite{carrasco-mars} constructed further counterexamples---
axially symmetric slices of the Kruskal spacetime---to the version of the
Penrose inequality formulated in terms of \emph{generalized apparent horizons}
(surfaces satisfying \(H_\Sigma = |\tr_\Sigma K|\), as proposed
in~\cite{bray-khuri2010}). These counterexamples do not contradict Penrose's
original spacetime heuristic, but they do illustrate the subtlety of purely
initial-data formulations. In this paper we therefore work with a spacetime
formulation, in which the inequality is tied to the future horizon structure
rather than to an arbitrary horizon surface on a single slice.

\subsection{Summary of results and approach}

Our point of departure is that, at late times, one does not need the full statement that the exterior converges to Kerr. In this paper we prove the spacetime Penrose inequality under a precise late-time condition
that is strictly weaker than the final state conjecture. We now describe the hypothesis, the
result, and the method.

\medskip
\noindent
\textbf{The quasi final state hypothesis.}
We work on an asymptotically flat globally hyperbolic spacetime satisfying the dominant energy
condition, with a piecewise smooth outermost apparent horizon tube $\mathcal{H}_{\mathrm{app}}$. We
formulate a \emph{quasi final state hypothesis} (Definition~\ref{def:spi-QFS}), which asserts,
roughly, that the exterior of the last smooth horizon piece $\mathcal{H}_{\mathrm{final}}$ admits a late-time chart 
$$(t,r,p) \in (\ul{T},\infty)\times [r_0,\infty) \times S^2, \qquad \mathcal{H}_{\mathrm{final}} = (\ul{T},\infty) \times \{r_0\} \times S^2,$$
 regular up to the
horizon and asymptotically flat on each fixed exterior tail $r\geq r_1>r_0$, in which: 
\begin{enumerate}[label=\textup{(\roman*)}]
\item the normal component of the shift decays to zero as $t\to\infty$ on every fixed tail;
\item the ratio of the timelike to spacelike mean curvature of the coordinate spheres decay to zero as $t\to\infty$ on every fixed tail;
\item the areas of cross-sections of $\mathcal{H}_{\mathrm{final}}$ converge to a finite limit as
$t\to\infty$.
\end{enumerate}

The connection to the final state conjecture is the following. In the Kerr spacetime with
Boyer--Lindquist coordinates, the shift vector has no radial component, the timelike mean
curvature of the coordinate spheres vanishes identically, and every cross-section of the horizon
has the same area. Therefore, if the exterior of a spacetime settles down to Kerr as the final
state conjecture predicts, all three conditions above are satisfied. The quasi final state
hypothesis asks only for the decay of these specific quantities, without requiring convergence
of the full exterior geometry to any particular stationary solution. In particular, apart from the regularity and asymptotic flatness assumptions, the metric coefficients are unconstrained and not required to have a prescribed late-time limit; they may oscillate, drift, or otherwise evolve nontrivially as $t\to \infty$, with no decay or convergence required.

\medskip
\noindent
\textbf{The Penrose inequality.}
Under the quasi final state hypothesis, we prove that any asymptotically flat initial data set
$(M_\ast, g_\ast, K_\ast)$ embedded in such a spacetime, whose boundary $S_\ast$ is a MOTS
cross-section of $\mathcal{H}_{\mathrm{app}}$, satisfies
\begin{equation}\label{eq:PI-conjecture3}
m_{ADM}(M_\ast, g_\ast, K_\ast) \ge \sqrt{\frac{|S_\ast|}{16\pi}}.
\end{equation}
Here $m_{ADM}$ denotes the invariant ADM mass of the asymptotic end. The horizon tube is allowed to have finitely many jump times, provided the area of the outermost MOTS does not decrease across any jump. 

\medskip
\noindent
\textbf{The method.}
The proof is based on a new type of hypersurfaces in the spacetime, which we call \emph{tangentially maximal},
characterized by the existence of a foliation---the \emph{TMCF foliation} (timelike-mean-curvature-free
foliation)---in which the timelike mean curvature of every leaf vanishes. Spacelike tangentially maximal
hypersurfaces enjoy two key properties under the dominant energy condition:
\begin{enumerate}[label=\textup{(\alph*)}]
\item the spacetime Hawking mass $m_H^{\mathrm{ST}}$ of each leaf coincides with its Riemannian Hawking mass $m_H^{\mathrm{Riem}}$;
\item the induced Riemannian metric has nonnegative scalar curvature.
\end{enumerate}
This is precisely the structure needed to apply the Riemannian Penrose inequality of
Huisken--Ilmanen~\cite{H-I}.

Given a spacetime satisfying the quasi final state hypothesis, we prove the existence of
tangentially maximal slices at late times by solving a quasilinear inward-parabolic
PDE for the graph functions describing the TMCF foliation. Taking a sequence of such slices whose
boundaries approach the late horizon, and combining properties \textup{(a)} and \textup{(b)} above with
Huisken--Ilmanen's theorem and the area law for dynamical and isolated horizons, we obtain the chain of inequalities
\[
m_{ADM}
\;\ge\;
m_H^{\mathrm{Riem}}(\Sigma_j)
\;=\;
m_H^{\mathrm{ST}}(\Sigma_j)
\;\xrightarrow{j\to\infty}\;
\sqrt{\frac{A_\infty}{16\pi}}
\;\ge\;
\sqrt{\frac{|S_\ast|}{16\pi}},
\]
where $\Sigma_j$ are the boundary spheres of the tangentially maximal slices and $A_\infty$ is
the limiting horizon area. Under an additional Bondi-Sachs asymptotic structure assumption, the same construction also yields the stronger inequality in \eqref{eq:PI-conjecture2} with the final Bondi mass $m_B^+$ in place of $m_{ADM}$, allowing for Bondi mass loss; see Appendix~\ref{app:bondi-annuli}.

 We now describe each ingredient in more detail.

\subsection{Tangentially maximal hypersurfaces and the TMCF foliation}

Let $(\mathcal{M}, \gtime)$ be an asymptotically flat spacetime, and let $\Sigma$ be a closed spacelike surface embedded in $\mathcal{M}$. The normal bundle of $\Sigma$ in $\mathcal{M}$ is a rank-two Lorentzian vector bundle, and the \emph{mean curvature vector} ${\bf H}_\Sigma$ of $\Sigma$ in $\mathcal{M}$ is a section of this normal bundle. Given any spacelike hypersurface $M \hookrightarrow \mathcal{M}$ containing $\Sigma$, with induced data $(g, K)$, the mean curvature vector decomposes as
\begin{equation}\label{eq:intro-H-decomp}
{\bf H}_\Sigma = H_\Sigma \, {\bf n} - (\tr_\Sigma K) \, {\bf T},
\end{equation}
where ${\bf T}$ is the future-pointing timelike unit normal to $M$, ${\bf n}$ is the outward unit normal to $\Sigma$ in $M$, $H_\Sigma$ is the mean curvature of $\Sigma$ in $(M,g)$ (the \emph{spacelike mean curvature}), and $\tr_\Sigma K$ is the tangential trace of $K$ along $\Sigma$ (the \emph{timelike mean curvature}). The Lorentzian norm of ${\bf H}_\Sigma$ is
\[
|{\bf H}_\Sigma|^2 = H_\Sigma^2 - (\tr_\Sigma K)^2.
\]
While ${\bf H}_\Sigma$ depends only on the embedding of $\Sigma$ in $\mathcal{M}$, the individual components $H_\Sigma$ and $\tr_\Sigma K$ depend on the choice of ambient hypersurface $M$.

Now let $M \subset \mathcal{M}$ be a hypersurface equipped with a foliation by a one-parameter family of closed spacelike surfaces $\{S_r\}_{r \ge r_0}$.

\begin{defn}\label{def:intro-tmcf}
A foliation $\{S_r\}_{r \ge r_0}$ of a hypersurface $M$ is called a \emph{TMCF foliation} (timelike-mean-curvature-free foliation) if the mean curvature vector ${\bf H}_{S_r}$ of each leaf is tangent to $M$. For any point $p \in M$ in which $M$ is spacelike, the decomposition \eqref{eq:intro-H-decomp} shows that this condition is equivalent to the vanishing of the timelike mean curvature:
\[
 \left. \tr_{S_r} K \right|_p = 0,
\]
where $K$ is the second fundamental form of $M$ in $\mathcal{M}$.

The hypersurface $M$ equipped with a TMCF foliation is called a \emph{tangentially maximal hypersurface}. If the hypersurface is spacelike, then the timelike mean curvature of each leaf vanishes:
\[
\tr_{S_r} K = 0 \qquad \text{for all } r \ge r_0,
\]
and we call $(M,g,K)$ a \emph{tangentially maximal initial data set}.
\end{defn}

\begin{remark}[Topology and weak formulations]
The existence of a smooth TMCF foliation by topological spheres on a spacelike hypersurface forces the underlying manifold to be topologically $\R^3 \setminus B$. However, there is a natural variational formulation associated to the TMCF condition: the first variation of the area functional of $S_r$ in the direction of ${\bf T}$ is proportional to $\int_{S_r} \tr_{S_r} K \, d\mu$, so the TMCF condition can be interpreted as the vanishing of this variation. This suggests the possibility of a weak formulation of the TMCF foliation via variational methods that can allow for more general topologies, which we do not pursue here.
\end{remark}

The tangentially maximal condition has two important consequences that make it the natural condition for the Penrose inequality. For a closed surface $\Sigma \subset M$, let
\[
m_H^{\mathrm{ST}}(\Sigma)
:=
\sqrt{\frac{|\Sigma|}{16\pi}}
\left(
1-\frac{1}{16\pi}\int_\Sigma |{\bf H}_\Sigma|^2\,d\mu_\Sigma
\right)
\]
denote its spacetime Hawking mass, and let
\[
m_H^{\mathrm{Riem}}(\Sigma)
:=
\sqrt{\frac{|\Sigma|}{16\pi}}
\left(
1-\frac{1}{16\pi}\int_\Sigma H_\Sigma^2\,d\mu_\Sigma
\right)
\]
denote its Riemannian Hawking mass computed in $(M,g)$.

\begin{enumerate}[label=\textup{(\roman*)}]
\item \emph{Hawking mass reduction.} Since $\tr_{S_r} K = 0$, the decomposition \eqref{eq:intro-H-decomp} gives $|{\bf H}_{S_r}|^2 = H_{S_r}^2$, so the spacetime Hawking mass of $S_r$ reduces to its Riemannian Hawking mass:
\[
m_H^{\mathrm{ST}}(S_r) = m_H^{\mathrm{Riem}}(S_r).
\]

\item \emph{Nonnegative scalar curvature.} If $(\mathcal{M}, \gtime)$ satisfies the spacetime dominant energy condition and the tangentially maximal hypersurface is spacelike, then the induced Riemannian metric $g$ has nonnegative scalar curvature, $R_{g} \ge 0$. This follows from the Hamiltonian constraint and the algebraic observation that the tangentially maximal condition implies $|K|_{g}^2 - (\tr_{g} K)^2 \ge 0$; see Proposition~\ref{prop:R-nonneg-tm}.
\end{enumerate}

The following consequence is proved in Corollary~\ref{cor:HI-comparison}.

\begin{prop}
Let $(M, g, K)$ be a tangentially maximal initial data set in a spacetime satisfying the dominant energy condition, with TMCF foliation $\{S_r\}_{r \ge r_0}$. Suppose $M$ is asymptotically flat and $S_{r_0} = \partial M$ is connected and outward minimizing. Then
\[
m_{ADM}(M, g) \ge m_H^{\mathrm{ST}}(S_{r_0}).
\]
\end{prop}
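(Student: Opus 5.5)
The plan is to reduce the statement to the Riemannian Penrose-type bound of Huisken--Ilmanen for outward minimizing boundaries, using the two structural facts (i) and (ii) recorded just above.

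\textbf{Step 1: scalar curvature.} By property (ii), i.e.\ Proposition~\ref{prop:R-nonneg-tm}, $R_g\ge 0$ on $M$. Concretely, the Hamiltonian constraint gives
\[
R_g = 16\pi\mu + |K|_g^2 - (\tr_g K)^2 .
\]
The dominant energy condition gives $\mu\ge|J|\ge0$. For the algebraic part, at each point write $K$ in a frame $\{{\bf n}, e_1,e_2\}$ adapted to the leaf $S_r$ through that point. Then $\tr_g K = K_{nn} + \tr_{S_r}K = K_{nn}$, because $\tr_{S_r}K=0$. Hence
\[
|K|^2 - (\tr K)^2 \;\ge\; K_{nn}^2 + |K^{S}|^2 - K_{nn}^2 \;\ge\; 0 .
\]

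\textbf{Step 2: Huisken--Ilmanen.} Apply their theorem to $(M,g)$, which is asymptotically flat, has $R_g\ge0$, and has connected, outward minimizing boundary $S_{r_0}$. Weak IMCF started from $S_{r_0}$ has monotone Geroch--Hawking mass. This monotonicity uses exactly $R_g\ge0$ and the connectedness of the level sets, which outward minimality of the initial surface preserves. The mass tends to $m_{ADM}$ at infinity, so
\[
m_{ADM}(M,g)\ \ge\ m_H^{\mathrm{Riem}}(S_{r_0}).
\]
This requires the form of Huisken--Ilmanen that starts from an outward minimizing boundary rather than a minimal one. Their paper covers this case, since the weak flow may jump at time zero to the strictly minimizing hull, which for an outward minimizing surface has the same area and no larger $\int H^2$.

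\textbf{Step 3: identify the masses.} Property (i) gives $m_H^{\mathrm{Riem}}(S_{r_0}) = m_H^{\mathrm{ST}}(S_{r_0})$, because $\tr_{S_{r_0}}K=0$ on the boundary leaf as well. Here we need the TMCF condition to hold up to $r=r_0$, which the foliation indexed by $r\ge r_0$ provides. We also need the ADM mass of $(M,g,K)$ to coincide with the Riemannian ADM mass of $g$, which holds since the energy part only involves $g$.

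\textbf{Main obstacle.} The main obstacle is Step 2. One has to check that the regularity and decay of $(g,K)$ fit Huisken--Ilmanen's hypotheses, and that the outward-minimizing version with a nonminimal boundary applies, with the jump to the strictly minimizing hull handled as above. One must also make sure the scalar-curvature decay needed for mass convergence follows from the decay of $K$ in the asymptotically flat assumption.
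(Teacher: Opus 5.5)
Your proposal is correct and is essentially the paper's argument (Corollary~\ref{cor:HI-comparison}): $R_g\ge0$ from the Hamiltonian constraint plus $\tr_{S_r}K=0$ as in Proposition~\ref{prop:R-nonneg-tm}, then the Huisken--Ilmanen comparison for a connected outward-minimizing boundary, then $|{\bf H}_{S_{r_0}}|^2=H_{S_{r_0}}^2$ on the boundary leaf. One small imprecision, which does not affect the conclusion: in Huisken--Ilmanen the connectedness of the weak IMCF level sets comes from the topology of $M$ (here a product $[r_0,\infty)\times S^2$ via the foliation) together with the minimizing-hull property of the flow, not from outward minimality of $S_{r_0}$.
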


This follows by applying Huisken--Ilmanen's IMCF comparison to $(M, g)$ in \cite{H-I}, which has nonnegative scalar curvature and a connected outward-minimizing boundary. The key point is that the right-hand side is a \emph{spacetime} quantity---it depends only on the embedding of $S_{r_0}$ in $\mathcal{M}$, not on the choice of ambient hypersurface.

The TMCF foliation is closely related to, but distinct from, mean curvature flow in spacetime. Mean curvature flow specifies both direction and speed: the surface moves in the direction of its mean curvature vector with speed equal to its norm. The TMCF condition is weaker. It asks only that the leaves be arranged inside a spacelike hypersurface so that the mean curvature vector of each leaf has no timelike component relative to that hypersurface. Equivalently, it fixes the relevant spacetime direction singled out by ${\bf H}$ but leaves the radial parametrization free. This additional freedom is exactly what leads to a tractable parabolic graph equation.

\begin{ex}[Kerr spacetimes]\label{ex:kerr-intro}
In the Kerr spacetime with Boyer--Lindquist coordinates $(t,r,\vartheta,\varphi)$, the constant-$t$ slices of the exterior region are tangentially maximal, and the coordinate spheres $\{r = \mathrm{const}\}$ form a TMCF foliation (Proposition~\ref{prop:kerr}). This is consistent with the well-known fact that the constant-$t$ slices in Kerr are \emph{not} maximal ($\tr_g K \ne 0$) but do have vanishing tangential trace of $K$ along the Boyer--Lindquist spheres.
\end{ex}

\begin{ex}[Time-symmetric initial data sets]\label{ex:timesym-intro}
Any asymptotically flat time-symmetric initial data set $(M, g, 0)$ with $M \cong \R^3 \setminus B$ is automatically tangentially maximal, since $K = 0$ implies $\tr_{S_r} K = 0$ for every surface. Thus the TMCF foliation framework contains the time-symmetric setting as a special case. This includes, for example, the standard time-symmetric slices of Schwarzschild and Reissner--Nordstr\"om, as well as more general asymptotically flat time-symmetric black-hole data admitting a spherical foliation.
\end{ex}

\subsection{The TMCF equation and the existence result}

We now describe the PDE governing the TMCF foliation when the tangentially maximal hypersurface is spacelike and realized as a graph. In a certain ``good gauge'', we also describe the PDE for general tangentially maximal hypersurfaces that are not necessarily spacelike.

 We work in the late-time exterior region of the spacetime, which we identify with
\[
\mathcal{M} = (\underline{T}, \infty) \times (r_0, \infty) \times S^2,
\]
where the inner boundary corresponds to the last smooth piece of the horizon. The spacetime metric is written in the general ADM form
\[
\gtime = -(N^2 - |\beta|_{g_t}^2)\,dt^2 + 2\beta \odot dt + g_t,
\qquad
 g_t=(\lambda^2+|b|_\gamma^2)\,dr^2+2b\odot dr+\gamma,
\]
with $\beta$ having no $dt$-component and $b$ having no $dt$- or $dr$-component.

A spacelike tangentially maximal hypersurface with TMCF foliation $\{S_r\}$ is then realized as the graph $M_f = \{t = f(r,p)\}$ of a function $f: (r_0, \infty) \times S^2 \to \mathbb{R}$, provided $f$ satisfies the \emph{TMCF equation}:
\begin{equation}\label{eq:intro-tmcf}
\cancel{div}_{\gamma_f}\big(\beta_f^T\big) + H_{f,r}\,\beta_f({\bf n}_f) - \frac{1}{2}\tr_{\gamma_f}(\partial_t \gamma_f) = 0.
\end{equation}
Here $\gamma_f$ is the induced metric on the graph spheres $S_r\subset M_f$, $H_{f,r}$ is their mean curvature in $(M_f,g_f)$, ${\bf n}_f$ is the outward unit normal in $(M_f,g_f)$, and $\beta_f^T$ is the tangential part of the shifted spacetime $1$-form $\beta_f:= \beta - (N^2-|\beta|^2_{g_t})df$.

Equation~\eqref{eq:intro-tmcf} is valid in arbitrary coordinates. In general it has a geometric quasilinear parabolic structure. On a \emph{good gauge slab}, namely in a chart for which
\[
b\equiv 0,
\qquad
\beta_r\equiv 0,
\]
it becomes an explicit quasilinear backward-parabolic equation of the form
\begin{equation}\label{eq:intro-parabolic}
f_r + \mathcal{A}(r, p, f, \slashed{\nabla} f)\,:\,\slashed{\nabla}^2_\gamma f = \mathcal{B}(r, p, f, \slashed{\nabla} f),
\end{equation}
where $\mathcal{A}$ is a positive definite $(2,0)$-tensor, $\mathcal{B}$ is a lower-order term, and ``$:$'' denotes the contraction of $(2,0)$ tensor against a $(0,2)$ tensor. Thus the TMCF equation is quasilinear backward-parabolic in the outward radial variable $r$, and forward-parabolic in the inward direction. Section~2 derives the equation in general form and its explicit good-gauge version. The appendix then shows that, under a late-time decay assumption on the geometric normal shift, such a good gauge can always be achieved on sufficiently late slabs away from the horizon.

Furthermore, in the good gauge the explicit equation
\eqref{eq:intro-parabolic} has a geometric meaning beyond the spacelike
regime. Indeed, if the graph spheres $S_r$ remain spacelike and have nonvanishing spacelike mean curvature (i.e. ${H_{f,r}} \neq0$), then \eqref{eq:intro-parabolic} is equivalent to the coordinate-free condition
\[
{\bf H}_{S_{f,r}}\in TM_f.
\]
Thus the good-gauge equation continues to describe tangentially maximal graph
hypersurfaces even when \(M_f\) is timelike or null (see proposition \ref{prop:good-gauge-geometric-tmcf}). In particular, in the good gauge, a hypersurface $M_f$ that is not necessarily spacelike is tangentially maximal if and only if equation \eqref{eq:intro-parabolic} is satisfied. On the other hand, equation \eqref{eq:intro-tmcf} makes sense only if $M_f$ is spacelike. 

\vv
In the good gauge, the lower-order term admits a natural decomposition $\mathcal{B} = \Xi|_{t=f} + \mathcal{B}_2$, where $\mathcal{B}_2$ vanishes when $\slashed{d}f = 0$. The leading forcing term is
\begin{equation}\label{eq:intro-Xi}
\Xi := \frac{\lambda}{N}\,\frac{\tr_{S_{t,r}} K_t}{H_{t,r}}.
\end{equation}
This is the ratio of the timelike to the spacelike mean curvature of the coordinate spheres, rescaled by the factor $\lambda/N$ natural for the parabolic structure of the equation. In Kerr, and likewise in the basic time-symmetric model examples, one has $\Xi \equiv 0$, and the TMCF equation admits the trivial solution $f = \mathrm{const}$. 

We measure the forcing by the tail-size functions
\begin{align*}
\mathfrak{X}_0(r, T) &:= r \sup_{\substack{t \ge T \\ p \in S^2}} |\Xi(t, r, p)|, \\
\mathfrak{X}_1(r, T) &:= r \sup_{\substack{t \ge T \\ p \in S^2}} |\slashed{d}\Xi(t, r, p)|_\gamma, \\
\mathfrak{X}_2(r, T) &:= r^2 \sup_{\substack{t \ge T \\ p \in S^2}} |\slashed{\nabla}^2 \Xi(t, r, p)|_\gamma,
\end{align*}
as well as the derived quantities on each fixed tail $r\ge r_1>r_0$:
\[
G_{r_1}(T):=\int_{r_1}^\infty \frac{\mathfrak X_0(\sigma,T)}{\sigma}\,d\sigma,
\qquad
\Phi_{j,r_1}(T):=\int_{r_1}^\infty \mathfrak X_j(\sigma,T)\,d\sigma \quad (j=1,2),
\]
\[
\Omega_{r_1}(T):=\sup_{r\ge r_1}\mathfrak X_0(r,T),
\qquad
\Psi_{r_1}(T):=\sup_{t\ge T}\|r\Xi(t,\cdot)\|_{\cz_{-\tau}(M_{r_1,\infty})}.
\]
On fixed late slabs we use the corresponding slab versions obtained by restricting the time supremum to the slab. In addition, the late-time gauge reduction uses the normal-shift tail
\[
\mathfrak B^\sharp(T;r_1)
:=
\sup_{t\ge T}\|\beta^\perp(t,\cdot)\|_{{\ct}^{\sharp}_{-\tau}(M_{r_1,\infty})},
\qquad
\beta^\perp:=\beta({\bf n}_t),
\]
on each fixed tail $r\ge r_1>r_0$.

Our existence theorem may be summarized informally as follows.

\vv

\begin{mainthm}[Informal; see Theorem~\ref{thm:global-existence}]
Let $(\mathcal M,\gtime)$ be an asymptotically flat exterior spacetime of order $\tau\in(1/2,1)$ on
\[
\mathcal M=(\underline T,\infty)\times (r_0,\infty)\times S^2,
\]
with coefficient tuple $(N,\lambda, \beta, b,\gamma)$ in the class $\ctS_{-\tau}$ introduced in Section~3. Then:
\begin{enumerate}[label=\textup{(\roman*)}]
\item for every $T_0>\underline T$, there exists a unique tangentially maximal hypersurface on a sufficiently far-out exterior region whose graph converges to the time slice $t=T_0$ at infinity;

\item if the normal-shift tail $\mathfrak B^\sharp(T;r_1)$ and the basic forcing tails $G_{r_1}(T)$ and $\Phi_{1,r_1}(T)$ tend to $0$ on every fixed tail, then for every prescribed tail radius $r_1>r_0$ and all sufficiently large $T_0$ there exists a unique tangentially maximal graph on $(r_1,\infty)\times S^2$ with $\lim_{r\to\infty}f=T_0$;

\item if, in addition, either the pair $(\Phi_{2,r_1}(T),\Omega_{r_1}(T))$ tends to $(0,0)$ on every fixed tail or $\Psi_{r_1}(T)\to0$ on every fixed tail, then the graph in \textup{(ii)} is spacelike, hence defines a tangentially maximal initial data set, and it satisfies quantitative $C^0$, gradient, Hessian, and radial-slope bounds on the corresponding good-gauge slab.
\end{enumerate}
\end{mainthm}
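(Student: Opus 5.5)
Since the TMCF equation \eqref{eq:intro-parabolic} is forward-parabolic in the inward variable, I will treat the reversed radius $s=-r$ as a time variable and $\lim_{r\to\infty}f=T_0$ as initial data posed at $r=\infty$. For (i), I first truncate: on $[r_1,R]\times S^2$ with $R$ large, I solve the equation inward from the data $f(R,\cdot)=T_0$. By standard quasilinear parabolic theory in $\ct$ (linearize, Schauder estimates, contraction on a short interval), there is a local solution near $r=R$. The coefficients $\mathcal A,\mathcal B$ are controlled because the coefficient tuple lies in $\ctS_{-\tau}$, and asymptotic flatness makes $\mathcal A$ close to the round-sphere operator $r^{-2}\gamma^{-1}$ at large $r$. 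For large $r$ the forcing $\Xi=O(r^{-1-\tau})$ is integrable. So on a sufficiently far-out region a contraction argument in a weighted space gives uniform bounds $|f-T_0|\lesssim r^{-\tau}$ independent of $R$. Letting $R\to\infty$ with Arzelà--Ascoli then yields a solution with the correct limit. Uniqueness follows from the maximum principle applied to the difference of two solutions, using that the difference tends to $0$ at infinity.

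For (ii), I first use the appendix gauge result: when $\mathfrak B^\sharp\to0$, a good gauge $b=0$, $\beta_r=0$ exists on late slabs over $r\ge r_1$. In that gauge I derive a priori estimates in order.

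\textbf{$C^0$.} I compare $f$ with the barriers $T_0\pm\int_r^\infty \mathfrak X_0(\sigma,T)\sigma^{-1}d\sigma$, adjusting for $\mathcal B_2$, which vanishes at $\slashed d f=0$. This gives $|f-T_0|\le C\,G_{r_1}(T)$, which is small. It also keeps the graph inside the late slab $t\ge T$, so the tail quantities evaluated at $T$ remain valid; this is a continuity/bootstrap argument in $T_0$.

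\textbf{Gradient.} I apply the maximum principle to $|\slashed d f|^2_\gamma$ (rescaled by $r^2$), obtained by differentiating the equation tangentially. The source is $\slashed d\Xi$, which gives $r|\slashed df|\lesssim\Phi_{1,r_1}(T)$.

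With small $C^0$ and gradient bounds, the continuation argument of (i) extends all the way to $r_1$. This part is routine once the barriers are set up.

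For (iii), spacelikeness needs $|df|_{g_t}$ to be small compared with $N$. This requires control of the radial slope $f_r$ as well as $\slashed df$. I get $f_r$ from the equation itself, which requires a Hessian bound. I will obtain that bound in one of two ways:
\begin{enumerate}[label=\textup{(\alph*)}]
\item differentiate the equation twice and apply a maximum principle to $r^4|\slashed\nabla^2f|^2$, with source $\mathfrak X_2$; here $\Omega_{r_1}$ controls the zeroth-order contribution $f_r\approx\Xi$;
\item interior/global Schauder estimates in the weighted Hölder space $\cz_{-\tau}$, which give $\|r\,\slashed\nabla^2 f\|\lesssim\Psi_{r_1}$.
\end{enumerate}
Either route gives $|f_r|+r|\slashed df|+r^2|\slashed\nabla^2 f|\to0$ uniformly on the tail as $T_0\to\infty$. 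Since $N$ is bounded below on $r\ge r_1$, this shows the graph is spacelike. Proposition~\ref{prop:good-gauge-geometric-tmcf} then identifies the graph as tangentially maximal, so \eqref{eq:intro-tmcf} holds.

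The main obstacle I expect is the bootstrap in (ii)--(iii) near $r_1$. The solution must remain inside the region where the good gauge and the $t\ge T$ tail bounds hold, and there is no decay in $r$ to absorb constants there. I would handle this by choosing $T_0$ large so that all the tail quantities are small, and closing a continuity argument in the inward variable on the set where the estimates hold with twice the constants.
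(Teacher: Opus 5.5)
\paragraph{Comparison with the paper.} Your overall architecture matches the paper's. You pass to the good gauge on late slabs via Appendix~\ref{app:late-slab-gauge}, evolve inward from the constant profile $T_0$ on a large sphere $S_R$, and use maximum-principle bounds for $f-T_0$ (with $\mathcal B_2=0$ at critical points) and for $|\slashed d f|_\gamma^2$, with sources $\mathfrak X_0,\mathfrak X_1$. The Hessian bound comes either from the maximum principle (via $\mathfrak X_2,\Omega$) or from weighted Schauder (via $\Psi$). You then read $f_r$ off the equation, get spacelikeness from $\mu_f=\mu_f^T-\frac{N^2}{\lambda^2}f_r^2$, run an open--closed continuation in $r$ with the tail radius fixed before $T_0$, and let $R\to\infty$. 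At that level (ii)--(iii) are fine.

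\paragraph{The gap in (i).} There you apply the explicit quasilinear form \eqref{eq:intro-parabolic} in the given chart. Your barriers also implicitly use the splitting $\mathcal B=\Xi|_{t=f}+\mathcal B_2$ with $\mathcal B_2|_{\slashed d f=0}=0$ from Remark~\ref{rem:c1c2}. Both exist only in the good gauge $b=0$, $\beta_r=0$. In a general ADM chart, Lemma~\ref{lem:PDEtype} and Corollary~\ref{cor:general-parabolicity} give only an implicit, fully nonlinear relation $f_r=\Phi(\dots,\slashed\nabla^2 f)$ near an already known solution. Moreover, $\mathcal F_{f_r}$ is only known to have the parabolic sign along solutions. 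The paper handles (i) by first running the gauge reduction of Appendix~\ref{app:late-slab-gauge} on a far-out tail, over a short time slab around $T_0$. There, spatial decay of $\beta^\perp$ replaces late-time decay of $\mathfrak B^\sharp$. This step is where the $\sharp$-regularity assumed in Theorem~\ref{thm:global-existence} is used, and the forcing of the new foliation must be re-estimated as in Lemma~\ref{lem:app-Xi-stability}. Only then does it repeat the continuation argument of (ii).

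\paragraph{The $R$-independent bounds in (i).} Your mechanism is a contraction in an $r^{-\tau}$-weighted space, and it rests on $\Xi=O(r^{-1-\tau})$. The coefficient class does not give this. From $\tr_{S_{t,r}}K_t=O(r^{-1-\tau})$ and $H_{t,r}\sim 2/r$ you only get $\Xi=O(r^{-\tau})$, which the paper itself uses in the proof of Proposition~\ref{prop:grad-f}. That is not integrable for $\tau<1$. Integrability is exactly the finite-forcing-tail condition built into Definition~\ref{def:ctS-nonstat}\,(iii), and with only that condition the weighted contraction does not close. Instead, use the maximum-principle bounds you already set up for (ii): $|f-T_0|\le\int_r^\infty\sup|\Xi|\,d\sigma$ and $r|\slashed d f|_\gamma\le C\int_r^\infty\mathfrak X_1\,d\sigma$. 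These are uniform in $R$ and are what the paper uses.

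\paragraph{Two smaller slips.}
\begin{itemize}
\item $\mathcal A=\mathfrak a_f^{-1}\gamma_f^{-1}\approx\frac r2\gamma^{-1}\approx\frac1{2r}\gamma_{S^2}^{-1}$, not $r^{-2}\gamma^{-1}$. So parabolicity is uniform in $s=-\log r$, not in $-r$, and this matters for your Schauder route (b).
\item Spacelikeness needs $|df|$ small relative to $1/N$, not relative to $N$. What you need is an upper bound on $N$ and $|\beta^T|_\gamma$ and a lower bound $\lambda\ge\delta_*$, which give $\mu_f^T>\frac14$ and $\frac N\lambda|f_r|<\frac1{\sqrt8}$. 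A lower bound on $N$ plays no role here.
\end{itemize}
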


The proof combines four ingredients: the late-slab gauge reduction from a general ADM chart to the good gauge, local-in-$r$ existence from a large outer sphere via standard quasilinear parabolic theory, sharp a priori estimates on truncated annuli, and a continuation argument that closes the bootstrap and then passes to the limit as the outer radius tends to infinity.

\subsection{The spacetime Penrose inequality}

We now describe the spacetime setting for our Penrose inequality. Let $(\widehat{\mathcal{M}}, \gtime)$ be a globally hyperbolic asymptotically flat spacetime satisfying Einstein's equations with the spacetime dominant energy condition. We fix an asymptotic rest frame and a time function $t$ whose level sets $\Sigma_t$ are Cauchy hypersurfaces. Assume that each slice $\Sigma_t$ contains an outermost MOTS $\mathcal{S}_t$ (not necessarily connected), and that the resulting apparent horizon evolution
\[
\mathcal{H}_{\mathrm{app}} = \bigcup_t \mathcal{S}_t
\]
is piecewise smooth with finitely many jumps, each smooth piece being a MOTT. Let $\mathcal{H}_{\mathrm{final}}$ denote the last smooth piece of the horizon.

Our Penrose inequality requires a \emph{quasi final state hypothesis} (Definition~\ref{def:spi-QFS}), which is a precise late-time formulation of the exterior settling needed for the argument. Informally, it consists of three parts:
\begin{enumerate}[label=\textup{(Q\arabic*)}]
\item the closure of the exterior of $\mathcal H_{\mathrm{final}}$ admits a late-time $C^2$ chart
\[
(t,r,p)\in(\underline T,\infty)\times [r_0,\infty)\times S^2
\]
with $\mathcal H_{\mathrm{final}}=\{r=r_0\}$, with coefficient tuple $(N,\lambda, \beta, b,\gamma)$ in the class $\ctS_{-\tau}$ introduced in Section~3;
\item on every fixed tail, the normal-shift tail $\mathfrak B^\sharp(T;r_1)$ and the late-time forcing quantities $G_{r_1}(T)$ and $\Phi_{1,r_1}(T)$ tend to $0$, and either the pair $(\Phi_{2,r_1}(T),\Omega_{r_1}(T))$ tends to $(0,0)$ or $\Psi_{r_1}(T)\to0$;
\item the horizon area stabilizes: if $A(t)=|\mathcal S_t|$, then $A(t)\to A_\infty\in(0,\infty)$ as $t\to\infty$.
\end{enumerate}

These conditions are much weaker than asking for the entire late-time exterior to converge to a specific stationary model. In particular, the lapse $N$, the radial lapse $\lambda$, the tangential part of the shift $\beta^T$, and the spatial metric $g_t$ of the chart are entirely unconstrained at late times: they may oscillate, drift, or otherwise evolve nontrivially as $t\to\infty$, with no decay or convergence required. On the other hand, they are exactly what is needed for the comparison-tail construction and the final passage to the horizon. In particular, the strong final state hypothesis implies the quasi final state hypothesis; see Proposition~\ref{prop:strong-implies-weak}. More generally, Proposition~\ref{prop:stability-implies-QFS} gives a stability criterion: any spacetime whose late-time exterior converges, in the required weighted sense, to a stationary black-hole exterior with $\beta^\perp=0$ and $\Xi=0$ satisfies the quasi final state hypothesis, provided the final-horizon regularity and area stabilization assumptions hold. Thus the known and expected nonlinear stability results for Schwarzschild and slowly rotating Kerr give natural classes of perturbative spacetimes to which the hypothesis holds.

\begin{remark}[Horizon-penetrating final-state charts]
The standard Boyer--Lindquist coordinates in Kerr (and Schwarzschild coordinates in Schwarzschild) do not penetrate the future horizon, since their constant-$t$ slices limit to the bifurcation sphere. The chart required by the final-state hypotheses is therefore not the standard stationary chart, but a horizon-penetrating modification that agrees with the standard chart on $\{r\ge r_{\mathrm{cut}}(t)\}$ with $r_{\mathrm{cut}}(t)\downarrow r_+$ as $t\to\infty$, and whose constant-$t$ slices foliate the future horizon by MOTS cross-sections. Appendix~\ref{app:horizon-penetrating-schwarzschild-model} carries out this construction explicitly for Schwarzschild; the same strategy applies to Kerr in ingoing Kerr coordinates. Both spacetimes therefore satisfy the quasi final state hypothesis; see Remark~\ref{rem:horizon-penetrating-final-state-gauge} for details.
\end{remark}

\begin{remark}[Fixed-tail decay versus eventual smallness]
\label{rem:intro-eventual-smallness}
The decay-to-zero formulation above is a fixed-tail condition.  For each fixed
\(r_1>r_0\), the TMCF construction only requires the normal-shift tail and the forcing quantities
to be sufficiently small on \(r\ge r_1\) for all sufficiently late times.  The required smallness,
however, depends on the nondegenerate coefficient bounds on that fixed tail.  As \(r_1\downarrow
r_0\), these bounds may deteriorate. Thus the small-data thresholds need not be uniform as the tail approaches the horizon.

It may be possible to replace the fixed-tail decay assumptions by a pure eventual-smallness
condition, but doing so would require a more refined, horizon-adapted analysis of the TMCF
equation, likely in a degenerate parabolic regime where the redshift structure is used rather
than estimated away.  We do not pursue that theory here.  The present decay assumptions are used
as a robust way to ensure that the tail-dependent smallness thresholds, and the smallness needed
for the late-slab gauge reduction, are eventually satisfied on every fixed exterior tail.  No
uniform convergence rate as \(r_1\downarrow r_0\) is imposed.
\end{remark}

\begin{thm}[Spacetime Penrose inequality; informal version of Theorem~\ref{thm:SPI}]
Let $(\widehat{\mathcal{M}}, \gtime)$ be a globally hyperbolic asymptotically flat spacetime satisfying Einstein's equations and the spacetime dominant energy condition. Assume that for every $t\ge 0$ under consideration the slice $\Sigma_t$ contains an outermost MOTS $\mathcal S_t$, that the union
\[
\mathcal H_{\mathrm{app}}=\bigcup_{t\ge 0}\mathcal S_t
\]
is piecewise smooth, and that its last smooth piece $\mathcal{H}_{\mathrm{final}}$ satisfies the quasi final state hypothesis. 

Let $(M_\ast,g_\ast,K_\ast)\hookrightarrow(\widehat{\mathcal M},\gtime)$ be an
asymptotically flat initial data set. Assume that its
boundary $S_\ast:=\partial M_\ast$
is a MOTS and a smooth cross-section of $\mathcal H_{\mathrm{app}}$. Assume that the portion of $\mathcal H_{\mathrm{app}}$ to the future of $S_\ast$
has only finitely many jump times, and that
the outermost-horizon area does not decrease across any jump. Then
\begin{equation}
m_{ADM}(M_\ast,g_\ast,K_\ast)
\ge
\sqrt{\frac{|S_\ast|}{16\pi}},
\end{equation}
where $m_{ADM}$ denotes the invariant ADM mass of the asymptotic end.
\end{thm}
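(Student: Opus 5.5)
The plan is to run the chain of inequalities displayed in the summary of results. There are three links: an ADM-mass comparison, a passage to the horizon, and an area law.

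\textbf{Step 1 (ADM mass is invariant).} The ADM mass of the asymptotic end is an invariant of the spacetime's asymptotic end, i.e.\ of the asymptotic rest frame. Any two asymptotically flat spacelike hypersurfaces reaching the same spatial infinity therefore share it. Fix an increasing sequence of late times $T_j\to\infty$. By the Main Theorem, parts (ii) and (iii), which apply under (Q1)--(Q2), for each $j$ and each prescribed tail radius $r_{1,j}\downarrow r_0$ there is a spacelike tangentially maximal graph $M_j=\{t=f_j\}$ over $(r_{1,j},\infty)\times S^2$ with $f_j\to T_j$ at infinity. It comes with quantitative $C^0$, gradient, Hessian and slope bounds, which give asymptotic flatness of $(M_j,g_j,K_j)$. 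I would argue that $m_{ADM}(M_j)=m_{ADM}(M_\ast)$ by invariance. Here the uniform decay bounds of the graphs, together with the decay $\mathfrak B^\sharp\to0$, ensure the two hypersurfaces have the same asymptotic four-momentum.

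\textbf{Step 2 (Riemannian comparison).} Each $M_j$ has $R_{g_j}\ge0$ by Proposition~\ref{prop:R-nonneg-tm}. Its inner boundary $\Sigma_j=S_{r_{1,j}}$ is a connected sphere. To apply Corollary~\ref{cor:HI-comparison} we need $\Sigma_j$ to be outward minimizing. If this fails, I would replace $\Sigma_j$ by its strictly minimizing hull $\Sigma_j'$. Then $|\Sigma_j'|\le|\Sigma_j|$ and $H\ge0$ on $\Sigma_j'$, and Huisken--Ilmanen's weak IMCF from $\Sigma_j$ still gives
\[
m_{ADM}\ge m_H^{\mathrm{Riem}}(\Sigma_j')\ge m_H^{\mathrm{Riem}}(\Sigma_j),
\]
up to the standard jump argument. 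Combining this with the Hawking mass reduction (i), we obtain
\[
m_{ADM}(M_\ast)\ge m_H^{\mathrm{ST}}(\Sigma_j).
\]

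\textbf{Step 3 (passage to the horizon, main obstacle).} We need $\liminf_j m_H^{\mathrm{ST}}(\Sigma_j)\ge\sqrt{A_\infty/16\pi}$. Since $m_H^{\mathrm{ST}}$ depends only on the embedding in spacetime, it suffices to show two things. First, $\Sigma_j$ converges in $C^2$ to a MOTS cross-section of $\mathcal H_{\mathrm{final}}$ at arbitrarily late times. Second, $\int_{\Sigma_j}|{\bf H}|^2\,d\mu\to0$. The reason is that on a MOTS ${\bf H}$ is null, so $|{\bf H}|^2=0$, and $|\Sigma_j|$ then tends to the horizon areas, which converge to $A_\infty$ by (Q3). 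The difficulty is that the estimates of the Main Theorem are only fixed-tail and may degenerate as $r_1\downarrow r_0$ (Remark~\ref{rem:intro-eventual-smallness}). I would handle this with a diagonal choice. For each fixed $r_1$, taking $T_j$ large makes the graph over $[r_1,\infty)$ $C^2$-close to the slice $t=T_j$, because the forcing tails tend to zero. Hence $\Sigma_j$ is $C^2$-close to the coordinate sphere $\{t\approx T_j,\ r=r_1\}$. By $C^2$ regularity of the chart up to $r=r_0$ (Q1), these coordinate spheres approach the MOTS cross-sections $\{t,\ r_0\}$ uniformly in $t$ as $r_1\to r_0$. Choosing first $r_{1,j}$ close to $r_0$, and then $T_j$ large depending on $r_{1,j}$, gives $|m_H^{\mathrm{ST}}(\Sigma_j)-m_H^{\mathrm{ST}}(\mathcal S_{T_j})|\to0$. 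The uniformity in $t$ of the near-horizon continuity is the delicate point, and I would extract it from the coefficient class $\ctS_{-\tau}$ being bounded up to the horizon.

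\textbf{Step 4 (area law).} Each smooth piece of $\mathcal H_{\mathrm{app}}$ to the future of $S_\ast$ is a MOTT. Under the dominant energy condition, the area law for dynamical and isolated horizons gives nondecreasing cross-sectional area along it. By hypothesis the area also does not decrease across the finitely many jumps. Hence $A(t)\ge|S_\ast|$ for all later $t$, and so $A_\infty\ge|S_\ast|$. Combining Steps 1--4 gives the stated inequality.
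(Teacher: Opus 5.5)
Your architecture is the paper's: TMCF tails with $R_{g_f}\ge0$, then Huisken--Ilmanen together with $m_H^{\mathrm{Riem}}=m_H^{\mathrm{ST}}$ on leaves, then choosing $r_j$ before $T_j$, then the area law. However, two steps fail as written.

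In Step~2 you leave outward minimization open and fall back on the strictly minimizing hull, claiming $m_H^{\mathrm{Riem}}(\Sigma_j')\ge m_H^{\mathrm{Riem}}(\Sigma_j)$. This does not follow from $|\Sigma_j'|\le|\Sigma_j|$ and $\int_{\Sigma_j'}H^2\le\int_{\Sigma_j}H^2$ when $1-\frac{1}{16\pi}\int H^2>0$. That is exactly your regime, since $m_H(\Sigma_j)\to\sqrt{A_\infty/16\pi}>0$, and there the drop in the area prefactor can win. This is why the Huisken--Ilmanen comparison is stated for outward-minimizing initial surfaces. The fallback is also unnecessary. Admissibility gives $H_{f,r}>0$ on every leaf, so ${\bf n}_f$ is a unit field with $\operatorname{div}_{g_f}{\bf n}_f=H_{f,r}\ge0$. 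The divergence theorem then shows the inner leaf is outward minimizing (Lemma~\ref{lem:outerminimizing}).

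In Step~4, ``MOTT plus dominant energy condition'' gives no area law: Ben--Dov's tube is a timelike MOTT with non-monotone area. Proposition~\ref{prop:mott-area-law} instead uses two inputs. The first is achronality of the outermost-MOTS tube, so the evolution vector is $V=\ell-C\underline\ell$ with $C\ge0$. The second is the standing future black-hole assumption $\theta_{(\underline\ell)}<0$, which gives $\theta_{(V)}=-C\theta_{(\underline\ell)}\ge0$. You never invoke this sign. You also need to replace the cross-section $S_\ast$ by a preferred leaf $\mathcal S_{t_\ast}$ of the same area. The paper does this using uniqueness of the dynamical-horizon foliation and constancy of area on null pieces.

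Three further points need repair.

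\emph{Location of the inner leaf (Step~3).} The tails are built in the good-gauge chart of Proposition~\ref{prop:app-late-slab-gauge}, whose radial coordinate has been transported along the shift. So the inner leaf sits at $\tilde r=r_1$, not on the original coordinate sphere $r=r_1$. The paper controls this with \eqref{eq:app-r1-C2-close}, namely $\|R(t,r_1,\cdot)-r_1\|_{C^2}\le Ch\,\mathfrak B^\sharp(T-h;r_-)$. It adds bounds on $\partial_tR$ and $\partial_t^2R$ so that $R$ can be composed with the $C^2$-small time graph $\eta_j$ and the leaf kept outside the horizon. This, not Step~1, is where $\mathfrak B^\sharp\to0$ is actually used.

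\emph{Source of uniformity near the horizon (Step~3).} The $t$-uniform near-horizon continuity (Lemma~\ref{lem:late-horizon-hawking}) comes from the uniform $C^2$ collar control in (Q1). It does not come from $\ctS_{-\tau}$, which is a fixed-tail class whose constants may degenerate as $r_1\downarrow r_0$.

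\emph{ADM identification (Step~1).} Invariance of the four-momentum is not enough. Huisken--Ilmanen bounds the ADM \emph{energy} of $g_j$, which dominates the invariant mass and equals it only if $M_j$ is asymptotically at rest. The paper obtains this from $f_j\to T_j$ and the decay of $df_j$. These give $g_{f_j}-g(T_j)=o_1(r^{-1})$ relative to a rest-frame slice (Lemma~\ref{lem:ADM-compat}).
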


The assumption on jumps is natural in this framework. Along each smooth dynamical-horizon segment the area is nondecreasing, while along isolated-horizon segments it is constant; thus the only extra input is that no area is lost at the finitely many jump times of the apparent-horizon evolution.

\subsection{Sketch of the proof}

The proof of the spacetime Penrose inequality proceeds in two main steps.

\medskip
\noindent
\textbf{Step 1: Existence of tangentially maximal comparison tails near the horizon.}
Fix a sequence of tail radii $r_j \downarrow r_0$. For each $j$, the quasi final state hypothesis and the appendix allow us, after choosing a sufficiently large late time $T_j$, to work in a good gauge on a slab over the tail $r\ge r_j$. The existence theorem then gives a spacelike TMCF graph $f_j$ on $M_{r_j,\infty}$ with $\lim_{r\to\infty}f_j=T_j$. Each graph $M_{f_j}$ is a spacelike, tangentially maximal hypersurface with nonnegative scalar curvature, and its boundary sphere $\Sigma_j$ is connected and outward minimizing.

By Huisken--Ilmanen's IMCF comparison~\cite{H-I}, applied to the Riemannian manifold $(M_{f_j},g_{f_j})$, we obtain
\[
m_{ADM} \ge m_H^{\mathrm{Riem}}(\Sigma_j) = m_H^{\mathrm{ST}}(\Sigma_j),
\]
where the equality uses the tangentially maximal condition $\tr_{\Sigma_j}K_{f_j}=0$.

\medskip
\noindent
\textbf{Step 2: Passage to the horizon using the horizon area law.}
The a priori estimates show that, after first choosing $r_j$ and only then pushing $T_j$ sufficiently far into the future, the boundary spheres $\Sigma_j$ become small $C^2$-graphs over the late horizon sections $\mathcal S_{T_j}$. The late-horizon continuity lemma therefore gives
\[
m_H^{\mathrm{ST}}(\Sigma_j)\longrightarrow \sqrt{\frac{A_\infty}{16\pi}}.
\]
Passing to the limit in the mass bound yields
\[
m_{ADM} \ge \sqrt{\frac{A_\infty}{16\pi}}.
\]
Finally, by the area law on smooth horizon pieces together with the assumption that the outermost-horizon area does not decrease across jumps, one has $A_\infty \ge |S_\ast|$, and hence
\[
m_{ADM} \ge \sqrt{\frac{|S_\ast|}{16\pi}}.
\]

\begin{figure}[H]
\centering
\begin{tikzpicture}[x=1cm,y=1cm,>=stealth]

% axes
\draw[->,thick] (0,0) -- (9.2,0) node[right] {spatial direction};
\draw[->,thick] (0,0) -- (0,6.1) node[above] {$t$};

% apparent horizon tube, with jumps indicated
\draw[very thick] (2.0,0.55) -- (2.0,1.65);
\draw[very thick] (2.35,1.95) -- (2.35,3.05);
\draw[very thick] (2.65,3.35) -- (2.65,5.85);
\draw[dotted,thick] (2.0,1.65) -- (2.35,1.95);
\draw[dotted,thick] (2.35,3.05) -- (2.65,3.35);
\node[left] at (2.0,0.8) {$\mathcal H_{\mathrm{app}}$};
\node[left] at (2.65,4.85) {$\mathcal H_{\mathrm{final}}$};

% initial data set M_*
\draw[thick] (0.75,1.05) .. controls (2.5,1.1) and (5.2,0.95) .. (8.45,1.05);
\node[right] at (8.55,1.05) {$M_\ast$};
\fill (2.0,1.05) circle (1.5pt);
\node[above left] at (2.0,1.05) {$S_\ast$};

% late rest-frame slice
\draw[densely dashed] (0.75,4.55) -- (8.45,4.55);
\node[right] at (8.55,4.55) {$$};

% TMCF comparison tails
\draw[thick,blue!60!black] (2.78,4.05) .. controls (3.7,4.18) and (5.7,4.20) .. (8.2,4.28);
\draw[thick,blue!60!black] (2.83,4.32) .. controls (3.8,4.48) and (5.9,4.55) .. (8.2,4.72);
\draw[thick,blue!60!black] (2.90,4.62) .. controls (4.0,4.85) and (6.1,4.98) .. (8.2,5.12);
\node[right,blue!60!black] at (8.3,5.12) {$M_{f_j}$};

% boundary sphere
\fill[blue!60!black] (2.90,4.62) circle (1.5pt);
\node[above left,blue!60!black] at (3.90,4.72) {$\Sigma_j$};

% area and Hawking mass annotations
\draw[->,thick] (2.25,5.35) -- (2.25,5.95);

\node[right] at (5.7,5.8) {$\red{m_{ADM} \ge m_H^{\mathrm{ST}}(\Sigma_j)}$};

\node[right] at (1.32,6.22) {$\red{A(t)\to A_\infty}$};

%\draw[->,thick] (4.75,4.72) -- (4.05,4.98);
\node[right] at (2.7,3.8) {$\red{m_H^{\mathrm{ST}}(\Sigma_j)\to\sqrt{A_\infty/(16\pi)}}$};

% small label for late horizon section
\fill (2.65,4.55) circle (1.4pt);
\node[left] at (2.55,4.3) {$\mathcal S_{T_j}$};
\node[left] at (2,3.8) {$\red{A_{\infty} \geq |S_*|}$};
\end{tikzpicture}
\caption{Schematic of the proof. The apparent-horizon evolution has an initial cross-section $S_\ast$ and a last smooth piece $\mathcal H_{\mathrm{final}}$. The late-time tangentially maximal comparison tails $M_{f_j}$ meet the exterior near $\mathcal H_{\mathrm{final}}$ along spheres $\Sigma_j$ whose spacetime Hawking mass approaches $\sqrt{A_\infty/(16\pi)}$. The area law implies that $A_{\infty} \geq |S_*|$.}
\label{fig:proof-sketch}
\end{figure}

\subsection{The Ben--Dov counterexamples and the future black-hole condition}
\label{sub:intro-bendov}

A natural concern about the formulation of Theorem~\ref{thm:SPI} is that it
appeals to the area of an outermost MOTS, since Ben--Dov \cite{ben-dov} has
constructed asymptotically flat, spherically symmetric initial data
satisfying the dominant energy condition for which the outermost MOTS
$\mathcal S$ has area exceeding $16\pi m_{ADM}^2$, so that
\[
m_{ADM}<\sqrt{\frac{|\mathcal S|}{16\pi}}.
\]
This shows that the Penrose inequality is false if formulated as a purely
initial-data inequality for the raw area of an arbitrary outermost MOTS. As
already emphasized by Frauendiener~\cite{frauendiener-PI} and
Mars~\cite{mars-PI}, the resolution is that the inequality is a
\emph{spacetime} statement, and the surface appearing in it must be
distinguished by the global causal structure, not merely by being an
outermost MOTS on some chosen Cauchy slice.

The version of the theorem proved here uses precisely such a spacetime
distinction. Each smooth leaf of the apparent-horizon tube
$\mathcal H_{\mathrm{app}}$ is required to be a \emph{future} marginally
outer trapped surface, meaning that not only does the future-directed
outgoing null expansion vanish, $\theta_{(\ell)}=0$, but also the
future-directed ingoing null expansion is strictly negative,
$\theta_{(\underline\ell)}<0$. This is the condition that makes
$\mathcal H_{\mathrm{app}}$ behave like a future black-hole horizon: it is
spacelike or null, never timelike, and supports the dynamical-horizon area
law on which the proof relies.

The exceptional outermost MOTS in Ben--Dov's construction does not satisfy
this condition. We verify in
section~\ref{sub:bendov-counterexamples} that the marginally outer tube
giving rise to it lies on a white-hole-type branch, with
$\theta_{(\underline\ell)}>0$; the smooth piece of that tube is moreover
timelike rather than spacelike or null. There is also an independent global
obstruction: the asymptotic exterior in Ben--Dov's spacetime is a
Schwarzschild region of mass strictly smaller than what the exceptional
MOTS area would require, so no global completion of his construction can
satisfy the future black-hole condition together with the area-monotonicity
across jumps and the quasi final state hypothesis used here. Both
obstructions are made precise in section~\ref{sub:bendov-counterexamples}.

Thus Ben--Dov's counterexamples and the present theorem are not in
conflict: they pertain to two different objects. Ben--Dov rules out a
naive initial-data formulation; the theorem here proves the inequality for
the genuine future apparent-horizon tube, in line with Penrose's original
spacetime heuristic.

\subsection{The equality case}

We also record what our method says in the equality case. Since the proof uses a
limiting family of tangentially maximal comparison tails, equality does not
immediately give a fixed comparison hypersurface on which the Riemannian
Penrose inequality is saturated. Nevertheless, equality has strong consequences.
It forces the limiting horizon area to agree with the initial horizon area,
\[
A_\infty=|S_*|=16\pi m_{ADM}^2,
\]
and hence there can be no area increase along the future apparent-horizon tube
from \(S_*\) to the final horizon. In particular, all allowed jumps have zero
area jump, and every smooth future black-hole-type horizon piece between
\(S_*\) and the final horizon is null rather than spacelike.

Equivalently, equality rules out the dynamical-horizon behavior responsible for
strict area growth. On smooth null pieces, the Raychaudhuri equation then gives
vanishing null shear and vanishing null matter flux. The comparison tails
produced by the TMCF construction also become asymptotically sharp for the
Riemannian Penrose inequality:
\[
m_{ADM}-m_H^{\mathrm{Riem}}(\Sigma_j)\to0.
\]
Thus the equality case is rigid at the level of the horizon and of the
comparison sequence.

We do not claim a full Schwarzschild rigidity theorem from the main hypotheses
alone. Such a statement would require an additional compactness or quantitative
stability input upgrading the asymptotically sharp sequence of comparison tails
to an exact limiting equality case for the Riemannian Penrose inequality. Under
a strong Kerr final-state assumption, however, equality forces the limiting Kerr
state to be Schwarzschild. The precise statements are given in
Subsection~\ref{sub:equality-case}.

\subsection{Organization of the paper}

The remainder of the paper is organized as follows.

Section~2 derives the TMCF equation, analyzes its geometric content, and
identifies its parabolic structure. We first compute the graph equation in a
general spacetime and then obtain the explicit backward-parabolic form in
the good gauge $b=0$, $\beta_r=0$.

Section~3 proves the existence theorem for tangentially maximal tails. We
begin by introducing the weighted parabolic H\"older spaces and the
coefficient class $\ctS_{-\tau}$. We then establish the a priori estimates
controlling the $C^0$-norm, tangential gradient, tangential Hessian, and
radial slope of solutions, prove local-in-$r$ existence from an initial
sphere, and finally carry out the global continuation and limiting argument
on late good-gauge slabs.

Section~4 formulates the quasi final state hypothesis and proves the
spacetime Penrose inequality by combining the tangentially maximal
comparison tails with the Riemannian Penrose inequality and the late-time
horizon asymptotics. We also explain, in more details than in \ref{sub:intro-bendov}, why the counterexamples of Ben--Dov to
the raw outermost-MOTS Penrose inequality are not in conflict with our
result, due to the future black-hole condition imposed on
$\mathcal H_{\mathrm{app}}$.

Appendix~\ref{app:late-slab-gauge} proves the late-slab gauge-reduction
result that allows one to pass from a general ADM chart to the good gauge
$b=0$, $\beta_r=0$ away from the horizon.

Appendix~\ref{app:horizon-penetrating-schwarzschild-model} constructs an
explicit horizon-penetrating spacelike foliation of the Schwarzschild
spacetime that agrees with the standard Schwarzschild time slices on each
fixed exterior tail and foliates the future horizon by MOTS cross-sections.
This serves as a model for the type of late-time chart envisioned in the
quasi final state hypothesis.

Appendix~\ref{app:bondi-annuli} gives a localized version of the comparison
argument on finite TMCF annuli near timelike infinity. Under an additional
Bondi--Sachs asymptotic structure assumption, this yields a Penrose bound
with the final Bondi mass in place of the ADM mass, showing in particular
that the method is compatible with Bondi mass loss.

Finally, the section ``Notation, terminology, and definitions'' collects the principal notation, terminology, and
definitions used in the paper, organized by category, for ease of reference.

\subsection*{Acknowledgements}

I would like to thank Spyros Alexakis, Stephen McCormick, Anna Sakovich, Alejandro Pe{\~n}uela D{\'i}az, and Marcus Khuri for their interest in this work, and for many helpful conversations, comments, and suggestions. Also, the author acknowledges support from Knut and Alice Wallenberg Foundation under grant KAW 2022.0285.

\section{TMCF Foliation}

\subsection{Deriving the TMCF Equation}

In this section we fix a spacetime $(\mathcal{M},\gtime)$ and derive the basic formulas
needed to formulate the existence of a TMCF foliation for a spacelike hypersurface as a PDE for a graph function
$f$. Throughout, we work on the exterior region of the spacetime whose inner boundary is the final smooth portion of the horizon; in particular,
\[
M := \R^3\setminus B_{r_0}\cong (r_0,\infty)\times S^2,
\qquad \mathcal{M}:=(\ul{T},\infty)\times M.
\]
We denote the spheres $\{r\}\times S^2$ in $M$ by $S_r$ and the spheres $\{t\}\times \{r\}\times S^2$ in $\mathcal{M}$ by $S_{t,r}$. We write the spacetime metric in coordinates
$(t,r,p) \in (\ul{T},\infty) \times (r_0,\infty) \times S^2$ as
\begin{equation}\label{eq:metric-nonstat}
\gtime
=
-\big(N^2-|\beta|_{g_t}^2\big)\,dt^2
+2\,\beta\odot dt
+\underbrace{\big(\lambda^2+|b|_\gamma^2\big)\,dr^2+2b\odot dr+\gamma}_{g_t}.
\end{equation}

Here $N$ and $\lambda$ are positive functions on $\mathcal M$, $\beta$ is a $1$-form on $\mathcal M$ with no $dt$-component, $b$ is a $1$-form on $\mathcal M$ with no $dt$- or $dr$-component, and $\gamma=\gamma(t,r)$ is a two-parameter family of metrics on $S^2$ (tangent to $S_{t,r}$). We use $\odot$ to denote the symmetrized tensor product:
$\alpha\odot\omega := \frac12(\alpha\otimes\omega+\omega\otimes\alpha)$.

\begin{remark}[Coordinate degeneration at the horizon]\label{rem:coord-degeneration}
For the purposes of deriving the TMCF equation in this section, the coefficients $N,\lambda,\beta,b,\gamma$ are only required to be defined on the open region $r>r_0$, and are allowed to degenerate as $r\to r_0^+$; we do not require the chart to extend smoothly to the inner cylinder $\{r=r_0\}$, which corresponds to the last smooth portion of the horizon. This flexibility is essential for our main examples. In the Kerr spacetime, the Boyer--Lindquist coordinates realize each constant-$t$ slice of the exterior as a tangentially maximal hypersurface (Proposition~\ref{prop:kerr}), yet they are precisely the coordinates in which the lapse $N$ vanishes and the radial lapse $\lambda$ blows up at the horizon $r=r_+$. They moreover do not penetrate the future event horizon---every constant-$t$ slice of the exterior limits to the bifurcation sphere as $r\to r_+^+$.

The situation changes when we later formulate the quasi final state hypothesis in Section~\ref{sub:spacetime-final-state}: there we will require a late-time chart that is regular up to and across the future horizon $\{r=r_0\}$, with the horizon sections appearing as coordinate spheres of the chart. We will comment on this further later on (see remark~\ref{rem:horizon-penetrating-final-state-gauge}).
\end{remark}

\medskip

The level sets $\{t=\mathrm{const}\}$ are spacelike since
\[
{\gtime}^{-1}(dt,dt)=-\frac{1}{N^2}<0,
\]
but $\partial_t$ need not be timelike since
\[
\gtime(\partial_t,\partial_t)=-(N^2-|\beta|_{g_t}^2),
\]
which may change sign.

\medskip

For each $r$, let $\slashed\nabla$, $\slashed d$, $\cancel{div}$, and $\slashed\Delta$
denote the Levi--Civita connection, exterior derivative, divergence, and Laplacian on $(S_{t,r},\gamma(t,r))$.

Whenever a $1$-form on $\mathcal{M}$ with no $dt$-component is restricted to $S_{t,r}$, we decompose it into its $dr$-component
and its tangential component. In particular we write
\begin{equation}\label{eq:theta-decomp}
\beta=\beta_r\,dr+\beta^T,
\end{equation}
where $\beta^T$ is tangential to $S_{t,r}$ (i.e.\ $\beta^T(\partial_r)=0$). Since
\[
g_t^{-1}(dr,dr)=\lambda^{-2},
\qquad
\nabla^{g_t}r=\lambda^{-2}\big(\partial_r-b^{\sharp_\gamma}\big),
\]
the outward unit normal to $S_{t,r}\subset (M_t,g_t)$ is
\begin{equation}\label{eq:nt-general}
{\bf n}_t
=
\frac{\nabla^{g_t} r}{|\nabla^{g_t} r|}
=
\frac1\lambda\big(\partial_r-b^{\sharp_\gamma}\big).
\end{equation}
It is convenient to denote the normal component of the shift by
\begin{equation}\label{eq:beta-perp}
\beta^\perp:=\beta({\bf n}_t)
=
\frac{1}{\lambda}\Big(\beta_r-\langle b,\beta^T\rangle_\gamma\Big).
\end{equation}
We also define
\[
\alpha:=N^2-|\beta|_{g_t}^2
      =N^2-|\beta^T|_\gamma^2-(\beta^\perp)^2
      =N^2-|\beta^T|_\gamma^2-\lambda^{-2}\Big(\beta_r-\langle b,\beta^T\rangle_\gamma\Big)^2.
\]

For each $t>\ul T$, let
\[
M_t:=\{t\}\times M,
\]
and let ${\bf T}_t$ be the future-pointing timelike unit normal to $M_t$ in $(\mathcal{M}, \gtime)$, given by
\[
{\bf T}_t
=
-\frac{\nabla^{(4)} t}{|\nabla^{(4)} t|}
=
\frac1N\Big(\partial_t-\beta^{\sharp_{g_t}}\Big).
\]
Let $K_t$ denote the second fundamental form of $M_t\hookrightarrow (\mathcal M,\gtime)$. We adopt the convention that for tangent vectors $X,Y$ to $M_t$,
\[
K_t(X,Y) := -\gtime(\nabla^{(4)}_X {\bf T}_t, Y),
\]
so that $K_t$ is symmetric and agrees with the usual ADM sign convention.

Let ${\bf H}_{t,r}$ be the mean curvature vector of $S_{t,r}$ in $(\mathcal{M}, \gtime)$. At each point, ${\bf T}_t$ and ${\bf n}_t$ span the normal bundle of $S_{t,r}$ in $\mathcal{M}$, so
\begin{equation}
{\bf H}_{t,r} = H_{t,r} \,{\bf n}_t - \tr_{S_{t,r}} K_t \,{\bf T}_t,
\end{equation}
where $H_{t,r}$ is the mean curvature of $S_{t,r}$ in $(M_t,g_t)$ and $\tr_{S_{t,r}} K_t$ is the $S_{t,r}$-trace of the second fundamental form of $M_t$ in $\mathcal{M}$. We call $H_{t,r}$ and $\tr_{S_{t,r}} K_t$ the spacelike and timelike mean curvature of $S_{t,r}$ with respect to the initial data set $(M_t, g_t, K_t)$. The $\gtime$ norm is
\begin{equation}
	|{\bf H}_{t,r}|^2 = \gtime({\bf H}_{t,r}, {\bf H}_{t,r}) = H_{t,r}^2 - (\tr_{S_{t,r}} K_t)^2.
\end{equation}
Note that $|{\bf H}_{t,r}|^2$ depends only on the embedding of $S_{t,r}$ in $\mathcal{M}$.

\vv

Let $f\in C^2(M)$.
The graph hypersurface
\[
M_f:=\{(t,x)\in\mathcal{M}: t=f(x)\}
\]
is spacelike provided the induced metric is Riemannian. We denote by $(g_f,K_f)$ the induced metric and second fundamental form of $M_f\hookrightarrow (\mathcal{M},\gtime)$. Equating $M_f$ with $M$, we have that $(M,g_f,K_f)$ is another initial data set on $M$. By an abuse of notation, we denote the spheres
\[
M_f \cap \Big( (\ul{T},\infty) \times \{r\}\times S^2 \Big)
\]
by $S_{f,r}$, and when working on $M_f$ we will usually suppress the subscript and simply write $S_r$.

\medskip
\noindent
\textbf{Notation for Evaluating on the Graph}.  For any tensor field $Q(t,x)$ on $\mathcal{M}$, we write $Q\big|_{t=f}$ for its pullback to $M_f$ under the graph embedding $\iota_f:M_f\to\mathcal M$. For instance, $N|_{t=f} (r,p) := N(f(r,p),r,p)$. By an abuse of notation, we will typically omit the pullback symbol $``|_{t=f}''$ and simply write $Q$ when working on $M_f$; in particular, in the formulas below, $N,\lambda,\beta,b,\gamma$ should be understood as evaluated at $t=f(x)$ unless explicitly stated otherwise.

\vv

Let $\tau:=t-f(x)$. Since $M_f=\{\tau=0\}$, the graph is spacelike if and only if
$d\tau$ is timelike, i.e.
\begin{equation}\label{eq:graph-spacelike}
{\gtime}^{-1}(d\tau,d\tau)<0.
\end{equation}
We compute
\[
{\gtime}^{-1}(d\tau,d\tau)
=
|df|_{g_t}^2
-\frac{\big(1+\langle \beta, df\rangle_{g_t}\big)^2}{N^2}.
\]
Hence $M_f$ is spacelike if and only if
\begin{equation}\label{eq:Wf}
\mu_f
:=
\big(1+\langle \beta, df\rangle_{g_t}\big)^2-N^2|df|_{g_t}^2
>0.
\end{equation}
In that case, the lapse associated to the time function $\tau=t-f(x)$ is
\begin{equation}\label{eq:Nf}
N_f
:=
\frac{1}{\sqrt{-{\gtime}^{-1}(d\tau,d\tau)}}
=
\frac{N}{\sqrt{\mu_f}}.
\end{equation}

\medskip

We define the following quantities on $M_f$:

\begin{itemize}
\item $\gamma_f(r)$ is the induced metric of $S_{r}$ in $(M_f,g_f)$.
\item $H_{f,r}$ is the mean curvature of $S_{r}$ in $(M_f, g_f)$ (i.e.\ the spacelike mean curvature of $S_r$ in $(M_f,g_f,K_f)$).
\item $\tr_{S_r} K_f$ is the $S_r$-tangential trace of $K_f$ on $(S_r, \gamma_f(r))$ (i.e.\ the timelike mean curvature of $S_r$ in $(M_f,g_f,K_f)$).
\item ${\bf n}_f$ is the unit normal vector field of $S_r$ in $(M_f,g_f)$, given by
\[
{\bf n}_f = \frac{\nabla^{g_f} r}{|\nabla^{g_f}r|}.
\]
\item ${\bf T}_f$ is the future-pointing timelike unit normal vector field of $(M_f,g_f)$ in $(\mathcal{M}, \gtime)$, given by
\[
{\bf T}_f = -\frac{\nabla^{(4)} (t-f(x))}{|\nabla^{(4)} (t-f(x))|}.
\]
\item $\beta_f$ is the $1$-form on $M_f$ defined by
\[
\beta_f:= \beta -\alpha\, df.
\]
Decompose
\[
\beta =\beta_r \,dr + \beta^T,\qquad
\beta_f = (\beta_r -\alpha \partial_r f)\,dr + (\beta^T -\alpha \slashed d f)=: (\beta_f)_r\,dr+\beta_f^T.
\]
\end{itemize}

It will be convenient to write $g_f$ in a form adapted to the foliation.
Since $dt=d\tau+df$, we obtain
\begin{align}
g_f
&=
g_t|_{t=f}
+2\,\beta\odot df
-\alpha\,df^2
\label{eq:g_f-def}\\
&=
(\lambda_f^2+|b_f|_{\gamma_f}^2)\,dr^2
+2b_f\odot dr
+\gamma_f(r),
\label{eq:g_f-split}
\end{align}
where
\begin{equation}\label{eq:bf-lambdaf}
b_f:=b+\beta_r\,\slashed d f+\partial_r f\,\beta_f^T,
\qquad
\lambda_f^2
:=
\lambda^2+|b|_\gamma^2+2\beta_r\partial_r f-\alpha(\partial_r f)^2-|b_f|_{\gamma_f}^2,
\end{equation}
and, in particular, the induced metric $\gamma_f(r)$ on $S_r$ is
\begin{equation}\label{eq:gammaf}
\gamma_f(r)
=
\gamma(t,r)|_{t=f}
+2(\beta^T\odot \slashed d f)
-\alpha\,(\slashed d f)^2.
\end{equation}
Note that, by construction,
\[
\lambda_f=|\nabla^{g_f}r|_{g_f}^{-1}.
\]

A direct computation in a $\gamma$-orthonormal frame with $\slashed d f$ parallel to the first basis vector gives
\begin{equation}\label{eq:muTf-general}
\mu_f^T
:=
\frac{\det \gamma_f}{\det \gamma}
=
\big(1+\langle \beta^T,\slashed d f\rangle_\gamma\big)^2
-
\big(N^2-(\beta^\perp)^2\big)\,|\slashed d f|_\gamma^2.
\end{equation}
It follows that
\[
\gamma_f \text{ is Riemannian }
\iff
\mu_f^T>0.
\]

By the block decomposition \eqref{eq:g_f-split}, the induced metric $g_f$ is Riemannian
if and only if $\gamma_f$ is positive definite and $\lambda_f^2>0$. Equivalently, by \eqref{eq:Wf}
\[
g_f \text{ is Riemannian }
\iff
\mu_f>0,
\]
which is just another way of saying that the graph $M_f$ is spacelike.

\vv

We compute some important quantities in the next lemma.

\begin{lem}[Basic geometric formulas]\label{lem:basic-geom}
Let $f\in C^2(M)$ be such that $M_f$ is spacelike, and define $\gamma_f$, $b_f$, and
$\lambda_f$ by \eqref{eq:bf-lambdaf}--\eqref{eq:gammaf}. Then:
\begin{enumerate}[label=(\arabic*)]
\item The unit normal to $S_r$ in $(M_f,g_f)$ is
\begin{equation}\label{eq:nf}
{\bf n}_f
=
\frac{1}{\lambda_f}\Big(\partial_r-b_f^{\sharp_{\gamma_f}}\Big),
\end{equation}
where $b_f^{\sharp_{\gamma_f}}$ denotes the $\gamma_f$-dual vector field to the
$1$-form $b_f$.

\item The spacelike mean curvature of $S_r$ in $(M_f,g_f)$ is
\begin{equation}\label{eq:Hf}
H_{f,r}
=
\frac{1}{2\lambda_f}\,
\tr_{\gamma_f}\!\Big(\partial_r\gamma_f-\mathcal L_{b_f^{\sharp_{\gamma_f}}}\gamma_f\Big).
\end{equation}

\item The timelike mean curvature of $S_r$ in $(M_f,g_f,K_f)$
(i.e.\ the tangential trace of $K_f$ along $S_r$) is
\begin{equation}\label{eq:trKf-nonstat}
\tr_{S_r}K_f
=
\frac{1}{N_f}\left(
\cancel{div}_{\gamma_f}(\beta_f^T)
+H_{f,r}\,\beta_f({\bf n}_f)
-\frac12\,\tr_{\gamma_f}\big(\partial_t\gamma_f\big)
\right),
\end{equation}
where $N_f$ is given by \eqref{eq:Nf}, and $\partial_t\gamma_f$ denotes the partial
$t$-derivative (at fixed $(r,p)$, with $f$ held fixed) of the tensor field
\[
\gamma(t,r)+2(\beta^T(t,r)\odot \slashed d f)-\alpha(t,r)(\slashed d f)^2
\]
evaluated on the graph $t=f(x)$.
\end{enumerate}
\end{lem}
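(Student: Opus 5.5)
The plan is to change coordinates so that the graph $M_f$ becomes a coordinate time slice. Then all three formulas follow from the standard ADM formulas applied to $M_f$, exactly as they hold for the slices $M_t$.

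\textbf{Step 1 (adapted coordinates).} Introduce $(\tau,r,p)$ with $\tau=t-f(r,p)$, so that $M_f=\{\tau=0\}$ and, more generally, the level sets $\{\tau=c\}$ are the graphs of $f+c$. Substituting $dt=d\tau+df$ into \eqref{eq:metric-nonstat} and completing the expansion gives
\[
\gtime=-\alpha\,d\tau^2+2(\beta-\alpha\,df)\odot d\tau+\big(g_t+2\beta\odot df-\alpha\,df^2\big).
\]
Thus in the new chart the spacetime metric again has ADM form. Its shift $1$-form is $\beta_f=\beta-\alpha\,df$, and its spatial metric on the slice $\tau=0$ is $g_f$ as in \eqref{eq:g_f-def}. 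Since $\gtime^{-1}(d\tau,d\tau)=-N_f^{-2}$, its lapse is $N_f$ from \eqref{eq:Nf}. One should check that $-\alpha=-(N_f^2-|\beta_f|^2_{g_f})$, but this is forced by the ADM structure and need not be computed directly. The coordinate vector $\partial_\tau$ at fixed $(r,p)$ equals $\partial_t$. Hence the $\tau$-derivative of the spatial metric is the $t$-derivative of $g_t+2\beta\odot df-\alpha\,df^2$ with $f$ held fixed. Restricted to $S_r$, this is precisely the quantity $\partial_t\gamma_f$ described in the statement.

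\textbf{Step 2 (items (1) and (2)).} The split form \eqref{eq:g_f-split} of $g_f$ is identical in structure to that of $g_t$, with $(\lambda,b,\gamma)$ replaced by $(\lambda_f,b_f,\gamma_f)$. Therefore \eqref{eq:nf} follows from the same computation as \eqref{eq:nt-general}: namely $\nabla^{g_f}r=\lambda_f^{-2}(\partial_r-b_f^{\sharp_{\gamma_f}})$ and $|\nabla^{g_f}r|=\lambda_f^{-1}$. For \eqref{eq:Hf}, I would use $H=\tfrac12\tr_{\gamma_f}(\mathcal L_{{\bf n}_f}g_f)|_{TS_r}$. Writing $\lambda_f{\bf n}_f=\partial_r-b_f^{\sharp}$ with $b_f^\sharp$ tangent to $S_r$, the Lie derivative of the tangential metric along $\partial_r$ is $\partial_r\gamma_f$. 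Along the tangential field $b_f^\sharp$ it is $\mathcal L_{b_f^\sharp}\gamma_f$. The factor $\lambda_f^{-1}$ pulls out, because the term involving $d\lambda_f$ produces only components with an $r$-leg, and these drop out on $TS_r$.

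\textbf{Step 3 (item (3)).} With ${\bf T}_f=N_f^{-1}(\partial_\tau-\beta_f^{\sharp_{g_f}})$ and the sign convention $K=-\gtime(\nabla^{(4)}{\bf T},\cdot)$, the ADM formula in the new chart reads
\[
K_f=\frac{1}{2N_f}\big(\mathcal L_{\beta_f^\sharp}g_f-\partial_\tau g_f\big)=\frac1{N_f}\Big(\mathrm{Sym}\,\nabla^{g_f}\beta_f-\tfrac12\partial_\tau g_f\Big).
\]
Taking the $\gamma_f$-trace over $TS_r$, the second term gives $-\tfrac12\tr_{\gamma_f}\partial_t\gamma_f$ by Step 1. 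For the first term I would prove a general identity for any $1$-form $\omega$ on $(M_f,g_f)$. Split $\omega=\omega({\bf n}_f)\,{\bf n}_f^\flat+\omega^T$ and use that the normal part of $\nabla_{e_i}e_i$ is $-h(e_i,e_i){\bf n}_f$, where $h$ is the second fundamental form of $S_r$ with $H=\tr h$. For a $\gamma_f$-orthonormal frame $e_i$ of $S_r$ this yields
\[
\sum_i(\nabla^{g_f}_{e_i}\omega)(e_i)=\cancel{div}_{\gamma_f}\omega^T+H_{f,r}\,\omega({\bf n}_f).
\]
Applying this to $\omega=\beta_f$ gives \eqref{eq:trKf-nonstat}.

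The main obstacle is bookkeeping rather than any conceptual step. First, one must confirm that $\partial_\tau$ at fixed $(r,p)$ is exactly $\partial_t$ acting on the tensor with $f$ frozen; this is where the chain rule through $t=f$ could introduce a spurious term. Second, one must get the signs right between the ADM convention for $K$ and the sign of the $H\,\omega({\bf n})$ term. I would double-check both signs against the special case $f\equiv$ const, which must reproduce the known decomposition for $M_t$.
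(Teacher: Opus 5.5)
Your proposal is correct and follows the paper's route. You pass to the shifted time $\tau=t-f$, so that $M_f$ is an ADM slice with lapse $N_f$, shift $\beta_f$ and induced metric $g_f$. You then apply the ADM formula for $K_f$, note that $\partial_\tau=\partial_t$ at fixed $(r,p)$, and split the tangential trace of $\tfrac12\mathcal L_{\beta_f^\sharp}g_f$ into $\cancel{div}_{\gamma_f}(\beta_f^T)+H_{f,r}\,\beta_f({\bf n}_f)$. The only difference is that you derive items (1)--(2) and this tangential/normal splitting explicitly, with correct signs, whereas the paper cites them as standard.
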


\begin{proof}
Items (1) and (2) are standard formulas for level-set geometry in a foliation-adapted
decomposition of a Riemannian metric; see e.g.\ \cite{eric-3+1}.

We now prove (3). Introduce the shifted time function $\tau:=t-f(x)$ so that
$M_f=\{\tau=0\}$. Since $dt=d\tau+df$, the spacetime metric becomes
\[
\gtime
=
-\alpha\,d\tau^2 +2\,\beta_f\odot d\tau + g_f
=
-\big(N_f^2-|\beta_f|_{g_f}^2\big)\,d\tau^2+2\,\beta_f\odot d\tau+g_f.
\]
Thus, in the coordinates $(\tau,r,p)$, the slices $\{\tau=\mathrm{const}\}$ are written in
standard ADM form with induced metric $g_f$, shift $1$-form $\beta_f$, and lapse $N_f$.

The second fundamental form of the
$\tau=\mathrm{const}$ hypersurfaces is given by the standard ADM identity
\begin{equation}\label{eq:ADM-K-formula}
(K_f)_{ij}
=
\frac{1}{2N_f}
\Big(
(\mathcal L_{\beta_f^{\sharp_{g_f}}}g_f)_{ij}
-(\partial_\tau g_f)_{ij}
\Big),
\end{equation}
see e.g.\ \cite[Sec.~4.2--4.3]{eric-3+1}. Restricting \eqref{eq:ADM-K-formula} to tangential
indices $A,B$ on $S_r$ and tracing with $\gamma_f^{AB}$ yields
\[
\tr_{S_r}K_f
=
\frac{1}{2N_f}
\Big(
\tr_{\gamma_f}(\mathcal L_{\beta_f^{\sharp_{g_f}}}\gamma_f)
-
\tr_{\gamma_f}(\partial_\tau\gamma_f)
\Big).
\]
Write
\[
\beta_f^{\sharp_{g_f}}
=
X^T+\beta_f({\bf n}_f)\,{\bf n}_f,
\]
where $X^T$ is tangential to $S_r$. The $\gamma_f$-dual $1$-form of $X^T$ is exactly
$\beta_f^T$, so the usual tangential/normal decomposition gives
\[
\frac12\,\tr_{\gamma_f}(\mathcal L_{\beta_f^{\sharp_{g_f}}}\gamma_f)
=
\cancel{div}_{\gamma_f}(\beta_f^T)
+
H_{f,r}\,\beta_f({\bf n}_f).
\]
Finally, since $\partial_\tau=\partial_t$ at fixed $(r,p)$, we have
$\partial_\tau\gamma_f=\partial_t\gamma_f$. Substituting these identities yields \eqref{eq:trKf-nonstat}.
\end{proof}

\medskip

It follows from \eqref{eq:trKf-nonstat} that $(M_f,g_f, K_f)$ is tangentially maximal in $(\mathcal{M}, \gtime)$ with TMCF foliation $\{S_{f,r}\}_{r> r_0}$ if and only if $f$ satisfies the following PDE:
\begin{equation}\label{eq:TMCF-equation}
\boxed{
\cancel{div}_{\gamma_f}\Big( \,\beta_f^T\Big) \;+\; H_{f,r}\, \,\beta_f({\bf n}_f)\;-\;\frac12 \tr_{\gamma_f}\big(\partial_t\gamma_f\big)=0.
	}
\end{equation}

We have reduced the search of a tangentially maximal initial data set $(M,g_f,K_f)$ in $(\mathcal{M}, \gtime)$ to a PDE for the function $f$, which we call the \emph{TMCF equation}.

In the next section we show that, in the gauge $b=0$ and $\beta_r=0$ that we call {\it the good gauge}, and under reasonable conditions such as mean-convexity and spacelikeness, \eqref{eq:TMCF-equation} can be rewritten as a quasilinear backward-parabolic evolution equation in the radial variable $r$.

\medskip

\begin{remark}
Suppose that in the coordinates $(t,r,p)$ the spacetime is \emph{stationary}, i.e.\ the coefficients
$N,\lambda,\beta,b,\gamma$ are independent of $t$. Then $\partial_t\gamma_f\equiv 0$, and the TMCF equation reduces to
\[
{\cancel{div}}_{\gamma_f}\big( \,\beta_f^T\big)
\;+\; H_{f,r}\,\,\beta_f({\bf n}_f)=0.
\]

Moreover, the equation does not depend on $f$ itself but only on its derivatives, since the spacetime coefficients evaluated on the graph are independent of $f$, i.e.
\[
(N,\lambda,\beta,b,\gamma)|_{t=f}=(N,\lambda,\beta,b,\gamma),
\]
\end{remark}

\medskip

A nontrivial example of tangentially maximal initial data sets appears naturally in Kerr spacetimes:

\begin{prop}\label{prop:kerr}
Fix parameters $m>0$ and $a\in\R$ with $|a|<m$, and consider the subextremal Kerr spacetime
$(\mathcal M_{m,a},\gtime_{m,a})$ in Boyer--Lindquist coordinates
$(t,r,\vartheta,\varphi)$ on the exterior region $r>r_+$, where
\[
r_+ := m+\sqrt{m^2-a^2}.
\]
Then each hypersurface $\{t=\mathrm{constant},\,r>r_+\}$ is tangentially maximal. More precisely,
the coordinate spheres $\{S_r\}_{r>r_+}$ form a TMCF foliation of each constant-$t$ exterior slice.
Equivalently, every constant graph $f\equiv T_0$, and in particular $f\equiv 0$, solves the TMCF
equation on the Kerr exterior.
\end{prop}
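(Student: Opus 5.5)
The plan is to verify directly that, for $f\equiv T_0$, each of the three terms in the TMCF equation \eqref{eq:TMCF-equation} vanishes on the Kerr exterior in Boyer--Lindquist coordinates. Since $f$ is constant, $\slashed d f=0$ and $\partial_r f=0$. Therefore $\beta_f=\beta$, $\gamma_f=\gamma$, $b_f=b$ and $\lambda_f=\lambda$, and the equation reduces to
\[
\cancel{div}_{\gamma}(\beta^T)+H_{t,r}\,\beta^\perp-\tfrac12\tr_\gamma(\partial_t\gamma)=0 .
\]
The last term vanishes because the Kerr metric is stationary in these coordinates. By the remark above, it therefore suffices to show $\cancel{div}_\gamma(\beta^T)=0$ and $\beta^\perp=0$.

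To read off the ADM data, I would write the Boyer--Lindquist metric as
\[
-\Big(1-\tfrac{2mr}{\Sigma}\Big)dt^2-\tfrac{4mar\sin^2\vartheta}{\Sigma}\,dt\,d\varphi+\tfrac{\Sigma}{\Delta}dr^2+\Sigma\,d\vartheta^2+\tfrac{\mathcal{A}\sin^2\vartheta}{\Sigma}d\varphi^2 ,
\]
with $\Sigma=r^2+a^2\cos^2\vartheta$, $\Delta=r^2-2mr+a^2$ and $\mathcal{A}=(r^2+a^2)^2-a^2\Delta\sin^2\vartheta$. Comparing with \eqref{eq:metric-nonstat} gives $b=0$, $\lambda=\sqrt{\Sigma/\Delta}$, and $\gamma=\Sigma\,d\vartheta^2+\frac{\mathcal{A}\sin^2\vartheta}{\Sigma}d\varphi^2$. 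The shift $1$-form is $\beta=\beta_\varphi\,d\varphi$ with $\beta_\varphi=-2mar\sin^2\vartheta/\Sigma$. In particular $\beta_r=0$, and with $b=0$ formula \eqref{eq:beta-perp} gives $\beta^\perp=0$.

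The remaining point is $\cancel{div}_\gamma(\beta^T)=0$. Here $\beta^T=\beta_\varphi\,d\varphi$, with $\beta_\varphi$ and $\gamma$ independent of $\varphi$. In coordinates, $\cancel{div}_\gamma\beta^T=\frac{1}{\sqrt{\det\gamma}}\partial_A\big(\sqrt{\det\gamma}\,\gamma^{AB}\beta_B\big)$. Because $\gamma$ is diagonal, only the $A=\varphi$ term appears, and it is a $\varphi$-derivative of a $\varphi$-independent quantity, so it vanishes. Geometrically, $(\beta^T)^\sharp$ is a multiple of the Killing field $\partial_\varphi$ with coefficient constant along orbits, so the vector field is divergence-free.

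This shows that $f\equiv T_0$ solves \eqref{eq:TMCF-equation}. By \eqref{eq:trKf-nonstat}, $\tr_{S_r}K_t=0$ on every coordinate sphere, which is the claimed tangential maximality of the constant-$t$ slices. I do not expect a real obstacle. The only care needed is to identify the metric with the general ADM form correctly, in particular to confirm that there are no $dr\,d\vartheta$ cross terms (so $b=0$) and no $dt\,dr$ term (so $\beta_r=0$). A cross-check is that in Kerr the constant-$t$ slices have $K$ with only $r\varphi$ and $\vartheta\varphi$ components, and these are off-diagonal with respect to the sphere directions, so their $\gamma$-trace is zero.
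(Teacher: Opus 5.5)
Your proposal is correct and follows essentially the same route as the paper. You write the Boyer--Lindquist metric in the ADM form with $b=0$ and a purely azimuthal shift. Stationarity removes the $\partial_t\gamma$ term, $\beta_r=0$ gives $\beta^\perp=0$, and $\cancel{div}_{\gamma}(\beta^T)=0$ holds because it reduces to a $\varphi$-derivative of a $\varphi$-independent quantity.
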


\begin{proof}
In Boyer--Lindquist coordinates
\[
(t,r,\vartheta,\varphi)
\in
\R\times(r_+,\infty)\times(0,\pi)\times(0,2\pi),
\qquad
r_+:=m+\sqrt{m^2-a^2},
\]
the Kerr metric is
\[
\gtime_{m,a}
=
-\left(1-\frac{2mr}{\rho^2}\right)dt^2
-\frac{4mar\sin^2\vartheta}{\rho^2}\,dt\,d\varphi
+\frac{\rho^2}{\Delta}\,dr^2
+\rho^2\,d\vartheta^2
+\frac{\Lambda}{\rho^2}\sin^2\vartheta\,d\varphi^2,
\]
where
\[
\rho^2:=r^2+a^2\cos^2\vartheta,
\qquad
\Delta:=r^2-2mr+a^2,
\qquad
\Lambda:=(r^2+a^2)^2-a^2\Delta\sin^2\vartheta.
\]
See, for example, \cite{BoyerLindquist1967,ONeillKerr}.

Let $(M,g_{m,a},K_{m,a})$ be a constant-$t$ exterior slice $\{t=\mathrm{constant},\,r>r_+\}$. The
metric can be written as the special case $b_{m,a}=0$ of the ADM form used in this subsection:
\[
\gtime_{m,a}
=
-\big(N_{m,a}^2-|\beta_{m,a}|_{g_{m,a}}^2\big)\,dt^2
+2\,\beta_{m,a}\odot dt
+
\underbrace{\lambda_{m,a}^2\,dr^2+\gamma_{m,a}(r)}_{=:g_{m,a}},
\]
where
\begin{align}
N_{m,a}^2(r,\vartheta)
&:=
\frac{\rho^2\Delta}{\Lambda},
\\
\beta_{m,a}
&:=
-\frac{2mar\sin^2\vartheta}{\rho^2}\,d\varphi,
\\
\lambda_{m,a}^2(r,\vartheta)
&:=
\frac{\rho^2}{\Delta},
\\
\gamma_{m,a}(r)
&:=
\rho^2\,d\vartheta^2
+
\frac{\Lambda}{\rho^2}\sin^2\vartheta\,d\varphi^2.
\end{align}
One checks that
\[
N_{m,a}^2-|\beta_{m,a}|_{g_{m,a}}^2
=
1-\frac{2mr}{\rho^2},
\]
so this reproduces the usual Boyer--Lindquist form.

We now verify the TMCF condition. Since the spacetime is stationary,
\[
\partial_t\gamma_{m,a}=0.
\]
Moreover, $\beta_{m,a}$ has no $dr$-component, so it is tangential to the coordinate spheres and
\[
(\beta_{m,a})_r=0,
\qquad
\beta_{m,a}({\bf n})=0,
\]
where ${\bf n}$ is the outward unit normal to $S_r\subset(M,g_{m,a})$. Therefore
\eqref{eq:trKf-nonstat}, applied with $f\equiv T_0$, reduces to
\[
\tr_{S_r}K_{m,a}
=
\frac{1}{N_{m,a}}\,
\cancel{div}_{\gamma_{m,a}(r)}(\beta_{m,a}).
\]
Now
\[
(\beta_{m,a})^{\sharp_{\gamma_{m,a}}}
=
-\frac{2mar}{\Lambda}\,\partial_\varphi.
\]
Since the coefficients of $\gamma_{m,a}(r)$ are independent of $\varphi$, we have
\[
\cancel{div}_{\gamma_{m,a}(r)}(\beta_{m,a})
=
\frac{1}{\sqrt{\det\gamma_{m,a}(r)}}
\partial_\varphi\left(
\sqrt{\det\gamma_{m,a}(r)}
\left(-\frac{2mar}{\Lambda}\right)
\right)
=0.
\]
Hence $\tr_{S_r}K_{m,a}=0$ for every $r>r_+$. Therefore the coordinate spheres
$\{S_r\}_{r>r_+}$ form a TMCF foliation, and $(M,g_{m,a},K_{m,a})$ is tangentially maximal.
Equivalently, every constant graph solves \eqref{eq:TMCF-equation}.
\end{proof}

\begin{remark}[Further examples]\label{rem:other-examples}
Kerr is by no means the only spacetime whose exterior admits a natural tangentially maximal foliation. The same argument applies verbatim to the Kerr--Newman family of charged rotating black holes in Boyer--Lindquist-type coordinates: the shift has no radial component, the spatial metric is already of the form $\lambda^2 dr^2 + \gamma$, and the coefficients of $\gamma$ are independent of the azimuthal coordinate, so the divergence of $\beta^{\sharp_\gamma}$ along each coordinate sphere vanishes identically. More generally, any stationary axisymmetric exterior that admits coordinates $(t,r,\vartheta,\varphi)$ in which the spatial slices are orthogonal to the rotation Killing field and the shift is purely azimuthal falls into this framework. In the spherically symmetric subcase, such as for Schwarzschild and Reissner--Nordstr\"om in standard coordinates, the shift vanishes altogether and $K\equiv 0$ on the standard slices, so the tangentially maximal condition holds trivially; this is a special case of Example~\ref{ex:timesym-intro} in the introduction.
\end{remark}

\subsection{Parabolic Structure of the TMCF Equation and the Good Gauge}

In this subsection we now analyze the TMCF equation as a nonlinear PDE for a graph function
\[
f=f(r,p),
\qquad
(r,p)\in M\cong (r_0,\infty)\times S^2.
\]
We first identify its principal part and show that, under natural admissibility assumptions, the
linearized operator at a solution has the parabolic sign corresponding to forward parabolicity in
the inward radial direction. We then impose the good gauge
\[
b=0,
\qquad
\beta_r=0,
\]
in which the equation simplifies substantially and can be rewritten explicitly in quasilinear
parabolic form. Later, this good gauge will be produced on sufficiently late slabs away from the
horizon by the coordinate reduction proved in appendix~\ref{app:late-slab-gauge}.

\vv

Let $(\mathcal{M},\gtime)$ be an asymptotically flat spacetime determined by the parameters
\[
\mathcal{S}=(N,\lambda,\beta,b,\gamma),
\]
so that the spacetime metric takes the general ADM form
\[
\gtime
=
-\big( \underbrace{N^2-|\beta|_{g_t}^2}_{=:\alpha}\big)\,dt^2
+2\,\beta\odot dt
+\underbrace{\big(\lambda^2+|b|_\gamma^2\big)\,dr^2+2b\odot dr+\gamma}_{=:g_t}.
\]
We treat the background coefficients $\mathcal S$ as fixed, and study the \emph{TMCF equation} as
a nonlinear PDE for $f$.

\vv

Given a function $f:M\to \R$ for which the graph
\[
M_f=\{t=f(x)\}
\]
is spacelike, we evaluate the parameters $\mathcal S$ on $M_f$, which we denote by
\[
(N|_{t=f},\lambda|_{t=f},\beta|_{t=f},b|_{t=f},\gamma|_{t=f}).
\]
We will drop the pullback notation $|_{t=f}$ whenever it is clear from context.

Recall from the previous subsection the following quantities on $M_f$:
\begin{align}
\alpha &:=N^2-|\beta|_{g_t}^2
      =N^2-|\beta^T|_\gamma^2-(\beta^\perp)^2,
\\
\beta_f&:=\beta-\alpha\,df,
\qquad
\beta_f^T=\beta^T-\alpha\,\slashed d f,
\\
b_f&:=b+\beta_r\,\slashed d f+f_r\,\beta_f^T,
\\
\gamma_f(r)&=\gamma+2\,\beta^T\odot \slashed d f-\alpha\,(\slashed d f)^2,
\\
\mu_f^T
&:=\frac{\det\gamma_f}{\det\gamma}
=
\big(1+\langle \beta^T,\slashed d f\rangle_\gamma\big)^2
-
\big(N^2-(\beta^\perp)^2\big)\,|\slashed d f|_\gamma^2,
\\
\lambda_f^2&=\lambda^2+|b|_\gamma^2+2\beta_r f_r-\alpha f_r^2-|b_f|_{\gamma_f}^2,
\\
\mu_f
&:=
\big(1+\langle \beta, df\rangle_{g_t}\big)^2-N^2|df|_{g_t}^2,
\qquad
N_f:=\frac{N}{\sqrt{\mu_f}},
\\
g_f&=(\lambda_f^2+|b_f|_{\gamma_f}^2)\,dr^2+2b_f\odot dr+\gamma_f(r),
\\
H_{f,r}&=\frac{1}{2\lambda_f}\,
\tr_{\gamma_f}\!\Big(\partial_r\gamma_f-\mathcal L_{b_f^{\sharp_{\gamma_f}}}\gamma_f\Big),
\\
{\bf n}_f&=\frac{1}{\lambda_f}\Big(\partial_r-b_f^{\sharp_{\gamma_f}}\Big).
\end{align}
Here $\beta^\perp=\beta({\bf n}_t)$ denotes the normal component of the shift, as in
\eqref{eq:beta-perp}.

In particular, $g_f$ is the induced metric on $M_f$, $\gamma_f$ is the induced metric on
$S_r\subset M_f$, $H_{f,r}$ is the mean curvature of $S_r\subset (M_f,g_f)$, and ${\bf n}_f$ is the
outward unit normal to $S_r\subset (M_f,g_f)$.

\medskip

The TMCF equation is
\begin{equation}\label{eq:TMCF-nonlinear}
\mathcal F[f]:=
\cancel{div}_{\gamma_f}(\beta_f^T)
+
H_{f,r}\,\beta_f({\bf n}_f)
-\frac12\,\tr_{\gamma_f}(\partial_t\gamma_f)
=0
\qquad\text{on } M.
\end{equation}
Observe that the geometric quantities above are defined only when $M_f$ is spacelike; in
particular, $\lambda_f$, $\tr_{\gamma_f}$, $\sharp_{\gamma_f}$, and
$\cancel{div}_{\gamma_f}$ require $g_f$ to be Riemannian. Analytically, however, the explicit
good-gauge equation derived below only requires the leaf metric $\gamma_f$ to be Riemannian, not
the full graph metric $g_f$.

\vv

We now characterize the PDE-type and the principal part of $\mathcal F$. At first sight, since
$H_{f,r}$ involves $\partial_r\gamma_f$, one might expect mixed derivatives
$\slashed\nabla(f_r)$ to appear. A crucial cancellation occurs between $\partial_r\gamma_f$ and
$\mathcal L_{b_f^{\sharp_{\gamma_f}}}\gamma_f$: after expansion, all terms containing
$\slashed\nabla(f_r)$ cancel identically.

\begin{lem}[Principal coefficients and linearized parabolic structure]\label{lem:PDEtype}
The TMCF equation \eqref{eq:TMCF-nonlinear} has the schematic form
\begin{equation}\label{eq:Fschematic}
\mathcal F\Big(r,p,\ f,\ f_r,\ \slashed\nabla f,\ \slashed\nabla^2 f;\
\mathcal S|_{t=f},\ \partial_t\mathcal S|_{t=f},\ \partial_r\mathcal S|_{t=f},\
\slashed\nabla\mathcal S|_{t=f}\Big)=0,
\end{equation}
where the background coefficients and their first derivatives are evaluated on the graph $t=f(x)$.

Furthermore, letting $N_f$ denote the lapse associated to $\tau=t-f(x)$, the principal coefficients
are
\begin{align}
\mathcal F_{\slashed\nabla^2 f}&:= \frac{\partial \mathcal F}{\partial \slashed \nabla^2 f}
=
-\,N_f^2\,\gamma_f^{-1},
\label{eq:F-Hess}
\\
\mathcal F_{f_r}&:= \frac{\partial \mathcal F}{\partial f_r}
=
-\frac{H_{f,r}}{\lambda_f}\,N_f^2
-\frac{\beta_f({\bf n}_f)}{\lambda_f}\,\mathcal F[f].
\label{eq:F-fr}
\end{align}
Here $\mathcal F_{\slashed\nabla^2 f}$ is viewed as a $(2,0)$-tensor. In particular, if $f$ is a
solution of \eqref{eq:TMCF-nonlinear}, then
\begin{equation}\label{eq:F-fr-onshell}
\left.\mathcal F_{f_r}\right|_{\mathcal F=0}
=
-\frac{H_{f,r}}{\lambda_f}\,N_f^2.
\end{equation}
Consequently, if $f$ is a solution of \eqref{eq:TMCF-nonlinear}, then the linearized operator
$D\mathcal F[f]$ acting on a variation $u$ has the form
\begin{equation}\label{eq:linF}
D\mathcal F[f](u)
=
(\mathcal F_{f_r})\,u_r
+
\mathcal F_{\slashed\nabla^2 f}:\slashed\nabla^2_{\gamma_f}u
=
-\frac{H_{f,r}}{\lambda_f}\,N_f^2\,u_r
-
N_f^2\,\gamma_f^{-1}:\slashed\nabla^2_{\gamma_f}u
+
\text{lower order terms}.
\end{equation}
Here $:$ denotes full contraction; for example,
$\gamma_f^{-1}:\slashed\nabla^2_{\gamma_f}u=\slashed\Delta_{\gamma_f}u$.
\end{lem}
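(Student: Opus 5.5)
The plan is to reduce the lemma to a computation at a single reference graph, using the covariance of the TMCF equation under time shifts. Given $f$ with $M_f$ spacelike, introduce $\tau=t-f(x)$ as in the proof of Lemma~\ref{lem:basic-geom}. In the coordinates $(\tau,r,p)$ the metric is again in ADM form, with lapse $N_f$, shift $\beta_f$ and spatial metric $g_f$ of the type \eqref{eq:g_f-split}; in general $b_f\neq 0$ and $(\beta_f)_r\neq 0$. A variation $f+\epsilon u$ of the graph is exactly the graph $\tau=\epsilon u$ in the new chart. Moreover, $\mathcal F$ is the numerator $N_f\,\tr_{S_r}K_f$ of \eqref{eq:trKf-nonstat}, and $\tr_{S_r}K_f$ is geometric. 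Hence it suffices to prove: \emph{for a general ADM background $(N,\lambda,\beta,b,\gamma)$, the functional $\mathcal F$ at $f\equiv 0$ has $\partial\mathcal F/\partial\slashed\nabla^2 f=-N^2\gamma^{-1}$ and $\partial\mathcal F/\partial f_r=-\frac{H}{\lambda}N^2-\frac{\beta^\perp}{\lambda}\mathcal F[0]$.} Applying this in the $\tau$-chart gives \eqref{eq:F-Hess}--\eqref{eq:F-fr}.

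There is one subtlety in this reduction. The functional computed in the $\tau$-chart equals $\mathcal F[f+\epsilon u]$ times the factor $N_{f+\epsilon u}/N^{\tau}_{\epsilon u}$ relating the two lapses. This factor depends on $u$ and $du$ only to first order. Its derivative therefore multiplies $\mathcal F[f]$ and can only alter the $f_r$-coefficient by a multiple of $\mathcal F[f]$, which is precisely the source of the second term in \eqref{eq:F-fr}. I will track this factor explicitly using \eqref{eq:Nf} and $\mu_f$.

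\textbf{Schematic form.} The dependence \eqref{eq:Fschematic} is read off from the definitions. The quantities $\gamma_f,\beta_f,b_f,\lambda_f,\mu_f$ involve only $f,df$ and the coefficients evaluated at $t=f$. The divergence, the Lie derivative in \eqref{eq:Hf}, and $\partial_r\gamma_f$ add one derivative, and $\partial_t\gamma_f$ involves $\partial_t\mathcal S$ but no derivatives of $df$. The one nontrivial point is that the mixed derivatives $\slashed\nabla f_r$ cancel. The term $\partial_r\gamma_f$ produces $2\beta_f^T\odot\slashed d f_r$. The term $\mathcal L_{b_f^\sharp}\gamma_f$, with $b_f\ni f_r\beta_f^T$, produces $2\,\slashed d f_r\odot\beta_f^T$, and these cancel in the difference. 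I will verify this at $f=0$, where it is cleanest.

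\textbf{Principal coefficients at $f=0$.} I expand to first order in $u$ and collect the $\slashed\nabla^2u$ terms from three sources:
\begin{itemize}
\item $\cancel{div}_{\gamma_u}(\beta^T-\alpha\slashed d u)$, giving $-\alpha\slashed\Delta u$, plus the $\beta^T\otimes\beta^T:\slashed\nabla^2u$-type term from the variation of the volume form $\sqrt{\mu^T_u}$ and of $\gamma_u^{-1}$;
\item $-\frac{1}{\lambda}\cancel{div}(b_u)\,\beta(\mathbf n)$, with $b_u=b+(\beta_r-\alpha u_r)\slashed d u+u_r\beta^T$;
\item the $u$-dependence of $\beta_u(\mathbf n_u)$ through $b_u^\sharp$, which is first order only.
\end{itemize}
These should combine, using $\beta_r=\lambda\beta^\perp+\langle b,\beta^T\rangle$ and $N^2=\alpha+|\beta^T|^2_\gamma+(\beta^\perp)^2$, into $-N^2\gamma^{-1}$ (up to the cancellation above).

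For the $u_r$-coefficient, I differentiate $\lambda_u^{-1}$ in $H_{u,r}$, $\beta_u(\mathbf n_u)$ and $N_u$ with respect to $u_r$, using \eqref{eq:bf-lambdaf}. Then I recombine the $\partial_{u_r}$ of the prefactor $1/\lambda_u$ appearing in both $H$ and $\beta(\mathbf n)$ into $-\frac{\beta^\perp}{\lambda}\mathcal F[0]$. The remainder should be $-\frac{H}{\lambda}N^2$. Formula \eqref{eq:F-fr-onshell} and the form \eqref{eq:linF} of the linearization then follow immediately by setting $\mathcal F=0$.

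\textbf{Main obstacle.} The hard step is the bookkeeping in the Hessian coefficient. Individual contributions produce anisotropic pieces such as $\beta^T\otimes\beta^T$ and $\beta^\perp\langle b,\beta^T\rangle$, and these must assemble into the isotropic $-N^2\gamma^{-1}$. The first attempt will be to simplify further by also choosing, at the given point, a frame in which $b=0$ and $\slashed d$ of the relevant coefficients is irrelevant, since only the top-order part matters. If the anisotropic terms do not cancel directly, I would then recheck whether the cancellation is meant modulo terms proportional to $\mathcal F[f]$, or only in the good gauge $b=0$, $\beta_r=0$ treated next.
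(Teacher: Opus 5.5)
Your route is correct, and it is organized differently from the paper's. The paper differentiates $\mathcal F=D+HB-T$ directly at an arbitrary spacelike graph:
\begin{itemize}
\item it reads off the Hessian contributions $-(\alpha+|\beta_f^T|^2_{\gamma_f})\gamma_f^{-1}$ from $D$ and $-\beta_f({\bf n}_f)^2\gamma_f^{-1}$ from $HB$;
\item it uses the $\tau$-chart only to recognize $N_f^2=\alpha+|\beta_f|^2_{g_f}$;
\item it gets the $f_r$-coefficient from $\partial_{f_r}\lambda_f=B$, $\partial_{f_r}B=-N_f^2/\lambda_f$ and $\partial_{f_r}H=(T-D-HB)/\lambda_f$.
\end{itemize}
You instead use the $\tau$-chart as a symmetry. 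Every jet becomes the zero jet over a new ADM background, so all $\slashed d f$-corrections (to $\mu^T$, $\gamma_f^{-1}$, etc.) disappear. The general formulas are then literally the zero-graph formulas with $(N,\gamma,H,\lambda,\beta^\perp)$ replaced by $(N_f,\gamma_f,H_{f,r},\lambda_f,\beta_f({\bf n}_f))$. This makes the structure transparent.

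The price is that the zero-graph computation must be done in a fully general gauge. The shifted chart has $b_f=b+\beta_r\slashed d f+f_r\beta_f^T\neq0$ and $(\beta_f)_r=\beta_r-\alpha f_r\neq0$ even when the original chart is in good gauge. So your fallback ``only in the good gauge'' would destroy the reduction. Fortunately it is not needed, because the cancellations are exact:
\begin{itemize}
\item In $D$, the $\beta^T\otimes\beta^T:\slashed\nabla^2u$ term from $\partial_A\log\sqrt{\mu^T_u}$ cancels the one from $\partial_A\gamma_u^{AB}$. This leaves $-(\alpha+|\beta^T|^2_\gamma)\slashed\Delta u$.
\item In $\cancel{div}_{\gamma_u}(b_u^{\sharp})$, the $b\otimes\beta^T$ terms cancel by symmetry of the Hessian. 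This leaves $(\beta_r-\langle b,\beta^T\rangle_\gamma)\slashed\Delta u=\lambda\beta^\perp\slashed\Delta u$, so $HB$ contributes $-(\beta^\perp)^2\slashed\Delta u$.
\item At the zero jet, $\partial_{u_r}\lambda_u=\beta^\perp$, $\partial_{u_r}B=-N^2/\lambda$ and $\partial_{u_r}H=(T-D-H\beta^\perp)/\lambda$. These are exactly the paper's identities.
\end{itemize}

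One correction: the ``subtlety'' you flag does not exist. By \eqref{eq:Nf}, $N_f$ is the lapse of the function $t-f$. The lapse attached to the graph $\tau=\epsilon u$ in the shifted chart is the lapse of $\tau-\epsilon u=t-(f+\epsilon u)$. These are the same function, so the shifted-chart functional equals $\mathcal F[f+\epsilon u]$ identically. Hence the second term of \eqref{eq:F-fr} does not come from any conversion factor. It is the $-\frac{\beta^\perp}{\lambda}\mathcal F[0]$ term of your zero-graph computation, coming from differentiating the $1/\lambda_u$ prefactor, transported to the shifted chart. As written, you attribute that term to both sources, which would double count if the factor were nontrivial. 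When you ``track the factor explicitly'' you will find it is $1$.
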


\begin{proof}
We first show that the mixed derivatives $\slashed\nabla(f_r)$ cancel. We observe that only the
divergence term and the mean-curvature term can contribute second derivatives of $f$. Since
\[
\gamma_f=\gamma+2\beta^T\odot \slashed d f-\alpha\,(\slashed d f)^2,
\]
the $\slashed\nabla(f_r)$-part of $\partial_r\gamma_f$ is
\[
2\,\beta_f^T\odot \slashed\nabla f_r.
\]
On the other hand, since
\[
b_f=b+\beta_r\,\slashed d f+f_r\,\beta_f^T,
\]
the $\slashed\nabla(f_r)$-part of $\mathcal L_{b_f^{\sharp_{\gamma_f}}}\gamma_f$ is the same tensor
\[
2\,\beta_f^T\odot \slashed\nabla f_r.
\]
Hence the mixed derivatives cancel identically in
\[
\partial_r\gamma_f-\mathcal L_{b_f^{\sharp_{\gamma_f}}}\gamma_f,
\]
and therefore in $H_{f,r}$.

Next, we compute the tangential Hessian coefficient. Viewing
$\mathcal F_{\slashed\nabla^2 f}$ as the coefficient tensor obtained by differentiating with
respect to the tangential second derivatives of $f$, the divergence term contributes
\[
\big(\cancel{div}_{\gamma_f}(\beta_f^T)\big)_{\slashed\nabla^2 f}
=
-\big(\alpha+|\beta_f^T|_{\gamma_f}^2\big)\,\gamma_f^{-1}.
\]
The mean-curvature term contributes
\[
\big(H_{f,r}\,\beta_f({\bf n}_f)\big)_{\slashed\nabla^2 f}
=
-\beta_f({\bf n}_f)^2\,\gamma_f^{-1}.
\]
Therefore
\[
\mathcal F_{\slashed\nabla^2 f}
=
-\Big(\alpha+|\beta_f^T|_{\gamma_f}^2+\beta_f({\bf n}_f)^2\Big)\gamma_f^{-1}.
\]

Now let $\tau=t-f(x)$. In the coordinates $(\tau,r,p)$, the metric can be written as
\[
\gtime
=
-\alpha_\tau\,d\tau^2+2\,\beta_\tau\odot d\tau+g_\tau
=
-\big(N_\tau^2-|\beta_\tau|_{g_\tau}^2\big)\,d\tau^2
+2\,\beta_\tau\odot d\tau
+g_\tau,
\]
where
\[
\alpha_\tau:=\alpha|_{t=\tau+f},
\qquad
\beta_\tau:= (\beta-\alpha\,df)\big|_{t=\tau+f},
\qquad
g_\tau:= \big(g_t+2\,\beta\odot df-\alpha\,df^2\big)\big|_{t=\tau+f}.
\]
In particular, at $\tau=0$,
\[
\alpha_0=\alpha,
\qquad
\beta_0=\beta_f,
\qquad
g_0=g_f,
\qquad
N_0=N_f.
\]
Hence, on the graph,
\[
N_f^2=\alpha+|\beta_f|_{g_f}^2
      =\alpha+|\beta_f^T|_{\gamma_f}^2+\beta_f({\bf n}_f)^2.
\]
This proves \eqref{eq:F-Hess}.

For the $f_r$-coefficient, set
\[
H:=H_{f,r},
\qquad
B:=\beta_f({\bf n}_f),
\qquad
D:=\cancel{div}_{\gamma_f}(\beta_f^T),
\qquad
T:=\frac12\tr_{\gamma_f}(\partial_t\gamma_f),
\]
so that $\mathcal F=D+HB-T$. Since $\gamma_f$ and $\beta_f^T$ are independent of $f_r$, only
$H$ and $B$ contribute:
\[
\mathcal F_{f_r}=(\partial_{f_r}H)\,B+H\,(\partial_{f_r}B).
\]
A direct differentiation of the formulas above gives
\[
\partial_{f_r}\lambda_f=B,
\qquad
\partial_{f_r}B=-\frac{N_f^2}{\lambda_f},
\qquad
\partial_{f_r}H=\frac{T-D-HB}{\lambda_f}.
\]
Substituting these identities yields
\[
\mathcal F_{f_r}
=
\frac{B}{\lambda_f}(T-D-HB)-\frac{H}{\lambda_f}N_f^2
=
-\frac{H}{\lambda_f}N_f^2-\frac{B}{\lambda_f}\,\mathcal F[f],
\]
which is \eqref{eq:F-fr}. The on-shell formula \eqref{eq:F-fr-onshell} follows immediately, and
\eqref{eq:linF} then follows from \eqref{eq:F-Hess} and \eqref{eq:F-fr-onshell}.
\end{proof}

\begin{cor}\label{cor:general-parabolicity}
Suppose $f$ is a solution of \eqref{eq:TMCF-nonlinear} for which $M_f$ is spacelike and
$H_{f,r}>0$. Then, locally, the implicit function theorem rewrites the equation in the form
\begin{equation}\label{eq:local-quasilinear-form}
f_r = \Phi \Big(r,p,\ f,\ \slashed\nabla f,\ \slashed\nabla^2 f;\
\mathcal S|_{t=f},\ \partial_t\mathcal S|_{t=f},\ \partial_r\mathcal S|_{t=f},\
\slashed\nabla\mathcal S|_{t=f}\Big),
\end{equation}
where $\Phi$ is smooth in its arguments near the given solution.
\end{cor}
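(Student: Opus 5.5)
The plan is to apply the implicit function theorem to the map $\mathcal F$ from Lemma~\ref{lem:PDEtype}, regarded as a smooth function of the finitely many real variables
\[
\big(r,p,\ f,\ f_r,\ \slashed\nabla f,\ \slashed\nabla^2 f;\ \mathcal S|_{t=f},\ \partial_t\mathcal S|_{t=f},\ \partial_r\mathcal S|_{t=f},\ \slashed\nabla\mathcal S|_{t=f}\big),
\]
and to solve $\mathcal F=0$ for the single scalar slot $f_r$. The first step is to check that $\mathcal F$ is a smooth (in fact real-analytic) function of these arguments on the open set where $\mu_f>0$ and $\lambda_f>0$.

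\textbf{Smoothness.} I would verify this directly from the explicit formulas for $\gamma_f$, $b_f$, $\lambda_f$, $\beta_f$, ${\bf n}_f$, $H_{f,r}$ and $N_f$ recorded before Lemma~\ref{lem:PDEtype}. Each is built from the background coefficients and the derivatives of $f$ by polynomial operations, inversion of the positive definite $\gamma_f$, and square roots of the positive quantities $\lambda_f^2$ and $\mu_f$. The proof of Lemma~\ref{lem:PDEtype} also shows that the mixed derivatives $\slashed\nabla f_r$ cancel identically, so the equation genuinely depends only on the listed arguments. In particular, $f_r$ enters only algebraically, through $\lambda_f$, $b_f$ and $\beta_f({\bf n}_f)$, so $\mathcal F$ is a smooth function of the scalar $f_r$ with all other slots held fixed.

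\textbf{Nondegeneracy.} The key step is that the partial derivative in the $f_r$ slot is nonzero at the given solution. By \eqref{eq:F-fr-onshell},
\[
\mathcal F_{f_r}\big|_{\mathcal F=0}=-\frac{H_{f,r}}{\lambda_f}\,N_f^2 .
\]
Spacelikeness gives $\mu_f>0$, so $N_f=N/\sqrt{\mu_f}>0$ and $\lambda_f>0$. The hypothesis $H_{f,r}>0$ then makes this derivative strictly negative.

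\textbf{Applying the implicit function theorem.} At each point $(r,p)$, the scalar implicit function theorem yields, on a neighbourhood of the tuple of values taken by the solution, a smooth function $\Phi$ with the property that $\mathcal F=0$ holds exactly when $f_r=\Phi(\cdots)$. Continuity of the solution and of the nondegeneracy condition lets me take these neighbourhoods uniformly on compact subsets of the domain. Uniqueness in the implicit function theorem lets the local $\Phi$'s patch together along the graph of the solution's jet. The solution then satisfies \eqref{eq:local-quasilinear-form}.

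\textbf{Main obstacle.} The obstacle is essentially bookkeeping rather than a genuine difficulty: making sure that $\mathcal F$ depends smoothly on $f_r$ and has no hidden $\slashed\nabla f_r$ dependence. Both points are already settled by the cancellation and the derivative identities in the proof of Lemma~\ref{lem:PDEtype}. As a remark, I would note that $\partial\Phi/\partial\slashed\nabla^2 f=-\mathcal F_{\slashed\nabla^2 f}/\mathcal F_{f_r}=-\lambda_f H_{f,r}^{-1}\gamma_f^{-1}$ is negative definite. This is the backward-parabolic sign in $r$ that is used later.
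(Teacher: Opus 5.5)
Your proposal is correct and follows the same route as the paper: you apply the implicit function theorem in the scalar $f_r$ slot of $\mathcal F$, with nondegeneracy supplied by \eqref{eq:F-fr-onshell}, i.e.\ $\mathcal F_{f_r}=-\frac{H_{f,r}}{\lambda_f}N_f^2<0$ along the solution, using spacelikeness for $N_f,\lambda_f>0$ and the hypothesis $H_{f,r}>0$. The paper records this as immediate from Lemma~\ref{lem:PDEtype}; your extra smoothness check, your patching remark (not needed, since the statement is only local), and your computation $\partial\Phi/\partial\slashed\nabla^2 f=-\frac{\lambda_f}{H_{f,r}}\gamma_f^{-1}$ are all consistent with it.
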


\begin{proof}
This is immediate from Lemma~\ref{lem:PDEtype} and the implicit function theorem, since
\eqref{eq:F-fr-onshell} shows that
\[
\mathcal F_{f_r}=-\frac{H_{f,r}}{\lambda_f}\,N_f^2<0
\]
along the solution.
\end{proof}

We emphasize, however, that the function $\Phi$ is only known to exist via the implicit function
theorem and is not known explicitly. We therefore now impose the good gauge
\[
b=0,
\qquad
\beta_r=0,
\]
in which the algebra simplifies and one obtains an explicit expression of the TMCF equation in the
form
\[
f_r + \mathcal{A}(r,p,f,\slashed\nabla f):\slashed\nabla_\gamma^2 f
=
\mathcal{B}(r,p,f,\slashed\nabla f),
\]
where $:$ denotes the full contraction of a $(2,0)$-tensor with a $(0,2)$-tensor. This expression will not require $M_f$ to be spacelike and will in fact describe the tangentially maximal condition in general without restricting its causal structure.

\vv

\subsubsection*{The good gauge}

From now on, we impose
\begin{equation}\label{eq:good-gauge}
b\equiv 0,
\qquad
\beta_r\equiv 0
\quad \text{on } \mathcal{M}.
\end{equation}
Then $g_t=\lambda^2dr^2+\gamma$, $\beta=\beta^T$ is tangential to the spheres $S_{t,r}$, and we
have the following simplifications:
\begin{equation}
\alpha=N^2-|\beta^T|_\gamma^2,
\qquad
b_f=f_r\,\beta_f^T,
\qquad
\lambda_f^2=\lambda^2-\Big(\alpha+|\beta_f^T|_{\gamma_f}^2\Big)f_r^2.
\end{equation}
Moreover,
\begin{equation}
\mu_f^T
=
\big(1+\langle \beta^T,\slashed d f\rangle_\gamma\big)^2
-
N^2|\slashed d f|_\gamma^2,
\qquad
\mu_f
=
\mu_f^T-\frac{N^2}{\lambda^2}f_r^2.
\end{equation}

\vv

\begin{lem}[Elementary identities in the good gauge]
\label{lem:beta-r-gauge-identities}
Assume the good gauge in \eqref{eq:good-gauge}. Then the following identities hold for every
spacelike graph $M_f$:
\begin{enumerate}[label=\textup{(\arabic*)}]
\item
\[
\det\gamma_f=\mu_f^T\,\det\gamma,
\qquad
\det g_f=\lambda^2\,\mu_f\,\det\gamma.
\]
In particular,
\[
\gamma_f \text{ is Riemannian }\iff \mu_f^T>0,
\qquad
g_f \text{ is Riemannian }\iff \mu_f>0.
\]

\item The pointwise bound
\begin{equation}\label{eq:muT-sufficient}
|\slashed d f|_\gamma<\frac{1}{N+|\beta^T|_\gamma}
\end{equation}
is sufficient to ensure that $\mu_f^T>0$, and hence that $\gamma_f$ is Riemannian.

\item
\begin{equation}\label{eq:gammaf-inverse}
\gamma_f^{-1}
=
\gamma^{-1}
+\frac{1}{\mu_f^T}\Big(
N^2\big((\slashed d f)^{\sharp_\gamma}\big)^2
-2\big(1+\langle \beta^T,\slashed d f\rangle_\gamma\big)\,
(\slashed d f)^{\sharp_\gamma}\odot(\beta^T)^{\sharp_\gamma}
+|\slashed d f|_\gamma^2\big((\beta^T)^{\sharp_\gamma}\big)^2
\Big),
\end{equation}
where $X^2:=X\otimes X$.

\item
\begin{equation}\label{eq:betafT-sharp-gammaf}
(\beta_f^T)^{\sharp_{\gamma_f}}
=
\frac{
\big(1+\langle \beta^T,\slashed d f\rangle_\gamma\big)(\beta^T)^{\sharp_\gamma}
-
N^2(\slashed d f)^{\sharp_\gamma}
}{\mu_f^T}.
\end{equation}
Consequently,
\begin{equation}\label{eq:betafT-norm-id}
\alpha+|\beta_f^T|_{\gamma_f}^2=\frac{N^2}{\mu_f^T},
\qquad
N_f^2=\frac{N^2}{\mu_f}.
\end{equation}

\item
\begin{equation}\label{eq:lambdaf-mu-muT}
\lambda_f^2=\lambda^2\,\frac{\mu_f}{\mu_f^T}.
\end{equation}

\item
\begin{equation}\label{eq:betaf-nf-simplified}
\beta_f({\bf n}_f)
=
-\frac{N^2}{\mu_f^T}\,\frac{f_r}{\lambda_f}.
\end{equation}

\item Given any $1$-form $\omega$ on $S_r$,
\begin{equation}\label{eq:div-gammab}
\cancel{div}_{\gamma_f}(\omega)
=
\frac{1}{\sqrt{\mu_f^T}}\,
\cancel{div}_{\gamma}\!\Big(\sqrt{\mu_f^T}\,\omega^{\sharp_{\gamma_f}}\Big).
\end{equation}
In particular,
\begin{equation}\label{eq:div-betafT-explicit}
\cancel{div}_{\gamma_f}(\beta_f^T)
=
\frac{1}{\sqrt{\mu_f^T}}\,
\cancel{div}_{\gamma}\!\left(
\frac{
\big(1+\langle \beta^T,\slashed d f\rangle_\gamma\big)(\beta^T)^{\sharp_\gamma}
-
N^2(\slashed d f)^{\sharp_\gamma}
}{\sqrt{\mu_f^T}}
\right).
\end{equation}
\end{enumerate}
\end{lem}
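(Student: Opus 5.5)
The identities are all pointwise linear algebra on a single sphere. I would reduce them to the two-dimensional orthonormal-frame computation that already gave \eqref{eq:muTf-general}, and combine this with the block structure of $g_f$ in the good gauge. Fix a point and a $\gamma$-orthonormal frame $e_1,e_2$ of $T_pS^2$. Write $v:=(\slashed d f)^{\sharp_\gamma}$, $w:=(\beta^T)^{\sharp_\gamma}$ and $s:=1+\langle w,v\rangle_\gamma$. In this notation $\gamma_f=\gamma+2\,w^\flat\odot v^\flat-\alpha\,(v^\flat)^2$ and $\alpha=N^2-|w|^2$. Everything then becomes a statement about a rank-two perturbation of the identity on $\R^2$.

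\textbf{Step 1 (items (1), (2)).} Apply the matrix determinant lemma in its rank-two form, $\det(I+UCU^\top)=\det(C^{-1}+U^\top U)\det C$ with $U=[v\;w]$, or simply expand $\det\gamma_f$ in the frame. This gives
\[
\det\gamma_f/\det\gamma=s^2-N^2|v|^2=\mu_f^T,
\]
which is \eqref{eq:muTf-general} with $\beta^\perp=0$. For $g_f$ in the good gauge, $g_f=(\lambda_f^2+|b_f|^2_{\gamma_f})dr^2+2b_f\odot dr+\gamma_f$, so a Schur complement gives $\det g_f=\lambda_f^2\det\gamma_f$. Item (1) then follows from item (5); alternatively, compute $\det g_f$ directly as the determinant of a rank-two update of $\lambda^2dr^2+\gamma$ by the vectors $(f_r,v)$ and $(0,w)$, which gives $\lambda^2\mu_f\det\gamma$. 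Positivity of $\gamma_f$ and $g_f$ follows by continuity along $f\mapsto\epsilon f$, since the determinant never vanishes along the path and the metric is positive definite at $\epsilon=0$. For (2), Cauchy--Schwarz gives $s\ge 1-|w||v|$, so $s-N|v|\ge 1-(N+|w|)|v|>0$. Hence $\mu_f^T=(s-N|v|)(s+N|v|)>0$.

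\textbf{Step 2 (items (3), (4)).} Here I verify candidate formulas directly rather than invert. To check \eqref{eq:gammaf-inverse}, contract the proposed inverse with $\gamma_f$ and confirm that the result is the identity on the span of $\{v,w\}$; it is trivially the identity on the orthogonal complement. This amounts to checking a handful of scalar identities in $|v|^2$, $|w|^2$, $\langle v,w\rangle$ and $s$. For \eqref{eq:betafT-sharp-gammaf}, note that $\beta_f^T=w^\flat-\alpha v^\flat$. Apply $\gamma_f$ to the candidate vector $X:=(s\,w-N^2v)/\mu_f^T$ and check $\gamma_f(X,\cdot)=w^\flat-\alpha v^\flat$, which again reduces to coefficient matching. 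Then
\[
|\beta_f^T|^2_{\gamma_f}=\beta_f^T(X)=\frac{s|w|^2-N^2\langle v,w\rangle-\alpha s\langle v,w\rangle+\alpha N^2|v|^2}{\mu_f^T}.
\]
Adding $\alpha$ and using $\alpha=N^2-|w|^2$ should simplify to $N^2/\mu_f^T$. The identity $N_f^2=N^2/\mu_f$ is just \eqref{eq:Nf}.

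\textbf{Step 3 (items (5)--(7)).} From the good-gauge formula $\lambda_f^2=\lambda^2-(\alpha+|\beta_f^T|^2_{\gamma_f})f_r^2$ and (4), we get $\lambda_f^2=\lambda^2-N^2f_r^2/\mu_f^T=\lambda^2(\mu_f^T-N^2f_r^2/\lambda^2)/\mu_f^T$. Using $\mu_f=\mu_f^T-N^2f_r^2/\lambda^2$, this is (5). For (6), use $\beta_f=-\alpha f_r\,dr+\beta_f^T$, which holds since $\beta_r=0$, together with ${\bf n}_f=\lambda_f^{-1}(\partial_r-b_f^{\sharp_{\gamma_f}})$ and $b_f=f_r\beta_f^T$. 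This gives
\[
\beta_f({\bf n}_f)=\lambda_f^{-1}\bigl(-\alpha f_r-f_r|\beta_f^T|^2_{\gamma_f}\bigr)=-\frac{N^2}{\mu_f^T}\frac{f_r}{\lambda_f}.
\]
For (7), write the divergence as $\cancel{div}_{\gamma_f}\omega=(\det\gamma_f)^{-1/2}\partial_A\bigl((\det\gamma_f)^{1/2}(\omega^{\sharp_{\gamma_f}})^A\bigr)$ and substitute (1). Since the divergence of a vector field with respect to a volume form depends only on the volume form, the identity follows. Inserting (4) gives \eqref{eq:div-betafT-explicit}.

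The main obstacle is the algebraic bookkeeping in Step 2: verifying the inverse formula and the simplification $\alpha+|\beta_f^T|^2_{\gamma_f}=N^2/\mu_f^T$. I would organize this by working in the basis $\{v,w\}$, when they are independent, and treating the degenerate case $v\parallel w$ by continuity.
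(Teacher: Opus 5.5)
Your derivations of the identities are correct and follow essentially the paper's route. This covers $\det\gamma_f$, \eqref{eq:gammaf-inverse}, \eqref{eq:betafT-sharp-gammaf}, $\alpha+|\beta_f^T|^2_{\gamma_f}=N^2/\mu_f^T$, the good-gauge formula for $\lambda_f^2$ in (5), $b_f=f_r\beta_f^T$ in (6), the volume-form ratio in (7), and $\det g_f=\lambda_f^2\det\gamma_f$ combined with (5). Your coefficient-matching checks for (3)--(4) hold as tensor identities even when $v\parallel w$, so no continuity argument is needed there.

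\textbf{The gap: the positivity argument via $\epsilon\mapsto\epsilon f$.} You claim that $\gamma_{\epsilon f}$ and $g_{\epsilon f}$ have nonvanishing determinant along the whole path. This is false in general. Set $a:=\langle w,v\rangle-N|v|$ and $b:=\langle w,v\rangle+N|v|$. Then $\mu^T_{\epsilon f}=(1+\epsilon a)(1+\epsilon b)$. If $s=1+\langle w,v\rangle<-N|v|$, then $a\le b<-1$, so $\mu^T_{\epsilon f}$ vanishes at $\epsilon=-1/a$ and $\epsilon=-1/b$, both in $(0,1)$, even though $\mu_f^T>0$ at $\epsilon=1$.

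Concretely, take $N=1$, $|v|=1$ and $w=-3v$, so $\alpha=-8$. In a frame adapted to $v$, $\gamma_{\epsilon f}=\mathrm{diag}(1-6\epsilon+8\epsilon^2,\,1)$. This degenerates at $\epsilon=1/4$ and $\epsilon=1/2$ and equals $\mathrm{diag}(3,1)$ at $\epsilon=1$. The regime requires $|\beta^T|_\gamma>N$, i.e.\ $\alpha<0$, which the paper explicitly allows. Taking $f_r=0$ gives the same failure for $g_f$.

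Your homotopy argument does work under the hypothesis of (2), since $\epsilon v$ still satisfies \eqref{eq:muT-sufficient}. It does not prove the general equivalences in (1). These are meant pointwise, without a smallness assumption on $\slashed d f$, and the paper uses them that way. For example, $\mu_f^T>0$ alone is taken to mean the graph spheres are spacelike in Definition~\ref{def:parabolicity-conditions} and Proposition~\ref{prop:good-gauge-geometric-tmcf}.

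\textbf{The repair.} For $\gamma_f$, follow the paper and choose $e_1$ along $v$. Then $\gamma_f(e_2,e_2)=1$, and a symmetric $2\times 2$ form with one positive diagonal entry is positive definite if and only if its determinant is positive. For $g_f$, note that $\mu_f=\mu_f^T-N^2f_r^2/\lambda^2\le\mu_f^T$. So $\mu_f>0$ forces $\gamma_f>0$, and the Schur complement $\lambda_f^2=\lambda^2\mu_f/\mu_f^T>0$ then gives $g_f>0$. Alternatively, use that $M_f$ is spacelike if and only if $d\tau$ is timelike, which is \eqref{eq:Wf}.
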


\begin{proof}
Fix a point and choose a $\gamma$-orthonormal coframe $(e^1,e^2)$ such that
\[
\slashed d f=s\,e^1
\]
at that point. Write
\[
\beta^T=\beta_1\,e^1+\beta_2\,e^2.
\]
Then
\[
\gamma_f=
\begin{pmatrix}
1+2\beta_1s-\alpha s^2 & \beta_2 s\\
\beta_2 s & 1
\end{pmatrix}.
\]
Using $\alpha=N^2-\beta_1^2-\beta_2^2$, we compute
\[
\det\gamma_f
=
(1+\beta_1 s)^2-N^2 s^2
=
\big(1+\langle \beta^T,\slashed d f\rangle_\gamma\big)^2
-
N^2|\slashed d f|_\gamma^2
=
\mu_f^T.
\]
Since the lower-right entry is $1$, this also shows that $\gamma_f$ is positive definite if and
only if $\mu_f^T>0$.

Moreover, by Cauchy--Schwarz,
\[
\mu_f^T
=
\big(1+\langle \beta^T,\slashed d f\rangle_\gamma\big)^2
-
N^2|\slashed d f|_\gamma^2
\ge
\big(1-|\beta^T|_\gamma\,|\slashed d f|_\gamma\big)^2
-
N^2|\slashed d f|_\gamma^2.
\]
Thus \eqref{eq:muT-sufficient} implies
\[
1-|\beta^T|_\gamma\,|\slashed d f|_\gamma
>
N\,|\slashed d f|_\gamma,
\]
hence $\mu_f^T>0$. This proves \textup{(2)}.

Direct inversion of the above $2\times2$ matrix yields \eqref{eq:gammaf-inverse}. Applying this
inverse to
\[
\beta_f^T=\beta^T-\alpha\,\slashed d f
\]
gives \eqref{eq:betafT-sharp-gammaf}. Taking the $\gamma_f$-norm of $\beta_f^T$ then yields
\[
\alpha+|\beta_f^T|_{\gamma_f}^2=\frac{N^2}{\mu_f^T}.
\]
Next,
\[
\lambda_f^2
=
\lambda^2-\Big(\alpha+|\beta_f^T|_{\gamma_f}^2\Big)f_r^2
=
\lambda^2-\frac{N^2}{\mu_f^T}f_r^2
=
\lambda^2\,\frac{\mu_f}{\mu_f^T},
\]
which proves \eqref{eq:lambdaf-mu-muT}. Since $g_f$ has the block form
\[
g_f=(\lambda_f^2+|b_f|_{\gamma_f}^2)\,dr^2+2b_f\odot dr+\gamma_f,
\]
we have $\det g_f=\lambda_f^2\det\gamma_f$, and therefore
\[
\det g_f=\lambda^2\mu_f\det\gamma.
\]
This proves \textup{(1)}.

Also,
\[
\beta_f({\bf n}_f)
=
\frac{1}{\lambda_f}\Big((\beta_f)_r-\langle \beta_f^T,b_f\rangle_{\gamma_f}\Big)
=
-\frac{f_r}{\lambda_f}\Big(\alpha+|\beta_f^T|_{\gamma_f}^2\Big),
\]
and therefore \eqref{eq:betaf-nf-simplified} follows from \eqref{eq:betafT-norm-id}.

Finally, from \eqref{eq:Wf} and \eqref{eq:good-gauge},
\[
\mu_f
=
\big(1+\langle \beta^T,\slashed d f\rangle_\gamma\big)^2
-
N^2|\slashed d f|_\gamma^2
-
\frac{N^2}{\lambda^2}f_r^2,
\]
so $N_f^2=N^2/\mu_f$.

The divergence identity \eqref{eq:div-gammab} follows from the ratio of volume forms
\[
d\mu_{\gamma_f}=\sqrt{\mu_f^T}\,d\mu_\gamma.
\]
Substituting \eqref{eq:betafT-sharp-gammaf} into \eqref{eq:div-gammab} gives
\eqref{eq:div-betafT-explicit}.
\end{proof}

\vv

We next rewrite $H_{f,r}/\lambda_f$ in a form adapted to the good gauge.
In a general gauge this quantity can depend on the evolution derivative
$f_r$.  The good-gauge identities remove this dependence: after using
the TMCF equation, $H_{f,r}/\lambda_f$ becomes a function only of the background coefficients, $f$, and the angular first derivatives
$\slashed d f$.  This is crucial because this scalar is the coefficient
which controls the parabolicity of the explicit TMCF equation.

 For any $1$-form $\omega$ on $S_r$, define
\begin{equation}\label{eq:QrQt-gauge}
Q_r(\omega):=\partial_r\gamma+2\,\partial_r\beta^T\odot \omega-\partial_r\alpha\,\omega^2,
\qquad
Q_t(\omega):=\partial_t\gamma+2\,\partial_t\beta^T\odot \omega-\partial_t\alpha\,\omega^2.
\end{equation}

\begin{lem}[Off-shell and on-shell formulas for $H_{f,r}/\lambda_f$]
\label{lem:H-over-lambda}
Assume the good gauge in \eqref{eq:good-gauge}. Then for every $f\in C^2(M)$ for which $g_f$ is
Riemannian,
\begin{equation}\label{eq:H-over-lambda-offshell}
\frac{H_{f,r}}{\lambda_f}
=
\frac{1}{2\lambda_f^2}
\left(
\tr_{\gamma_f}Q_r(\slashed d f)
+
f_r\,\tr_{\gamma_f}Q_t(\slashed d f)
-
2f_r\,\cancel{div}_{\gamma_f}(\beta_f^T)
\right).
\end{equation}
If $f$ moreover solves the TMCF equation \eqref{eq:TMCF-nonlinear}, then
\begin{equation}\label{eq:H-over-lambda-onshell}
\frac{H_{f,r}}{\lambda_f}
=
\frac{1}{2\lambda^2}\,\tr_{\gamma_f}Q_r(\slashed d f).
\end{equation}
\end{lem}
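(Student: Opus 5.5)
The plan is a direct computation from formula \eqref{eq:Hf} of Lemma~\ref{lem:basic-geom}, specialized to the good gauge \eqref{eq:good-gauge}, where $b_f=f_r\,\beta_f^T$. That formula gives
\[
\frac{H_{f,r}}{\lambda_f}=\frac{1}{2\lambda_f^2}\,\tr_{\gamma_f}\Big(\partial_r\gamma_f-\mathcal L_{b_f^{\sharp_{\gamma_f}}}\gamma_f\Big),
\]
so it suffices to compute the two tensors inside the trace separately.

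\textbf{Step 1: the radial derivative.} Here $\partial_r\gamma_f$ is the total $r$-derivative at fixed $p$ of
\[
\gamma(f,r)+2\beta^T(f,r)\odot\slashed d f-\alpha(f,r)(\slashed d f)^2 .
\]
By the chain rule, differentiating the background coefficients in their explicit $r$-slot gives $Q_r(\slashed d f)$. Differentiating them through $t=f$ gives $f_r\,Q_t(\slashed d f)$. Differentiating $\slashed d f$ gives
\[
2\beta^T\odot\slashed d f_r-2\alpha\,\slashed d f\odot\slashed d f_r=2\beta_f^T\odot\slashed d f_r .
\]
Hence
\[
\partial_r\gamma_f=Q_r(\slashed d f)+f_r\,Q_t(\slashed d f)+2\beta_f^T\odot\slashed d f_r .
\]
The same reasoning identifies the quantity $\partial_t\gamma_f$ appearing in \eqref{eq:TMCF-nonlinear} with $Q_t(\slashed d f)$.

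\textbf{Step 2: the Lie derivative.} Set $Y=(\beta_f^T)^{\sharp_{\gamma_f}}$. The identity $\mathcal L_{hY}\gamma_f=h\,\mathcal L_Y\gamma_f+2\,dh\odot Y^{\flat}$ gives
\[
\mathcal L_{f_rY}\gamma_f=f_r\,\mathcal L_Y\gamma_f+2\,\slashed d f_r\odot\beta_f^T .
\]
The mixed term cancels exactly against the one in Step~1; this is the cancellation already noted in Lemma~\ref{lem:PDEtype}. Using $\tr_{\gamma_f}\mathcal L_Y\gamma_f=2\,\cancel{div}_{\gamma_f}(\beta_f^T)$ and substituting both steps into the formula above yields the off-shell identity \eqref{eq:H-over-lambda-offshell}.

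\textbf{Step 3: on-shell reduction.} Write $D=\cancel{div}_{\gamma_f}(\beta_f^T)$, $T=\tfrac12\tr_{\gamma_f}Q_t(\slashed d f)$ and $B=\beta_f({\bf n}_f)$. The TMCF equation \eqref{eq:TMCF-nonlinear} says $D+H_{f,r}B-T=0$, so
\[
f_r\tr_{\gamma_f}Q_t-2f_rD=2f_r(T-D)=2f_rH_{f,r}B .
\]
By \eqref{eq:betaf-nf-simplified}, $B=-\dfrac{N^2f_r}{\mu_f^T\lambda_f}$. Multiplying the off-shell identity by $2\lambda_f^2$ then gives
\[
2\lambda_f^2\,\frac{H_{f,r}}{\lambda_f}=\tr_{\gamma_f}Q_r(\slashed d f)-2\,\frac{H_{f,r}}{\lambda_f}\,\frac{N^2}{\mu_f^T}f_r^2 .
\]
Collecting the $H_{f,r}/\lambda_f$ terms produces the factor $\lambda_f^2+\dfrac{N^2}{\mu_f^T}f_r^2$. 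By the good-gauge formula for $\lambda_f^2$ together with \eqref{eq:betafT-norm-id}, this factor equals $\lambda^2$. This gives \eqref{eq:H-over-lambda-onshell}.

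\textbf{Main obstacle.} The only delicate point is Step~1. One must treat $\partial_r$ as a total derivative along the graph, so that the $f_r\,Q_t$ term appears, while keeping the convention that $\partial_t\gamma_f$ in \eqref{eq:trKf-nonstat} holds $f$ fixed. With that bookkeeping settled, the rest is algebra using Lemma~\ref{lem:beta-r-gauge-identities}.
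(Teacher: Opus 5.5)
Your proposal is correct and matches the paper's proof essentially step for step. The paper also expands $\partial_r\gamma_f$ along the graph as $Q_r(\slashed d f)+f_r\,Q_t(\slashed d f)+2\beta_f^T\odot\slashed\nabla f_r$. It cancels the mixed term against $\mathcal L_{f_r(\beta_f^T)^{\sharp_{\gamma_f}}}\gamma_f$, and then closes the on-shell step with $D=T-H_{f,r}B$, \eqref{eq:betaf-nf-simplified}, and \eqref{eq:lambdaf-mu-muT}. The only cosmetic difference is that the paper writes the last step as $2H_{f,r}(\lambda_f-f_rB)=\tr_{\gamma_f}Q_r(\slashed d f)$ with $\lambda_f-f_rB=\lambda^2/\lambda_f$, which is your identity $\lambda_f^2+N^2f_r^2/\mu_f^T=\lambda^2$ divided by $\lambda_f$.
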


\begin{proof}
Recall that $\gamma_f$ satisfies
\[
\gamma_f=\gamma+2\beta^T\odot \slashed d f-\alpha(\slashed d f)^2.
\]
Differentiating along the graph $t=f(r,p)$ gives
\[
\partial_r\gamma_f
=
Q_r(\slashed d f)+f_r\,Q_t(\slashed d f)+2\,\beta_f^T\odot \slashed\nabla f_r.
\]
On the other hand, in the good gauge $b_f=f_r\beta_f^T$, so
\[
\mathcal L_{b_f^{\sharp_{\gamma_f}}}\gamma_f
=
\mathcal L_{f_r(\beta_f^T)^{\sharp_{\gamma_f}}}\gamma_f
=
f_r\,\mathcal L_{(\beta_f^T)^{\sharp_{\gamma_f}}}\gamma_f
+
2\,\beta_f^T\odot \slashed\nabla f_r.
\]
Subtracting, the mixed derivatives cancel:
\[
\partial_r\gamma_f-\mathcal L_{b_f^{\sharp_{\gamma_f}}}\gamma_f
=
Q_r(\slashed d f)+f_r\,Q_t(\slashed d f)-f_r\,\mathcal L_{(\beta_f^T)^{\sharp_{\gamma_f}}}\gamma_f.
\]
Taking the $\gamma_f$-trace and using
\[
\tr_{\gamma_f}\!\Big(\mathcal L_{(\beta_f^T)^{\sharp_{\gamma_f}}}\gamma_f\Big)
=
2\,\cancel{div}_{\gamma_f}(\beta_f^T)
\]
gives \eqref{eq:H-over-lambda-offshell}.

Now assume $f$ solves the TMCF equation. Set
\[
D:=\cancel{div}_{\gamma_f}(\beta_f^T),
\qquad
T:=\frac12\tr_{\gamma_f}Q_t(\slashed d f),
\qquad
B:=\beta_f({\bf n}_f).
\]
Then the TMCF equation is $D+H_{f,r}B-T=0$, that is,
\[
D=T-H_{f,r}B.
\]
Substituting this into \eqref{eq:H-over-lambda-offshell} yields
\[
2\lambda_f H_{f,r}
=
\tr_{\gamma_f}Q_r(\slashed d f)+2f_rH_{f,r}B.
\]
Hence
\[
2H_{f,r}\big(\lambda_f-f_rB\big)=\tr_{\gamma_f}Q_r(\slashed d f).
\]
Using \eqref{eq:betaf-nf-simplified} and \eqref{eq:lambdaf-mu-muT},
\[
\lambda_f-f_rB
=
\lambda_f+\frac{N^2}{\mu_f^T}\frac{f_r^2}{\lambda_f}
=
\frac{\lambda_f^2+\frac{N^2}{\mu_f^T}f_r^2}{\lambda_f}
=
\frac{\lambda^2}{\lambda_f}.
\]
Substituting this identity gives \eqref{eq:H-over-lambda-onshell}.
\end{proof}

\vv

It is convenient to introduce the scalar
\begin{equation}\label{eq:hhf-def}
\mathfrak a_f
:=
\frac{1}{2\lambda^2}\,\tr_{\gamma_f}Q_r(\slashed d f).
\end{equation}
By Lemma~\ref{lem:H-over-lambda}, this agrees with $H_{f,r}/\lambda_f$ along any solution of the
TMCF equation.

We now write the TMCF equation explicitly in the good gauge.

\begin{prop}[Explicit quasilinear form in the good gauge]
\label{prop:gammab-quasilinear}
Assume the good gauge in \eqref{eq:good-gauge}, and let $f\in C^2(M)$ satisfy $\mu_f>0$. On any
region where $\mathfrak a_f\neq 0$, the TMCF equation \eqref{eq:TMCF-nonlinear} is equivalent to
\begin{equation}\label{eq:TMCF-gammab}
f_r
+
\mathcal A(r,p,f,\slashed\nabla f):\slashed\nabla_\gamma^2 f
=
\mathcal B(r,p,f,\slashed\nabla f),
\end{equation}
where
\begin{equation}\label{eq:A-explicit-gauge}
\mathcal A
=
\frac{1}{\mathfrak a_f}\,\gamma_f^{-1}
=
\frac{1}{\mathfrak a_f}\left[
\gamma^{-1}
+\frac{1}{\mu_f^T}\Big(
N^2\big((\slashed d f)^{\sharp_\gamma}\big)^2
-2\big(1+\langle \beta^T,\slashed d f\rangle_\gamma\big)\,
(\slashed d f)^{\sharp_\gamma}\odot(\beta^T)^{\sharp_\gamma}
+|\slashed d f|_\gamma^2\big((\beta^T)^{\sharp_\gamma}\big)^2
\Big)
\right],
\end{equation}
and
\begin{equation}\label{eq:B-explicit-gauge}
\begin{aligned}
\mathcal B
:=
&\frac{1}{N^2\mathfrak a_f}
\Big(
\big(1+\langle \beta^T,\slashed d f\rangle_\gamma\big)\,\cancel{div}_{\gamma}(\beta^T)
+\langle \slashed\nabla_{(\beta^T)^{\sharp_\gamma}}\beta^T,\slashed d f\rangle_\gamma
-\langle \slashed d N^2,\slashed d f\rangle_\gamma
\Big)
\\
&\quad
-\frac{1}{N^2\mathfrak a_f\,\mu_f^T}
\Bigg(
\big(1+\langle \beta^T,\slashed d f\rangle_\gamma\big)
\left\langle
\slashed\nabla_{
\big(1+\langle \beta^T,\slashed d f\rangle_\gamma\big)(\beta^T)^{\sharp_\gamma}
-N^2(\slashed d f)^{\sharp_\gamma}
}
\beta^T,\slashed d f
\right\rangle_\gamma
\\
&\hspace{6em}
-\frac12 |\slashed d f|_\gamma^2
\left\langle
\slashed d N^2,
\big(1+\langle \beta^T,\slashed d f\rangle_\gamma\big)(\beta^T)^{\sharp_\gamma}
-N^2(\slashed d f)^{\sharp_\gamma}
\right\rangle_\gamma
\Bigg)
\\
&\quad
-\frac{\mu_f^T}{2N^2\mathfrak a_f}\,\tr_{\gamma_f}Q_t(\slashed d f).
\end{aligned}
\end{equation}
Here $:$ denotes full contraction.
\end{prop}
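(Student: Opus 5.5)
The plan is to start from the TMCF equation in the form $\mathcal F=D+H_{f,r}B-T=0$, with $D=\cancel{div}_{\gamma_f}(\beta_f^T)$, $B=\beta_f({\bf n}_f)$ and $T=\frac12\tr_{\gamma_f}Q_t(\slashed d f)$. The good gauge gives $\partial_t\gamma_f=Q_t(\slashed d f)$, since $f$ is held fixed. I will then isolate $f_r$ using only the good-gauge identities of Lemma~\ref{lem:beta-r-gauge-identities}. By \eqref{eq:betaf-nf-simplified}, $H_{f,r}B=-\frac{N^2}{\mu_f^T}f_r\,\frac{H_{f,r}}{\lambda_f}$. The key observation is that the off-shell formula \eqref{eq:H-over-lambda-offshell} makes $H_{f,r}/\lambda_f$ affine in $D$. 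Hence $\mathcal F=0$ is equivalent to $\frac{H_{f,r}}{\lambda_f}=\mathfrak a_f$, exactly as in the proof of Lemma~\ref{lem:H-over-lambda}.

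That argument only used $\lambda_f-f_rB=\lambda^2/\lambda_f$, which needs $\mu_f>0$. I will check that the implication reverses. Substituting $D=T-H_{f,r}B$ is an invertible affine change, because the coefficient $\lambda_f-f_rB=\lambda^2/\lambda_f$ is nonzero. So on $\{\mu_f>0\}$ the TMCF equation is equivalent to
\[
D-T=\frac{N^2}{\mu_f^T}f_r\,\mathfrak a_f .
\]
On the region $\mathfrak a_f\neq0$ I divide by $N^2\mathfrak a_f/\mu_f^T$, which gives
\[
f_r=\frac{\mu_f^T}{N^2\mathfrak a_f}\big(D-T\big).
\]
The $T$-term produces the last line of \eqref{eq:B-explicit-gauge} directly.

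The remaining work is to expand $\frac{\mu_f^T}{N^2\mathfrak a_f}D$ using \eqref{eq:div-betafT-explicit}. Write $V:=(1+\langle\beta^T,\slashed d f\rangle_\gamma)(\beta^T)^{\sharp_\gamma}-N^2(\slashed d f)^{\sharp_\gamma}$. Then
\[
\mu_f^T D=\sqrt{\mu_f^T}\,\cancel{div}_\gamma\big(V/\sqrt{\mu_f^T}\big)=\cancel{div}_\gamma V-\tfrac12\,\langle \slashed d\log\mu_f^T,V\rangle_\gamma\,\mu_f^T\cdot\tfrac{1}{\mu_f^T}.
\]
Expanding $\cancel{div}_\gamma V$ and $\slashed d\mu_f^T$ by the product rule gives second-order terms. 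From $\cancel{div}_\gamma V$ these are $\langle\beta^T\otimes\beta^T,\slashed\nabla^2f\rangle-N^2\slashed\Delta f$, and from $-\frac{1}{2\mu_f^T}\langle\slashed d\mu_f^T,V\rangle$ they are $-\frac{1}{\mu_f^T}\big[(1+\langle\beta^T,\slashed df\rangle)\slashed\nabla^2f(\beta^\sharp,V)-N^2\slashed\nabla^2f(\slashed df^\sharp,V)\big]$. All of these are contracted with the Hessian, and everything else is first order in $f$.

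The main obstacle is the bookkeeping: showing that the Hessian part assembles to $-\frac{N^2}{\mu_f^T}\,\gamma_f^{-1}:\slashed\nabla^2 f$. Once that holds, after multiplying by $\frac{1}{N^2\mathfrak a_f}$ it moves to the left-hand side as $\mathcal A:\slashed\nabla^2_\gamma f$ with $\mathcal A=\mathfrak a_f^{-1}\gamma_f^{-1}$. This is consistent with \eqref{eq:F-Hess}, since $N_f^2\mu_f/\mu_f^T\cdot\ldots$ reduces to $N^2/\mu_f^T$ after the $f_r$-normalization. I will verify it in the adapted orthonormal coframe of Lemma~\ref{lem:beta-r-gauge-identities}, where $\slashed df=s e^1$, comparing entry by entry with \eqref{eq:gammaf-inverse}. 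The first-order leftovers come from three sources. The first is $(1+\langle\beta^T,\slashed df\rangle)\cancel{div}_\gamma\beta^T$ together with $\langle\slashed\nabla_{\beta^\sharp}\beta^T,\slashed df\rangle$ from differentiating the prefactor. The second is $-\langle\slashed dN^2,\slashed df\rangle$. The third is the $\slashed d\mu_f^T$ terms in which the derivative falls on $\beta^T$ or $N^2$, namely $\langle\slashed\nabla_V\beta^T,\slashed df\rangle$ and $-|\slashed df|^2\langle\slashed dN,V\rangle N$. Matching these against \eqref{eq:B-explicit-gauge} completes the identification of $\mathcal B$.
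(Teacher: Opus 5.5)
Your proposal is correct and follows the paper's proof essentially step for step. Like the paper, you reduce the TMCF equation to $f_r=\frac{\mu_f^T}{N^2\mathfrak a_f}(D-T)$ via Lemma~\ref{lem:H-over-lambda} and $\lambda_f^2+\frac{N^2}{\mu_f^T}f_r^2=\lambda^2$, then expand $D$ through the divergence identity with $V_f$ and assemble the Hessian terms using \eqref{eq:gammaf-inverse}.

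Two small slips need fixing in the write-up:
\begin{itemize}
\item $\mathcal F=0$ is not equivalent to $H_{f,r}/\lambda_f=\mathfrak a_f$ by itself, since the latter holds automatically wherever $f_r=0$. The equivalence you actually use is nevertheless right, because $\mathcal F=\frac{\lambda^2}{\lambda_f^2}\big(D-T-\frac{N^2}{\mu_f^T}\mathfrak a_f f_r\big)$.
\item It is the Hessian part of $\mu_f^T D$, not of $D$, that equals $-N^2\gamma_f^{-1}:\slashed\nabla^2_\gamma f$. That is what makes your multiplier $\frac{1}{N^2\mathfrak a_f}$ correct.
\end{itemize}
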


\begin{proof}
Set
\[
D:=\cancel{div}_{\gamma_f}(\beta_f^T),
\qquad
T:=\frac12\tr_{\gamma_f}Q_t(\slashed d f).
\]
By \eqref{eq:betaf-nf-simplified}, the TMCF equation reads
\[
D-\frac{N^2}{\mu_f^T}\,\frac{H_{f,r}}{\lambda_f}\,f_r-T=0.
\]
If $f$ solves the TMCF equation, then by \eqref{eq:H-over-lambda-onshell},
\[
\frac{H_{f,r}}{\lambda_f}=\mathfrak a_f.
\]
Hence
\begin{equation}\label{eq:divform-proof}
f_r=\frac{\mu_f^T}{N^2\mathfrak a_f}\big(D-T\big).
\end{equation}

Now define the tangential vector field
\[
V_f
:=
\big(1+\langle \beta^T,\slashed d f\rangle_\gamma\big)(\beta^T)^{\sharp_\gamma}
-
N^2(\slashed d f)^{\sharp_\gamma}.
\]
By \eqref{eq:div-betafT-explicit},
\[
D
=
\frac{1}{\sqrt{\mu_f^T}}\,
\cancel{div}_{\gamma}\!\left(\frac{V_f}{\sqrt{\mu_f^T}}\right)
=
\frac{1}{\mu_f^T}\,\cancel{div}_{\gamma}(V_f)
-
\frac{1}{2(\mu_f^T)^2}\,\slashed d\mu_f^T(V_f).
\]
A direct computation gives
\[
\cancel{div}_{\gamma}(V_f)
=
\big(1+\langle \beta^T,\slashed d f\rangle_\gamma\big)\,\cancel{div}_{\gamma}(\beta^T)
+\langle \slashed\nabla_{(\beta^T)^{\sharp_\gamma}}\beta^T,\slashed d f\rangle_\gamma
-\langle \slashed d N^2,\slashed d f\rangle_\gamma
+\slashed\nabla_\gamma^2 f\big((\beta^T)^{\sharp_\gamma},(\beta^T)^{\sharp_\gamma}\big)
-
N^2\slashed\Delta_\gamma f,
\]
and
\[
\slashed d\mu_f^T(V_f)
=
2\big(1+\langle \beta^T,\slashed d f\rangle_\gamma\big)
\left\langle
\slashed\nabla_{V_f}\beta^T,\slashed d f
\right\rangle_\gamma
-
|\slashed d f|_\gamma^2\langle \slashed d N^2,V_f\rangle_\gamma
+
2\slashed\nabla_\gamma^2 f(V_f,V_f).
\]
Substituting these identities into \eqref{eq:divform-proof} and collecting the second-order terms
gives
\[
-\frac{1}{\mathfrak a_f}\,\gamma_f^{-1}:\slashed\nabla_\gamma^2 f,
\]
by \eqref{eq:gammaf-inverse}. The remaining lower-order terms are precisely
\eqref{eq:B-explicit-gauge}. This proves that every solution of the TMCF equation satisfies
\eqref{eq:TMCF-gammab}.

Conversely, if $f$ satisfies \eqref{eq:TMCF-gammab}, then reversing the same calculation gives
\[
D-T=\frac{N^2}{\mu_f^T}\,\mathfrak a_f\,f_r.
\]
Substituting this into the off-shell identity \eqref{eq:H-over-lambda-offshell} yields
\[
\frac{H_{f,r}}{\lambda_f}
=
\frac{1}{2\lambda_f^2}
\left(
\tr_{\gamma_f}Q_r(\slashed d f)-2\,\frac{N^2}{\mu_f^T}\,\mathfrak a_f\,f_r^2
\right).
\]
Since
\[
\tr_{\gamma_f}Q_r(\slashed d f)=2\lambda^2\mathfrak a_f
\qquad\text{and}\qquad
\lambda_f^2=\lambda^2-\frac{N^2}{\mu_f^T}f_r^2,
\]
it follows that
\[
\frac{H_{f,r}}{\lambda_f}=\mathfrak a_f.
\]
Hence
\[
D+H_{f,r}\beta_f({\bf n}_f)-T
=
D-\frac{N^2}{\mu_f^T}\mathfrak a_f f_r-T
=0,
\]
so $f$ satisfies the TMCF equation.
\end{proof}

\begin{remark}
In the explicit formula for $\mathcal B$, all background coefficients are first pulled back to the graph $t=f(r,p)$. The operators $\slashed d$, $\slashed\nabla$, and $\cancel{div}$ applied to such coefficients are the intrinsic operators of the pulled-back metric $\gamma|_{t=f}$. Thus, for instance,
\[
\slashed d(N^2|_{t=f})
=
(\slashed dN^2)|_{t=f}+(\partial_tN^2)|_{t=f}\,\slashed df.
\]	
Furthermore, $\slashed\nabla_\gamma^2 f$ in equation \eqref{eq:TMCF-gammab} is computed using the pulled-back metric $\gamma(t=f(r,p),r)$.
\end{remark}

\begin{remark}[The forcing decomposition]\label{rem:c1c2}
The lower-order term $\mathcal B$ in \eqref{eq:TMCF-gammab} admits a natural splitting
\[
\mathcal B=\mathcal B_1+\mathcal B_2,
\]
where $\mathcal B_1$ is obtained by setting $\slashed d f=0$, and $\mathcal B_2$ is the remaining
term. Since
\[
\mu_f^T\big|_{\slashed d f=0}=1,
\qquad
\mathfrak a_f\big|_{\slashed d f=0}=\frac{H_{t,r}}{\lambda},
\]
and, under the good gauge \eqref{eq:good-gauge},
\[
\tr_{S_{t,r}}K_t
=
\frac1N\left(
\cancel{div}_{\gamma}(\beta^T)-\frac12\tr_{\gamma}(\partial_t\gamma)
\right),
\]
we obtain
\[
\mathcal B_1
=
\Xi|_{t=f},
\qquad
\Xi:=\frac{\lambda}{N}\,\frac{\tr_{S_{t,r}}K_t}{H_{t,r}}.
\]
Accordingly,
\[
\mathcal B_2:=\mathcal B-\Xi|_{t=f},
\qquad
\mathcal B_2\big|_{\slashed d f=0}=0.
\]
Thus $\Xi$ is the genuine background forcing coming from the ratio of the timelike and spacelike
mean curvature of the coordinate spheres.
\end{remark}

\begin{remark}[Analytic versus geometric admissibility]
\label{rem:analytic-vs-geometric}

The general geometric graph equation \eqref{eq:TMCF-nonlinear}, written using
\(g_f\), \(K_f\), \(H_{f,r}\), and \({\bf n}_f\), is an initial-data equation and
therefore requires $M_f$ to be spacelike (i.e.  \(\mu_f>0\) ). By contrast, in the good gauge the explicit
equation \eqref{eq:TMCF-gammab} continues to make sense under the weaker
conditions that $S_{f,r}$ are spacelike (i.e. \(\mu_f^T>0\)) and $H_{f,r}$ is nonvanishing (i.e. \(\mathfrak a_f\neq0\)), and Proposition~\ref{prop:good-gauge-geometric-tmcf}
shows that it is equivalent to the coordinate-free tangency condition
\({\bf H}_{S_{f,r}}\in TM_f\) appearing in definition \ref{def:intro-tmcf}	.
\end{remark}

\begin{prop}[Geometric meaning of the good-gauge equation] \label{prop:good-gauge-geometric-tmcf}
Assume the good gauge
\[
b=0,
\qquad
\beta_r=0.
\]
Let \(f\in C^2\big(M\big)\), and assume that $\mu_f^T>0$ 
so that the graph spheres \(S_{f,r}\) are spacelike. 

Then the spacetime mean curvature vector of \(S_{f,r}\) is tangent to \(M_f\),
\[
{\bf H}_{S_{f,r}}\in TM_f,
\]
if and only if
\begin{equation}\label{eq:good-gauge-geometric-tmcf}
\cancel{div}_{\gamma_f}(\beta_f^T)-\frac{N^2}{\mu_f^T}\,\mathfrak a_f\,f_r-\frac12\tr_{\gamma_f}Q_t(\slashed d f)=0.
\end{equation}
In particular, on any region where \(\mathfrak a_f\neq0\) (equivalently, ${H_{f,r}\neq0}$), this condition is
equivalent to the explicit good-gauge equation
\begin{equation}\label{eq:good-gauge-geometric-parabolic}
f_r+\mathcal A(r,p,f,\slashed\nabla f):\slashed\nabla_\gamma^2 f
=
\mathcal B(r,p,f,\slashed\nabla f),
\end{equation}
with \(\mathcal A\) and \(\mathcal B\) as in
Proposition~\ref{prop:gammab-quasilinear}.
\end{prop}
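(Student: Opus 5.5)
The plan is to compute the normal bundle of $S_{f,r}$ directly and write the tangency condition ${\bf H}_{S_{f,r}}\in TM_f$ as the vanishing of a single scalar, valid without assuming $\mu_f>0$.

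\textbf{Step 1: two normals and the tangency test.} Since $\mu_f^T>0$, the leaf $S_{f,r}$ is spacelike, so its normal bundle is a rank-two Lorentzian plane spanned by the conormals $d\tau$ and $dr$, where $\tau=t-f$. Raising indices gives two normal vectors $\nabla^{(4)}\tau$ and $\nabla^{(4)}r$. The hypersurface $M_f=\{\tau=0\}$ has conormal $d\tau$. Hence a normal vector $V$ to $S_{f,r}$ lies in $TM_f$ iff $d\tau(V)=0$. So the condition to study is
\[
d\tau({\bf H}_{S_{f,r}})=0.
\]

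\textbf{Step 2: the scalar as a mean-curvature trace.} For any function $u$, $d u({\bf H})$ equals the trace over $S_{f,r}$ of the spacetime Hessian, up to sign:
\[
-\,\tr_{\gamma_f}\big(\nabla^{(4)2}u\big)|_{TS}=-\tfrac12\,\tr_{\gamma_f}\big(\mathcal L_{\nabla^{(4)}u}\gtime\big)|_{TS}.
\]
We apply this with $u=\tau$. In the $(\tau,r,p)$ coordinates the metric takes the form
\[
\gtime=-\alpha\,d\tau^2+2\beta_f\odot d\tau+g_f .
\]
This holds as an algebraic identity whether or not $g_f$ is Riemannian. In the good gauge one has $\beta_f=-\alpha f_r\,dr+\beta_f^T$ and $g_f=g_{rr}dr^2+2b_f\odot dr+\gamma_f$ with $b_f=f_r\beta_f^T$.

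We invert the metric restricted to the $(\tau,r)$-block plus the $S$-block, using the $\gamma_f$-duals already computed in Lemma~\ref{lem:beta-r-gauge-identities}. These formulas need only $\mu_f^T>0$. The result is $\nabla^{(4)}\tau$ expressed as a combination of $\partial_\tau$, $\partial_r$ and tangential vectors. Its tangential-to-$S$ part contributes a divergence. The $\partial_\tau$ and $\partial_r$ components contribute $\tfrac12\tr_{\gamma_f}Q_t(\slashed df)$ and
\[
\tfrac12\tr_{\gamma_f}\big(\partial_r\gamma_f-\mathcal L_{b_f^\sharp}\gamma_f\big).
\]
The mixed-derivative cancellation from Lemma~\ref{lem:H-over-lambda} applies to the second of these. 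It reduces $\tfrac12\tr_{\gamma_f}(\partial_r\gamma_f-\mathcal L_{b_f^\sharp}\gamma_f)$ to
\[
\tfrac12\tr Q_r+\tfrac12 f_r\tr Q_t-f_rD,
\qquad D=\cancel{div}_{\gamma_f}(\beta_f^T).
\]

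\textbf{Step 3: the main obstacle.} The hard part is the exact coefficient algebra. I expect the $\partial_r$-component of $\nabla^{(4)}\tau$ to be proportional to $-(N^2/\mu_f^T)f_r/\lambda^2$, not to $1/\lambda_f^2$. This is the identity $\lambda_f^2+\tfrac{N^2}{\mu_f^T}f_r^2=\lambda^2$ that already appeared in the on-shell computation. If it holds, the terms $f_r\tr Q_t$ and $f_rD$ coming from the $r$-part should combine with the direct $\partial_\tau$ and divergence contributions. After multiplying by a nonzero factor, $d\tau({\bf H})$ should then be exactly
\[
D-\tfrac{N^2}{\mu_f^T}\mathfrak a_f f_r-\tfrac12\tr_{\gamma_f}Q_t(\slashed df),
\]
with $\mathfrak a_f=\tfrac{1}{2\lambda^2}\tr_{\gamma_f}Q_r$ appearing automatically. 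To verify the coefficients I would work pointwise in the orthonormal coframe of Lemma~\ref{lem:beta-r-gauge-identities}. As a consistency check, when $\mu_f>0$ the result must reduce to $N_f^{-1}$ times $\mathcal F[f]$, using Lemma~\ref{lem:H-over-lambda} and \eqref{eq:betaf-nf-simplified}. The final claim then follows by repeating the algebra in the proof of Proposition~\ref{prop:gammab-quasilinear}, which converts \eqref{eq:good-gauge-geometric-tmcf} into \eqref{eq:good-gauge-geometric-parabolic}. That conversion used only $\mu_f^T>0$ and $\mathfrak a_f\neq0$, never $\mu_f>0$.
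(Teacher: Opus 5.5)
Your proposal is correct and begins exactly as the paper does: you reduce ${\bf H}_{S_{f,r}}\in TM_f$ to $d\tau({\bf H}_{S_{f,r}})=0$, which makes sense for any causal character of $M_f$. The difference lies in the frame used to expand $(d\tau)^\sharp$.

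The paper uses the orthogonal normal frame $U_f=\partial_t-(\beta_f^T)^{\sharp_{\gamma_f}}$, $\partial_r|_t$. In the good gauge, $\partial_r|_t$ is itself normal to $S_{f,r}$. The first variation formula $\langle{\bf H},X\rangle=\cancel{div}_{\gamma_f}(X^\top)-\tfrac12\tr_{\gamma_f}(\mathcal L_X\gtime)$ then gives directly
$\langle{\bf H},\partial_r\rangle=-\tfrac12\tr_{\gamma_f}Q_r(\slashed d f)=-\lambda^2\mathfrak a_f$ and $\langle{\bf H},\partial_t\rangle=D_f-T_f$, where $D_f:=\cancel{div}_{\gamma_f}(\beta_f^T)$ and $T_f:=\tfrac12\tr_{\gamma_f}Q_t(\slashed d f)$. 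The Gram matrix of this frame is $\mathrm{diag}(-N^2/\mu_f^T,\lambda^2)$, so one reads off $(d\tau)^\sharp\equiv-\frac{\mu_f^T}{N^2}\partial_t-\frac{f_r}{\lambda^2}\partial_r$ modulo $TS_{f,r}$. No mixed derivatives and no inverse-metric computation ever appear.

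You instead expand $\nabla^{(4)}\tau$ in the graph chart $(\tau,r,p)$, whose radial field $\partial_r|_\tau=\partial_r|_t+f_r\partial_t$ is not normal. That is why you must pass through $\partial_r\gamma_f-\mathcal L_{b_f^\sharp}\gamma_f$ and the mixed-derivative cancellation of Lemma~\ref{lem:H-over-lambda}. It is also why recombining the $f_r\tr Q_t$ and $f_rD$ terms looks like the main obstacle. Your route buys a direct link to the ADM form and to $\mathcal F[f]$, with a built-in consistency check in the spacelike case. The paper's choice of frame simply removes the obstacle.

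The coefficients you left as an expectation follow at once from the ADM inverse metric. In the good gauge $\gtime^{tr}=\gtime^{rA}=0$ and $\gtime^{rr}=\lambda^{-2}$, so $\gtime^{\tau r}=-f_r/\lambda^2$ exactly. By \eqref{eq:Wf}, $\gtime^{\tau\tau}=-\mu_f/N^2=-\mu_f^T/N^2+f_r^2/\lambda^2$. The combinations you need are $\gtime^{\tau\tau}+f_r\gtime^{\tau r}=-\mu_f^T/N^2$ and $-\lambda^2\gtime^{\tau r}=f_r$. Inserted into your Step 2, they give $d\tau({\bf H})=-\frac{\mu_f^T}{N^2}(D_f-T_f)+f_r\mathfrak a_f$, which is \eqref{eq:good-gauge-geometric-tmcf}.

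Three small corrections are needed. First, the identity $du({\bf H})=-\tr_{\gamma_f}(\mathrm{Hess}\,u)|_{TS_{f,r}}$ holds only when $u$ is constant on $S_{f,r}$; this is true for $\tau$. Second, $V\mapsto\tr_{\gamma_f}(\mathcal L_V\gtime)|_{TS_{f,r}}$ is not $C^\infty$-linear. Splitting $\nabla^{(4)}\tau$ into coordinate components therefore produces terms with angular derivatives of the components, and these cancel only because $\nabla^{(4)}\tau$ is normal along $S_{f,r}$. It is cleaner to use the pointwise linearity of $V\mapsto\langle V,{\bf H}\rangle$ and apply the first variation formula to each coordinate field. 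Third, since $d\tau({\bf T}_f)=N_f^{-1}$, the spacelike consistency check should read $d\tau({\bf H})=\pm N_f^{-2}\mathcal F[f]$; the quantity $N_f^{-1}\mathcal F[f]$ is $\tr_{S_r}K_f$.
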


\begin{proof}
Write
\[
D_f:=\cancel{div}_{\gamma_f}(\beta_f^T),
\qquad
T_f:=\frac12\tr_{\gamma_f}Q_t(\slashed d f).
\]
Let
\[
\tau:=t-f(r,p),
\]
so that \(M_f=\{\tau=0\}\). Since the spacetime mean-curvature vector
\({\bf H}_{S_{f,r}}\) is normal to \(S_{f,r}\), the condition
\({\bf H}_{S_{f,r}}\in TM_f\) is equivalent to
\[
d\tau({\bf H}_{S_{f,r}})=0.
\]

We use the standard first-variation identity
\[
\langle {\bf H}_{S_{f,r}},X\rangle
=
\cancel{div}_{\gamma_f}(X^{\top})
-
\frac12\tr_{\gamma_f}(\mathcal L_X\gtime),
\]
where \(X^\top\) denotes the tangential projection of \(X\) to \(S_{f,r}\).
In the good gauge \(b=0\), \(\beta_r=0\), the radial vector \(\partial_r\)
is normal to \(S_{f,r}\). Moreover, the tangential projection of
\(\partial_t\) to \(S_{f,r}\) is \((\beta_f^T)^{\sharp_{\gamma_f}}\).
Therefore
\begin{equation}\label{eq:H-pair-dt}
\langle {\bf H}_{S_{f,r}},\partial_t\rangle
=
D_f-T_f,
\end{equation}
while, using the definition of \(\mathfrak a_f\),
\begin{equation}\label{eq:H-pair-dr}
\langle {\bf H}_{S_{f,r}},\partial_r\rangle
=
-\frac12\tr_{\gamma_f}Q_r(\slashed d f)
=
-\lambda^2\mathfrak a_f .
\end{equation}

It remains to express \(d\tau\) metrically in the normal bundle of
\(S_{f,r}\). Let
\[
U_f:=\partial_t-(\beta_f^T)^{\sharp_{\gamma_f}}
\]
be the normal projection of \(\partial_t\) to \(S_{f,r}\). The graph
algebra gives
\[
\langle U_f,U_f\rangle=-\frac{N^2}{\mu_f^T}.
\]
Also \(d\tau(U_f)=1\), since \(d\tau(\partial_t)=1\) and \(d\tau\) vanishes
on \(TS_{f,r}\). In the good gauge, \(\partial_r\) is orthogonal to
\(U_f\), satisfies \(\langle\partial_r,\partial_r\rangle=\lambda^2\), and
\[
d\tau(\partial_r)=-f_r.
\]
Hence the metric dual of \(d\tau\), modulo tangential vectors to
\(S_{f,r}\), is
\[
(d\tau)^{\sharp}
\equiv
-\frac{\mu_f^T}{N^2}\partial_t
-
\frac{f_r}{\lambda^2}\partial_r
\qquad \mod TS_{f,r}.
\]
Pairing this identity with \({\bf H}_{S_{f,r}}\), and using
\eqref{eq:H-pair-dt} and \eqref{eq:H-pair-dr}, gives
\[
\begin{aligned}
d\tau({\bf H}_{S_{f,r}})
&=
-\frac{\mu_f^T}{N^2}
\langle{\bf H}_{S_{f,r}},\partial_t\rangle
-
\frac{f_r}{\lambda^2}
\langle{\bf H}_{S_{f,r}},\partial_r\rangle
\\
&=
-\frac{\mu_f^T}{N^2}(D_f-T_f)
+
f_r\mathfrak a_f .
\end{aligned}
\]
Thus \(d\tau({\bf H}_{S_{f,r}})=0\) is equivalent to
\[
D_f-T_f-\frac{N^2}{\mu_f^T}\mathfrak a_f f_r=0,
\]
which is exactly \eqref{eq:good-gauge-geometric-tmcf}.

If \(\mathfrak a_f\neq0\), then \eqref{eq:good-gauge-geometric-tmcf} can
be solved for \(f_r\):
\[
f_r
=
\frac{\mu_f^T}{N^2\mathfrak a_f}(D_f-T_f).
\]
Using the good-gauge divergence identity
\[
D_f
=
\frac{1}{\sqrt{\mu_f^T}}
\cancel{div}_{\gamma}
\left(
\frac{
\big(1+\langle\beta^T,\slashed d f\rangle_\gamma\big)
(\beta^T)^{\sharp_\gamma}
-
N^2(\slashed d f)^{\sharp_\gamma}
}
{\sqrt{\mu_f^T}}
\right),
\]
and collecting the tangential Hessian terms, the second-order part becomes
\[
-\frac{1}{\mathfrak a_f}
\gamma_f^{-1}:\slashed\nabla_\gamma^2 f.
\]
Moving this term to the left-hand side gives
\[
f_r+\frac{1}{\mathfrak a_f}
\gamma_f^{-1}:\slashed\nabla_\gamma^2 f
=
\mathcal B(r,p,f,\slashed\nabla f),
\]
which is \eqref{eq:good-gauge-geometric-parabolic}, with
\[
\mathcal A=\frac{1}{\mathfrak a_f}\gamma_f^{-1}.
\]
\end{proof}

\begin{defn}[Parabolicity conditions]\label{def:parabolicity-conditions}
Let $r_0<r_1<r_2\le \infty$. A function $f$ is said to be parabolically admissible on
$(r_1,r_2]\times S^2\subset M$ if
\begin{equation}\label{eq:parabolicity-conditions}
\mu_f^T>0,
\qquad
r\,\mathfrak a_f>0
\qquad\text{on }(r_1,r_2]\times S^2.
\end{equation}
It is called admissible if, in addition,
\begin{equation}
\mu_f>0
\qquad\text{on }(r_1,r_2]\times S^2.
\end{equation}
For solutions of the spacelike TMCF equation, admissibility is equivalent to
\(M_f\) being spacelike and the leaves satisfying \(H_{f,r}>0\). Parabolic
admissibility alone only asserts that the graph spheres are spacelike and that
the good-gauge equation is parabolic.
\end{defn}

\begin{remark}
Under the conditions \eqref{eq:parabolicity-conditions}, the principal tensor in
\eqref{eq:TMCF-gammab} is
\[
\mathcal A=\frac{1}{\mathfrak a_f}\,\gamma_f^{-1},
\]
which is positive definite. Thus \eqref{eq:TMCF-gammab} is quasilinear parabolic with the forward
direction given by the inward radial variable $\rho=-r$.
\end{remark}

\section{Existence of tangentially maximal slices} 

\subsection{Preliminaries and Function Spaces}

In this section, we work on the open exterior region of the spacetime. Thus throughout this section
we fix
\[
M:=\R^3\setminus \overline{B_{r_0}}\cong (r_0,\infty)\times S^2,
\qquad
\mathcal M:=(\underline T,\infty)\times M,
\qquad
r_0>0.
\]
We also denote the formal limiting inner cylinder by
\[
\mathcal H:=(\underline T,\infty)\times \{r_0\}\times S^2.
\]
In this section, $\mathcal H$ is only a limiting cylinder and is not assumed to be part of
$\mathcal M$. In particular, the coefficients below are only required to be defined for $r>r_0$,
and the tangentially maximal slices constructed later lie entirely in this open exterior region.

When the TMCF equation is written in the inward radial direction, it becomes uniformly forward
parabolic in the logarithmic radius rather than the variable $r$ itself. In particular, after
passing from $r$ to
\[
s:=-\log r,
\]
the parabolicity becomes uniform on truncated cylinders. We therefore introduce the logarithmic
radial variable
\begin{equation}\label{eq:sdef}
s:=-\log r,
\qquad
r=e^{-s},
\qquad
s\in (-\infty,s_0),
\qquad
s_0:=-\log r_0.
\end{equation}
In particular,
\begin{equation}\label{eq:rsder}
\partial_s = -r\,\partial_r,
\qquad
\partial_r = -r^{-1}\partial_s.
\end{equation}

Given $r_0\le r_*<R\le \infty$, define
\[
M_{r_*,R}:=(r_*,R]\times S^2,
\]
with the convention that $M_{r_0,\infty}=M$. We also define
\[
s_*:=-\log r_*,
\qquad
S:=-\log R\in[-\infty,s_0),
\]
with the convention $-\log \infty=-\infty$, and identify
\[
M_{r_*,R}\cong [S,s_*)\times S^2=:M_{S,s_*}.
\]
For a fixed tail radius $r_1>r_0$, we write
\[
\mathcal M_{r_1}:=(\underline T,\infty)\times M_{r_1,\infty}.
\]

\medskip

We now introduce the parabolic H\"older spaces used in the existence argument. These are the
standard anisotropic H\"older spaces for parabolic equations, written here in the logarithmic
radial variable $s$; see, for example, \cite{parabolic1,parabolic2,parabolic3}.
Fix once and for all a smooth reference metric on $S^2$, say the round metric
$\gamma_{S^2}$, and let $\slashed D$ denote its Levi--Civita connection. We use the associated
parabolic distance in the $s$-variable:
\begin{equation}\label{eq:dp}
d_p\big((s,x),(s',y)\big)
:=
d_{\gamma_{S^2}}(x,y)+|s-s'|^{1/2}.
\end{equation}
For $\alpha\in(0,1)$ and a function $v$ on $M_{r_*,R}$, define the parabolic H\"older seminorm by
\begin{equation}\label{eq:holdersemi}
[v]_{\alpha,\alpha/2;M_{r_*,R}}
:=
\sup_{\substack{(s,x)\neq(s',y)\\ (s,x),(s',y)\in M_{r_*,R}}}
\frac{|v(s,x)-v(s',y)|}{d_p((s,x),(s',y))^\alpha}.
\end{equation}

\begin{defn}[Parabolic H\"older spaces in the $s$-variable]\label{def:cts}
Let $r_0\le r_*<R\le \infty$. We define $\cz(M_{r_*,R})$ and $\ct(M_{r_*,R})$ by
\begin{align*}
\|v\|_{\cz(M_{r_*,R})}
&:=
\|v\|_{\mathc^0(M_{r_*,R})}
+[v]_{\alpha,\alpha/2;M_{r_*,R}},
\\
\|v\|_{\ct(M_{r_*,R})}
&:=
\|v\|_{\mathc^0(M_{r_*,R})}
+\|\slashed D v\|_{\mathc^0(M_{r_*,R})}
+\|\slashed D^2 v\|_{\mathc^0(M_{r_*,R})}
+\|v_s\|_{\mathc^0(M_{r_*,R})}
\\
&\qquad
+[\slashed D^2 v]_{\alpha,\alpha/2;M_{r_*,R}}
+[v_s]_{\alpha,\alpha/2;M_{r_*,R}}.
\end{align*}
When $R<\infty$ and $v$ extends continuously to $\overline{M_{r_*,R}}$ with finite $\ct$-norm, we
write $v\in \ct(\overline{M_{r_*,R}})$.
\end{defn}

\begin{defn}[Weighted spaces in the $s$-variable]\label{def:weightedlog}
For $\sigma\in \R$ and $r_0\le r_*<R\le \infty$, define the weighted spaces
$\ct_{-\sigma}(M_{r_*,R})$ and $\cz_{-\sigma}(M_{r_*,R})$ by
\[
\|v\|_{\ct_{-\sigma}(M_{r_*,R})}
:=
\|e^{-\sigma s}v\|_{\ct(M_{r_*,R})},
\qquad
\|v\|_{\cz_{-\sigma}(M_{r_*,R})}
:=
\|e^{-\sigma s}v\|_{\cz(M_{r_*,R})}.
\]
Since $e^{-\sigma s}=r^\sigma$, this is equivalent to
\[
v\in \ct_{-\sigma}(M_{r_*,R})
\Longleftrightarrow
r^\sigma v\in \ct(M_{r_*,R}),
\qquad
v\in \cz_{-\sigma}(M_{r_*,R})
\Longleftrightarrow
r^\sigma v\in \cz(M_{r_*,R}).
\]
Thus the notation $\ct_{-\sigma}$ and $\cz_{-\sigma}$ corresponds to decay like $r^{-\sigma}$.
\end{defn}

\begin{lem}[Derivative decay from the logarithmic definition]\label{lem:logweights-derivs}
Let $\sigma\in\R$ and $v\in \ct_{-\sigma}(M)$. Then
\[
r^\sigma v,\quad r^\sigma \slashed D v,\quad r^\sigma \slashed D^2 v\in \mathc^0(M),
\qquad
r^{\sigma+1}v_r\in \mathc^0(M),
\]
and likewise at the H\"older level,
\[
r^\sigma \slashed D^2 v\in \cz(M),
\qquad
r^{\sigma+1}v_r\in \cz(M).
\]
\end{lem}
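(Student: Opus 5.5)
The plan is to unwind the definitions. By Definition~\ref{def:weightedlog}, $v\in\ct_{-\sigma}(M)$ means $w:=e^{-\sigma s}v=r^\sigma v\in\ct(M)$. By Definition~\ref{def:cts}, $w,\slashed D w,\slashed D^2 w,w_s$ are then bounded, and $\slashed D^2 w$ and $w_s$ are parabolically H\"older continuous. Every claimed property of $v$ will be obtained by writing it in terms of $w$ and the explicit radial weight.

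\textbf{Tangential derivatives.} The operator $\slashed D$ is the connection of the fixed round metric on $S^2$ and involves no radial differentiation. Since $r^\sigma$ depends only on $s$, it commutes with $\slashed D$, so
\[
r^\sigma\slashed D v=\slashed D w,\qquad r^\sigma\slashed D^2 v=\slashed D^2 w .
\]
Hence $r^\sigma v$, $r^\sigma\slashed D v$, $r^\sigma\slashed D^2 v\in\mathc^0(M)$. Also $r^\sigma\slashed D^2v=\slashed D^2 w\in\cz(M)$, because $\ct$ includes the seminorm $[\slashed D^2 w]_{\alpha,\alpha/2}$.

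\textbf{Radial derivative.} Write $v=e^{\sigma s}w$ and use \eqref{eq:rsder}, $\partial_r=-r^{-1}\partial_s$. Then
\[
v_s=e^{\sigma s}(w_s+\sigma w),\qquad r^{\sigma+1}v_r=-r^\sigma v_s=-(w_s+\sigma w).
\]
The term $w_s$ is bounded and lies in $\cz(M)$ by the definition of $\ct$. It remains to show $w\in\cz(M)$.

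\textbf{Main obstacle: H\"older continuity of $w$.} This is the only non-formal step, since $\ct$ does not list $[w]_{\alpha,\alpha/2}$ explicitly. We estimate it from the bounded quantities $\slashed D w$ and $w_s$. Split
\[
|w(s,x)-w(s',y)|\le|w(s,x)-w(s,y)|+|w(s,y)-w(s',y)| .
\]
The first term is at most $\|\slashed Dw\|_{\mathc^0}\,d(x,y)$. The second is at most $\|w_s\|_{\mathc^0}|s-s'|$.

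When $d_p\le 1$, we have $d(x,y)\le d_p^\alpha$ and $|s-s'|\le d_p^2\le d_p^\alpha$. When $d_p>1$, the difference is bounded by $2\|w\|_{\mathc^0}\le 2\|w\|_{\mathc^0}d_p^\alpha$. In both cases we get $[w]_{\alpha,\alpha/2}\le C\|w\|_{\ct}$.

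The diameter of $S^2$ is finite, so only the radial separation can be large. Because $M$ is unbounded in $s$, the case split on $d_p$ is necessary; it is handled by the $\mathc^0$ bound as above. Therefore $r^{\sigma+1}v_r\in\cz(M)$, which completes the argument.
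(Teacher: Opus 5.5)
Your proof is correct and follows the paper's argument. Both rewrite $r^{\sigma+1}v_r=-(w_s+\sigma w)$ with $w=r^\sigma v$, and both obtain the tangential statements from the fact that $r^\sigma$ commutes with $\slashed D$. Your extra step proves that $w$ itself is parabolically H\"older, using the bounds on $w$, $\slashed D w$ and $w_s$ and splitting according to whether $d_p\le 1$. The paper uses this fact implicitly when it asserts that $(r^\sigma v)_s+\sigma r^\sigma v$ is H\"older, even though the $\ct$ norm does not list $[w]_{\alpha,\alpha/2}$.
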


\begin{proof}
By definition, $r^\sigma v\in \ct(M)$, so $(r^\sigma v)_s\in \mathc^0(M)$ and
\[
[(r^\sigma v)_s]_{\alpha,\alpha/2;M}<\infty.
\]
Using
\[
(r^\sigma v)_s=-\sigma r^\sigma v+r^\sigma v_s,
\qquad
v_s=-r\,v_r,
\]
we obtain
\[
-r^{\sigma+1}v_r
=
r^\sigma v_s
=
(r^\sigma v)_s+\sigma r^\sigma v,
\]
which is bounded and H\"older whenever $r^\sigma v$ and $(r^\sigma v)_s$ are. The tangential
statements follow from
\[
r^\sigma \slashed D^k v = \slashed D^k(r^\sigma v),
\qquad k=1,2,
\]
because $r^\sigma$ is constant on each sphere $S_r$.
\end{proof}

\begin{defn}[A strengthened weighted tail space]\label{def:ctsharp-main}
For $\sigma\in\R$ and $r_0<r_*<R\le \infty$, define the space
${\ct}^{\sharp}_{-\sigma}(M_{r_*,R})$ by
\begin{equation}\label{eq:ctsharp-main}
\begin{split}
\|v\|_{{\ct}^{\sharp}_{-\sigma}(M_{r_*,R})}
:=
&\ \|v\|_{\ct_{-\sigma}(M_{r_*,R})}
+\sum_{j=3}^{4}\|r^\sigma\slashed D^j v\|_{\mathc^0(M_{r_*,R})}
\\
&\ +\sum_{\ell=1}^{3}\sum_{j=0}^{3-\ell}
\|r^{\sigma+\ell}\partial_r^\ell\slashed D^j v\|_{\cz(M_{r_*,R})}.
\end{split}
\end{equation}
\end{defn}

\begin{remark}
We keep the shorthand ${\ct}^{\sharp}_{-\sigma}$ for this auxiliary space, rather than introducing
a more elaborate multi-index notation, since it is used only in the gauge-reduction argument of
Appendix~\ref{app:late-slab-gauge}. The point of the notation is simply to record the extra
tangential and mixed radial/tangential regularity needed there.
\end{remark}

\medskip
\noindent
\textbf{Tensor-valued spaces.}
By abuse of notation we use the same symbols
\[
\ct,\ \cz,\ \ct_{-\sigma},\ \cz_{-\sigma},\ {\ct}^{\sharp}_{-\sigma}
\]
for tensor fields whose components, in any fixed $\gamma_{S^2}$-orthonormal frame, belong to the
corresponding spaces. Different frame choices give equivalent norms.

\medskip

We now introduce the coefficient class used in the existence theorem. We write the metric in the
fully general ADM form
\begin{equation}\label{eq:metric-general-prelim}
\gtime
=
-\big(N^2-|\beta|_{g(t)}^2\big)\,dt^2
+2\,\beta\odot dt
+g(t),
\qquad
g(t):=\big(\lambda^2+|b|_\gamma^2\big)\,dr^2+2\,b\odot dr+\gamma,
\end{equation}
where $N$ and $\lambda$ are functions on $\mathcal M$, $\beta$ is a $1$-form on $\mathcal M$ with
no $dt$-component, $b$ is a $1$-form on $\mathcal M$ tangent to the spacetime spheres
\[
S_{t,r}:=\{t\}\times\{r\}\times S^2,
\]
and $\gamma=\gamma(t,r)$ is a two-parameter family of metrics on $S^2$, viewed as a tensor field
on $\mathcal M$ tangent to $S_{t,r}$. In particular, $\lambda$ is the geometric foliation lapse of
the $r$-level sets in $(M_t,g(t))$, namely
\[
\lambda=|\nabla^{g(t)}r|^{-1}.
\]

For each $t>\underline T$, let
\[
M_t:=\{t\}\times M,
\qquad
{\bf n}_t:=\frac{1}{\lambda}\Big(\partial_r-b^{\sharp_\gamma}\Big),
\]
so that ${\bf n}_t$ is the outward unit normal to $S_{t,r}\subset (M_t,g(t))$. Let $H_{t,r}$
denote the mean curvature of $S_{t,r}\subset (M_t,g(t))$, equivalently
\begin{equation}\label{eq:Htr-general}
H_{t,r}
=
\frac{1}{2\lambda}\,
\tr_{\gamma(t,r)}\!\Big(\partial_r\gamma(t,r)-\mathcal L_{b^{\sharp_\gamma}}\gamma(t,r)\Big),
\end{equation}
and let $K_t$ denote the second fundamental form of $M_t\hookrightarrow(\mathcal M,\gtime)$.

The geometric normal component of the shift is
\begin{equation}\label{eq:beta-perp-def-main}
\beta^\perp:=\beta({\bf n}_t).
\end{equation}

For later use, we define the background forcing
\begin{equation}\label{eq:Xi-background-def}
\Xi(t,r,p)
:=
\frac{\lambda}{N}\,
\frac{\tr_{S_{t,r}}K_t}{H_{t,r}},
\qquad
(t,r,p)\in \mathcal M,
\end{equation}
whenever this quantity is finite. In the good gauge $b=0$, $\beta_r=0$, this is equal to the
forcing introduced in the previous subsection.

For $r>r_0$ and $T\in\R$, we define the associated tail-size functions
\begin{align}
\mathfrak X_0(r,T)
&:=
r\sup_{\substack{t\ge T\\ p\in S^2}}
|\Xi(t,r,p)|,
\label{eq:X0-def}
\\
\mathfrak X_1(r,T)
&:=
r\sup_{\substack{t\ge T\\ p\in S^2}}
|\slashed d_\gamma\Xi(t,r,p)|_{\gamma(t,r)},
\label{eq:X1-def}
\\
\mathfrak X_2(r,T)
&:=
r^2\sup_{\substack{t\ge T\\ p\in S^2}}
|\slashed\nabla_{\gamma}^2\Xi(t,r,p)|_{\gamma(t,r)}.
\label{eq:X2-def}
\end{align}
Here, and below, $\sup_{t\ge T}$ is understood as $\sup_{t\ge \max\{T,\underline T\}}$.

More generally, if $I\subset (\underline T,\infty)$ is a compact interval, we define the
corresponding slab quantities
\begin{align}
\mathfrak X_0(r;I)
&:=
r\sup_{\substack{t\in I\\ p\in S^2}}
|\Xi(t,r,p)|,
\label{eq:X0-slab-def}
\\
\mathfrak X_1(r;I)
&:=
r\sup_{\substack{t\in I\\ p\in S^2}}
|\slashed d_\gamma\Xi(t,r,p)|_{\gamma(t,r)},
\label{eq:X1-slab-def}
\\
\mathfrak X_2(r;I)
&:=
r^2\sup_{\substack{t\in I\\ p\in S^2}}
|\slashed\nabla_{\gamma}^2\Xi(t,r,p)|_{\gamma(t,r)}.
\label{eq:X2-slab-def}
\end{align}
These are the versions that will be used after passing to a good gauge on a late time slab.

\begin{defn}[Coefficient class on exterior tails]
\label{def:ctS-nonstat}
Fix $\tau\in(1/2,1)$. We denote by $\ctS_{-\tau}(\mathcal M)$ the set of tuples
\[
\mathcal S=(N,\lambda,\beta,b,\gamma)
\]
on $\mathcal M=(\underline T,\infty)\times M$ satisfying the following conditions.

\begin{enumerate}
\item[(i)] \emph{Asymptotic flatness with two time derivatives on every fixed tail.}
For every $r_1>r_0$,
\begin{equation}\label{eq:ctS-nonstat-decay}
\begin{split}
\mathcal N_\tau(r_1)
:=
\sum_{j=0}^2\sup_{t>\underline T}
&\Big(
\|\partial_t^j(N-1)(t)\|_{\ct_{-\tau-j}(M_{r_1,\infty})}
+\|\partial_t^j(\lambda-1)(t)\|_{\ct_{-\tau-j}(M_{r_1,\infty})}
\\
&\quad
+\|\partial_t^j b(t)\|_{\ct_{-\tau-j}(M_{r_1,\infty})}
+\|\partial_t^j\beta(t)\|_{\ct_{-\tau-j}(M_{r_1,\infty})}
\\
&\quad
+\|\partial_t^j(r^{-2}\gamma(t)-\gamma_{S^2})\|_{\ct_{-\tau-j}(M_{r_1,\infty})}
\Big)
<\infty.
\end{split}
\end{equation}

\item[(ii)] \emph{Tail nondegeneracy and mean-curvature bounds.}
For every $r_1>r_0$, there exist positive constants
\[
\delta_*(r_1),
\qquad
\vartheta_*(r_1),
\qquad
c_*(r_1),
\]
independent of $t$, such that on $(\underline T,\infty)\times M_{r_1,\infty}$ one has
\begin{equation}\label{eq:tail-nondeg-global}
\delta_*(r_1)^{-1}\geq \lambda\ge \delta_*(r_1),
\qquad
\vartheta_*(r_1)^{-1} \geq N+|\beta^T|^2_{\gamma}\geq N\ge \vartheta_*(r_1),
\qquad 
c_*(r_1)^{-1}\geq r\,H_{t,r}\ge c_*(r_1).
\end{equation}

\item[(iii)] \emph{Finite forcing tails on every fixed tail.}
For every $r_1>r_0$ and every $T\in\R$, the tail-size functions introduced above are finite and
satisfy
\begin{equation}\label{eq:finite-forcing-tail}
\int_{r_1}^\infty \frac{\mathfrak X_0(\sigma,T)}{\sigma}\,d\sigma<\infty,
\qquad
\int_{r_1}^\infty \mathfrak X_1(\sigma,T)\,d\sigma<\infty,
\qquad
\int_{r_1}^\infty \mathfrak X_2(\sigma,T)\,d\sigma<\infty.
\end{equation}
\end{enumerate}

For a fixed tail $r_1>r_0$, we write
\[
\|\mathcal S\|_{\ctS_{-\tau}(\mathcal M_{r_1})}
:=
\mathcal N_\tau(r_1).
\]
Thus $\ctS_{-\tau}(\mathcal M)$ is controlled by a family of tail norms, one for each
$r_1>r_0$, rather than by a single norm up to the limiting cylinder $r=r_0$.
\end{defn}

\begin{remark}[Uniformly nondegenerate horizon-penetrating gauges]
\label{rem:uniform-horizon-penetrating-gauge}
The definition above is intentionally formulated on each fixed tail $r\ge r_1>r_0$, and the
constants $\delta_*(r_1)$, $\vartheta_*(r_1)$, and $c_*(r_1)$ are allowed to deteriorate as
$r_1\downarrow r_0$. This allows for exterior coordinate systems such as Boyer--Lindquist-type
coordinates, which may degenerate at the horizon. In horizon-penetrating gauges with uniform
control up to $r=r_0$, one can instead arrange the bounds for $\lambda$, $N$, and
$rH_{t,r}$ to be independent of $r_1$ (but possibly dependent on $t$). A
model construction of such a horizon-penetrating foliation in Schwarzschild, agreeing with the
standard Schwarzschild slices outside a radius tending to the horizon, is given in
Appendix~\ref{app:horizon-penetrating-schwarzschild-model}.
\end{remark}

\begin{defn}[Strengthened tail coefficient space and normal-shift tail]\label{def:sharp-tail-data-main}
Let ${\ctS}^\sharp_{-\tau}(\mathcal M)$ denote the space of coefficient tuples
\[
\mathcal S=(N,\lambda,\beta,b,\gamma) \in \ctS_{-\tau}(\mathcal{M})
\]
for which: for every $r_1>r_0$,
\begin{equation}\label{eq:Nsharp-def-main}
\begin{split}
\mathcal N_\tau^\sharp(r_1)
:=
\sum_{k=0}^{3}\sup_{t>\underline T}
&\Big(
\|\partial_t^k(N-1)(t)\|_{{\ct}^{\sharp}_{-\tau-k}(M_{r_1,\infty})}
+\|\partial_t^k(\lambda-1)(t)\|_{{\ct}^{\sharp}_{-\tau-k}(M_{r_1,\infty})}
\\
&\quad
+\|\partial_t^k b(t)\|_{{\ct}^{\sharp}_{-\tau-k}(M_{r_1,\infty})}
+\|\partial_t^k\beta(t)\|_{{\ct}^{\sharp}_{-\tau-k}(M_{r_1,\infty})}
\\
&\quad
+\|\partial_t^k(r^{-2}\gamma(t)-\gamma_{S^2})\|_{{\ct}^{\sharp}_{-\tau-k}(M_{r_1,\infty})}
\Big) <\infty
\end{split}
\end{equation}
 Also, for $T\in\R$, define the geometric normal-shift tail by
\begin{equation}\label{eq:Bsharp-def-main}
\mathfrak B^\sharp(T;r_1)
:=
\sup_{t\ge T}
\|\beta^\perp(t,\cdot)\|_{{\ct}^{\sharp}_{-\tau}(M_{r_1,\infty})}.
\end{equation}
which is also finite by the finiteness of $\mathcal N_\tau^\sharp(r_1)$. 
\end{defn}

\begin{remark}
The finiteness of $\mathcal N_\tau^\sharp(r_1)$ is the strengthened tail regularity needed for the
coordinate changes in Appendix~\ref{app:late-slab-gauge} that allows one to pass from a
general ADM chart to the good gauge $b=0$, $\beta_r=0$ on slabs away from horizon. To make this coordinate change, we will in addition need $\mathfrak B^\sharp(T;r_1)$ to decay as $T$ goes to infinity; this will be part of the quasi final state hypothesis introduced later.
\end{remark}

\begin{comment}
\begin{remark}
The lower bounds in \eqref{eq:tail-nondeg-global} are imposed only on fixed tails
$\{r\ge r_1\}$, $r_1>r_0$. This is deliberate: we do not require positive lower bounds for
$\lambda$, $N$, or $H_{t,r}$ all the way down to $r=r_0$. In Kerr, for example, Boyer--Lindquist
coordinates degenerate at the horizon, with $N$ vanishing and $\lambda$ diverging there. On the
other hand, no lower bound is imposed on
\[
\alpha=N^2-|\beta|_{g(t)}^2,
\]
which may vanish or change sign near the limiting inner cylinder.
\end{remark}
\end{comment}

\begin{remark}[On the finite forcing-tail assumption]
\label{rem:finite-forcing-tail-assumption}
We explain that the finite forcing-tail assumption \eqref{eq:finite-forcing-tail} is not a restriction on the spacetime for the purposes of proving the spacetime Penrose inequality. 

 By the Corvino--Schoen gluing theory and its variants, one may replace an
asymptotically flat initial data set, outside a sufficiently large
compact set and without changing the data near the horizon, by data which
agree exactly with a Schwarzschild or Kerr end.  Moreover, the ADM
charges of the glued data can be arranged to be arbitrarily close to the
original ADM charges; see, for example,
\cite{Corvino2000,CorvinoSchoen2006,ChruscielDelay2003}.  Thus, for
purposes of proving a Penrose inequality, one may first prove the desired
estimate for such glued data and then let the gluing radius tend to
infinity.

For such data which are exactly Kerr on the far end, the corresponding
development agrees with the Kerr development in the domain of dependence
of that end. Hence, on
each sufficiently far tail, the forcing is compactly supported in \(r\) as $\Xi$ vanishes in Kerr,
and the integral condition \eqref{eq:finite-forcing-tail} is automatic.
The finite forcing-tail assumption is therefore not an additional
restriction on the compact black-hole region, but as a convenient way of
excluding irrelevant non-integrable forcing at spatial infinity.
\end{remark}

\begin{defn}[Asymptotically flat exterior spacetime of order $\tau$]
\label{def:AFspacetime}
Fix $\tau\in(1/2,1)$. We say that a spacetime $(\mathcal M,\gtime)$, with
\[
\mathcal M=(\underline T,\infty)\times M,
\]
is an \emph{asymptotically flat exterior spacetime of order $\tau$} if $\gtime$ can be written in
the general ADM form \eqref{eq:metric-general-prelim} with coefficient tuple
\[
\mathcal S=(N,\lambda,\beta,b,\gamma)\in \ctS_{-\tau}(\mathcal M).
\]
\end{defn}

For later use, it is convenient to record the natural lower and upper bounds of the background
mean-curvature coefficient on fixed tails. Define
\begin{equation}\label{eq:mu0T-b0-def}
\mathfrak a_0
:=
\frac{H_{t,r}}{\lambda}
=
\frac{1}{2\lambda^2}\,
\tr_{\gamma(t,r)}\!\Big(\partial_r\gamma(t,r)-\mathcal L_{b^{\sharp_\gamma}}\gamma(t,r)\Big).
\end{equation}
For each $r_1>r_0$, set
\begin{equation}\label{eq:tail-constants-def}
\underline{\mathfrak a_0}(r_1)
:=
\inf_{\substack{t>\underline T\\ r\ge r_1\\ p\in S^2}}
r\,\mathfrak a_0(t,r,p),
\qquad
\overline{\mathfrak a_0}(r_1)
:=
\sup_{\substack{t>\underline T\\ r\ge r_1\\ p\in S^2}}
r\,\mathfrak a_0(t,r,p).
\end{equation}
By Definition~\ref{def:ctS-nonstat},
\[
0<\underline{\mathfrak a_0}(r_1)\le \overline{\mathfrak a_0}(r_1)<\infty
\qquad\text{for every }r_1>r_0.
\]

\subsection{A Priori Estimates} \label{sub:apriori}

In this subsection we now derive the pointwise and Schauder estimates needed for the bootstrap
argument carried out later. Throughout we fix
\[
r_0<r_1\le r_*<R<\infty,
\qquad
T_0\in \R,
\]
and let $(\mathcal M,\gtime)$ be an asymptotically flat exterior spacetime with coefficients
\[
\mathcal S=(N,\lambda,\beta,b,\gamma)\in \ctS_{-\tau}(\mathcal M),
\]
satisfying the good gauge
\[
b\equiv 0,
\qquad
\beta_r\equiv 0.
\]
All estimates in this subsection take place on the fixed tail $M_{r_1,\infty}$. We also fix the
constants
\[
\delta_*(r_1),
\qquad
\vartheta_*(r_1),
\qquad
\underline{\mathfrak a_0}(r_1),
\]
defined in Definition~\ref{def:ctS-nonstat}, so that all constants below are uniform in $r_*$ and
$R$. To simplify notation, we write
\[
\delta_*:=\delta_*(r_1),
\qquad
\vartheta_*:=\vartheta_*(r_1),
\qquad
\underline{\mathfrak a_0}:=\underline{\mathfrak a_0}(r_1).
\]

We write
\[
\mu:=\mu_f,
\qquad
\mu^T:=\mu_f^T,
\qquad
\mathfrak a:=\mathfrak a_f
\]
for the graph quantities associated to a function $f$, and we recall from
\eqref{eq:mu0T-b0-def} that, in the good gauge,
\[
\mathfrak a_0
=
\frac{H_{t,r}}{\lambda}
=
\frac{1}{2\lambda^2}\tr_\gamma(\partial_r\gamma)
\]
denotes the corresponding background coefficient. We also recall the tail bounds
\[
\underline{\mathfrak a_0} 
\le
r\,\mathfrak a_0
\le
\overline{\mathfrak a_0}
\qquad\text{on }(\underline T,\infty)\times M_{r_1,\infty}.
\]

\medskip

We also recall 
\[
N+|\beta^T|_\gamma\le \vartheta_*^{-1}
\qquad\text{on }\mathcal M_{r_1}.
\]

%(Here we are reusing the constant $\vartheta_*$ in a slightly stronger role than in Definition~\ref{def:ctS-nonstat} to avoid introducing additional notation; this is harmless since both quantities are tail-controlled and we may take the smaller of the two.) 

Hence, if
\[
|\slashed d f|_\gamma< \frac{\vartheta_*}{2},
\]
then, by \eqref{eq:muT-sufficient},
\begin{equation}\label{eq:varthetastar-implies-muT}
\mu^T
=
\big(1+\langle \beta^T,\slashed d f\rangle_\gamma\big)^2
-
N^2|\slashed d f|_\gamma^2
\ge
\Big(1-\big(N+|\beta^T|_\gamma\big)|\slashed d f|_\gamma\Big)^2
>
\frac14.
\end{equation}

For the $C^0$-estimate we use the tail functions
$\mathfrak X_0(\cdot,T)$, $\mathfrak X_1(\cdot,T)$, and $\mathfrak X_2(\cdot,T)$ introduced in
\eqref{eq:X0-def}--\eqref{eq:X2-def}, always restricted to the fixed tail $r\ge r_1$. After
Proposition~\ref{prop:C0-f} constructs the threshold $T_*(T_0;r_1)$, we set
\[
\Delta_*(T_0;r_1):=T_0-T_*(T_0;r_1),
\]
and
\[
I_*:=I_*(T_0;r_1)
:=
[T_0-\Delta_*(T_0;r_1),\,T_0+\Delta_*(T_0;r_1)]
=
[T_*(T_0;r_1),\,2T_0-T_*(T_0;r_1)].
\]
We use the corresponding slab quantities
$\mathfrak X_0(\cdot;I_*)$, $\mathfrak X_1(\cdot;I_*)$, and $\mathfrak X_2(\cdot;I_*)$ from
\eqref{eq:X0-slab-def}--\eqref{eq:X2-slab-def}.

\vv

The estimates in this subsection are proved under two quantitative bootstrap assumptions. These assumptions are
stronger than parabolic admissibility/admissibility defined in definition \ref{def:parabolicity-conditions}, and are designed to give uniform constants in
the maximum-principle and Schauder arguments.

\begin{defn}[Bootstrap assumptions]\label{def:BS1-BS2}
Let $f$ be defined on $M_{r_*,R}$. We say that $f$ satisfies
\textbf{BS-1} on $M_{r_*,R}$ if
\begin{equation}\label{eq:bootstrap-assumption-apriori}
|\slashed d f|_\gamma< \frac{\vartheta_*}{2},
\qquad
r\,\mathfrak a> \frac12\,\underline{\mathfrak a_0}, 
\qquad\text{on }M_{r_*,R}.
\end{equation}
We say that $f$ satisfies \textbf{BS-2} on $M_{r_*,R}$ if
\begin{equation}\label{eq:bootstrap-assumption-apriori-2}
\mu>0 ,
\qquad
r^2|\slashed \nabla^2_{\gamma} f|_{\gamma} < 1
\qquad\text{on }M_{r_*,R}.
\end{equation}
\end{defn}

\begin{thm}[A priori estimates on the truncated annulus]
\label{thm:apriori-estimates}
Let $f\in C^4(M_{r_*,R})\cap C^0(\overline{M_{r_*,R}})$ solve the TMCF equation with an outer boundary condition
\begin{equation}\label{eq:tmcf-truncated-bvp}
\begin{cases}
f_r
+
\mathcal A(r,p,f,\slashed\nabla f):\slashed\nabla_\gamma^2 f = \mathcal B(r,p,f,\slashed\nabla f),
&\text{on }M_{r_*,R},\\
f=T_0,&\text{on }S_R.
\end{cases}
\end{equation}
Then the following statements hold.

\begin{enumerate}[label=\textup{(\roman*)}]
\item
Suppose $f$ satisfies \textbf{BS-1} on $M_{r_*,R}$. There exists a threshold
\[
T_*=T_*(T_0;r_1)\le T_0,
\]
depending only on $T_0$, $r_1$, and the background coefficients on $\mathcal M_{r_1}$, such that,
if we set
\[
\Delta_*:=T_0-T_*(T_0;r_1),
\qquad
I_*:=[T_0-\Delta_*,\,T_0+\Delta_*]=[T_*(T_0;r_1),\,2T_0-T_*(T_0;r_1)],
\]
then
\begin{equation}\label{eq:thm-C0}
\|f-T_0\|_{\mathc^0(M_{r_*,R})}
\le
\Delta_*,
\qquad
f(M_{r_*,R})\subset I_*.
\end{equation}
Moreover,
\[
\|f-T_0\|_{\mathc^0(M_{r_*,R})}
\le
\int_{r_1}^\infty \frac{\mathfrak X_0(\sigma;I_*)}{\sigma}\,d\sigma
\le
\Delta_*
=
\int_{r_1}^\infty \frac{\mathfrak X_0(\sigma,T_*)}{\sigma}\,d\sigma.
\]

\item
Suppose $f$ satisfies \textbf{BS-1} on $M_{r_*,R}$. There exists a constant
\[
C=C\!\left(
\tau,r_1,\underline{\mathfrak a_0}^{-1},\delta_*^{-1},\vartheta_*^{-1},
\|\mathcal S\|_{\ctS_{-\tau}(\mathcal M_{r_1})}
\right)
\]
such that for every $(r,p)\in M_{r_*,R}$,
\begin{equation}\label{eq:thm-grad}
|\slashed d f|_{\gamma}(r,p)
\le
\frac{C}{r}\int_r^R \mathfrak X_1(\sigma;I_*)\,d\sigma.
\end{equation}

\item
Suppose $f$ satisfies \textbf{BS-1} and \textbf{BS-2} on $M_{r_*,R}$. There exists a constant
\[
C=C\!\left(
\alpha,\tau,r_1,\underline{\mathfrak a_0}^{-1},\delta_*^{-1},\vartheta_*^{-1},
\|\mathcal S\|_{\ctS_{-\tau}(\mathcal M_{r_1})}
\right)
\]
such that
\begin{equation}\label{eq:thm-schauder}
\|f-T_0\|_{\ct_{-\tau}(M_{r_*,R})}
\le
C\,
\sup_{t\in I_*}\|r\Xi(t,\cdot)\|_{\cz_{-\tau}(M_{r_*,R})}.
\end{equation}

\item
Suppose $f$ satisfies \textbf{BS-1} and \textbf{BS-2} on $M_{r_*,R}$. There exists a constant
\[
C=C\!\left(
\tau,r_1,\underline{\mathfrak a_0}^{-1},\delta_*^{-1},\vartheta_*^{-1},
\|\mathcal S\|_{\ctS_{-\tau}(\mathcal M_{r_1})}
\right)
\]
such that for every $(r,p)\in M_{r_*,R}$,
\begin{align}
|\slashed\nabla_{\gamma}^2 f|_{\gamma}(r,p)
&\le
\frac{C}{r^2}\int_r^R \mathfrak X_2(\sigma;I_*)\,d\sigma,
\label{eq:thm-hessian}
\\
\frac{N}{\lambda}|f_r|(r,p)
&\le
\frac{C}{r}\,\mathfrak X_0(r;I_*)
+
\frac{C}{r}\int_r^R \mathfrak X_2(\sigma;I_*)\,d\sigma.
\label{eq:thm-fr}
\end{align}
\end{enumerate}
\end{thm}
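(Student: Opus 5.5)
The four estimates will all come from the maximum principle for the good-gauge equation \eqref{eq:TMCF-gammab}, run in the inward direction $s=-\log r$ from the outer sphere $S_R$, where $f=T_0$ and hence all tangential derivatives of $f$ vanish. Under \textbf{BS-1}, the bound \eqref{eq:varthetastar-implies-muT} gives $\mu^T>1/4$ and $r\mathfrak a>\underline{\mathfrak a_0}/2$. So $\mathcal A=\mathfrak a^{-1}\gamma_f^{-1}$ is uniformly positive definite after rescaling by $r^2$, and in the $s$-variable the equation reads
\[
\partial_s f=r\mathcal A:\slashed\nabla^2 f-r\mathcal B,
\]
which is uniformly forward parabolic. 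We split $\mathcal B=\Xi|_{t=f}+\mathcal B_2$ as in Remark~\ref{rem:c1c2}. The term $\mathcal B_2$ vanishes when $\slashed df=0$, so $|\mathcal B_2|\le C r^{-1}|\slashed df|$ with coefficient bounds from $\|\mathcal S\|$. In particular $\mathcal B_2$ is harmless at a spatial extremum of $f$.

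For (i), at a spatial maximum of $f(r,\cdot)$ we have $\slashed df=0$ and $\slashed\nabla^2 f\le0$. Hence $\partial_r\max f\ge-|\Xi|_{t=f}|\ge -\mathfrak X_0(r;I)/r$ for any interval $I$ containing the range of $f$, and similarly for the minimum. Integrating from $R$ gives
\[
|f-T_0|\le\int_r^R\mathfrak X_0(\sigma;I)\sigma^{-1}d\sigma .
\]
The difficulty is circularity: $I$ must contain $f$'s range. We resolve it with a continuity argument in $r$ decreasing from $R$. We choose $T_*\le T_0$ as the largest solution of
\[
T_0-T_*=\int_{r_1}^\infty\mathfrak X_0(\sigma,T_*)\sigma^{-1}d\sigma .
\]
Such a solution exists by monotonicity, since the right side is nonincreasing in $T_*$ and finite by (iii) of Definition~\ref{def:ctS-nonstat}, while the left side increases. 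Then the set of $r$ for which $f(M_{r,R})\subset I_*$ is open and closed, and (i) follows. One must check that $\sup_{t\in I_*}\le\sup_{t\ge T_*}$, which holds by construction.

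For (ii), we differentiate the equation tangentially and apply the maximum principle to $w=\tfrac12|\slashed df|^2_\gamma$, or better to $r^2w$ in the $\gamma_{S^2}$-normalized sense. Differentiating the coefficients $\mathcal A$ produces terms $\partial\mathcal A\cdot\slashed\nabla^2 f\cdot\slashed df$. In the Bernstein computation these are absorbed by the good term $-\mathcal A:(\slashed\nabla^2f)^2$, using the smallness of $|\slashed df|$ from BS-1. The commutator of $\slashed\nabla$ with $\gamma$'s $r$-dependence and the $\partial_t$ terms, which arise through $\slashed d(Q|_{t=f})$, contribute $C r^{-1-\tau}|\slashed df|$. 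Since $\tau$ gives $\int r^{-1-\tau}dr<\infty$, these are handled by an integrating factor. The forcing contributes $|\slashed d(\Xi|_{t=f})|\le\mathfrak X_1/r+|\partial_t\Xi||\slashed df|$. Grönwall in $r$ from $R$ then gives
\[
r|\slashed df|\le C\int_r^R\mathfrak X_1\,d\sigma .
\]
The Hessian bound \eqref{eq:thm-hessian} is obtained in the same way, differentiating twice and using BS-2 to control the cubic terms. For \eqref{eq:thm-fr} we read $f_r$ off the equation: $|f_r|\le|\mathcal A||\slashed\nabla^2f|+|\Xi|+|\mathcal B_2|$, and the gradient term is dominated by the Hessian integral. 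Here we use $|\slashed df|\lesssim r\sup|\slashed\nabla^2 f|$ integrated, or simply note that $\mathfrak X_1$ is controlled by the $\mathfrak X_2$ integral via the spherical Poincaré inequality, since $\slashed d\Xi$ has zero mean.

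For (iii), we rewrite the equation for $v=r^\tau(f-T_0)$ in the $s$-variable as a linear uniformly parabolic equation whose coefficients are frozen at the given $f$. Its coefficients are Hölder with norms controlled by BS-1, BS-2 and the $\ct_{-\tau}$ bounds on $\mathcal S$, and its inhomogeneity is $r^{1+\tau}\Xi$ plus terms linear in $v$ with small weighted coefficients. We then apply interior and initial-boundary Schauder estimates on unit $s$-cylinders, with zero data at $s=-\log R$, uniformly in $r_*,R$. The $C^0$ term is bounded by the weighted version of (i): the maximum principle with barrier $Cr^{-\tau}$ works because $r\mathfrak a\,\partial_r(r^{-\tau})$ beats lower-order terms.

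The main obstacle I expect is (ii): closing the Bernstein computation for a quasilinear equation whose coefficients depend on $f$ through $t=f$. We must check that every $\partial_t$-generated term carries a factor $|\slashed df|$ with an integrable radial weight, so that Grönwall does not lose uniformity in $R$.
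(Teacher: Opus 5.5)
Your overall architecture matches the paper's. You use extrema of $f$ on spheres for (i) and a Bernstein argument for $q=|\slashed d f|_\gamma^2$ with the critical weight $r^2$ and an integrating factor for (ii). For (iii) you use Schauder estimates on unit cylinders in $s=-\log r$ plus a weighted $C^0$ bound. For (iv) you use a maximum principle for the Hessian, read $f_r$ off the equation, and use $\mathfrak X_1\le C\mathfrak X_2$; the correct justification for the last is that $S^2$ carries no nonzero parallel one-form, not a zero-mean property. Freezing coefficients in (iii), and using the barrier $\pm Cr^{-\tau}$ instead of $\int_r^R\sigma^{-1-\tau}\,d\sigma\le\tau^{-1}r^{-\tau}$, are harmless variants.

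The genuine gap is in (i). Your existence argument for the root is misstated: $T\mapsto T_0-T$ and $T\mapsto G_{r_1}(T)$ are both nonincreasing. What you need is continuity of $G_{r_1}$ (by dominated convergence) and the intermediate value theorem for $h(T):=T-T_0+G_{r_1}(T)$, which is $\ge0$ at $T_0$ and negative for $T<T_0-G_{r_1,-\infty}$.

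The real problem is openness of $\{r: f(M_{r,R})\subset I_*\}$. On such a set the maximum principle gives only the non-strict bound $|f-T_0|\le\int_r^R\mathfrak X_0(\sigma;I_*)\sigma^{-1}\,d\sigma\le G_{r_1}(T_*)=\Delta_*$. So $f$ may reach $\partial I_*$, and nothing you have said keeps it in $I_*$ further inward. A first-exit argument (take the first radius where $\min f=T_*-\delta$) closes only if $G_{r_1}(T_*-\delta)<\Delta_*+\delta$, i.e.\ $h<0$ just below $T_*$.

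That condition holds for the \emph{least} root, since $h<0$ on $(-\infty,T_*)$. The least root is exactly what the paper produces. It starts from the unconditional bound $\|f-T_0\|\le G_{r_1,-\infty}$, valid because $f>\underline T$, and iterates $T^{(n+1)}=T_0-G_{r_1}(T^{(n)})$. Each step is a range bound on the whole annulus, so no continuity in $r$ is needed.

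For your \emph{largest} root, $h$ can be positive just below $T_*$, and the argument breaks. The conclusion survives only with an extra ingredient. For instance, compare $y(r)=\max_{S_r}|f-T_0|$ with the supersolution $w(r)=\int_r^R\sup_{t\ge T_*,\,p}|\Xi(t,\sigma,p)|\,d\sigma\le\Delta_*$. Then run Gr\"onwall on $(y-w)_+$, using that $m\mapsto\sup_{t\ge T_0-m,\,p}|\Xi(t,r,p)|$ is Lipschitz with constant $\lesssim r^{-1-\tau}$, because $|\partial_t\Xi|\lesssim r^{-1-\tau}$.

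Two further points need correcting.

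In (ii), BS-1 is a fixed bound, not a smallness assumption. So it cannot make the good term absorb $(\partial_\omega\mathcal A)(\slashed\nabla_V\slashed d f):\slashed\nabla^2 f$, which is of size $r|\slashed d f|(|\slashed d f|+r^{-\tau})|\slashed\nabla^2 f|^2$. What removes this term is the maximum-point identity $\slashed\nabla^2 f(V,\cdot)=\tfrac12\slashed d q=0$. After that only $\widehat V(a_{22})=O(r^{-\tau})$ remains, and it is absorbed by completing the square against $a_{22}m^2\sim rm^2$, as in the paper.

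In (iv), ``the same way'' hides the delicate point: the weight $r^2$ on $|\slashed\nabla^2 f|_\gamma$ is critical. Write $U=\slashed\nabla_\gamma^2 f$ and $\mathfrak q=|U|_\gamma^2$. In $\mathfrak q_\rho$, the $\rho$-derivative of $\gamma^{-1}$ contributes $\tfrac4r\mathfrak q$. The leading commutator, with $\mathcal A\approx\tfrac r2\gamma^{-1}$ and $K_\gamma\approx r^{-2}$, contributes $2\langle U,-\tfrac2rU+\tfrac1r\gamma\,\tr_\gamma U\rangle\le0$. Only by keeping this term with its sign does the total stay $\le\tfrac4r\mathfrak q$, which the weight $r^4$ compensates exactly. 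Bounding the commutator by $\tfrac Cr|U|$ would leave a factor $(R/r)^{C}$ and destroy uniformity in $R$. BS-2 is then what linearizes the superlinear terms in $Z=r^2\sup_{S_r}|U|_\gamma$.
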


\begin{remark}[Uniform constants near the horizon]
\label{rem:apriori-uniform-horizon-gauge}
The dependence of the constant $C$ on the tail radius $r_1$ comes only through the fixed-tail
coefficient bounds and the lower bounds for $\lambda$, $N$, and $rH_{t,r}$. If the exterior is
written in a horizon-penetrating gauge with uniform nondegeneracy up to the final horizon, then
these lower bounds and coefficient norms can be controlled independently of $r_1$ as
$r_1\downarrow r_0$. In such a gauge the constants in the a priori estimates can be chosen
uniformly near the horizon. In the Schwarzschild model of
Appendix~\ref{app:horizon-penetrating-schwarzschild-model}, the slices agree with the standard
Schwarzschild time slices on $r\ge r_{\mathrm{cut}}(T)$ with $r_{\mathrm{cut}}(T)\downarrow2m$, and hence the forcing $\Xi$
vanishes on the exterior portion where the slices are exactly Schwarzschild, but not on $r_0<r<r_{\mathrm{cut}}(T)$.
\end{remark}

The remainder of this subsection is devoted to the proof of Theorem~\ref{thm:apriori-estimates}.
We fix a function $
f \in C^4(M_{r_*,R})\cap C^0(\overline{M_{r_*,R}})$ 
satisfying \textbf{BS-1}; we will in addition assume \textbf{BS-2} later when we prove
\textup{(iii)} and \textup{(iv)} in the statement of the theorem. Recall that we write
\[
T_*:=T_*(T_0;r_1),
\qquad
\Delta_*:=T_0-T_*,
\qquad
I_*:=[T_0-\Delta_*,\,T_0+\Delta_*]=[T_*,\,2T_0-T_*].
\]

\vv

For the pointwise maximum-principle arguments it is convenient to use the inward radial variable
\[
\rho:=R-r\in [0,R-r_*),
\]
so that $\partial_\rho=-\partial_r$. We also keep the notation from
Proposition~\ref{prop:gammab-quasilinear}:
\[
\mathcal A=\frac1{\mathfrak a}\,\gamma_f^{-1},
\qquad
\mathcal B=\Xi|_{t=f}+\mathcal B_2,
\qquad
\Xi:= \frac{\lambda}{N}\,\frac{\tr_{S_{t,r}}K_t}{H_{t,r}}.
\]
In terms of $\rho$, the equation \eqref{eq:TMCF-gammab} reads
\begin{equation}\label{eq:tmcf-rho-form}
f_\rho-\mathcal A : \slashed\nabla_\gamma^2 f=-\mathcal B.
\end{equation}

\medskip
\noindent
\textbf{The notation $\mathcos$.}
Throughout the rest of this subsection, we introduce the notation
\[
\mathcal G=\mathcos(r^\delta)
\]
to mean that there exists a constant
\[
C=C\!\left(
\tau,r_1,\underline{\mathfrak a_0}^{-1},\delta_*^{-1},\vartheta_*^{-1},
\|\mathcal S\|_{\ctS_{-\tau}(\mathcal M_{r_1})}
\right)>0
\]
such that
\[
|\mathcal G|\le C\,r^\delta
\qquad\text{on }M_{r_*,R}.
\]
For tangential tensors, the norm $|\cdot|$ is taken with respect to $\gamma$; for scalars it is the
ordinary absolute value.

\begin{prop}[Self-improving $C^0$-estimate and tail threshold]\label{prop:C0-f}
There exists a number
\[
T_*=T_*(T_0;r_1)\le T_0
\]
such that
\begin{equation}\label{eq:Tstar-fixed}
T_0-T_*(T_0;r_1)
=
\int_{r_1}^{\infty}
\sup_{\substack{t\ge T_*(T_0;r_1)\\ p\in S^2}}
|\Xi(t,\sigma,p)|\,d\sigma .
\end{equation}
Moreover,
\begin{equation}\label{eq:C0-f-global}
\|f-T_0\|_{\mathc^0(M_{r_*,R})}
\le
T_0-T_*(T_0;r_1)
=
\int_{r_1}^{\infty}
\sup_{\substack{t\ge T_*(T_0;r_1)\\ p\in S^2}}
|\Xi(t,\sigma,p)|\,d\sigma.
\end{equation}
In particular,
\begin{equation}\label{eq:f-range-Tstar}
T_*(T_0;r_1)\le f(r,p)\le 2T_0-T_*(T_0;r_1)
\qquad\text{for all }(r,p)\in M_{r_*,R}.
\end{equation}
Finally,
\begin{equation}\label{eq:Tstar-limit}
\lim_{T_0\to\infty} T_*(T_0;r_1)=\infty.
\end{equation}
\end{prop}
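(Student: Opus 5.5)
The plan has two independent parts. First, construct $T_*$ by an intermediate value argument. Second, prove the $C^0$ bound by a comparison (maximum principle) argument for the inward-parabolic equation \eqref{eq:tmcf-rho-form}, using barriers that depend only on $r$.

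\textbf{Construction of $T_*$.} Define
\[
\Phi(T):=T_0-T-\int_{r_1}^\infty \sup_{t\ge T,\,p}|\Xi(t,\sigma,p)|\,d\sigma=T_0-T-G_{r_1}(T)
\]
for $T\le T_0$. The integral equals $G_{r_1}(T)$ since $\mathfrak X_0(\sigma,T)/\sigma=\sup_{t\ge T}|\Xi|$. It is finite by \eqref{eq:finite-forcing-tail} and nonincreasing in $T$.
\begin{itemize}
\item At $T=T_0$ we get $\Phi(T_0)=-G_{r_1}(T_0)\le 0$.
\item As $T\to -\infty$ (or $T\downarrow\underline T$, where the supremum saturates), $G_{r_1}(T)$ stays bounded by $G_{r_1}(\underline T)$. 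Hence $\Phi>0$ for $T$ sufficiently negative. If $\underline T$ is finite, one uses the convention $\sup_{t\ge T}=\sup_{t\ge\max\{T,\underline T\}}$.
\item $T\mapsto T_0-T$ is continuous and strictly decreasing, while $G_{r_1}$ is monotone.
\end{itemize}
So I take $T_*:=\sup\{T\le T_0:\Phi(T)\ge0\}$. To get exact equality in \eqref{eq:Tstar-fixed} I need some continuity of $G_{r_1}$. Right-continuity of $T\mapsto\sup_{t\ge T}|\Xi|$ should follow from continuity of $\Xi$ in $t$, and dominated convergence then passes it to $G_{r_1}$. Left jumps of $G$ are downward as $T$ increases, so they can only help the sign of $\Phi$. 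If exact equality fails I would fall back on ``$\ge$'', which is all the estimate needs. I regard this fine point as minor.

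For the limit \eqref{eq:Tstar-limit}, note that $T_*\le T_0$ and $T_0-T_*=G_{r_1}(T_*)\le G_{r_1}(\underline T)=:C_0<\infty$. Hence $T_*\ge T_0-C_0\to\infty$.

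\textbf{$C^0$ estimate.} Let $I_*$ be as in the statement and set
\[
\psi(r):=\int_r^{\infty}\sup_{t\in I_*,p}|\Xi(t,\sigma,p)|\,d\sigma\le\Delta_*.
\]
I compare $f$ with the barriers $w_\pm:=T_0\pm\psi(r)$, which are constant on spheres. Then $\slashed\nabla w_\pm=0$, so $\mathcal B_2(w_\pm)=0$ and $\mathcal A:\slashed\nabla^2_\gamma w_\pm=0$.

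The key step is to run the maximum principle on $u=f-w_+$ in the $\rho$-direction, starting from $u=0$ on $S_R$. At a first interior positive maximum on a sphere, $\slashed d u=0$, which gives $\slashed d f=0$ there. Therefore $\mathcal B_2(f)=0$ at that point, and $\mathcal B=\Xi(f,r,p)$ there. The Hessian term has the good sign because $\mathcal A$ is positive definite under BS-1. Since $r\mathfrak a>\tfrac12\underline{\mathfrak a_0}>0$, the equation is genuinely parabolic.

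At such a point $f_\rho=-\mathcal B+\mathcal A:\slashed\nabla^2 f\le |\Xi(f,r,p)|$. Meanwhile $(w_+)_\rho=\sup_{t\in I_*}|\Xi(t,r,\cdot)|$. Hence $u_\rho\le 0$ provided $f(r,p)\in I_*$.

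This proviso is the main obstacle, and I handle it with a continuity argument in $\rho$. Let $\rho_1$ be the first inward radius where $|f-T_0|$ reaches $\Delta_*$. On $[0,\rho_1]$ we have $f\in I_*$, so the comparison gives $|f-T_0|\le\psi\le\Delta_*$. To make this strict, I replace $\psi$ by $\psi+\epsilon\rho$, let $\epsilon\to0$, and use the inequality $\psi(r)\le G_{r_1}(T_*)=\Delta_*$. This prevents $|f-T_0|$ from reaching $\Delta_*$ strictly inside, so it never exits $I_*$. The case of $w_-$ is symmetric.

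The conclusion is \eqref{eq:C0-f-global}, and then \eqref{eq:f-range-Tstar} follows immediately.
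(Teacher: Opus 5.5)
\textbf{The construction of $T_*$ is fine; the confinement step of the $C^0$ estimate has a gap.} Your intermediate-value construction of a zero of $\Phi$ and the limit \eqref{eq:Tstar-limit} are correct. Indeed, $T\mapsto\sup_{t\ge T,p}|\Xi|$ is continuous when $\Xi$ is continuous in $t$, so $G_{r_1}$ is continuous. The problem is the step that keeps $f$ inside the slab.

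Your barrier comparison with $w_\pm=T_0\pm\psi$ is valid only while $f\in I_*$. The perturbation $\psi+\epsilon\rho$ only makes the comparison strict at a first touching point. It gives no room relative to $\Delta_*$, because $\psi+\epsilon\rho$ may exceed $\Delta_*$ and then no longer confines $f$ to $I_*$.

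Moreover, $\psi(r)=\Delta_*$ can genuinely occur at radii $r>r_*$. Take $\Xi=\phi(\sigma)\ge0$, independent of $t$ and $p$, supported in $[r_c,R]$ with $r_c>r_*$. Then the spherically symmetric solution $f=T_0-\int_r^R\phi$ reaches $T_*$ exactly at $r_c$. In the extreme case $\Delta_*=0$ one has $I_*=\{T_0\}$, and your argument cannot even start. Once $f$ touches $\partial I_*$, continuing inward requires control of $\Xi(t,\cdot)$ for $t$ just outside $I_*$, which your argument does not provide.

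This is made worse by choosing $T_*$ as the \emph{largest} zero of $\Phi$. Between the smallest and largest zeros, $\Phi$ can be negative, so $G_{r_1}(T)>T_0-T$ for $T$ slightly below your $T_*$. Hence even lowering the threshold gives no room. Proving $f\ge T_*$ for the largest zero needs an additional argument that you do not supply. One option is Lipschitz dependence of $\Xi$ on $t$ together with a Gronwall bound for $\sup_{S_r}(T_*-f)_+$ past the touching radius.

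\textbf{How the paper avoids this.} The paper uses no continuity argument in $\rho$. It proves a one-sided global implication: if $f\ge T$ on all of $M_{r_*,R}$, then $\|f-T_0\|_{C^0}\le G_{r_1}(T)$. This is your sphere-maximum argument, but it needs only a lower bound on $f$, because the supremum runs over all $t\ge T$.

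The iteration starts from the bound $f>\underline T$, which is automatic since the graph lies in the chart. This gives $f\ge T^{(0)}:=T_0-G_{r_1,-\infty}$. One then iterates $T^{(n+1)}:=T_0-G_{r_1}(T^{(n)})$. Each step is a global statement, and the sequence increases to the \emph{smallest} fixed point. Continuity of $G_{r_1}$ then gives \eqref{eq:Tstar-fixed}, and the slab version with $I_*$ follows afterwards.

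\textbf{A repair within your approach.} Take $T_*$ to be the smallest zero of $\Phi$, and run the continuity argument with threshold $T_*-\eta$. There $\Phi(T_*-\eta)>0$, which gives strict room: while $f\ge T_*-\eta$, you get $|f-T_0|\le G_{r_1}(T_*-\eta)<T_0-T_*+\eta$. So $f$ cannot exit, and letting $\eta\to0$ gives the estimate.
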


\begin{proof}
We now define the tail function
\begin{equation} \label{eq:G-via-X0}
G_{r_1}(T)
:=
\int_{r_1}^\infty \frac{\mathfrak X_0(\sigma,T)}{\sigma}\,d\sigma
=
\int_{r_1}^{\infty}
\sup_{\substack{t\ge T\\ p\in S^2}}
|\Xi(t,\sigma,p)|\,d\sigma .
\end{equation}

Set
\[
u:=f-T_0.
\]
We now prove the following bootstrap claim.

\medskip
\noindent
\emph{Bootstrap Claim.}
Let $T\le T_0$. If
\begin{equation}\label{eq:bootstrap-threshold}
f\ge T
\qquad\text{on }M_{r_*,R},
\end{equation}
then
\begin{equation}\label{eq:C0-claim}
\|f-T_0\|_{C^0(M_{r_*,R})}
=
\|u\|_{C^0(M_{r_*,R})}
\le
G_{r_1}(T).
\end{equation}

\smallskip
\noindent
To prove the claim, let
\[
m(\rho):=\sup_{S_{R-\rho}} u(\rho,\cdot).
\]
Choose a point on $S_{R-\rho}$ where $u(\rho,\cdot)$ attains its maximum. Since $T_0$ is constant
on each sphere,
\[
\slashed d f=\slashed d u=0,
\qquad
\slashed\nabla_{\gamma}^2 f=\slashed\nabla_{\gamma}^2 u\le 0,
\qquad
\mathcal B_2=0
\]
at that point. Evaluating \eqref{eq:tmcf-rho-form} there gives
\[
u_\rho=f_\rho
=
\mathcal A:\slashed\nabla_{\gamma}^2 f-\Xi|_{t=f}
\le
-\Xi|_{t=f}
\le
|\Xi|_{t=f}.
\]
By \eqref{eq:bootstrap-threshold}, we have $f\ge T$, hence
\[
|\Xi|_{t=f}
\le
\sup_{\substack{t\ge T\\ p\in S^2}}
|\Xi(t,R-\rho,p)|.
\]
Therefore the upper Dini derivative of $m$ satisfies
\[
D^+m(\rho)
\le
\sup_{\substack{t\ge T\\ p\in S^2}}
|\Xi(t,R-\rho,p)|.
\]
Since $u=0$ on $S_R$, we have $m(0)=0$, and integrating from $0$ to $\rho$ yields
\[
m(\rho)
\le
\int_0^\rho
\sup_{\substack{t\ge T\\ p\in S^2}}
|\Xi(t,R-\sigma,p)|\,d\sigma
=
\int_{R-\rho}^{R}
\sup_{\substack{t\ge T\\ p\in S^2}}
|\Xi(t,\sigma,p)|\,d\sigma
\le
G_{r_1}(T).
\]

Apply the same argument to $-u$. Let
\[
\widetilde m(\rho):=\sup_{S_{R-\rho}}(-u)(\rho,\cdot).
\]
At a point where $-u$ attains its maximum, we have again $\slashed d f=0$, $\mathcal B_2=0$, and
now
\[
\slashed\nabla_{\gamma}^2 f\ge 0.
\]
Hence
\[
(-u)_\rho=-f_\rho
=
-\mathcal A:\slashed\nabla_{\gamma}^2 f+\Xi|_{t=f}
\le
\Xi|_{t=f}
\le
|\Xi|_{t=f}.
\]
The same Dini-derivative argument yields
\[
\widetilde m(\rho)
\le
\int_{R-\rho}^{R}
\sup_{\substack{t\ge T\\ p\in S^2}}
|\Xi(t,\sigma,p)|\,d\sigma
\le
G_{r_1}(T).
\]
Combining the estimates for $u$ and $-u$ proves \eqref{eq:C0-claim}.

\medskip

We now define the threshold $T_*=T_*(T_0;r_1)$ by monotone iteration. Set
\[
G_{r_1,-\infty}
:=
\int_{r_1}^{\infty}
\sup_{\substack{t>\underline T\\ p\in S^2}}
|\Xi(t,\sigma,p)|\,d\sigma
<\infty,
\]
and define a sequence $\{T^{(n)}\}_{n\ge 0}$ by
\begin{equation}\label{eq:Tn-iteration}
T^{(0)}:=T_0-G_{r_1,-\infty},
\qquad
T^{(n+1)}:=T_0-G_{r_1}\big(T^{(n)}\big).
\end{equation}
Since $G_{r_1}$ is nonincreasing, the sequence $\{T^{(n)}\}$ is nondecreasing. Moreover,
$T^{(n)}\le T_0$ for every $n$, so $\{T^{(n)}\}$ converges to a limit
\[
T_*:=\lim_{n\to\infty}T^{(n)}\le T_0.
\]

We claim that
\begin{equation}\label{eq:f-above-Tn}
f\ge T^{(n)}
\qquad\text{on }M_{r_*,R}\quad\text{for every }n\ge 0.
\end{equation}
For $n=0$, apply the bootstrap claim with the coarse bound
\[
\|f-T_0\|_{C^0(M_{r_*,R})}\le G_{r_1,-\infty}.
\]
This gives $f\ge T_0-G_{r_1,-\infty}=T^{(0)}$. Now suppose $f\ge T^{(n)}$. Then the bootstrap
claim \eqref{eq:C0-claim} gives
\[
\|f-T_0\|_{C^0(M_{r_*,R})}
\le
G_{r_1}\big(T^{(n)}\big)
=
T_0-T^{(n+1)},
\]
hence $f\ge T^{(n+1)}$. This proves \eqref{eq:f-above-Tn}.

Passing to the limit $n\to\infty$ in \eqref{eq:f-above-Tn} gives
\[
f\ge T_*
\qquad\text{on }M_{r_*,R}.
\]
Applying the bootstrap claim one last time with $T=T_*$ yields
\[
\|f-T_0\|_{C^0(M_{r_*,R})}\le G_{r_1}(T_*).
\]
It remains to identify $T_*$. For each $\sigma$, the function
\[
T\longmapsto \sup_{\substack{t\ge T\\ p\in S^2}} |\Xi(t,\sigma,p)|
\]
is continuous and nonincreasing because $\Xi$ is continuous in $t$, and it is dominated by
\[
\sup_{\substack{t>\underline T\\ p\in S^2}} |\Xi(t,\sigma,p)|\in L^1([r_1,\infty)).
\]
Hence $G_{r_1}(T)$ is continuous by dominated convergence. Passing to the limit in \eqref{eq:Tn-iteration} therefore gives
\[
T_*=T_0-G_{r_1}(T_*),
\]
which is exactly \eqref{eq:Tstar-fixed}. Combining this with the previous bound proves
\eqref{eq:C0-f-global}. The range bound \eqref{eq:f-range-Tstar} is immediate from
\eqref{eq:C0-f-global}.

Finally, since
\[
T_0-T_*=G_{r_1}(T_*)\le G_{r_1,-\infty},
\]
we have
\[
T_*\ge T_0-G_{r_1,-\infty}.
\]
The quantity $G_{r_1,-\infty}$ depends only on the background spacetime on the fixed tail, so
$T_*(T_0;r_1)\to\infty$ as $T_0\to\infty$. This proves \eqref{eq:Tstar-limit}.

Finally, since $f(M_{r_*,R})\subset I_*$, the same maximum-principle argument with
$\sup_{t\in I_*}$ in place of $\sup_{t\ge T_*}$ yields
\[
\|f-T_0\|_{\mathc^0(M_{r_*,R})}
\le
\int_{r_1}^\infty \frac{\mathfrak X_0(\sigma;I_*)}{\sigma}\,d\sigma.
\]
Combined with \eqref{eq:Tstar-fixed}, this gives the slab version of \eqref{eq:thm-C0}.
\end{proof}

\vv

\begin{lem}[Basic pointwise estimates]\label{lem:coeff-structure}
Assume $f$ satisfies \textbf{BS-1}. Then the following estimates hold on $M_{r_*,R}$:
\begin{enumerate}[label=\textup{(\roman*)}]
\item
\begin{equation}\label{eq:basic-muT-est}
\mu^T> \frac14.
\end{equation}

\item
\begin{equation}\label{eq:basic-a-est}
\mathfrak a-\mathfrak a_0
=
\mathcos(r^{-1-\tau})\,|\slashed d f|_{\gamma}
+
\mathcos(r^{-1})\,|\slashed d f|_{\gamma}^2.
\end{equation}
In particular,
\begin{equation}\label{eq:basic-a-twosided}
r\,\mathfrak a=\mathcos(1),
\qquad
\frac1{\mathfrak a}\le \frac{2r}{\underline{\mathfrak a_0}}.
\end{equation}

\item
The principal tensor is uniformly comparable to $r\,\gamma^{-1}$:
\begin{equation}\label{eq:A-elliptic-basic}
\frac{1}{C}\,r\,|\omega|_\gamma^2
\le
\mathcal A(\omega,\omega)
\le
C\,r\,|\omega|_\gamma^2
\end{equation}
for every tangential $1$-form $\omega$.

\item
\begin{equation}\label{eq:basic-B2-size-est}
|\mathcal B_2|
=
\mathcos(r^{-\tau})\,|\slashed d f|_{\gamma}.
\end{equation}

\item
If $q:=|\slashed d f|_{\gamma}^2$, then
\begin{equation}\label{eq:basic-B2-est}
\big\langle \slashed d f,\slashed d\mathcal B_2\big\rangle_{\gamma}
=
\mathcos(r^{-1-\tau})\,|\slashed d f|_{\gamma}^2
+
\mathcos(r^{-\tau})\,|\slashed d q|_{\gamma}.
\end{equation}
In particular, at every point where $\slashed d q=0$,
\begin{equation}\label{eq:basic-B2-est-max}
\big\langle \slashed d f,\slashed d\mathcal B_2\big\rangle_{\gamma}
=
\mathcos(r^{-1-\tau})\,|\slashed d f|_{\gamma}^2.
\end{equation}
\end{enumerate}
\end{lem}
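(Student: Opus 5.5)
The plan is to treat every quantity as an explicit algebraic expression in $\slashed d f$ with background coefficients. The coefficients are evaluated on the graph, and their decay on the tail $M_{r_1,\infty}$ is controlled by $\|\mathcal S\|_{\ctS_{-\tau}(\mathcal M_{r_1})}$ through Lemma~\ref{lem:logweights-derivs}. The basic weighted sizes are
\[
N-1,\ \lambda-1,\ \beta^T\ =\ \mathcos(r^{-\tau}),
\qquad
r^{-2}\gamma-\gamma_{S^2}=\mathcos(r^{-\tau}).
\]
Their $t$-derivatives decay like $r^{-\tau-1}$, and their radial derivatives gain a factor $r^{-1}$. One point must be kept straight: the norms of tangential covectors are measured with $\gamma\sim r^2\gamma_{S^2}$. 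For example $|\beta^T|_\gamma=\mathcos(r^{-1-\tau})\cdot r$; I will fix the convention once, checking which powers of $r$ accompany $\gamma$-norms of $\beta^T$, $\slashed d N^2$ and $\slashed\nabla\beta^T$.

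Item (i) is exactly \eqref{eq:varthetastar-implies-muT}, which follows from BS-1 and \eqref{eq:muT-sufficient}.

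For (ii), I expand $\tr_{\gamma_f}Q_r(\slashed d f)$ using \eqref{eq:gammaf-inverse} and \eqref{eq:QrQt-gauge}. The part of order zero in $\slashed d f$ is $\tr_\gamma\partial_r\gamma$, which gives $\mathfrak a_0$. The linear terms are $2\langle\partial_r\beta^T,\slashed d f\rangle$ together with the cross term from $\gamma_f^{-1}-\gamma^{-1}$ contracted with $\partial_r\gamma\sim 2r^{-1}\gamma$; that cross term is $-2(\slashed d f)\odot\beta^T$ paired with $\partial_r\gamma$. Both are $\mathcos(r^{-1-\tau})|\slashed d f|_\gamma$. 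The quadratic remainder, the $N^2(\slashed d f)^2$ and $\partial_r\alpha\,\omega^2$ terms, together with the bounded factor $1/\mu^T$, gives $\mathcos(r^{-1})|\slashed d f|^2_\gamma$, once everything is divided by $2\lambda^2$ using $\lambda\ge\delta_*$. Then $r\mathfrak a=\mathcos(1)$ follows because $|\slashed d f|_\gamma$ is bounded, and the lower bound is just the BS-1 hypothesis $r\mathfrak a>\tfrac12\underline{\mathfrak a_0}$.

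For (iii), I use $\mathcal A=\mathfrak a^{-1}\gamma_f^{-1}$. Since $\gamma_f$ is a perturbation of $\gamma$, with eigenvalues controlled by (i) and by the bounded entries $\beta_1s,\ \alpha s^2$ in the orthonormal-frame matrix from the proof of Lemma~\ref{lem:beta-r-gauge-identities}, $\gamma_f^{-1}$ is uniformly comparable to $\gamma^{-1}$. Combining this with $\mathfrak a^{-1}\sim r$ from (ii) gives the two-sided bound.

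For (iv), $\mathcal B_2=\mathcal B-\mathcal B|_{\slashed d f=0}$. Every term of \eqref{eq:B-explicit-gauge} is smooth in $\slashed d f$ on the region $\mu^T>1/4$, so the mean value theorem in the variable $\slashed d f$ gives $|\mathcal B_2|\le\sup|\partial_{\slashed d f}\mathcal B|\,|\slashed d f|$. Each derivative involves $\mathfrak a^{-1}\sim r$ times background first derivatives of size $r^{-1-\tau}$, or times $\partial_t\gamma$-type terms of size $r^{-1-\tau}$. It also involves $\partial_{\slashed d f}\mathfrak a^{-1}$, which by (ii) carries the extra factor $r^{-1-\tau}$ relative to $\mathfrak a$ and is multiplied by $\mathcal B_1$-type terms. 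Altogether these give $\mathcos(r^{-\tau})$.

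For (v), I differentiate $\mathcal B_2=\mathcal B_2(r,p,f,\slashed d f)$ tangentially:
\[
\slashed d\mathcal B_2=(\partial_p\mathcal B_2)+(\partial_f\mathcal B_2)\slashed d f+(\partial_{\slashed d f}\mathcal B_2)\cdot\slashed\nabla^2 f .
\]
The first two terms still vanish at $\slashed d f=0$, so they are $\mathcos(r^{-1-\tau})|\slashed d f|$; the extra $r^{-1}$ comes from one more tangential derivative, and this is where the $\ct^\sharp$-type second tangential derivatives of the coefficients enter. For the Hessian term, after pairing with $\slashed d f$ I use
\[
\slashed\nabla^2 f(\slashed d f^\sharp,\cdot)=\tfrac12\slashed d q .
\]
The contraction of $\partial_{\slashed d f}\mathcal B_2$ with $\slashed\nabla^2 f$, paired with $\slashed d f$, is not a priori of this form. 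The obstacle is therefore to show that the Hessian enters $\langle\slashed d f,\slashed d\mathcal B_2\rangle$ only through $\slashed\nabla^2 f(\slashed d f^\sharp,\cdot)$, plus terms absorbable into $\mathcos(r^{-1-\tau})|\slashed d f|^2$ via BS-1.

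My first attempt is to use that $\mathcal B$ depends on $\slashed d f$ through scalars such as $\langle\beta^T,\slashed d f\rangle$, $|\slashed d f|^2$ and $\langle X,\slashed d f\rangle$. This writes $\partial_{\slashed d f}\mathcal B_2=c_1\slashed d f+E$ with $|E|=\mathcos(r^{-\tau})$. The $c_1$ part produces $\slashed d q$. For the $E$ part, $|\langle\slashed d f,\slashed\nabla^2 f(E,\cdot)\rangle|\le|E|\,|\slashed d q|/2$ holds if $E$ is proportional to $\slashed d f$; otherwise I would fall back on writing $\slashed\nabla^2 f(\slashed d f^\sharp,E)=\tfrac12\slashed d q(E)$, which holds once $E$ is expressed through $\slashed d f$ in the $\langle\cdot,\slashed d f\rangle$ structure above. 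Finally, (\ref{eq:basic-B2-est-max}) is immediate by setting $\slashed d q=0$.
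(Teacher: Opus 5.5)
Your plan is the paper's proof. Part (i) follows from \eqref{eq:varthetastar-implies-muT}. Part (ii) splits $2\lambda^2(\mathfrak a-\mathfrak a_0)$ into $(\gamma_f^{-1}-\gamma^{-1}):\partial_r\gamma$ and $\gamma_f^{-1}:(Q_r(\slashed d f)-\partial_r\gamma)$. Part (iii) uses $\mathcal A=\mathfrak a^{-1}\gamma_f^{-1}$, part (iv) is the mean value theorem in $\omega=\slashed d f$, and part (v) is the chain rule.

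What you call the obstacle in (v) is not one. For \emph{every} tangential vector $E$, symmetry of the Hessian gives
\[
\big\langle \slashed d f,\slashed\nabla^2 f(E,\cdot)\big\rangle_\gamma=\slashed\nabla^2 f\big((\slashed d f)^{\sharp_\gamma},E\big)=\tfrac12\,\slashed d q(E).
\]
So the Hessian contribution is at most $\tfrac12|\partial_\omega\mathcal B_2|\,|\slashed d q|_\gamma\le Cr^{-\tau}|\slashed d q|_\gamma$. This uses only the size bound from (iv). You do not need the decomposition $c_1\slashed d f+E$, the case distinction, or any proviso on how $E$ is expressed. The paper makes this transparent by evaluating the chain rule for $X(\mathcal B_2)$ at $X=(\slashed d f)^{\sharp_\gamma}$, so the Hessian enters only through $\slashed\nabla_X\slashed d f=\tfrac12\slashed d q$.

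There is one minor slip in (iv). On the BS-1 region $\partial_\omega\mathfrak a$ is not $O(r^{-1-\tau})$. The term $\tfrac{N^2}{\mu^T}\,\partial_r\gamma\big((\slashed d f)^{\sharp_\gamma},(\slashed d f)^{\sharp_\gamma}\big)\approx\tfrac{2N^2}{r\mu^T}|\slashed d f|_\gamma^2$ in $2\lambda^2\mathfrak a$ is not small when $|\slashed d f|_\gamma$ is merely bounded. Also, $\mathcal O$-bounds cannot be differentiated; you must differentiate the explicit formula. Hence only $\partial_\omega(\mathfrak a^{-1})=O(r)$ holds. This still suffices, because it multiplies $\mathfrak a\mathcal B$, a combination of background first derivatives of size $O(r^{-1-\tau})$.
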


\begin{proof}
For \textup{(i)}, \eqref{eq:bootstrap-assumption-apriori} and
\eqref{eq:varthetastar-implies-muT} give
\[
\mu^T>\frac14.
\]

For \textup{(ii)}, recall
\[
\mathfrak a
=
\frac{1}{2\lambda^2}\,\tr_{\gamma_f}Q_r(\slashed d f),
\qquad
\mathfrak a_0
=
\frac{1}{2\lambda^2}\,\tr_{\gamma}(\partial_r\gamma).
\]
Thus
\[
2\lambda^2(\mathfrak a-\mathfrak a_0)
=
(\gamma_f^{-1}-\gamma^{-1}):\partial_r\gamma
+
\gamma_f^{-1}:\big(Q_r(\slashed d f)-\partial_r\gamma\big).
\]
We estimate the two terms separately. First,
\[
Q_r(\slashed d f)-\partial_r\gamma
=
2\,\partial_r\beta^T\odot \slashed d f-\partial_r\alpha\,(\slashed d f)^2.
\]
By asymptotic flatness on the fixed tail,
\[
|\partial_r\beta^T|_\gamma=\mathcos(r^{-1-\tau}),
\qquad
|\partial_r\alpha|=\mathcos(r^{-1-\tau}),
\]
and $\gamma_f^{-1}$ is uniformly comparable to $\gamma^{-1}$ by
\eqref{eq:basic-muT-est}. Hence
\[
\gamma_f^{-1}:\big(Q_r(\slashed d f)-\partial_r\gamma\big)
=
\mathcos(r^{-1-\tau})|\slashed d f|_\gamma
+
\mathcos(r^{-1-\tau})|\slashed d f|_\gamma^2.
\]

Second, the explicit formula \eqref{eq:gammaf-inverse}, together with $\mu^T>1/4$, gives
\[
\gamma_f^{-1}-\gamma^{-1}
=
\mathcos(r^{-\tau})|\slashed d f|_\gamma
+
\mathcos(1)|\slashed d f|_\gamma^2
\]
as a $(2,0)$-tensor. Here the term linear in $\slashed d f$ contains one factor of $\beta^T$ and
therefore gains the decay of the shift. Since
\[
\partial_r\gamma=\frac{2}{r}\gamma+\mathcos(r^{-1-\tau}),
\]
contracting with $\gamma_f^{-1}-\gamma^{-1}$ gives
\[
(\gamma_f^{-1}-\gamma^{-1}):\partial_r\gamma
=
\mathcos(r^{-1-\tau})|\slashed d f|_\gamma
+
\mathcos(r^{-1})|\slashed d f|_\gamma^2.
\]
Combining the two estimates and using $\lambda=\mathcos(1)$ on the fixed tail proves
\eqref{eq:basic-a-est}. The estimate $r\,\mathfrak a=\mathcos(1)$ follows immediately, while
\eqref{eq:basic-a-twosided} follows from \eqref{eq:bootstrap-assumption-apriori}.

For \textup{(iii)}, the explicit formula \eqref{eq:gammaf-inverse}, together with
$|\slashed d f|_\gamma<\vartheta_*/2$ and $\mu^T>1/4$, implies that $\gamma_f^{-1}$ is uniformly
comparable to $\gamma^{-1}$. Since $\mathcal A=\mathfrak a^{-1}\gamma_f^{-1}$ and
$\mathfrak a^{-1}=\mathcos(r)$, we obtain \eqref{eq:A-elliptic-basic}.

For \textup{(iv)} and \textup{(v)}, write
\[
\mathcal B_2
=
\mathfrak B_2(r,p,t,\omega)\Big|_{t=f,\;\omega=\slashed d f},
\qquad \omega\in T^*S_r.
\]
By Proposition~\ref{prop:gammab-quasilinear} and Remark~\ref{rem:c1c2}, the map
$\mathfrak B_2$ is smooth on the region determined by \eqref{eq:bootstrap-assumption-apriori}, and
\[
\mathfrak B_2(r,p,t,0)\equiv 0.
\]
A direct inspection of the explicit formula for $\mathcal B$ in \eqref{eq:B-explicit-gauge} shows
that, on this region,
\begin{equation}\label{eq:B2-structure-proof-new}
|\partial_\omega\mathfrak B_2|\le C r^{-\tau},
\qquad
|\partial_x\mathfrak B_2|+|\partial_t\mathfrak B_2|
\le C r^{-1-\tau}|\omega|,
\end{equation}
where $\partial_x$ denotes any tangential derivative in the $S^2$-variables of the background
coefficients. Since $\mathfrak B_2(\cdot,\cdot,\cdot,0)=0$, the mean value theorem and
\eqref{eq:B2-structure-proof-new} give
\[
|\mathcal B_2|
\le
C r^{-\tau}|\slashed d f|_\gamma,
\]
which is \eqref{eq:basic-B2-size-est}.

Now let $X$ be a tangential vector field on $S_r$. By the chain rule,
\[
X(\mathcal B_2)
=
(\partial_x\mathfrak B_2)(X)
+
(\partial_t\mathfrak B_2)\,X(f)
+
(\partial_\omega\mathfrak B_2)\cdot \slashed\nabla_X(\slashed d f).
\]
Using \eqref{eq:B2-structure-proof-new}, we obtain
\[
|X(\mathcal B_2)|
\le
C r^{-1-\tau}|\slashed d f|_\gamma\,|X|
+
C r^{-\tau}|\slashed\nabla_X\slashed d f|.
\]
Evaluating this on $X=(\slashed d f)^{\sharp_\gamma}$ yields
\[
\big|\langle \slashed d f,\slashed d\mathcal B_2\rangle_{\gamma}\big|
\le
C r^{-1-\tau}|\slashed d f|_\gamma^2
+
C r^{-\tau}\big|\slashed\nabla_{\gamma}^2 f\big((\slashed d f)^{\sharp_\gamma},\cdot\big)\big|_{\gamma}.
\]
Finally, since
\[
\frac12\,\slashed d q
=
\slashed\nabla_{\gamma}^2 f\big((\slashed d f)^{\sharp_\gamma},\cdot\big),
\]
we obtain \eqref{eq:basic-B2-est}, and \eqref{eq:basic-B2-est-max} follows when $\slashed d q=0$.
\end{proof}

\begin{lem}[Exact equation for $q=|\slashed d f|_\gamma^2$]\label{lem:q-exact}
Let
\[
V:=(\slashed d f)^{\sharp_\gamma},
\qquad
q:=|\slashed d f|_\gamma^2.
\]
Then $q$ satisfies the exact identity
\begin{align}
\frac12\Big(
q_\rho-\mathcal A:\slashed\nabla_\gamma^2 q
\Big)
&=
-\frac12\,\mathcal D_\rho\gamma(V,V)
-
\mathcal A^{ab}\,\big\langle \slashed\nabla_a\slashed d f,\slashed\nabla_b\slashed d f\big\rangle_\gamma
\label{eq:q-exact-new}
\\
&\quad
-K_\gamma\Big(
(\tr_\gamma\mathcal A)\,q-\mathcal A(\slashed d f,\slashed d f)
\Big)
+
\big(\slashed\nabla_V\mathcal A\big):\slashed\nabla_\gamma^2 f
-
\langle \slashed d f,\slashed d\mathcal B\rangle_\gamma.
\nonumber
\end{align}
Here $K_\gamma$ is the Gauss curvature of $(S_r,\gamma)$, and
\[
\mathcal D_\rho\gamma:= \partial_{\rho}(\gamma|_{t=f})=(-\partial_r+f_\rho\partial_t)\gamma.
\]
\end{lem}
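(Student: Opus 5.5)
This is a Bochner-type computation. We differentiate the equation \eqref{eq:tmcf-rho-form} in the tangential direction $V$, contract with $\slashed d f$, and commute the $\rho$-derivative and the tangential Laplacian-type operator past $\slashed d$.

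Throughout, the metric is the pulled-back $\gamma|_{t=f}$, which depends on $\rho$ both explicitly and through $f$. The first step is the $\rho$-derivative. We write
\[
q=\gamma^{ab}\partial_a f\,\partial_b f ,
\]
where $\partial_a f$ are coordinate derivatives on $S^2$, which commute with $\partial_\rho$. Then
\[
\tfrac12 q_\rho=\langle \slashed d f,\slashed d f_\rho\rangle_\gamma+\tfrac12(\partial_\rho\gamma^{ab})\partial_af\,\partial_bf=\langle \slashed d f,\slashed d f_\rho\rangle_\gamma-\tfrac12\mathcal D_\rho\gamma(V,V),
\]
since $\partial_\rho\gamma^{-1}=-\gamma^{-1}(\mathcal D_\rho\gamma)\gamma^{-1}$. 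Here $\mathcal D_\rho\gamma$ is exactly the total $\rho$-derivative of $\gamma|_{t=f}$, given by the chain rule as $(-\partial_r+f_\rho\partial_t)\gamma$.

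The second step is the tangential part. The standard identity is
\[
\tfrac12\slashed\nabla^2_{ab}q=\langle\slashed\nabla_a\slashed d f,\slashed\nabla_b\slashed df\rangle_\gamma+\slashed\nabla^c f\,\slashed\nabla_a\slashed\nabla_b\slashed\nabla_c f .
\]
We commute third derivatives using the Ricci identity on a surface, $\mathrm{Ric}_\gamma=K_\gamma\gamma$:
\[
\slashed\nabla_a\slashed\nabla_b\slashed\nabla_c f=\slashed\nabla_c\slashed\nabla_a\slashed\nabla_b f+K_\gamma(\gamma_{ab}\slashed\nabla_c f-\gamma_{ac}\slashed\nabla_b f).
\]
Contracting with $\mathcal A^{ab}$ gives
\[
\tfrac12\mathcal A:\slashed\nabla^2 q=\mathcal A^{ab}\langle\slashed\nabla_a\slashed df,\slashed\nabla_b\slashed df\rangle+\langle\slashed df,\slashed d(\mathcal A:\slashed\nabla^2 f)\rangle-(\slashed\nabla_V\mathcal A):\slashed\nabla^2f+K_\gamma\big((\tr_\gamma\mathcal A)q-\mathcal A(\slashed df,\slashed df)\big).
\]
The term $\langle\slashed df,\slashed d(\mathcal A:\slashed\nabla^2 f)\rangle$ comes from moving $\slashed\nabla_c$ outside the contraction, and the correction from that move is the $(\slashed\nabla_V\mathcal A)$ term.

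The third step combines the two. We subtract the identities and use the differentiated equation
\[
\slashed d f_\rho-\slashed d(\mathcal A:\slashed\nabla^2f)=-\slashed d\mathcal B .
\]
This yields exactly \eqref{eq:q-exact-new}. The main care needed is with signs and with what $\slashed\nabla_V\mathcal A$ means. Since $\mathcal A$ depends on $(r,p,f,\slashed\nabla f)$, we interpret $\slashed\nabla_V\mathcal A$ as the full covariant derivative of the composite tensor field. Likewise $\slashed d\mathcal B$ is the full derivative. With that convention no further expansion is required, so the identity is exact.

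The only subtle point is that $\gamma$ varies in $\rho$, which is what produces the $\mathcal D_\rho\gamma$ term. The curvature commutator is routine because $S_r$ is two-dimensional.
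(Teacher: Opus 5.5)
Your proposal is correct and follows the paper's own proof: the identity $\tfrac12 q_\rho=\langle\slashed d f,\slashed d f_\rho\rangle_\gamma-\tfrac12\mathcal D_\rho\gamma(V,V)$, then the $\mathcal A$-weighted Bochner identity for the pulled-back metric, then substitution of the tangentially differentiated equation. One small slip is that your displayed Ricci identity has its indices permuted; it should read $\slashed\nabla_a\slashed\nabla_b\slashed\nabla_c f=\slashed\nabla_c\slashed\nabla_a\slashed\nabla_b f+K_\gamma(\gamma_{ab}\slashed\nabla_c f-\gamma_{bc}\slashed\nabla_a f)$, but the discrepancy is antisymmetric in $(a,b)$ and is annihilated by the symmetric $\mathcal A^{ab}$, so the contracted curvature term and the final identity are unaffected.
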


\begin{proof}
From \eqref{eq:tmcf-rho-form},
\[
f_\rho=\mathcal A:\slashed\nabla_\gamma^2 f-\mathcal B.
\]
Differentiating tangentially and taking the $\gamma$-inner product with $\slashed d f$ gives
\[
\big\langle \slashed d f,\slashed d f_\rho\big\rangle_\gamma
=
\big\langle \slashed d f,\slashed d(\mathcal A:\slashed\nabla_\gamma^2 f)\big\rangle_\gamma
-
\langle \slashed d f,\slashed d\mathcal B\rangle_\gamma.
\]
Since $q=|\slashed d f|_\gamma^2$, we also have
\[
\frac12 q_\rho
=
\big\langle \slashed d f,\slashed d f_\rho\big\rangle_\gamma
-\frac12\,\mathcal D_\rho\gamma(V,V).
\]

It therefore remains to expand the middle term. A standard Bochner-type identity on the two-dimensional
manifold $(S_r,\gamma)$ gives
\begin{align*}
\frac12\,\mathcal A:\slashed\nabla_\gamma^2 q
&=
\mathcal A^{ab}\,\big\langle \slashed\nabla_a\slashed d f,\slashed\nabla_b\slashed d f\big\rangle_\gamma
+
\big\langle \slashed d f,\slashed d(\mathcal A:\slashed\nabla_\gamma^2 f)\big\rangle_\gamma
\\
&\quad
+K_\gamma\Big(
(\tr_\gamma\mathcal A)\,q-\mathcal A(\slashed d f,\slashed d f)
\Big)
-
\big(\slashed\nabla_V\mathcal A\big):\slashed\nabla_\gamma^2 f.
\end{align*}
Substituting this into the previous identity yields \eqref{eq:q-exact-new}.
\end{proof}

\begin{prop}[Tangential gradient estimate]\label{prop:grad-f}
Let $T_*=T_*(T_0;r_1)$ be the threshold from Proposition~\ref{prop:C0-f}. Then there exists a
constant
\[
C=C\!\left(
\tau,r_1,\underline{\mathfrak a_0}^{-1},\delta_*^{-1},\vartheta_*^{-1},
\|\mathcal S\|_{\ctS_{-\tau}(\mathcal M_{r_1})}
\right)
\]
such that, for every $(r,p)\in M_{r_*,R}$,
\begin{equation}\label{eq:grad-local}
|\slashed d f|_{\gamma}(r,p)
\le
\frac{C}{r}
\int_r^R
\mathfrak X_1(\sigma;I_*)\,d\sigma.
\end{equation}
\end{prop}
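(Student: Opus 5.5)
We apply the maximum principle to $q=|\slashed d f|_\gamma^2$ using the exact identity of Lemma~\ref{lem:q-exact}. We write the estimate for $w:=\sqrt q$ (or better, for $r\sqrt q$), and integrate a Dini-derivative inequality inward from $S_R$, exactly as in the proof of Proposition~\ref{prop:C0-f}. On $S_R$ we have $f\equiv T_0$, so $q=0$ there; this is the initial datum for the inward evolution in $\rho=R-r$.

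\emph{Step 1: bounds at a spatial maximum.} Fix $\rho$ and a point of $S_{R-\rho}$ where $q(\rho,\cdot)$ attains its maximum. There $\slashed d q=0$ and $\slashed\nabla^2_\gamma q\le 0$, so $-\mathcal A:\slashed\nabla^2_\gamma q\ge 0$ by Lemma~\ref{lem:coeff-structure}(iii). We bound the right side of \eqref{eq:q-exact-new} term by term.
\begin{itemize}
\item[(a)] For $\mathcal D_\rho\gamma=-\partial_r\gamma+f_\rho\partial_t\gamma$ we use $\partial_r\gamma=\tfrac2r\gamma+\mathcal O(r^{-1-\tau})$. This gives $-\tfrac12\mathcal D_\rho\gamma(V,V)=\tfrac1r q+\mathcal O(r^{-1-\tau})q+\mathcal O(r^{-\tau}|f_\rho|)q$, where $|f_\rho|$ is controlled by the equation and BS-1.
\item[(b)] The Hessian-squared term $-\mathcal A^{ab}\langle\slashed\nabla_a\slashed df,\slashed\nabla_b\slashed df\rangle\le -c\,r|\slashed\nabla^2 f|^2$ is a good negative term.
\item[(c)] The curvature term: $K_\gamma=r^{-2}+\mathcal O(r^{-2-\tau})$ and $(\tr\mathcal A)q-\mathcal A(\slashed df,\slashed df)\ge c\,r\,q$, so this contributes $\le -c' r^{-1}q$, a good sign.
\item[(d)] $(\slashed\nabla_V\mathcal A):\slashed\nabla^2 f$ is bounded by $C r\,|\slashed d f|^2(\dots)|\slashed\nabla^2 f|$, plus background terms of order $\mathcal O(r^{-\tau})|\slashed df||\slashed\nabla^2 f|$. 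These are absorbed into (b) by Cauchy--Schwarz, leaving errors $\mathcal O(r^{-1})q^2+\mathcal O(r^{-1-2\tau})q$.
\item[(e)] For $\langle\slashed df,\slashed d\mathcal B\rangle=\langle\slashed df,\slashed d(\Xi|_{t=f})\rangle+\langle\slashed df,\slashed d\mathcal B_2\rangle$, the second piece is $\mathcal O(r^{-1-\tau})q$ by \eqref{eq:basic-B2-est-max}. The first equals $\langle\slashed df,\slashed d\Xi\rangle+\partial_t\Xi\,q$, and its leading part is bounded by $r^{-1}\mathfrak X_1(r;I_*)\sqrt q$, since $f\in I_*$ by Proposition~\ref{prop:C0-f}.
\end{itemize}

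\emph{Step 2: the scalar inequality.} The leading $\tfrac1r q$ from (a) combines with the curvature term; since $\partial_\rho=-\partial_r$, this reflects the natural scaling $|\slashed d f|_\gamma\sim r^{-1}|\slashed D f|$. We therefore work with $m(\rho):=\max_{S_{R-\rho}} r\sqrt q$, i.e.\ the round-metric gradient. After dividing by $\sqrt q$ and multiplying by $r$, we expect
\[
D^+m\le C r^{-1-\tau}m+C r^{-1}m^2+ \mathfrak X_1(r;I_*),
\]
where the $q^2$-type error carries the smallness from BS-1. A Gr\"onwall argument then gives $m(\rho)\le C\int_{R-\rho}^R\mathfrak X_1(\sigma;I_*)\,d\sigma$, with the Gr\"onwall factor $\exp(C\int_{r_1}^\infty\sigma^{-1-\tau}d\sigma)$ uniform in $R$, $r_*$. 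Dividing by $r$ yields \eqref{eq:grad-local}.

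\emph{Main obstacle.} The hardest part is the quadratic error $r^{-1}m^2$ coming from (d) and from the $\mathcal O(r^{-1})|\slashed df|^2$ part of $\mathfrak a-\mathfrak a_0$. It is not integrable against $d\rho$ without smallness. My first approach is to keep the good curvature term (c) with its exact constant: under BS-1 its coefficient exceeds the $r^{-1}m^2$ error, so that the combined $r^{-1}$-order terms have a nonpositive sign. Failing that, I would run a continuity argument in $\rho$: assume $m\le 2C\int\mathfrak X_1$, which keeps $m$ small, and close the bound.
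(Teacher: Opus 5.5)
Your skeleton is the paper's: the maximum principle for $q=|\slashed d f|_\gamma^2$ via Lemma~\ref{lem:q-exact}, a Dini inequality integrated inward from $Q(0)=0$, and the factor $r$. Your $m=r\sqrt q$ plays the role of the paper's integrating factor $I(\rho)=r\exp(-C_*\int_{R-\rho}^R\sigma^{-1-\tau}d\sigma)$. The gap is in item (d), and it is the source of the ``main obstacle'' that you then cannot remove.

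If you bound $(\slashed\nabla_V\mathcal A):\slashed\nabla^2 f$ with the generic estimate \eqref{eq:A1-final}, $|\slashed\nabla\mathcal A|\le C(r^{-\tau}+r|\slashed\nabla^2 f|)$, you get a cubic term of size $C r\sqrt q\,|\slashed\nabla^2 f|^2$. This can be absorbed into the Hessian term $-c\,r|\slashed\nabla^2 f|^2$ only if $|\slashed d f|_\gamma$ is small compared with tail constants, and \textbf{BS-1} gives only $|\slashed d f|_\gamma<\vartheta_*/2$. Whatever nonlinear remainder survives, your $C r^{-1}m^2$, is not beaten by either of your remedies. First, the curvature term contributes at best $-\frac{c}{r}m$ to the $m$-inequality, which dominates $Cr^{-1}m^2$ only when $m<c/C$; but \textbf{BS-1} allows $m=r|\slashed d f|_\gamma$ to be as large as $r\vartheta_*/2$. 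Second, the continuity argument needs $\int_r^R\mathfrak X_1(\sigma;I_*)\,d\sigma$ to be small, which is not a hypothesis of the proposition, since $C$ may depend only on tail data. Even granting that smallness, an error of order $\sigma^{-1}m^2$ is integrated against the borderline weight $d\sigma/\sigma$. This produces $\int_r^R\mathfrak X_1(s)\log(s/r)\,ds$, which is not bounded by $C\int_r^R\mathfrak X_1$ uniformly in $R$.

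The missing idea is to use the maximum point fully. There $\slashed d q=0$, that is $\slashed\nabla^2 f(V,\cdot)=0$, or equivalently $\slashed\nabla_V\slashed d f=0$. This is the same cancellation you already use for $\mathcal B_2$ through \eqref{eq:basic-B2-est-max}.
\begin{itemize}
\item In a frame with $e_1=\widehat V$, the Hessian has the single component $\slashed\nabla^2 f(e_2,e_2)$. Hence $(\slashed\nabla_V\mathcal A):\slashed\nabla^2 f=\sqrt q\,\widehat V(a_{22})\,\slashed\nabla^2 f(e_2,e_2)$, where $a_{22}=\mathcal A(e^2,e^2)$.
\item Every $\slashed d f$-dependence of $\mathcal A$ enters the $\widehat V$-derivative only through $\slashed\nabla_{\widehat V}\slashed d f=0$. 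This covers the dependence through $\gamma_f^{-1}$, $\mu^T$ and $\mathfrak a$, including the $\mathcal O(r^{-1})|\slashed d f|^2$ part of $\mathfrak a-\mathfrak a_0$. Therefore only background derivatives survive, and $\widehat V(a_{22})=\mathcal O(r^{-\tau})$.
\item Completing the square against the Hessian term $-2a_{22}\big(\slashed\nabla^2 f(e_2,e_2)\big)^2$ then leaves only $\mathcal O(r^{-1-\tau})q$. There is no quadratic error at all, and the leading curvature term can simply be discarded.
\item The result is $D^+\sqrt Q\le(\frac1r+Cr^{-1-\tau})\sqrt Q+\frac1r\mathfrak X_1(r;I_*)$, which integrates to \eqref{eq:grad-local}.
\end{itemize}

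A smaller point concerns (a): $|f_\rho|$ is not controlled by \textbf{BS-1}. Since $f_\rho=\mathcal A:\slashed\nabla^2 f-\mathcal B$, it contains a term of size $\mathcal O(r)|\slashed\nabla^2 f|$. Combined with $\partial_t\gamma(\widehat V,\widehat V)=\mathcal O(r^{-1-\tau})$, the term $f_\rho\,\partial_t\gamma(V,V)$ must also be absorbed into the Hessian term by Young's inequality.
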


\begin{proof}
By Proposition~\ref{prop:C0-f}, we have $f(r,p)\in I_*$ on $M_{r_*,R}$. Hence
\begin{equation}\label{eq:X1-control-new}
\big|(\slashed d_\gamma\Xi)|_{t=f}\big|_{\gamma}
\le
\frac{1}{r}\,\mathfrak X_1(r;I_*).
\end{equation}

Set
\[
q:=|\slashed d f|_\gamma^2,
\qquad
Q(\rho):=\sup_{S_{R-\rho}} q.
\]
Fix $\rho\in[0,R-r_*)$ and suppose $Q(\rho)>0$. Choose a point $p_\rho\in S_{R-\rho}$ such that
\[
q(\rho,p_\rho)=Q(\rho).
\]
At the point $(\rho,p_\rho)$, choose a $\gamma$-orthonormal frame $\{e_1,e_2\}$ such that
\[
e_1=\widehat V:=\frac{(\slashed d f)^{\sharp_{\gamma}}}{|\slashed d f|_{\gamma}}.
\]
Since $\slashed d q=0$ at a spatial maximum of $q$, we have
\[
0=\frac12(\slashed d q)(e_i)
=
\slashed\nabla_{\gamma}^2 f\big(e_i,(\slashed d f)^{\sharp_{\gamma}}\big),
\qquad i=1,2.
\]
Hence
\[
\slashed\nabla_{\gamma}^2 f(e_1,e_1)=0,
\qquad
\slashed\nabla_{\gamma}^2 f(e_1,e_2)=0.
\]
Therefore the only possibly nonzero Hessian component is
\[
m:=\slashed\nabla_{\gamma}^2 f(e_2,e_2)=\slashed\Delta_{\gamma}f.
\]
Letting $\{e^1,e^2\}$ be the coframe, set
\[
a_{22}:=\mathcal A(e^2,e^2).
\]
By \eqref{eq:A-elliptic-basic}, $a_{22}>0$ and $a_{22}\sim r$. Then
\[
\mathcal A^{ab}\,\big\langle \slashed\nabla_a\slashed d f,\slashed\nabla_b\slashed d f\big\rangle_\gamma
=
a_{22}\,m^2,
\]
and
\[
(\tr_\gamma\mathcal A)\,q-\mathcal A(\slashed d f,\slashed d f)
=
a_{22}\,q.
\]

Evaluating \eqref{eq:q-exact-new} at $(\rho,p_\rho)$, and using that
\[
\mathcal A:\slashed\nabla_\gamma^2 q\le 0
\]
there, gives
\begin{align}
q_\rho
&\le
-\mathcal D_\rho\gamma(\widehat V,\widehat V)\,q
-2a_{22}m^2
-2K_\gamma\,a_{22}\,q
+2\sqrt q\,\widehat V(a_{22})\,m
-2\langle \slashed d f,\slashed d\mathcal B\rangle_\gamma.
\label{eq:q-max-pre-est}
\end{align}

We now estimate the coefficient terms. At the
maximum point, the equation gives
\[
f_\rho
=
\mathcal A:\slashed\nabla_\gamma^2 f-\mathcal B
=
a_{22}m-\Xi|_{t=f}-\mathcal B_2.
\]
Using $a_{22}=\mathcos(r)$, $|\mathcal B_2|=\mathcos(r^{-\tau})\sqrt q$, and
$|\Xi|=\mathcos(r^{-\tau})$ on the fixed tail, we obtain
\[
|f_\rho|
\le
C r |m|+\mathcos(r^{-\tau})+\mathcos(r^{-\tau})\sqrt q.
\]
Since
\[
\partial_r\gamma(\widehat V,\widehat V)=\frac2r+\mathcos(r^{-1-\tau}),
\qquad
\partial_t\gamma(\widehat V,\widehat V)=\mathcos(r^{-1-\tau}),
\]
we have
\[
-\mathcal D_\rho\gamma(\widehat V,\widehat V)\,q
=
\partial_r\gamma(\widehat V,\widehat V)\,q
-
f_\rho\,\partial_t\gamma(\widehat V,\widehat V)\,q.
\]
Therefore
\[
-\mathcal D_\rho\gamma(\widehat V,\widehat V)\,q
\le
\left(\frac2r+\mathcos(r^{-1-\tau})\right)q
+
\mathcos(r^{-\tau})|m|q.
\]
Since $q$ is uniformly bounded by \textbf{BS-1}, Young's inequality and $a_{22}\sim r$ give
\[
C r^{-\tau}|m|q
\le
\frac12 a_{22}m^2+\mathcos(r^{-1-\tau})q.
\]
Thus the contribution of the $\mathcal D_\rho\gamma$ term may be estimated by
\[
-\mathcal D_\rho\gamma(\widehat V,\widehat V)\,q
\le
\left(\frac2r+\mathcos(r^{-1-\tau})\right)q
+
\frac12 a_{22}m^2.
\]

We next estimate $\widehat V(a_{22})$. At $(\rho,p_\rho)$ we have
\[
a_{22}
=
\mathfrak a^{-1}\gamma_f^{-1}(e^2,e^2)
=
\mathfrak a^{-1}
\left(1+\frac{q(\beta^T(e_2))^2}{\mu^T}\right),
\]
where we used $e_2\perp(\slashed d f)^{\sharp_\gamma}$. Since $q$ attains a spatial maximum at
$(\rho,p_\rho)$, we also have
\[
\slashed\nabla_{\widehat V}\slashed d f=0.
\]
Using the above identity together with asymptotic flatness and choosing the frame so that
$\nabla_{\widehat V}e_2=0$ at the point, we obtain
\[
\widehat V\big(1+\langle \beta^T,\slashed d f\rangle_\gamma\big)
=
\langle \slashed\nabla_{\widehat V}\beta^T,\slashed d f\rangle_\gamma
+
\langle \beta^T,\slashed\nabla_{\widehat V}\slashed d f\rangle_\gamma
=
\mathcos(r^{-1-\tau}),
\]
\[
\widehat V(\mu^T)
=
2\big(1+\langle \beta^T,\slashed d f\rangle_\gamma\big)
\widehat V\big(1+\langle \beta^T,\slashed d f\rangle_\gamma\big)
-\widehat V(N^2)\,q
-N^2\widehat V(q)
=
\mathcos(r^{-1-\tau}),
\]
and
\[
\widehat V(\beta^T(e_2))
=
(\slashed\nabla_{\widehat V}\beta^T)(e_2)+\beta^T(\nabla_{\widehat V}e_2)
=
\mathcos(r^{-1-\tau}).
\]
Estimating $\widehat V(\mathfrak a)$, we get
\begin{align*}
\widehat V(\mathfrak a)
&=
\widehat V\!\left(\frac{1}{2\lambda^2}\right)\,
\gamma_f^{-1}:Q_r(\slashed d f)
+
\frac{1}{2\lambda^2}\,
\widehat V(\gamma_f^{-1}):Q_r(\slashed d f)
+
\frac{1}{2\lambda^2}\,
\gamma_f^{-1}:\widehat V\big(Q_r(\slashed d f)\big)
\\
&=
\mathcos(r^{-2-\tau}).
\end{align*}
Since $r\,\mathfrak a=\mathcos(1)$, we have $\mathfrak a^{-1}=\mathcos(r)$, and hence
\begin{equation}\label{eq:e1-ainv}
\widehat V(\mathfrak a^{-1})
=
-\mathfrak a^{-2} \widehat V(\mathfrak a)
=
\mathcos(r^{-\tau}).
\end{equation}
Putting these estimates together gives
\[
\widehat V(a_{22})=\mathcos(r^{-\tau}).
\]

Since $a_{22}>0$, completing the square in $m$ gives
\[
-\frac32 a_{22}m^2+2\sqrt q\,\widehat V(a_{22})\,m
\le
\mathcos(r^{-1-\tau})\,q.
\]
The curvature term satisfies
\[
-2K_{\gamma}\,a_{22}\,q
=
-\frac{2}{r^2}\,a_{22}\,q+\mathcos(r^{-2-\tau})a_{22}q
\le
\mathcos(r^{-1-\tau})q.
\]
Substituting these estimates into \eqref{eq:q-max-pre-est}, and discarding the remaining negative
part of the $m^2$ term, gives
\begin{equation}\label{eq:q-max-pre-est2}
q_\rho
\le
\left(\frac{2}{r}+\mathcos(r^{-1-\tau})\right)q
-2\langle \slashed d f,\slashed d\mathcal B\rangle_{\gamma}.
\end{equation}

We now split
\[
\mathcal B=\Xi|_{t=f}+\mathcal B_2.
\]
Since
\[
\slashed d(\Xi|_{t=f})
=
(\slashed d_\gamma\Xi)|_{t=f}
+
(\partial_t\Xi)|_{t=f}\,\slashed d f,
\]
we have
\[
-\langle \slashed d f,\slashed d\mathcal B\rangle_\gamma
=
-\big\langle \slashed d f,(\slashed d_\gamma\Xi)|_{t=f}\big\rangle_\gamma
-(\partial_t\Xi)|_{t=f}\,q
-\langle \slashed d f,\slashed d\mathcal B_2\rangle_\gamma.
\]
By asymptotic flatness,
\[
(\partial_t\Xi)|_{t=f}=\mathcos(r^{-1-\tau}),
\]
and by \eqref{eq:basic-B2-est-max},
\[
\langle \slashed d f,\slashed d\mathcal B_2\rangle_\gamma
=
\mathcos(r^{-1-\tau})\,q
\]
at the maximum point. Substituting these estimates into \eqref{eq:q-max-pre-est2} and using
\eqref{eq:X1-control-new} gives
\[
q_\rho(\rho,p_\rho)
\le
2\big|(\slashed d_\gamma\Xi)|_{t=f}\big|_{\gamma}\sqrt{q(\rho,p_\rho)}
+
\left(\frac{2}{r}+\mathcos(r^{-1-\tau})\right)q(\rho,p_\rho).
\]
This implies
\[
D^+Q(\rho)
\le
\left(
\frac{2}{r}+\mathcos(r^{-1-\tau})
\right)Q(\rho)
+
\frac{2}{r}\,\mathfrak X_1(r;I_*)\,\sqrt{Q(\rho)},
\qquad r=R-\rho.
\]

Set
\[
Y(\rho):=\sqrt{Q(\rho)}.
\]
Then, in the Dini sense,
\[
D^+Y(\rho)
\le
\left(
\frac{1}{r}+\mathcos(r^{-1-\tau})
\right)Y(\rho)
+
\frac1r\,\mathfrak X_1(r;I_*).
\]
Choose a constant $C_*>0$, depending only on the fixed tail data, such that
\[
\mathcos(r^{-1-\tau})\le C_* r^{-1-\tau}.
\]
Define the integrating factor
\[
I(\rho)
:=
r\,\exp\!\left(
-C_*\int_{R-\rho}^{R}\sigma^{-1-\tau}\,d\sigma
\right).
\]
Since $\partial_\rho r=-1$, we have
\[
\frac{D^+I}{I}
=
-\frac1r-C_*r^{-1-\tau}.
\]
Multiplying the differential inequality for $Y$ by $I(\rho)$, we obtain
\[
D^+\big(I(\rho)Y(\rho)\big)
\le
\exp\!\left(
-C_*\int_{R-\rho}^{R}\sigma^{-1-\tau}\,d\sigma
\right)\mathfrak X_1(r;I_*).
\]
Now $f|_{S_R}=T_0$, so $\slashed d f=0$ on $S_R$, hence $Q(0)=0$ and $Y(0)=0$. Integrating from
$0$ to $\rho$ gives
\[
I(\rho)Y(\rho)
\le
\int_0^\rho
\exp\!\left(
-C_*\int_{R-\sigma}^{R}s^{-1-\tau}\,ds
\right)\mathfrak X_1(R-\sigma;I_*)\,d\sigma.
\]
Changing variables gives
\[
Y(\rho)
\le
\frac{C}{r}
\int_r^R \mathfrak X_1(\sigma;I_*)\,d\sigma.
\]
Recalling that $Y(\rho)=\sqrt{Q(\rho)}$, this proves \eqref{eq:grad-local}.
\end{proof}

\begin{prop}[Uniform weighted truncated Schauder estimate]
\label{prop:schauder-truncated}
Assume that \(f\) satisfies \textbf{BS-1} and \textbf{BS-2}. Let
\(T_*=T_*(T_0;r_1)\) be the threshold from
Proposition~\ref{prop:C0-f}. Then
\[
        f-T_0\in \ct_{-\tau}(\overline{M_{r_*,R}})
\]
and
\begin{equation}\label{eq:schauder-truncated}
\|f-T_0\|_{\ct_{-\tau}(M_{r_*,R})}
\le
C\,
\sup_{t\in I_*}\|r\Xi(t,\cdot)\|_{\cz_{-\tau}(M_{r_*,R})},
\end{equation}
where
\[
C=
C\!\left(
\alpha,\tau,r_1,\underline{\mathfrak a_0}^{-1},\delta_*^{-1},\vartheta_*^{-1},
\|\mathcal S\|_{\ctS_{-\tau}(\mathcal M_{r_1})}
\right).
\]
\end{prop}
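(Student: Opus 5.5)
The plan is to treat the good-gauge equation as a linear, uniformly parabolic equation in the logarithmic variable $s=-\log r$, with coefficients frozen along $f$. We then apply the standard interior-plus-initial-boundary Schauder theory on unit-length $s$-cylinders and reassemble with the weight $r^{\tau}$.

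\textbf{Step 1: change of variables.} Write $u:=f-T_0$. Since $f_r=-r^{-1}f_s$, multiplying \eqref{eq:TMCF-gammab} by $-r$ gives
\[
u_s - r\mathcal A:\slashed\nabla^2_\gamma u = -r\,\Xi|_{t=f} - r\mathcal B_2 .
\]
By Lemma~\ref{lem:coeff-structure}(iii), $r\mathcal A$ is comparable to $r^2\gamma^{-1}$, which is comparable to $\gamma_{S^2}^{-1}$. Hence in terms of $\slashed D$ this is uniformly parabolic, forward in $s$, with ellipticity constants depending only on the tail data.

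By (iv), $r\mathcal B_2=b\cdot\slashed d f$ with $|b|\le Cr^{1-\tau}|\slashed d f|/|\slashed df|$. Equivalently, $r\mathcal B_2=\mathcal O(r^{1-\tau})|\slashed df|_\gamma=\mathcal O(r^{-\tau})|\slashed Du|$, which is a first-order term with a decaying coefficient. We therefore regard the equation as linear in $u$, with coefficients $a^{ij}(s,x):=r\mathcal A(\ldots,f,\slashed\nabla f)$ and a drift. The forcing is $F:=-r\Xi|_{t=f}$.

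\textbf{Step 2: Hölder control of the coefficients.} This is the main obstacle. To apply Schauder estimates we need $[a^{ij}]_{\alpha,\alpha/2}$ bounded uniformly on unit cylinders $Q_k:=[k,k+1]\times S^2$. The coefficients depend on $f$ and $\slashed\nabla f$, and on background fields evaluated at $t=f$.

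\textbf{BS-2} bounds $\slashed D^2f$, so $\slashed D f$ is Lipschitz in $x$. Time regularity of $\slashed Df$ in $s$ is not given directly. I would first obtain an $s$-Hölder bound for $\slashed D f$ by an interpolation argument: $f_s$ is bounded by the equation itself, using $a\cdot\slashed D^2 f$ bounded by BS-2, $\mathcal B$ bounded, and Lemma~\ref{lem:coeff-structure}. Combining a bounded $f_s$ with bounded $\slashed D^2 f$ gives $\slashed D f\in C^{1/2}$ in $s$ by the standard parabolic interpolation lemma. Together with the $\ctS_{-\tau}$ regularity of the background, including the $\partial_t$ derivatives composed with $f$, this gives $\|a^{ij}\|_{\cz(Q_k)}\le C$.

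The same reasoning gives $\|F\|_{\cz(Q_k)}\le C\sup_{t\in I_*}\|r\Xi(t)\|_{\cz(Q_k)}+C\,r_k^{-\tau}\|\slashed Du\|$ on $Q_k$. Here Proposition~\ref{prop:C0-f} ensures $f\in I_*$, and $\partial_t\Xi$ composed with a Hölder $f$ contributes a controlled error. The composition error is absorbed because it carries an extra factor $r^{-\tau}$, or else a factor of $\|u\|_{\ct}$ times a small coefficient.

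\textbf{Step 3: local Schauder estimates and reassembly.} On each $Q_k$ I would apply the local parabolic Schauder estimate of \cite{parabolic1,parabolic2}. For cylinders touching $s=S$, i.e.\ $r=R$, I would use the initial-value version with zero data $u|_{S_R}=0$; for the others I would use the interior estimate on $Q_k\subset\tilde Q_k:=[k-1,k+1]\times S^2$. This gives
\[
\|u\|_{\ct(Q_k)}\le C\big(\|u\|_{C^0(\tilde Q_k)}+\|F\|_{\cz(\tilde Q_k)}\big).
\]
Multiplying by $r_k^{\tau}$, where $r$ varies by at most a factor $e$ on $\tilde Q_k$, turns this into weighted norms. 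The lower-order $\slashed D u$ term is handled by interpolation, $\|\slashed Du\|\le\epsilon\|\slashed D^2u\|+C_\epsilon\|u\|$, and absorbed.

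It remains to bound $\sup r^{\tau}|u|$ by the forcing norm. I would rerun the maximum-principle argument of Proposition~\ref{prop:C0-f} with the barrier $\pm C\Psi r^{-\tau}$, where $\Psi:=\sup_{t\in I_*}\|r\Xi(t)\|_{\cz_{-\tau}}$. Since $|\Xi|\le\Psi r^{-1-\tau}$, integrating from $R$ inward gives $|u(r)|\le\int_r^R\Psi\sigma^{-1-\tau}d\sigma\le\tau^{-1}\Psi r^{-\tau}$. Taking the supremum over $k$, and noting that all constants are independent of $r_*$ and $R$, yields \eqref{eq:schauder-truncated}. Membership $u\in\ct_{-\tau}(\overline{M_{r_*,R}})$ follows from the $C^4$ regularity and the finite norm.
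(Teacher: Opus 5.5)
Your proposal is correct and follows the paper's proof in all essentials. The steps match one for one: the passage to $s=-\log r$; local interior and initial-time Schauder estimates on unit $s$-cylinders, using $u=0$ on $S_R$; reweighting by $r_k^{\tau}$ and covering; absorption of lower-order terms by interpolation; and the maximum-principle bound $\sup_{S_r}|u|\le\tau^{-1}r^{-\tau}\sup_{t\in I_*}\|r\Xi(t,\cdot)\|_{C^0_{-\tau}}$.

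The differences are minor, and there are a few points to tighten:
\begin{itemize}
\item You keep the forcing at $t=f$, whereas the paper splits off $-r\Xi(T_0,\cdot)$ and puts the remainder into a term vanishing at $(v,\partial_x v)=(0,0)$. Either way, the composition error is a lower-order H\"older norm of $u$. That includes its time-H\"older part, which your bound $Cr_k^{-\tau}\|\slashed D u\|$ omits. The error is absorbed by interpolation, not by smallness of $r^{-\tau}$, which is not available on a fixed tail near $r_1$.
\item You make explicit the H\"older continuity of the frozen coefficients, via \textbf{BS-2} and the bound on $f_s$. The paper leaves this implicit in its appeal to quasilinear Schauder theory. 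However, that bound on $f_s$ involves $\sup|r\Xi|$, which the tail data alone only control as $O(r^{1-\tau})$; it is bounded only through the right-hand side of the estimate. So your constant also depends on an upper bound for the forcing. This is harmless in the small-forcing regime where the estimate is used.
\end{itemize}
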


\begin{proof}
Set
\[
        v:=f-T_0.
\]
We introduce the logarithmic radial variable \(s=-\log r\), so that
\[
        r=e^{-s}.
\]
Let
\[
        S_R:=-\log R,
        \qquad
        s_*:=-\log r_*,
\]
and write
\[
        M_{S_R,s_*}:=[S_R,s_*)\times S^2 .
\]
Since \(v=0\) on \(r=R\), the outer boundary condition becomes
\[
        v=0
        \qquad\text{on }\{S_R\}\times S^2 .
\]

Multiplying \eqref{eq:tmcf-rho-form} by \(r=e^{-s}\), we obtain
\begin{equation}\label{eq:tmcf-s-form-gamma}
        v_s-r\,\mathcal A:\slashed\nabla_{\gamma}^2 v
        =
        -r\,\mathcal B .
\end{equation}
We regard this as a quasilinear parabolic equation on the \(s\)-cylinder.
Since \(M_{S_R,s_*}\) is open at \(s=s_*\), we first work on compact
cylinders
\[
        M_{S_R,s_*-\varepsilon}:=[S_R,s_*-\varepsilon]\times S^2,
        \qquad
        0<\varepsilon<s_*-S_R,
\]
and prove estimates independent of \(\varepsilon\).  We then let
\(\varepsilon\downarrow0\).

Choose a finite atlas on \(S^2\).  In each chart,
\eqref{eq:tmcf-s-form-gamma} has the local form
\begin{equation}\label{eq:tmcf-s-form}
        v_s-a^{ij}(s,x,v,\partial_xv)\,\partial_{ij}v
        =
        F(s,x,v,\partial_xv)+G_0(s,x),
\end{equation}
where
\begin{equation}\label{eq:G0-def}
        G_0(s,x):=-e^{-s}\,\Xi(T_0,e^{-s},x).
\end{equation}
Here \(F\) contains all lower-order terms except for the reference forcing
\(G_0\).  More explicitly, using
\[
        \mathcal B=\Xi|_{t=f}+\mathcal B_2,
\]
we may write schematically
\[
        F(s,x,z,p)
        =
        F_0(s,x,z,p)
        -
        e^{-s}\big(\Xi(T_0+z,e^{-s},x)-\Xi(T_0,e^{-s},x)\big),
\]
where \(F_0\) contains the contribution of \(\mathcal B_2\) and the
lower-order terms arising from writing
\(\slashed\nabla_\gamma^2 v\) in fixed local coordinates.  The important
structural point is
\begin{equation}\label{eq:F-zero-structure}
        F(s,x,0,0)=0.
\end{equation}

We first record the uniform parabolic structure.  By
\eqref{eq:A-elliptic-basic},
\[
        \mathcal A(\omega,\omega)\sim r\,|\omega|_\gamma^2 .
\]
Equivalently, as a coefficient tensor in fixed sphere coordinates,
\[
        \mathcal A\sim r\,\gamma^{-1}.
\]
Since \(r^{-2}\gamma\) is uniformly equivalent to \(\gamma_{S^2}\) on the
fixed tail \(M_{r_1,\infty}\), the tensor \(r^2\gamma^{-1}\) is uniformly
equivalent to \(\gamma_{S^2}^{-1}\).  Hence the matrix \(a^{ij}\) in
\eqref{eq:tmcf-s-form} is uniformly elliptic: there are constants
\(0<\lambda_*\le \Lambda_*<\infty\), depending only on
\[
\alpha,\tau,r_1,\underline{\mathfrak a_0}^{-1},\delta_*^{-1},\vartheta_*^{-1},
\|\mathcal S\|_{\ctS_{-\tau}(\mathcal M_{r_1})},
\]
such that
\[
        \lambda_*|\xi|^2
        \le
        a^{ij}(s,x,v,\partial_xv)\xi_i\xi_j
        \le
        \Lambda_*|\xi|^2
\]
on the bootstrap region, for all \(\xi\in\mathbb R^2\).

We next record the compactness of the lower-order arguments.  By the
\(C^0\) estimate,
\[
        |v|\le \Delta_* .
\]
For the gradient variable, we use \textbf{BS-2}.  Set
\[
        \bar\gamma(t,r):=r^{-2}\gamma(t,r).
\]
On the fixed tail \(M_{r_1,\infty}\), the metrics \(\bar\gamma(t,r)\) are
uniformly equivalent to \(\gamma_{S^2}\), with uniformly controlled
coefficients.  The condition
\[
        r^2|\slashed\nabla_\gamma^2 f|_\gamma<1
\]
is equivalent, up to constants depending only on the fixed tail, to a
uniform bound for $|\slashed\nabla_{\bar\gamma}^2 f|_{\bar\gamma}$
on each sphere \(S_r\).  On each sphere \(S_r\), the function
\(f|_{S_r}\) has a critical point.  Integrating
\(\slashed\nabla_{\bar\gamma}^2 f\) along \(\bar\gamma\)-geodesics from
such a critical point gives a uniform bound for
\(|\slashed d f|_{\bar\gamma}\).  Equivalently, in every fixed coordinate
chart on \(S^2\), \(\partial_x v=\partial_x f\) is uniformly bounded.
Thus the gradient variable remains in a compact set depending only on
\[
\tau,r_1,\underline{\mathfrak a_0}^{-1},\delta_*^{-1},\vartheta_*^{-1},
\|\mathcal S\|_{\ctS_{-\tau}(\mathcal M_{r_1})}.
\]
The coefficient bounds are uniform in the time variable, so the constants
below do not depend on \(T_0\), \(r_*\), \(R\), or \(\varepsilon\).

We now apply the standard local quasilinear Schauder estimate on unit
parabolic cylinders in the \(s\)-variable.  On each interior unit cylinder
and on the corresponding initial half-cylinder at \(s=S_R\), the estimate
for equations of the form \eqref{eq:tmcf-s-form} gives
\begin{equation}\label{eq:local-quasilinear-schauder}
        \|v\|_{\ct(Q_{1/2})}
        \le
        C\Big(
        \|v\|_{\mathc^0(Q_1)}
        +
        \|G_0\|_{\cz(Q_1)}
        \Big).
\end{equation}
The constant depends only on the ellipticity constants, the fixed-tail
coefficient bounds, and the bootstrap bounds for \((v,\partial_xv)\).  It
does not depend on the location of the cylinder, on \(r_*\), \(R\), or
\(\varepsilon\).

Let us indicate why \eqref{eq:local-quasilinear-schauder} has no
additional source term involving \(F\).  One may either quote the standard
quasilinear Schauder estimate in this form, or derive it from the linear
estimate by interpolation.  Indeed, by \eqref{eq:F-zero-structure} and the
uniform coefficient bounds,
\[
        \|F(\cdot,\cdot,v,\partial_xv)\|_{\cz(Q_1)}
        \le
        C\Big(
        \|v\|_{C^{\alpha,\alpha/2}(Q_1)}
        +
        \|\partial_xv\|_{C^{\alpha,\alpha/2}(Q_1)}
        \Big).
\]
The interpolation inequality gives, for every \(\eta>0\),
\[
        \|v\|_{C^{1+\alpha,(1+\alpha)/2}(Q_1)}
        \le
        \eta\|v\|_{\ct(Q_1)}
        +
        C_\eta\|v\|_{\mathc^0(Q_1)}.
\]
Choosing \(\eta\) sufficiently small absorbs the \(\ct\) term into the
left-hand side of the local linear estimate.  This is the standard
absorption step in the local quasilinear Schauder theory; see, for
example, \cite{parabolic1,parabolic2,parabolic3}.

We now make the estimate weighted.  Let \(Q_1=Q_1(s_0,x_0)\) be one of the
unit parabolic cylinders in the logarithmic variable, and set
\[
        r_0:=e^{-s_0}.
\]
On \(Q_1\), the radius \(r=e^{-s}\) is comparable to \(r_0\), with
constants depending only on the fixed cylinder size.  Therefore
multiplying \eqref{eq:local-quasilinear-schauder} by \(r_0^\tau\) gives
\begin{equation}\label{eq:local-weighted-quasilinear-schauder}
        \|v\|_{\ct_{-\tau}(Q_{1/2})}
        \le
        C\Big(
        \|v\|_{\mathc^0_{-\tau}(Q_1)}
        +
        \|G_0\|_{\cz_{-\tau}(Q_1)}
        \Big),
\end{equation}
with the same type of uniform constant.  The interpolation step above is
unchanged, because the weight \(r^\tau\) is comparable to the constant
weight \(r_0^\tau\) on each unit cylinder.

We cover \(M_{S_R,s_*-\varepsilon}\) by such unit cylinders and initial
half-cylinders.  The constants in
\eqref{eq:local-weighted-quasilinear-schauder} are uniform, and the number
of overlaps is uniformly bounded.  The global weighted parabolic Hölder
seminorm is controlled by the local weighted seminorms together with the
weighted \(C^0\) norm: pairs of points with parabolic distance bounded
below are controlled directly by the weighted \(C^0\) estimate, while
pairs with small parabolic distance lie in a uniformly bounded number of
coordinate cylinders.  Hence
\begin{equation}\label{eq:schauder-local-to-global}
        \|v\|_{\ct_{-\tau}(M_{S_R,s_*-\varepsilon})}
        \le
        C\Big(
        \|v\|_{\mathc^0_{-\tau}(M_{r_*,R})}
        +
        \|G_0\|_{\cz_{-\tau}(M_{S_R,s_*-\varepsilon})}
        \Big),
\end{equation}
with \(C\) independent of \(\varepsilon\), \(r_*\), and \(R\).

We estimate the source \(G_0\).  Since
\[
        G_0=-r\Xi(T_0,r,\cdot)
\]
and \(T_0\in I_*\), the definition of the weighted norm gives
\begin{equation}\label{eq:G0-bound}
        \|G_0\|_{\cz_{-\tau}(M_{S_R,s_*-\varepsilon})}
        \le
        C
        \sup_{t\in I_*}
        \|r\Xi(t,\cdot)\|_{\cz_{-\tau}(M_{r_*,R})}.
\end{equation}
The constant here depends only on \(r_1\) and \(\tau\).

It remains to prove the weighted \(C^0\) bound for \(v\).  Since
\(f(M_{r_*,R})\subset I_*\), the maximum-principle argument from
Proposition~\ref{prop:C0-f}, applied from the outer boundary \(R\) down to
a given radius \(r\in[r_*,R]\), gives
\[
        \sup_{S_r}|v|
        \le
        \int_r^R
        \sup_{\substack{t\in I_*\\ p\in S^2}}
        |\Xi(t,\sigma,p)|\,d\sigma .
\]
Using
\[
        |\Xi(t,\sigma,p)|
        \le
        \sigma^{-1-\tau}
        \sup_{t\in I_*}
        \|r\Xi(t,\cdot)\|_{C^0_{-\tau}(M_{r_*,R})},
\]
we obtain
\[
\begin{aligned}
        \sup_{S_r}|v|
        &\le
        \left(\int_r^R \sigma^{-1-\tau}\,d\sigma\right)
        \sup_{t\in I_*}
        \|r\Xi(t,\cdot)\|_{C^0_{-\tau}(M_{r_*,R})}
        \\
        &\le
        \frac{1}{\tau}\,r^{-\tau}
        \sup_{t\in I_*}
        \|r\Xi(t,\cdot)\|_{C^0_{-\tau}(M_{r_*,R})}.
\end{aligned}
\]
Therefore
\begin{equation}\label{eq:weighted-C0-v}
        \|v\|_{\mathc^0_{-\tau}(M_{r_*,R})}
        \le
        C
        \sup_{t\in I_*}
        \|r\Xi(t,\cdot)\|_{\cz_{-\tau}(M_{r_*,R})}.
\end{equation}

Combining \eqref{eq:schauder-local-to-global},
\eqref{eq:G0-bound}, and \eqref{eq:weighted-C0-v}, we obtain
\[
        \|v\|_{\ct_{-\tau}(M_{S_R,s_*-\varepsilon})}
        \le
        C
        \sup_{t\in I_*}
        \|r\Xi(t,\cdot)\|_{\cz_{-\tau}(M_{r_*,R})}.
\]
The constant is independent of \(\varepsilon\).  Letting
\(\varepsilon\downarrow0\), we obtain the same estimate on
\(M_{S_R,s_*}\), equivalently on \(M_{r_*,R}\).  This proves
\eqref{eq:schauder-truncated}.
\end{proof}

\begin{remark}\label{rem:mu-needs-fr}
The parabolicity of the TMCF equation is controlled by tangential data:
a lower bound for \(\mu_f^T\) controls the angular spacelikeness, and a
lower bound for \(r\mathfrak a_f\) controls the ellipticity of the
equation.  This does not imply spacelikeness of the full graph.  The
full spacelikeness factor is
\[
        \mu_f
        =
        \mu_f^T-\frac{N^2}{\lambda^2}f_r^2 .
\]
Thus even after controlling the angular slope and the parabolicity
coefficient, one still needs a separate estimate for the dimensionless
radial slope $\frac{N}{\lambda}|f_r|$ in order to conclude that \(\mu_f>0\).
\end{remark}

We now write
\[
\slashed\nabla:=\slashed\nabla_{\gamma},
\qquad
U:=\slashed\nabla_{\gamma}^2 f.
\]
All norms $|\cdot|$ below are taken with respect to $\gamma$. We also abbreviate
\begin{equation}\label{eq:X2-global}
\mathfrak X_2(r)
:=
\mathfrak X_2(r;I_*)
=
r^2\sup_{\substack{t\in I_*\\ \omega\in S^2}}
|\slashed\nabla_{\gamma}^2\Xi(t,r,\omega)|_{\gamma(t,r)},
\qquad r\in[r_*,R].
\end{equation}

Finally, for any coefficient $F(t,r,p)$ evaluated on the graph $t=f(r,p)$, we denote by
\[
\mathcal D_\rho F
:=
\partial_{\rho}\big(F(f(r,p),r,p)\big)
=
(-\partial_r+f_\rho\partial_t)F
\]
its $\rho$-derivative along the graph.

\begin{lem}[Basic coefficient estimates for the Hessian analysis]
\label{lem:hessian-coeff}
 There exists a constant
\[
C=
C\!\left(
\tau,r_1,\underline{\mathfrak a_0}^{-1},\delta_*^{-1},\vartheta_*^{-1},
\|\mathcal S\|_{\ctS_{-\tau}(\mathcal M_{r_1})}
\right)
\]
such that the following estimates hold on $M_{r_*,R}$:
\begin{align}
|\slashed\nabla \mu^T|
&\le
C\big(r^{-1-\tau}+|U|\big),
\label{eq:mu1-final}
\\
|\slashed\nabla^2 \mu^T|
&\le
C\big(r^{-2-\tau}+|\slashed\nabla U|+|U|^2\big),
\label{eq:mu2-final}
\\
|\slashed\nabla \mathfrak a|
&\le
C\big(r^{-2-\tau}+r^{-1}|U|\big),
\label{eq:a1-final}
\\
|\slashed\nabla^2 \mathfrak a|
&\le
C\big(r^{-3-\tau}+r^{-1}|\slashed\nabla U|+r^{-1}|U|^2\big),
\label{eq:a2-final}
\\
|\slashed\nabla \mathcal A|
&\le
C\big(r^{-\tau}+r|U|\big),
\label{eq:A1-final}
\\
|\slashed\nabla^2 \mathcal A|
&\le
C\big(r^{-1-\tau}+r|\slashed\nabla U|+r|U|^2\big).
\label{eq:A2-final}
\end{align}
Moreover,
\begin{align}
|\slashed\nabla^2 \mathcal B|
&\le
\frac{C}{r^2}\,\mathfrak X_2(r)
+
C r^{-\tau}|\slashed\nabla U|
+
C r^{-1-\tau}|U|
+
C r^{1-\tau}|U|^2
\notag\\
&\quad
+
C r^{-2-\tau}\big(|\slashed d f|_\gamma+|\slashed d f|_\gamma^2\big).
\label{eq:B2-final-corrected}
\end{align}
Consequently, using proposition~\ref{prop:grad-f},
\begin{align}
|\slashed\nabla^2 \mathcal B|
&\le
\frac{C}{r^2}\,\mathfrak X_2(r)
+
C r^{-\tau}|\slashed\nabla U|
+
C r^{-1-\tau}|U|
+
C r^{1-\tau}|U|^2
\notag\\
&\quad
+
C r^{-3-\tau} \int_r^R \mathfrak X_1(\sigma;I_*)\,d\sigma
+
C r^{-4-\tau} \left( \int_r^R \mathfrak{X}_1(\sigma;I_*)\,d\sigma\right)^2.
\label{eq:B2-final-P1}
\end{align}
\end{lem}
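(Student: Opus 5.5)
The plan is to differentiate the explicit good-gauge formulas of Lemma~\ref{lem:beta-r-gauge-identities} and Proposition~\ref{prop:gammab-quasilinear} directly. The decay of the background coefficients on the fixed tail, from Definition~\ref{def:ctS-nonstat}, supplies the powers of $r$. Throughout I work under \textbf{BS-1}, so that $\mu^T>1/4$, $|\slashed d f|_\gamma<\vartheta_*/2$, and $r\mathfrak a\in[\underline{\mathfrak a_0}/2,C]$ by Lemma~\ref{lem:coeff-structure}. Every coefficient is a smooth function of $(t,r,p,\omega)$ evaluated at $t=f$, $\omega=\slashed d f$. A tangential derivative of such a pulled-back quantity produces three kinds of terms. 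There is a background tangential derivative, costing an extra $r^{-1}$. There is a term $\partial_t(\cdot)\,\slashed d f$, which costs $r^{-1-\tau}$ relative size and is bounded since $|\slashed d f|$ is bounded. And there is a term $\partial_\omega(\cdot)\cdot U$. The second derivatives are handled the same way, with $|\slashed\nabla U|$ and $|U|^2$ appearing from the chain rule.

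\textbf{Estimates for $\mu^T$, $\mathfrak a$, and $\mathcal A$.} Start with $\mu^T=(1+\langle\beta^T,\slashed d f\rangle)^2-N^2q$. Its derivative is $2(1+\cdot)(\langle\slashed\nabla\beta^T,\slashed d f\rangle+\langle\beta^T,U\rangle)-\slashed\nabla N^2\,q-2N^2U(\slashed d f,\cdot)$. Here $|\slashed\nabla\beta^T|,|\slashed\nabla N^2|=\mathcos(r^{-1-\tau})$ and $|U|$ carries a bounded coefficient, which gives \eqref{eq:mu1-final}. Differentiating once more gives \eqref{eq:mu2-final}.

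For $\mathfrak a=\frac1{2\lambda^2}\tr_{\gamma_f}Q_r(\slashed d f)$, note that $\partial_r\gamma=\frac2r\gamma+\mathcos(r^{-1-\tau})$ and $\slashed\nabla\gamma=0$. The leading part $\frac1{\lambda^2 r}\tr_{\gamma_f}\gamma$ therefore has derivative controlled by $r^{-1}(|\slashed\nabla\gamma_f^{-1}|_{\text{rescaled}}+|\slashed\nabla\lambda|)$. Since $\gamma_f^{-1}$ depends on $\slashed d f$ through \eqref{eq:gammaf-inverse}, this is $r^{-1}(r^{-1-\tau}+|U|)$. The remaining parts carry an extra $r^{-\tau}$. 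This yields \eqref{eq:a1-final} and \eqref{eq:a2-final}. Then $\mathcal A=\mathfrak a^{-1}\gamma_f^{-1}$, together with $\mathfrak a^{-1}=\mathcos(r)$ and $\slashed\nabla(\mathfrak a^{-1})=-\mathfrak a^{-2}\slashed\nabla\mathfrak a=\mathcos(r^{-\tau}+r|U|)$, gives \eqref{eq:A1-final} and \eqref{eq:A2-final}. In the second derivative, the product term $\mathfrak a^{-3}|\slashed\nabla\mathfrak a|^2$ produces $r^{3}(r^{-2-\tau}+r^{-1}|U|)^2$. Its cross and square parts are absorbed into $r^{-1-\tau}+r|U|^2$, after using $|U|r^{-\tau}\le r^{-1-\tau}+r^{1-\tau}|U|^2$. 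I expect this bookkeeping to go through since $\tau<1$.

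\textbf{Estimate for $\mathcal B$.} Split $\mathcal B=\Xi|_{t=f}+\mathcal B_2$. The pulled-back forcing satisfies
\[
\slashed\nabla^2(\Xi|_{t=f})=(\slashed\nabla^2\Xi)|_{t=f}+2\,\partial_t\slashed d\Xi\odot\slashed d f+\partial_t^2\Xi\,(\slashed d f)^2+\partial_t\Xi\,U.
\]
The first term is at most $r^{-2}\mathfrak X_2(r)$ because $f\in I_*$ by Proposition~\ref{prop:C0-f}. The others are bounded by $r^{-2-\tau}(|\slashed d f|+|\slashed d f|^2)+r^{-1-\tau}|U|$, using the $\partial_t$-decay in \eqref{eq:ctS-nonstat-decay}, which gives two time derivatives with gains $r^{-\tau-j}$. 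For $\mathcal B_2=\mathfrak B_2(r,p,t,\omega)$ with $\mathfrak B_2(\cdot,0)=0$, I extend \eqref{eq:B2-structure-proof-new} to second order: $|\partial_\omega\mathfrak B_2|\lesssim r^{-\tau}$, $|\partial_\omega^2\mathfrak B_2|\lesssim r^{1-\tau}$ (with $\omega$ measured in $\gamma$, the $r$-scaling comes from the $\mathfrak a^{-1}=\mathcos(r)$ prefactor), $|\partial_x\partial_\omega\mathfrak B_2|\lesssim r^{-1-\tau}$, and $|\partial_x^2\mathfrak B_2|\lesssim r^{-2-\tau}|\omega|$. The chain rule then gives the terms $r^{-\tau}|\slashed\nabla U|$, $r^{1-\tau}|U|^2$, $r^{-1-\tau}|U|$, and $r^{-2-\tau}|\slashed d f|$, which is \eqref{eq:B2-final-corrected}. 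Finally, \eqref{eq:B2-final-P1} follows by inserting $|\slashed d f|\le \frac Cr\int_r^R\mathfrak X_1$ from Proposition~\ref{prop:grad-f}.

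\textbf{Main obstacle.} The delicate step is verifying the second-order structure bounds for $\mathfrak B_2$ from the lengthy formula \eqref{eq:B-explicit-gauge}. Each term there contains a factor $\beta^T$, $\slashed\nabla\beta^T$, $\slashed dN^2$, or $Q_t$, each of which decays at least like $r^{-\tau}$ in $\gamma$-norm. It also carries the prefactor $1/(N^2\mathfrak a)=\mathcos(r)$. So I must check term by term that every background tangential derivative really gains $r^{-1}$, and that no term of the form $r\cdot\slashed d N^2$ loses decay. I would organize this by writing every term as $\mathfrak a^{-1}$ times a polynomial in $\omega$ divided by a power of $\mu^T$, with coefficients in the weighted class $\mathcos(r^{-1-\tau-k})$ after $k$ tangential derivatives.
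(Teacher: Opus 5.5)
Your proposal is correct and follows essentially the same route as the paper's proof. You differentiate the explicit good-gauge formulas for $\mu^T$, $\mathfrak a$, $\gamma_f^{-1}$ and $\mathcal A=\mathfrak a^{-1}\gamma_f^{-1}$ using the fixed-tail decay, split $\mathcal B=\Xi|_{t=f}+\mathcal B_2$, control $\mathcal B_2$ through the same second-order structure bounds on $\mathfrak B_2(r,p,t,\omega)$ with $\mathfrak B_2(r,p,t,0)\equiv0$, and finish with Proposition~\ref{prop:grad-f}. Your explicit absorption of the $r^{-\tau}|U|$ cross term in $\slashed\nabla^2\mathcal A$ fills in a step the paper leaves implicit.

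One caveat, shared with the paper: several of these bounds need more regularity than \eqref{eq:ctS-nonstat-decay} alone provides, contrary to what you state. For example, $\partial_t^2\Xi=\mathcos(r^{-2-\tau})$ needs a third time derivative of $\gamma$, because $\Xi$ already contains $\partial_t\gamma$. Likewise $|\slashed\nabla^2\mathfrak a|\le C(r^{-3-\tau}+\dots)$ involves $\slashed\nabla^2\partial_r\gamma$, a mixed radial--tangential derivative, and $\slashed\nabla^2\mathcal B_2$ involves third tangential derivatives of $N$ and $\beta^T$. All of this is supplied by the strengthened $\sharp$-regularity available after the gauge reduction of Appendix~\ref{app:late-slab-gauge}.
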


\begin{proof}
Throughout the proof, $C$ denotes a positive constant depending only on the fixed tail data, and
may change from line to line.

The estimates for $\mu^T$ follow directly from
\[
\mu^T
=
\big(1+\langle \beta^T,\slashed d f\rangle_\gamma\big)^2
-
N^2|\slashed d f|_\gamma^2.
\]
Indeed, differentiating once gives terms involving tangential derivatives of the background
coefficients and one factor of $U$, and differentiating twice gives terms involving two tangential
derivatives of the background coefficients, one factor of $\slashed\nabla U$, and quadratic terms
in $U$. Using the fixed-tail decay of the coefficients gives
\eqref{eq:mu1-final}--\eqref{eq:mu2-final}.

For $\mathfrak a$, recall that
\[
\mathfrak a
=
\frac{1}{2\lambda^2}\,\tr_{\gamma_f}Q_r(\slashed d f).
\]
The tensor $Q_r(\slashed d f)$ is affine-quadratic in $\slashed d f$, with coefficients
$\partial_r\gamma$, $\partial_r\beta^T$, and $\partial_r\alpha$. Its first tangential derivative
therefore contains background terms of size $r^{-2-\tau}$ and terms of size $r^{-1}|U|$; its
second tangential derivative contains terms of size
\[
r^{-3-\tau},
\qquad
r^{-1}|\slashed\nabla U|,
\qquad
r^{-1}|U|^2.
\]
The same schematic bounds hold for the tangential derivatives of $\gamma_f^{-1}$, using
\eqref{eq:gammaf-inverse} and $\mu^T>1/4$. This proves
\eqref{eq:a1-final}--\eqref{eq:a2-final}. Differentiating
\[
\mathcal A=\mathfrak a^{-1}\gamma_f^{-1}
\]
then gives \eqref{eq:A1-final}--\eqref{eq:A2-final}, since $\mathfrak a^{-1}=\mathcos(r)$.

It remains to estimate $\slashed\nabla^2\mathcal B$. We split
\[
\mathcal B=\Xi|_{t=f}+\mathcal B_2.
\]
For the forcing term, the chain rule gives
\[
\slashed\nabla^2(\Xi|_{t=f})
=
(\slashed\nabla_\gamma^2\Xi)|_{t=f}
+
(\partial_t\Xi)|_{t=f}\,U
+
2(\slashed d_\gamma\partial_t\Xi)|_{t=f}\odot\slashed d f
+
(\partial_t^2\Xi)|_{t=f}(\slashed d f)^2.
\]
Therefore,
\[
|\slashed\nabla^2(\Xi|_{t=f})|
\le
\frac{C}{r^2}\mathfrak X_2(r)
+
C r^{-1-\tau}|U|
+
C r^{-2-\tau}\big(|\slashed d f|_\gamma+|\slashed d f|_\gamma^2\big).
\]

For $\mathcal B_2$, write
\[
\mathcal B_2
=
\mathfrak B_2(r,p,t,\omega)\Big|_{t=f,\omega=\slashed d f}.
\]
Since $\mathfrak B_2(r,p,t,0)\equiv0$, its pure $(r,p,t)$-derivatives vanish at $\omega=0$.
Differentiating the explicit expression \eqref{eq:B-explicit-gauge}, one obtains the schematic
bounds
\[
|\partial_\omega\mathfrak B_2|\le C r^{-\tau},
\qquad
|\partial_{\omega\omega}^2\mathfrak B_2|\le C r^{1-\tau},
\]
\[
|\partial_x\partial_\omega\mathfrak B_2|+|\partial_t\partial_\omega\mathfrak B_2|
\le C r^{-1-\tau},
\]
and
\[
|\partial_x^2\mathfrak B_2|+
|\partial_t\partial_x\mathfrak B_2|
+
|\partial_t^2\mathfrak B_2|
\le
C r^{-2-\tau}|\omega|.
\]
Applying the chain rule twice gives
\[
|\slashed\nabla^2\mathcal B_2|
\le
C r^{-\tau}|\slashed\nabla U|
+
C r^{-1-\tau}|U|
+
C r^{1-\tau}|U|^2
+
C r^{-2-\tau}\big(|\slashed d f|_\gamma+|\slashed d f|_\gamma^2\big).
\]
Combining the estimates for $\Xi|_{t=f}$ and $\mathcal B_2$ proves
\eqref{eq:B2-final-corrected}.
\end{proof}

\begin{lem}[Maximum-point inequality for $\mathfrak q=|U|^2$]
\label{lem:hessian-max-ode}
Assume that $f$ satisfies \textbf{BS-2}. Let
\[
\mathfrak q:=|U|^2,
\qquad
M(\rho):=\sup_{S_{R-\rho}}\mathfrak q, \qquad \mathfrak P_1(r) 
:=
\int_r^R \mathfrak X_1(\sigma;I_*)\,d\sigma, 
\qquad
r:=R-\rho.
\]
Then the following differential inequality holds in the upper Dini sense:
\begin{align}
D^+M(\rho)
&\le
\left(
\frac{4}{r}
+
C r^{-1-\tau}
+
C r^{-2-\tau}\mathfrak P_1(r)
+
C r^{-3}\mathfrak P_1(r)^2
\right)M(\rho)
\notag\\
&\quad
+
C r\,M(\rho)^2
+
C r^{1-\tau}M(\rho)^{3/2}
+
\frac{C}{r^2}\mathfrak X_2(r)\sqrt{M(\rho)}
\notag\\
&\quad
+
C r^{-3-\tau}\mathfrak P_1(r)\sqrt{M(\rho)}
+
C r^{-4-\tau}\mathfrak P_1(r)^2\sqrt{M(\rho)}.
\label{eq:M-ODE-final-corrected}
\end{align}
\end{lem}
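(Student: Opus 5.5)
The plan is to follow the strategy of Proposition~\ref{prop:grad-f}, one derivative higher. First I will derive an exact evolution identity for $\mathfrak q=|U|^2$ along the operator $\partial_\rho-\mathcal A:\slashed\nabla^2$. Differentiating \eqref{eq:tmcf-rho-form}, $f_\rho=\mathcal A:\slashed\nabla^2 f-\mathcal B$, twice tangentially and pairing with $U$ gives $\langle U,\slashed\nabla^2 f_\rho\rangle$. From this I will subtract the contribution of $\mathcal D_\rho\gamma$, which appears because $U$ and the norm are computed with the moving metric $\gamma|_{t=f}$. This produces terms of the form $\mathcal D_\rho\gamma * U* U$ and $\slashed\nabla(\mathcal D_\rho\gamma)*\slashed d f* U$, the latter coming from the $\rho$-derivative of the Christoffel symbols. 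I will then commute $\slashed\nabla^2$ with $\mathcal A:\slashed\nabla^2$ using a Bochner/Ricci identity on $(S_r,\gamma)$, where curvature enters only through $K_\gamma=r^{-2}+\mathcos(r^{-2-\tau})$. The result is
\[
\tfrac12(\mathfrak q_\rho-\mathcal A:\slashed\nabla^2\mathfrak q)=-\mathcal A(\slashed\nabla U,\slashed\nabla U)+\text{curvature terms}+2\langle U,\slashed\nabla\mathcal A*\slashed\nabla U\rangle+\langle U,\slashed\nabla^2\mathcal A*U\rangle-\langle U,\slashed\nabla^2\mathcal B\rangle+\mathcal D_\rho\gamma\text{-terms}.
\]

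At a spatial maximum of $\mathfrak q$ on $S_{R-\rho}$, the term $\mathcal A:\slashed\nabla^2\mathfrak q\le0$ drops out. By \eqref{eq:A-elliptic-basic}, the good term satisfies $-\mathcal A(\slashed\nabla U,\slashed\nabla U)\le -c\,r|\slashed\nabla U|^2$.

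I then estimate the remaining terms with Lemma~\ref{lem:hessian-coeff}.
\begin{itemize}
\item By \eqref{eq:A1-final}, the cross term is bounded by $C(r^{-\tau}+r|U|)|U||\slashed\nabla U|$. Young's inequality against $c\,r|\slashed\nabla U|^2$ leaves $Cr^{-1-2\tau}\mathfrak q+Cr\,\mathfrak q^2$.
\item By \eqref{eq:A2-final}, the term $|U|^2|\slashed\nabla^2\mathcal A|$ contributes $Cr^{-1-\tau}\mathfrak q+Cr\,\mathfrak q^2$, plus an $r|U|^2|\slashed\nabla U|$ piece that is again absorbed.
\item By \eqref{eq:B2-final-P1}, the forcing term $|U||\slashed\nabla^2\mathcal B|$ gives the terms $\frac{C}{r^2}\mathfrak X_2\sqrt M$, $Cr^{1-\tau}M^{3/2}$, $Cr^{-1-\tau}M$, and the $\mathfrak P_1$, $\mathfrak P_1^2$ terms times $\sqrt M$. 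Its $r^{-\tau}|\slashed\nabla U||U|$ piece is absorbed.
\item The transport term uses $\partial_r\gamma=\frac2r\gamma+\mathcos(r^{-1-\tau})$. Since $U$ has two covariant slots, this yields the leading coefficient $\frac4r$, compared with $\frac2r$ for $q$. The correction $f_\rho\partial_t\gamma$ is handled by inserting $f_\rho=\mathcal A:U-\mathcal B$, which is $\mathcos(r|U|+r^{-\tau})$, giving contributions of order $r^{-\tau}M^{3/2}$ and $r^{-1-\tau}M$.
\item The Christoffel-variation term involves $|\slashed d f|$. Proposition~\ref{prop:grad-f} bounds it by $r^{-1}\mathfrak P_1$, which produces the $r^{-2-\tau}\mathfrak P_1 M$ and $r^{-3}\mathfrak P_1^2M$ coefficients.
\item The curvature terms, $-2K_\gamma\,\mathcal A$-type contractions, are favourable at leading order up to $\mathcos(r^{-1-\tau})\mathfrak q$. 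Any unfavourable sign part also carries the factor $|\slashed d f|^2$ and is absorbed into the $\mathfrak P_1$ terms.
\end{itemize}
After collecting, the Dini derivative of $M$ is bounded as in \eqref{eq:M-ODE-final-corrected}, using the standard Hamilton trick of differentiating a maximum of a $C^1$ family.

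The main obstacle will be the exact Bochner-type commutation for the tensor $U$ with the anisotropic, solution-dependent operator $\mathcal A:\slashed\nabla^2$ and the time-dependent metric $\gamma|_{t=f}$. In particular, I need to check that the sharp coefficient $4/r$ is really what survives and is not spoiled by the $\mathcal D_\rho\gamma$ and $K_\gamma$ terms. Unlike the gradient case, I cannot diagonalise at the maximum. My first attempt will therefore be to work in a $\gamma$-orthonormal frame diagonalising $U$ at the point, and to track only the leading $\frac2r\gamma$ part of $\partial_r\gamma$ explicitly, putting everything else into $\mathcos$ errors.
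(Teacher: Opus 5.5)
Your proposal is correct and follows essentially the same route as the paper. You differentiate the equation twice tangentially (including the $\mathcal D_\rho\Gamma * \slashed d f$ term) and split off the leading constant-curvature commutator coming from $\mathcal A\approx\frac r2\gamma^{-1}$ and $K_\gamma\approx r^{-2}$. Its pairing with $U$ is nonpositive, $2\langle U,\mathcal C\rangle=-\frac4r\mathfrak q+\frac2r(\tr_\gamma U)^2\le0$, so the $\frac4r$ coming from the transport of $\gamma^{-1}$ survives, and all $|\slashed\nabla U|$-terms are absorbed into $-\mathcal A(\slashed\nabla U,\slashed\nabla U)$ via Young using Lemma~\ref{lem:hessian-coeff} and Proposition~\ref{prop:grad-f}. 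Your explicit handling of the $f_\rho\,\partial_t\gamma$ transport correction is a bit more careful than the paper's, which absorbs it silently into $\mathcos(r^{-1-\tau})$, but it lands in the same $r^{1-\tau}M^{3/2}$ and $r^{-1-\tau}M$ terms.
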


\begin{proof}
Fix $\rho\in[0,R-r_*)$, let $r=R-\rho$, and choose a point $p_\rho\in S_r$ such that
\[
\mathfrak q(\rho,p_\rho)=M(\rho)=\sup_{S_r}\mathfrak q.
\]
At $p_\rho$, choose $\gamma$-normal coordinates. Then
\[
\gamma_{AB}=\delta_{AB},
\qquad
\Gamma_{AB}^C(\gamma)=0
\qquad\text{at }p_\rho,
\]
and
\[
\slashed\nabla_A\mathfrak q=0,
\qquad
\mathcal A:\slashed\nabla_\gamma^2\mathfrak q\le0
\qquad\text{at }p_\rho.
\]

We now differentiate the equation
\[
f_\rho-\mathcal A^{ab}U_{ab}=-\mathcal B
\]
twice tangentially. Since $U_{ij}=\slashed\nabla_i\slashed\nabla_jf$, differentiating in $\rho$
also differentiates the connection of $\gamma$. Thus
\[
(U_{ij})_\rho
=
\slashed\nabla_i\slashed\nabla_j f_\rho
-
(\mathcal D_\rho\Gamma)^k_{ij}\,\slashed\nabla_k f.
\]
Substituting $f_\rho=\mathcal A^{ab}U_{ab}-\mathcal B$ gives
\begin{equation}\label{eq:Ueq-before-curvature-split}
(U_{ij})_\rho-\mathcal A^{ab}U_{ij;ab}
=
-\mathcal B_{;ij}
+
\mathcal C_{ij}
+
\mathcal T_{ij}.
\end{equation}
Here $\mathcal C$ is the leading constant-curvature commutator term, and $\mathcal T$ contains the
remaining perturbative terms. We now describe these two pieces.

On the fixed tail,
\[
K_\gamma=\frac1{r^2}+\mathcos(r^{-2-\tau}),
\]
and
\[
\mathcal A=\frac r2\,\gamma^{-1}+\mathcal E,
\]
where
\[
|\mathcal E|_\gamma
\le
C r^{1-\tau}
+
C r^{1-\tau}|\slashed d f|_\gamma
+
C r|\slashed d f|_\gamma^2.
\]
Using Proposition~\ref{prop:grad-f}, this gives
\[
|\mathcal E|_\gamma
\le
C r^{1-\tau}
+
C r^{-\tau}\mathfrak P_1(r)
+
C r^{-1}\mathfrak P_1(r)^2.
\]

The standard commutator identity for the Hessian of a scalar on a two-dimensional surface is
\[
\slashed\nabla_i\slashed\nabla_j\slashed\Delta f
-
\slashed\Delta U_{ij}
=
-4K_\gamma U_{ij}
+
2K_\gamma\gamma_{ij}\tr_\gamma U
+
(\slashed\nabla K_\gamma)*\slashed d f.
\]
Keeping only the leading piece $K_\gamma=r^{-2}$ and
$\mathcal A=(r/2)\gamma^{-1}$ gives
\[
\mathcal C_{ij}
=
-\frac{2}{r}U_{ij}
+
\frac1r\gamma_{ij}\tr_\gamma U.
\]
All other curvature-commutator contributions are placed in $\mathcal T$. From the estimates above,
these perturbative curvature terms are bounded by
\[
C r^{-1-\tau}|U|
+
C r^{-2-\tau}\mathfrak P_1(r)|U|
+
C r^{-3}\mathfrak P_1(r)^2|U|
+
C r^{-3-\tau}\mathfrak P_1(r).
\]

The remaining terms in $\mathcal T$ come from derivatives of $\mathcal A$ and from
$\mathcal D_\rho\Gamma$. Using Lemma~\ref{lem:hessian-coeff}, Proposition~\ref{prop:grad-f}, and the
equation for $f_\rho$, they satisfy
\begin{align}
|\mathcal T|
&\le
C\big(r^{-\tau}+r|U|\big)|\slashed\nabla U|
+
C\big(r^{-1-\tau}+r|U|^2\big)|U|
\notag\\
&\quad
+
C r^{-2-\tau}\mathfrak P_1(r)|U|
+
C r^{-3}\mathfrak P_1(r)^2|U|
+
C r^{-3-\tau}\mathfrak P_1(r)
+
C r^{-4-\tau}\mathfrak P_1(r)^2.
\label{eq:T-final-corrected-no-Cr}
\end{align}
Thus \eqref{eq:Ueq-before-curvature-split} becomes
\begin{equation}\label{eq:Ueq-final-corrected}
(U_{ij})_\rho-\mathcal A^{ab}U_{ij;ab}
=
-\mathcal B_{;ij}
+
\left(
-\frac{2}{r}U_{ij}
+
\frac1r\gamma_{ij}\tr_\gamma U
\right)
+
\mathcal T_{ij}.
\end{equation}

We next compute the equation for $\mathfrak q=|U|^2$. Since
\[
(\gamma^{-1})_\rho
=
\frac2r\gamma^{-1}
+
\mathcos(r^{-1-\tau})\gamma^{-1},
\]
we have
\[
\mathfrak q_\rho
=
\left(\frac{4}{r}+\mathcos(r^{-1-\tau})\right)\mathfrak q
+
2\langle U,U_\rho\rangle.
\]
At the normal-coordinate point,
\[
\mathcal A:\slashed\nabla_\gamma^2\mathfrak q
=
2\mathcal A^{ab}\langle U_{;a},U_{;b}\rangle
+
2\langle U,\mathcal A^{ab}U_{;ab}\rangle.
\]
Combining this identity with \eqref{eq:Ueq-final-corrected}, we get
\begin{align}
\frac12\big(\mathfrak q_\rho-\mathcal A:\slashed\nabla_\gamma^2\mathfrak q\big)
&=
\left(\frac{2}{r}+\mathcos(r^{-1-\tau})\right)\mathfrak q
-
\mathcal A^{ab}\langle U_{;a},U_{;b}\rangle
\notag\\
&\quad
-\langle U,\slashed\nabla^2\mathcal B\rangle
+
\left\langle U,
-\frac{2}{r}U+\frac1r\gamma\,\tr_\gamma U
\right\rangle
+
\langle U,\mathcal T\rangle.
\label{eq:Qmain-final-corrected}
\end{align}
The leading curvature contribution is therefore
\[
\left(\frac{2}{r}\right)|U|^2
+
\left\langle U,
-\frac{2}{r}U+\frac1r\gamma\,\tr_\gamma U
\right\rangle
=
\frac1r(\tr_\gamma U)^2.
\]
Since $(\tr_\gamma U)^2\le 2|U|^2$ in dimension two, this is bounded by
\[
\frac{2}{r}\mathfrak q.
\]
Thus the leading curvature term does not produce any additional $C/r$ error beyond the existing
good term $\frac{4}{r}M$ in the final inequality.

The ellipticity bound \eqref{eq:A-elliptic-basic} gives
\[
\mathcal A^{ab}\langle U_{;a},U_{;b}\rangle
\ge
\frac{r}{C}|\slashed\nabla U|^2.
\]
Using \eqref{eq:B2-final-P1} and \eqref{eq:T-final-corrected-no-Cr} in
\eqref{eq:Qmain-final-corrected}, and then applying Young's inequality to the terms involving
$|\slashed\nabla U|$, gives
\begin{align*}
\frac12\big(\mathfrak q_\rho-\mathcal A:\slashed\nabla_\gamma^2\mathfrak q\big)
&\le
\left(
\frac{2}{r}
+
C r^{-1-\tau}
+
C r^{-2-\tau}\mathfrak P_1(r)
+
C r^{-3}\mathfrak P_1(r)^2
\right)\mathfrak q
\\
&\quad
+
C r\,\mathfrak q^2
+
C r^{1-\tau}\mathfrak q^{3/2}
+
\frac{C}{r^2}\mathfrak X_2(r)\mathfrak q^{1/2}
\\
&\quad
+
C r^{-3-\tau}\mathfrak P_1(r)\mathfrak q^{1/2}
+
C r^{-4-\tau}\mathfrak P_1(r)^2\mathfrak q^{1/2}.
\end{align*}
Since $\mathcal A:\slashed\nabla_\gamma^2\mathfrak q\le0$ at the maximum point, this implies
\eqref{eq:M-ODE-final-corrected}.
\end{proof}

\begin{prop}[Tail estimate for the tangential Hessian]
\label{prop:hessian-tail-final}
Assume that $f$ satisfies \textbf{BS-2}. Then there exists a constant
\[
C
=
C\!\left(
\tau,r_1,\underline{\mathfrak a_0}^{-1},\delta_*^{-1},\vartheta_*^{-1},
\|\mathcal S\|_{\ctS_{-\tau}(\mathcal M_{r_1})}
\right)
\]
such that for every $(r,p)\in M_{r_*,R}$,
\begin{equation}\label{eq:hessian-tail-final}
|\slashed\nabla^2 f|_{\gamma}(r,p)
\le
\frac{C}{r^2}
\int_r^R
\mathfrak X_2(\sigma;I_*)\,d\sigma.
\end{equation}
\end{prop}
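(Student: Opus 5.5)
We will integrate the Dini differential inequality of Lemma~\ref{lem:hessian-max-ode} from the outer sphere, exactly as in the gradient estimate of Proposition~\ref{prop:grad-f}. Since $f\equiv T_0$ on $S_R$, we have $U=0$ there, so $M(0)=0$. Set $Y(\rho):=\sqrt{M(\rho)}$. Dividing \eqref{eq:M-ODE-final-corrected} by $2Y$ gives, in the Dini sense, with $r=R-\rho$,
\[
D^+Y\le\Big(\frac2r+Cr^{-1-\tau}+Cr^{-2-\tau}\mathfrak P_1+Cr^{-3}\mathfrak P_1^2\Big)Y+CrY^3+Cr^{1-\tau}Y^2+\frac{C}{r^2}\mathfrak X_2(r)+Cr^{-3-\tau}\mathfrak P_1+Cr^{-4-\tau}\mathfrak P_1^2 .
\]
(Where $M=0$, the inequality for $Y$ follows directly, since then the right side of the $M$-inequality is zero, so no division by zero occurs.)

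The next step is to linearize the nonlinear terms using \textbf{BS-2}. Since $r^2|U|<1$, we have $rY^3=(rY^2)\,Y\le r^{-3}Y$ and $r^{1-\tau}Y^2\le r^{-1-\tau}Y$. The first bound is $O(r^{-3})Y$, which is integrable in $r$, so both terms can be absorbed into the integrable part $Cr^{-1-\tau}Y$ of the linear coefficient (recall $r\ge r_1$ and $\tau<1$). Likewise, $\mathfrak P_1(r)\le\Phi_{1}:=\int_{r_1}^\infty\mathfrak X_1<\infty$ by \eqref{eq:finite-forcing-tail}. Hence $r^{-2-\tau}\mathfrak P_1+r^{-3}\mathfrak P_1^2=\mathcal O(r^{-1-\tau})$, with a constant depending on the fixed tail data. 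We then use the integrating factor $I(\rho)=r^2\exp\!\big(-C_*\int_r^R\sigma^{-1-\tau}d\sigma\big)$, for which $D^+I/I=-2/r-C_*r^{-1-\tau}$. Integrating from $0$ to $\rho$ gives
\[
r^2Y(\rho)\le C\int_r^R\Big(\mathfrak X_2(\sigma)+\sigma^{-1-\tau}\mathfrak P_1(\sigma)+\sigma^{-2-\tau}\mathfrak P_1(\sigma)^2\Big)d\sigma .
\]

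The main obstacle is the forcing coming from $\mathfrak P_1$: the target bound \eqref{eq:hessian-tail-final} involves only $\mathfrak X_2$. I would handle it by bounding $\mathfrak X_1$ by $\mathfrak X_2$. On each sphere $S_\sigma$, the function $\Xi(t,\sigma,\cdot)$ has a critical point. Integrating $\slashed\nabla^2\Xi$ along geodesics of length at most $C\sigma$ gives $|\slashed d\Xi|\le C\sigma\sup|\slashed\nabla^2\Xi|$, that is, $\mathfrak X_1(\sigma)\le C\mathfrak X_2(\sigma)$ pointwise in $\sigma$. This is the same argument used in the proof of Proposition~\ref{prop:schauder-truncated}, and it holds uniformly on the fixed tail because $r^{-2}\gamma$ is uniformly equivalent to $\gamma_{S^2}$. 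Consequently $\mathfrak P_1(\sigma)\le C\mathfrak P_2(\sigma):=C\int_\sigma^R\mathfrak X_2$. Then, by Fubini,
\[
\int_r^R\sigma^{-1-\tau}\mathfrak P_2(\sigma)\,d\sigma=\int_r^R\mathfrak X_2(s)\int_r^s\sigma^{-1-\tau}d\sigma\,ds\le\frac{r_1^{-\tau}}{\tau}\mathfrak P_2(r).
\]
Similarly, using $\mathfrak P_2\le\Phi_{2,r_1}<\infty$ to linearize the square, the term $\int_r^R\sigma^{-2-\tau}\mathfrak P_2^2\,d\sigma$ is at most $C\mathfrak P_2(r)$. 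Dividing by $r^2$ then yields $|U|(r,p)\le Y\le Cr^{-2}\int_r^R\mathfrak X_2(\sigma;I_*)\,d\sigma$, which is \eqref{eq:hessian-tail-final}. If this geodesic comparison fails to be uniform, the fallback is to keep the $\mathfrak P_1$ terms and verify that the constant $C$ may depend on $\Phi_{1,r_1}$; this dependence is harmless here since it is fixed-tail data.
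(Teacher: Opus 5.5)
Your argument follows the paper's structure. Your integrating factor $r^2\exp(-C_*\int_r^R\sigma^{-1-\tau}d\sigma)$ does the job of the paper's passage to $Z=r^2\sqrt M$, namely killing the leading $2/r$. \textbf{BS-2} linearizes the $Y^2$ and $Y^3$ terms exactly as you say. Your geodesic argument from a critical point of $\Xi(t,\sigma,\cdot)$ is a valid, more explicit substitute for the paper's compactness proof of $\mathfrak X_1\le C\mathfrak X_2$.

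The gap is in the constant. You make the linear coefficient integrable by bounding $\mathfrak P_1(\sigma)\le\int_{r_1}^\infty\mathfrak X_1$, and you control the quadratic forcing by $\mathfrak P_2\le\Phi_{2,r_1}$. So your final $C$ grows with $\int_{r_1}^\infty\mathfrak X_1$ and with $\Phi_{2,r_1}$. The growth in the first is exponential, through the integrating factor, because the coefficient contains $r^{-3}\mathfrak P_1^2$.

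The proposition, however, claims that $C$ depends only on $\tau,r_1,\underline{\mathfrak a_0}^{-1},\delta_*^{-1},\vartheta_*^{-1}$ and $\|\mathcal S\|_{\ctS_{-\tau}(\mathcal M_{r_1})}$. The last quantity is by definition $\mathcal N_\tau(r_1)$, the asymptotic-flatness norm alone. That norm does not bound the forcing integrals. The natural decay from order-$\tau$ asymptotic flatness is $\Xi=O(r^{-\tau})$, i.e. $\mathfrak X_1=O(r^{-\tau})$, which is not integrable for $\tau<1$. So the finiteness in \eqref{eq:finite-forcing-tail} is an independent qualitative hypothesis with no quantitative bound. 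Calling these integrals fixed-tail data, or noting that the later applications make the slab tails small, does not prove the estimate as stated.

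The missing idea is the dichotomy the paper uses, which invokes \textbf{BS-2} a second time. Fix $r$.
\begin{itemize}
\item If $\mathfrak P_2(r)\ge1$, then \textbf{BS-2} already gives $r^2\sup_{S_r}|U|<1\le\mathfrak P_2(r)$.
\item If $\mathfrak P_2(r)<1$, note that your integration only involves $\sigma\in[r,R]$. There $\mathfrak P_2(\sigma)\le\mathfrak P_2(r)<1$, hence $\mathfrak P_1(\sigma)\le C$. The linear coefficient then has integral over $[r,R]$ bounded by a constant from the listed data, and $\mathfrak P_2(\sigma)^2\le\mathfrak P_2(\sigma)$. Your integrated bound becomes $r^2Y\le C\mathfrak P_2(r)$ with the stated dependence.
\end{itemize}
With that case split your proof goes through.

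Separately, your treatment of points where $M=0$ is wrong as written. $D^+M\le0$ at a zero of $M$ gives no bound on $D^+\sqrt M$; for example, $M=\rho^{3/2}$ has $D^+M(0)=0$ but $D^+\sqrt M(0)=\infty$. Instead run the argument for $\sqrt{M+\epsilon}$. It satisfies the same differential inequality, since $\sqrt M\le\sqrt{M+\epsilon}$ and $M\le M+\epsilon$ in every term, and you can then let $\epsilon\downarrow0$.
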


\begin{proof}
Let
\[
M(\rho):=\sup_{S_r}|U|^2,
\qquad
Z(\rho):=r^2\sqrt{M(\rho)},
\qquad
r=R-\rho.
\]
By \textbf{BS-2}, we have
\begin{equation}\label{eq:Z-bootstrap-less-one}
0\le Z(\rho)<1
\qquad\text{for all }\rho\in[0,R-r_*).
\end{equation}

We now use the maximum-point inequality from Lemma~\ref{lem:hessian-max-ode}.
In the
notation above, it gives
\begin{align}
D^+M(\rho)
&\le
\left(
\frac{4}{r}
+
C r^{-1-\tau}
+
C r^{-2-\tau}\mathfrak P_1(r)
+
C r^{-3}\mathfrak P_1(r)^2
\right)M(\rho)
\notag\\
&\quad
+
C r\,M(\rho)^2
+
C r^{1-\tau}M(\rho)^{3/2}
+
\frac{C}{r^2}\mathfrak X_2(r;I_*)\sqrt{M(\rho)}
\notag\\
&\quad
+
C r^{-3-\tau}\mathfrak P_1(r)\sqrt{M(\rho)}
+
C r^{-4-\tau}\mathfrak P_1(r)^2\sqrt{M(\rho)}.
\label{eq:M-ODE-for-Z-proof}
\end{align}
Set
\[
Y(\rho):=r^4M(\rho)=Z(\rho)^2.
\]
Since $r_\rho=-1$, we compute
\[
D^+Y(\rho)
=
-4r^3M(\rho)+r^4D^+M(\rho).
\]
Substituting \eqref{eq:M-ODE-for-Z-proof}, and using
\[
M=\frac{Z^2}{r^4},
\qquad
\sqrt M=\frac{Z}{r^2},
\]
the leading $4r^{-1}M$ term cancels with $-4r^3M$. We obtain
\begin{align}
D^+Y
&\le
C\Big(
r^{-1-\tau}
+
r^{-2-\tau}\mathfrak P_1(r)
+
r^{-3}\mathfrak P_1(r)^2
\Big)Z^2
\notag\\
&\quad
+
C r^{-3}Z^4
+
C r^{-1-\tau}Z^3
+
C\mathfrak X_2(r;I_*)Z
\notag\\
&\quad
+
C r^{-1-\tau}\mathfrak P_1(r)Z
+
C r^{-2-\tau}\mathfrak P_1(r)^2Z.
\label{eq:Y-ODE-with-P}
\end{align}
If $Z(\rho)>0$, dividing \eqref{eq:Y-ODE-with-P} by $2Z(\rho)$ gives
\begin{align}
D^+Z(\rho)
&\le
C\Big(
r^{-1-\tau}
+
r^{-2-\tau}\mathfrak P_1(r)
+
r^{-3}\mathfrak P_1(r)^2
\Big)Z(\rho)
\notag\\
&\quad
+
C r^{-3}Z(\rho)^3
+
C r^{-1-\tau}Z(\rho)^2
+
C\mathfrak X_2(r;I_*)
\notag\\
&\quad
+
C r^{-1-\tau}\mathfrak P_1(r)
+
C r^{-2-\tau}\mathfrak P_1(r)^2.
\label{eq:Z-ODE-with-P}
\end{align}
If $Z(\rho)=0$, the same inequality follows by the standard upper-Dini argument. Using
\eqref{eq:Z-bootstrap-less-one}, we absorb the nonlinear terms $Z^2$ and $Z^3$ into the linear
coefficient and obtain
\begin{align}
D^+Z(\rho)
&\le
a(r)\,Z(\rho)+b(r),
\label{eq:Z-linear-with-P}
\end{align}
where
\begin{align}
a(r)
&:=
C\Big(
r^{-1-\tau}
+
r^{-3}
+
r^{-2-\tau}\mathfrak P_1(r)
+
r^{-3}\mathfrak P_1(r)^2
\Big),
\label{eq:a-r-hess-proof}
\\
b(r)
&:=
C\Big(
\mathfrak X_2(r;I_*)
+
r^{-1-\tau}\mathfrak P_1(r)
+
r^{-2-\tau}\mathfrak P_1(r)^2
\Big).
\label{eq:b-r-hess-proof}
\end{align}

We now relate $\mathfrak P_1$ to $\mathfrak X_2$. We use the following scale-invariant estimate on
the spheres. For every smooth function $h$ on $S_r$,
\begin{equation}\label{eq:sphere-X1-X2}
r\,\|\slashed d h\|_{L^\infty(S_r,\gamma)}
\le
C\,r^2\|\slashed\nabla_\gamma^2 h\|_{L^\infty(S_r,\gamma)}.
\end{equation}
Indeed, after rescaling $\gamma$ to $r^{-2}\gamma$, the family of metrics is uniformly controlled
on the fixed tail. If \eqref{eq:sphere-X1-X2} failed, a compactness argument would produce a
nonzero parallel vector field on $S^2$, which is impossible. Applying \eqref{eq:sphere-X1-X2} to
$h=\Xi(t,r,\cdot)$ and taking the relevant suprema gives
\begin{equation}\label{eq:X1-by-X2}
\mathfrak X_1(r;I_*)\le C\,\mathfrak X_2(r;I_*).
\end{equation}
Consequently, setting 
\[\mathfrak P_2(r) := \int_r^R \mathfrak{X}_2(\sigma;I_*)d\sigma,\]
we have
\begin{equation}\label{eq:P1-by-I2}
\mathfrak P_1(r)
\le
C\,\mathfrak{P}_2(r).
\end{equation}

We now prove the desired bound for $Z$. First note that
\[
Z(0)=0,
\]
because $f=T_0$ on $S_R$, and hence all tangential derivatives of $f$ vanish on $S_R$.

If $\mathfrak{P}_2(r)\ge 1$, then \eqref{eq:Z-bootstrap-less-one} immediately gives
\[
Z(\rho)<1\le \mathfrak{P}_2(r),
\]
so the desired estimate follows in this case after increasing $C$.

It remains to treat the case $\mathfrak{P}_2(r)<1$. Since $\mathfrak{P}_2$ is nonincreasing as $r$ decreases and
$\mathfrak P_1(\sigma)\le C \mathfrak{P}_2(\sigma)\le C \mathfrak{P}_2(r)$ for $\sigma\in[r,R]$, we have
\[
\int_r^R a(\sigma)\,d\sigma
\le
C.
\]
Indeed, the terms $\sigma^{-1-\tau}$ and $\sigma^{-3}$ are integrable on $[r_1,\infty)$, while the
terms involving $\mathfrak P_1$ are bounded by the same integrable weights multiplied by powers of
$\mathfrak{P}_2(r)<1$.

Let
\[
\mathcal I(\rho)
:=
\exp\left(
-\int_0^\rho a(R-s)\,ds
\right).
\]
Multiplying \eqref{eq:Z-linear-with-P} by $\mathcal I$ and integrating from $0$ to $\rho$ gives
\[
Z(\rho)
\le
\exp\left(\int_r^R a(\sigma)\,d\sigma\right)
\int_r^R b(\sigma)\,d\sigma
\le
C\int_r^R b(\sigma)\,d\sigma.
\]
Using \eqref{eq:P1-by-I2}, we estimate
\[
\int_r^R b(\sigma)\,d\sigma
\le
C \mathfrak{P}_2(r)
+
C\int_r^R \sigma^{-1-\tau}\mathfrak{P}_2(\sigma)\,d\sigma
+
C\int_r^R \sigma^{-2-\tau}\mathfrak{P}_2(\sigma)^2\,d\sigma.
\]
Since $\mathfrak{P}_2(\sigma)\le \mathfrak{P}_2(r)<1$ for $\sigma\in[r,R]$, and since the weights are integrable on
$[r_1,\infty)$, this gives
\[
\int_r^R b(\sigma)\,d\sigma
\le
C \mathfrak{P}_2(r).
\]
Therefore
\[
Z(\rho)\le C \mathfrak{P}_2(r).
\]

Combining the two cases, we have shown that for every $r\in[r_*,R]$,
\[
Z(\rho)
\le
C\int_r^R \mathfrak X_2(\sigma;I_*)\,d\sigma.
\]
Recalling that
\[
Z(\rho)=r^2\sup_{S_r}|\slashed\nabla^2 f|_\gamma,
\]
we conclude
\[
\sup_{S_r}|\slashed\nabla^2 f|_\gamma
\le
\frac{C}{r^2}
\int_r^R \mathfrak X_2(\sigma;I_*)\,d\sigma.
\]
This proves \eqref{eq:hessian-tail-final}.
\end{proof}

\begin{prop}[Pointwise bound for the radial slope]
\label{prop:fr-tail-final}
Assume that $f$ satisfies \textbf{BS-2}. Then there exists a constant
\[
C
=
C\!\left(
\tau,r_1,\underline{\mathfrak a_0}^{-1},\delta_*^{-1},\vartheta_*^{-1},
\|\mathcal S\|_{\ctS_{-\tau}(\mathcal M_{r_1})}
\right)
\]
such that for every $(r,p)\in M_{r_*,R}$,
\begin{equation}\label{eq:fr-tail-final}
\frac{N}{\lambda}|f_r|(r,p)
\le
\frac{C}{r}\,\mathfrak X_0(r;I_*)
+
\frac{C}{r}\int_r^R \mathfrak X_2(\sigma;I_*)\,d\sigma.
\end{equation}
Equivalently,
\begin{equation}\label{eq:fr-tail-final-expanded}
\frac{N}{\lambda}|f_r|(r,p)
\le
C\left(
\sup_{\substack{t\in I_*\\ \omega\in S^2}}
|\Xi(t,r,\omega)|
+
\frac1r
\int_r^R
\sigma^2
\sup_{\substack{t\in I_*\\ \omega\in S^2}}
|\slashed\nabla^2_\gamma\Xi(t,\sigma,\omega)|_{\gamma(t,\sigma)}
\,d\sigma
\right).
\end{equation}
\end{prop}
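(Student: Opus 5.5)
The radial slope is read off directly from the equation, with no new maximum-principle argument. From \eqref{eq:TMCF-gammab} we have pointwise
\[
f_r=\mathcal B-\mathcal A:\slashed\nabla^2_\gamma f=\Xi|_{t=f}+\mathcal B_2-\mathcal A:\slashed\nabla^2_\gamma f .
\]
Since the coefficients satisfy $N/\lambda\le C$ on the fixed tail (Definition~\ref{def:ctS-nonstat}(ii) gives $N\le\vartheta_*^{-1}$ and $\lambda\ge\delta_*$), it suffices to bound each of the three terms on the right by the right-hand side of \eqref{eq:fr-tail-final}. Note that \textbf{BS-1} is implicitly in force, since it is needed for Lemma~\ref{lem:coeff-structure} and the earlier estimates in Theorem~\ref{thm:apriori-estimates}.

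\emph{Forcing term.} By Proposition~\ref{prop:C0-f}, $f(M_{r_*,R})\subset I_*$. Hence
\[
|\Xi|_{t=f}|\le \sup_{t\in I_*,\,p}|\Xi(t,r,p)|=r^{-1}\mathfrak X_0(r;I_*).
\]

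\emph{Principal term.} By \eqref{eq:A-elliptic-basic}, $|\mathcal A|_\gamma\le Cr$. Proposition~\ref{prop:hessian-tail-final} then gives
\[
|\mathcal A:\slashed\nabla^2 f|\le Cr\cdot\frac{C}{r^2}\int_r^R\mathfrak X_2(\sigma;I_*)\,d\sigma .
\]
This is exactly the second term of \eqref{eq:fr-tail-final}.

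\emph{$\mathcal B_2$ term.} By \eqref{eq:basic-B2-size-est}, $|\mathcal B_2|\le Cr^{-\tau}|\slashed d f|_\gamma$. Proposition~\ref{prop:grad-f} bounds this by $Cr^{-1-\tau}\mathfrak P_1(r)$. We then convert $\mathfrak X_1$ to $\mathfrak X_2$ using the sphere inequality \eqref{eq:X1-by-X2}, $\mathfrak X_1(\sigma;I_*)\le C\mathfrak X_2(\sigma;I_*)$, established in the proof of Proposition~\ref{prop:hessian-tail-final}. This gives
\[
|\mathcal B_2|\le Cr^{-1-\tau}\int_r^R\mathfrak X_2\le \frac{C}{r}\int_r^R\mathfrak X_2,
\]
since $r\ge r_1$ and $r^{-\tau}\le r_1^{-\tau}$.

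The main obstacle is only this conversion step: checking that \eqref{eq:X1-by-X2} applies uniformly for $t\in I_*$ so that the gradient bound feeds in with the right power of $r$. Alternatively, one can avoid it by estimating $|\slashed d f|$ from the Hessian bound via the critical-point integration used in Proposition~\ref{prop:schauder-truncated}, which gives $|\slashed d f|_\gamma\le Cr\sup_{S_r}|\slashed\nabla^2 f|\le Cr^{-1}\int_r^R\mathfrak X_2$. Summing the three bounds proves \eqref{eq:fr-tail-final}. The expanded form \eqref{eq:fr-tail-final-expanded} follows by unwinding the definitions \eqref{eq:X0-slab-def} and \eqref{eq:X2-slab-def}.
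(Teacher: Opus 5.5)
Your proposal is correct and is essentially the paper's own proof. It uses the same splitting $f_r=\Xi|_{t=f}+\mathcal B_2-\mathcal A:\slashed\nabla^2_\gamma f$, the range bound $f(M_{r_*,R})\subset I_*$ for the forcing term, $|\mathcal A|_\gamma\le Cr$ together with the Hessian tail estimate for the principal term, and the gradient estimate plus the sphere inequality $\mathfrak X_1\le C\mathfrak X_2$ for $\mathcal B_2$. Your alternative for the $\mathcal B_2$ term is also valid: bound $|\slashed d f|_\gamma\le Cr\sup_{S_r}|\slashed\nabla^2_\gamma f|_\gamma$ by integrating from a critical point of $f|_{S_r}$. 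It bypasses Proposition~\ref{prop:grad-f} and the conversion step you flagged.
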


\begin{proof}
Recall the $\rho$-equation
\[
f_\rho-\mathcal A : \slashed\nabla_\gamma^2 f=-\mathcal B,
\qquad
\mathcal B=\Xi|_{t=f}+\mathcal B_2.
\]
Since $f_\rho=-f_r$, this becomes
\[
f_r
=
\mathcal B-\mathcal A : \slashed\nabla_\gamma^2 f
=
\Xi|_{t=f}
+
\mathcal B_2
-
\mathcal A : U.
\]
Therefore
\begin{equation}\label{eq:fr-split-final}
\frac{N}{\lambda}|f_r|
\le
\frac{N}{\lambda}|\Xi|_{t=f}
+
\frac{N}{\lambda}|\mathcal B_2|
+
\frac{N}{\lambda}|\mathcal A:U|.
\end{equation}

We now estimate the three terms on the right-hand side of \eqref{eq:fr-split-final}. By
Proposition~\ref{prop:C0-f}, we know that $f(r,p)\in I_*$ on $M_{r_*,R}$. Hence
\[
\frac{N}{\lambda}|\Xi|_{t=f}(r,p)
\le
\mathcos(1)\,
\sup_{\substack{t\in I_*\\ \omega\in S^2}}
|\Xi(t,r,\omega)|
=
\frac{\mathcos(1)}{r}\,\mathfrak X_0(r;I_*).
\]

We next estimate the principal Hessian term. By \eqref{eq:A-elliptic-basic}, the operator norm of
$\mathcal A$ with respect to $\gamma$ satisfies
\[
|\mathcal A|_\gamma\le \mathcos(r).
\]
Therefore
\[
\frac{N}{\lambda}|\mathcal A:U|
\le
\mathcos(r)\,|U|.
\]
Applying Proposition~\ref{prop:hessian-tail-final}, we obtain
\[
|U|(r,p)
\le
\frac{C}{r^2}
\int_r^R
\mathfrak X_2(\sigma;I_*)\,d\sigma.
\]
Thus
\begin{equation}\label{eq:A-U-final}
\frac{N}{\lambda}|\mathcal A:U|
\le
\frac{C}{r}
\int_r^R
\mathfrak X_2(\sigma;I_*)\,d\sigma.
\end{equation}

We now estimate the lower-order remainder $\mathcal B_2$. From
Lemma~\ref{lem:coeff-structure}\textup{(iv)}, we have
\[
|\mathcal B_2|
\le
\mathcos(r^{-\tau})\,|\slashed d f|_\gamma.
\]
Since $N/\lambda=\mathcos(1)$ on the fixed tail, Proposition~\ref{prop:grad-f} gives
\[
\frac{N}{\lambda}|\mathcal B_2|
\le
C r^{-\tau}|\slashed d f|_\gamma
\le
C r^{-\tau}\frac{1}{r}
\int_r^R \mathfrak X_1(\sigma;I_*)\,d\sigma.
\]
Using the scale-invariant sphere estimate \eqref{eq:X1-by-X2}, justified in the proof of
Proposition~\ref{prop:hessian-tail-final}, we have
\[
\mathfrak X_1(\sigma;I_*)\le C\,\mathfrak X_2(\sigma;I_*).
\]
Since $r\ge r_1$, the factor $r^{-\tau}$ is bounded by a fixed-tail constant, and hence
\begin{equation}\label{eq:B2-fr-bound-final}
\frac{N}{\lambda}|\mathcal B_2|
\le
\frac{C}{r}
\int_r^R \mathfrak X_2(\sigma;I_*)\,d\sigma.
\end{equation}

Combining \eqref{eq:fr-split-final}, \eqref{eq:A-U-final}, and
\eqref{eq:B2-fr-bound-final} yields
\[
\frac{N}{\lambda}|f_r|(r,p)
\le
\frac{C}{r}\,\mathfrak X_0(r;I_*)
+
\frac{C}{r}\int_r^R \mathfrak X_2(\sigma;I_*)\,d\sigma,
\]
which proves \eqref{eq:fr-tail-final}. The expanded form \eqref{eq:fr-tail-final-expanded}
follows from the definitions of $\mathfrak X_0$ and $\mathfrak X_2$.
\end{proof}

\subsection{Local-in-$r$ Existence} \label{sub:localinr}

In this subsection we prove a local existence theorem in the inward radial variable. We fix a tail
radius $r_1>r_0$ and work in a coordinate chart on $(\mathcal M,\gtime)$ whose coefficients
\[
\mathcal S=(N,\lambda,\beta,b,\gamma)
\]
belong to $\ctS_{-\tau}(\mathcal M)$. We impose the good gauge
\[
b\equiv 0,
\qquad
\beta_r\equiv 0.
\]

This is precisely the setting in which Proposition~\ref{prop:gammab-quasilinear} yields an
explicit quasilinear parabolic form of the TMCF equation. In the next subsection this good gauge
will be produced on late slabs away from the horizon by the coordinate reduction proved in
Appendix~\ref{app:late-slab-gauge}.

As in the previous subsection, we write
\[
\vartheta_*:=\vartheta_*(r_1),
\qquad
\underline{\mathfrak a_0}:=\underline{\mathfrak a_0}(r_1).
\]

Fix $R_1>r_1$, and let $\varphi\in C^{2+\alpha}(S^2)$ be a function with
\[
\varphi(S^2)\subset(\underline T,\infty).
\]
We identify $\varphi$ with a function on the sphere $S_{R_1}$. Evaluating the background
coefficients on the graph $t=\varphi$ at $r=R_1$, we obtain the boundary quantities
\[
\mu^T[\varphi],
\qquad
\mathfrak a[\varphi],
\qquad
\mathcal A[\varphi],
\qquad
\mathcal B[\varphi],
\]
all defined on $S_{R_1}$. Explicitly,
\begin{equation}\label{eq:muT-phi-def}
\mu^T[\varphi]
=
\Big(1+\langle \beta[\varphi]^T,\slashed d\varphi\rangle_{\gamma[\varphi]}\Big)^2
-
N[\varphi]^2\,|\slashed d\varphi|_{\gamma[\varphi]}^2,
\end{equation}
and
\begin{equation}\label{eq:a-phi-def}
\mathfrak a[\varphi]
=
\frac{1}{2\lambda[\varphi]^2}\,
\tr_{\gamma_f[\varphi]}Q_r(\slashed d\varphi),
\end{equation}
where
\[
\gamma_f[\varphi]
=
\gamma[\varphi]
+2\,\beta[\varphi]^T\odot \slashed d\varphi
-\alpha[\varphi]\,(\slashed d\varphi)^2.
\]
Here, and below, $[\,\cdot\,]$ means that the quantity is evaluated at
\[
(t,r,p)=\big(\varphi(p),R_1,p\big).
\]
The corresponding principal tensor and lower-order term are
\[
\mathcal A[\varphi]
=
\frac{1}{\mathfrak a[\varphi]}\,\gamma_f[\varphi]^{-1},
\qquad
\mathcal B[\varphi]
=
\mathcal B\big(R_1,p,\varphi,\slashed d\varphi\big).
\]

We also define the radial derivative forced by the TMCF equation on the initial sphere. Since
\[
f_\rho-\mathcal A:\slashed\nabla_\gamma^2 f=-\mathcal B,
\qquad
f_\rho=-f_r,
\]
any $C^2$-solution whose trace on $S_{R_1}$ is $\varphi$ must satisfy
\begin{equation}\label{eq:fr-phi-def}
f_r[\varphi]
:=
\mathcal B[\varphi]-\mathcal A[\varphi]:(\slashed\nabla^2_{\gamma[\varphi]}\varphi).
\end{equation}
Thus $f_r[\varphi]$ is determined entirely by the boundary profile $\varphi$ and the background
spacetime.

Finally, we define the corresponding spacelikeness quantity by
\begin{equation}\label{eq:mu-phi-def}
\mu[\varphi]
:=
\mu^T[\varphi]-\frac{N[\varphi]^2}{\lambda[\varphi]^2}\,f_r[\varphi]^2.
\end{equation}
Geometrically, $\mu[\varphi]$ is exactly the value of the full graph spacelikeness indicator
$\mu$ on the initial sphere that any $C^2$-solution with trace $\varphi$ would have.

The bootstrap assumptions \textbf{BS-1} and \textbf{BS-2} were defined for functions on annuli in
Definition~\ref{def:BS1-BS2}. We now use the corresponding boundary versions for profiles on a
single sphere.

\begin{defn}[Boundary versions of \textbf{BS-1} and \textbf{BS-2}]
\label{def:admissible-boundary-profile}
Let $\varphi\in C^{2+\alpha}(S^2)$ be a profile on $S_{R_1}$.

We say that $\varphi$ satisfies \textbf{BS-1} on $S_{R_1}$ if
\begin{equation}\label{eq:parabolic-profile-admissible}
|\slashed d\varphi|_{\gamma[\varphi]}<\frac{\vartheta_*}{2},
\qquad
R_1\,\mathfrak a[\varphi]>\frac12\,\underline{\mathfrak a_0}
\quad\text{on }S_{R_1}.
\end{equation}
We say that $\varphi$ satisfies \textbf{BS-2} on $S_{R_1}$ if
\begin{equation}\label{eq:spacelike-profile-admissible}
\mu[\varphi]>0,
\qquad
R_1^2\,\big|\slashed\nabla^2_{\gamma[\varphi]}\varphi\big|_{\gamma[\varphi]}<1
\quad\text{on }S_{R_1}.
\end{equation}
\end{defn}

Thus \textbf{BS-1} is the quantitative condition needed to start the inward parabolic evolution, while \textbf{BS-2} is
the additional quantitative condition that will be propagated when we need spacelikeness and the
second-derivative bootstrap.

\begin{prop}[Local-in-$r$ existence from an initial sphere]
\label{prop:local-r-extension}
Fix $R_1>r_1$ and $\alpha\in(0,1)$. Let $\varphi\in C^{2+\alpha}(S^2)$ satisfy \textbf{BS-1} on
$S_{R_1}$. Then there exists
\[
\epsilon
=
\epsilon\!\left(
\varphi,\alpha,R_1,r_1,
\|\mathcal S\|_{\ctS_{-\tau}(\mathcal M_{r_1})}
\right)>0
\]
with $\epsilon<R_1-r_1$, and a unique solution
\[
f\in \ct(\overline{M_{R_1-\epsilon,R_1}})
\]
of the TMCF equation on $M_{R_1-\epsilon,R_1}$ such that
\begin{equation}\label{eq:local-initial-data}
f(R_1,\cdot)=\varphi .
\end{equation}
Moreover, $\epsilon$ may be chosen so that $f$ satisfies \textbf{BS-1} on
$M_{R_1-\epsilon,R_1}$, namely
\begin{equation}\label{eq:local-persistence-parabolic}
|\slashed d f|_{\gamma}<\frac{\vartheta_*}{2},
\qquad
r\,\mathfrak a>\frac12\,\underline{\mathfrak a_0}
\qquad\text{on }M_{R_1-\epsilon,R_1}.
\end{equation}
If $\varphi$ also satisfies \textbf{BS-2} on $S_{R_1}$, then $\epsilon$ may be chosen so that $f$
satisfies \textbf{BS-2} on $M_{R_1-\epsilon,R_1}$, namely
\begin{equation}\label{eq:local-persistence-spacelike}
\mu>0,
\qquad
r^2\,|\slashed\nabla_{\gamma}^2 f|_{\gamma}<1
\qquad\text{on }M_{R_1-\epsilon,R_1}.
\end{equation}
\end{prop}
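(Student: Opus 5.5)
We will reduce the statement to the standard short-time existence theory for quasilinear uniformly parabolic equations on a closed manifold, here $S^2$. We pass to the inward variable $\rho=R_1-r\in[0,\epsilon]$. In this variable the good-gauge TMCF equation \eqref{eq:tmcf-rho-form} reads
\[
f_\rho-\mathcal A(r,p,f,\slashed\nabla f):\slashed\nabla^2_\gamma f=-\mathcal B(r,p,f,\slashed\nabla f),\qquad f(0,\cdot)=\varphi .
\]
By Proposition~\ref{prop:gammab-quasilinear}, the coefficients $\mathcal A=\mathfrak a^{-1}\gamma_f^{-1}$ and $\mathcal B$ are explicit smooth functions of $(r,p,f,\slashed d f)$. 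They are built from the background tuple $\mathcal S$ and its first derivatives evaluated at $t=f$. They are defined on the open set $\mathcal U$ where $\mu^T>0$ and $r\mathfrak a>0$.

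Since $\varphi$ satisfies \textbf{BS-1} on $S_{R_1}$ and $S^2$ is compact, the strict inequalities hold with some margin $\eta>0$. A compact neighbourhood $K$ of the graph of $(R_1,p,\varphi,\slashed d\varphi)$ therefore lies in $\mathcal U$. On $K$ we have:
\begin{itemize}
\item $\mathcal A$ is uniformly positive definite, by Lemma~\ref{lem:coeff-structure}(iii) or directly from $\mu^T>1/4$ and $r\mathfrak a>\underline{\mathfrak a_0}/2$;
\item $\mathcal A$ and $\mathcal B$ are $C^{\alpha,\alpha/2}$ in $(r,p)$ and $C^1$ in $(z,q)$, with bounds controlled by $\|\mathcal S\|_{\ctS_{-\tau}(\mathcal M_{r_1})}$. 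This uses the $t$-derivatives included in $\mathcal N_\tau(r_1)$.
\end{itemize}
To work globally, we truncate: modify $\mathcal A,\mathcal B$ outside $K$ so that they remain uniformly parabolic with the same regularity. This does not change any solution that stays inside $K$.

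Existence and uniqueness come from the classical theorem for quasilinear parabolic equations on compact manifolds with $C^{2+\alpha}$ initial data (Ladyzhenskaya--Solonnikov--Ural'tseva, Lieberman, Lunardi), for instance via Schauder estimates and a contraction argument. We freeze $w\in\ct$ near the constant-in-$\rho$ extension of $\varphi$ and solve the linear problem
\[
u_\rho-\mathcal A(\cdot,w,\slashed d w):\slashed\nabla^2u=-\mathcal B(\cdot,w,\slashed d w),\qquad u(0)=\varphi .
\]
The resulting map $w\mapsto u$ is a contraction on a small ball in $\ct(\overline{M_{R_1-\epsilon,R_1}})$ for $\epsilon$ small, with $\epsilon$ depending on $\varphi$, $\alpha$, $R_1$, $r_1$ and the coefficient norm. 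No compatibility condition is needed, because the only boundary is the initial sphere. Uniqueness in $\ct$ follows from the same contraction, or from the maximum principle applied to the difference of two solutions, which satisfies a linear parabolic equation. The condition $\epsilon<R_1-r_1$ is simply imposed.

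Persistence of \textbf{BS-1} is by continuity. The solution lies in $\ct$, so $\slashed d f$ is H\"older in $(\rho,p)$ and $\sup|\slashed d f(\rho)-\slashed d\varphi|\to0$ as $\rho\to0$. Also $f(\rho)\to\varphi$ uniformly, and $r\mathfrak a$ depends continuously on $(r,f,\slashed d f)$. Hence the strict inequalities of \textbf{BS-1}, which hold at $\rho=0$ with margin $\eta$, hold on $[0,\epsilon]$ after shrinking $\epsilon$. The solution then stays in $K$, so the truncation is irrelevant and $f$ solves the true equation.

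Persistence of \textbf{BS-2} is the step that needs care. The Hessian bound $r^2|\slashed\nabla^2 f|<1$ persists by continuity, since $\slashed\nabla^2f\in C^{\alpha,\alpha/2}$ and the bound holds strictly at $\rho=0$. For $\mu=\mu^T-(N^2/\lambda^2)f_r^2$, the equation itself determines $f_r$ at $\rho=0$. That initial value is exactly $f_r[\varphi]$ from \eqref{eq:fr-phi-def}, so $\mu|_{\rho=0}=\mu[\varphi]>0$. Because $f_r$ is H\"older continuous up to $\rho=0$ in $\ct$, $\mu$ is continuous on the closed annulus and stays positive for small $\epsilon$. The main obstacle is therefore regularity up to the initial time: we need $f_s$ and $\slashed D^2 f$ to be H\"older continuous up to $\rho=0$. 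This is exactly what $C^{2+\alpha}$ initial data provides with no corner compatibility condition.
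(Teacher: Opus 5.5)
Your proposal is correct and follows essentially the same route as the paper. You reduce to the standard short-time theory for a uniformly parabolic quasilinear equation on the closed manifold $S^2$ with $C^{2+\alpha}$ initial data (no compatibility condition), obtain persistence of \textbf{BS-1} and of the Hessian bound by continuity up to the initial sphere, and get $\mu>0$ from the fact that the equation forces $f_r=f_r[\varphi]$ on $S_{R_1}$, so that $\mu=\mu[\varphi]>0$ there. The only differences are cosmetic: you use $\rho=R_1-r$ instead of the paper's $\sigma=\log(R_1/r)$ (equivalent on a compact annulus), and you sketch the truncation and coefficient-freezing argument that the paper simply cites as standard.
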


\begin{proof}
We now introduce the inward time variable
\[
\sigma:=\log\frac{R_1}{r}.
\]
Thus $\sigma=0$ corresponds to $r=R_1$, and for any $\epsilon\in(0,R_1-r_1)$ the annulus
$M_{R_1-\epsilon,R_1}$ corresponds to
\[
0\le \sigma<\log\frac{R_1}{R_1-\epsilon}.
\]
Moreover,
\[
\partial_\sigma=-r\partial_r=r\partial_\rho.
\]

In the $(\sigma,p)$-coordinates, the TMCF equation takes the quasilinear form
\begin{equation}\label{eq:local-parabolic-form}
f_\sigma-a^{AB}(\sigma,p,f,\slashed D f)\,(\slashed D^2 f)_{AB}
=
F(\sigma,p,f,\slashed D f),
\end{equation}
where $\slashed D$ is the Levi--Civita connection of the fixed round metric $\gamma_{S^2}$. The
matrix $a^{AB}$ and the nonlinearity $F$ are smooth in the $(f,\slashed D f)$-variables and
$C^\alpha$ in $(\sigma,p)$ on the fixed tail $r\ge r_1$, because
$\mathcal S\in \ctS_{-\tau}(\mathcal M)$.

At $\sigma=0$, the principal matrix in \eqref{eq:local-parabolic-form} is the coordinate expression
of
\[
R_1\,\mathcal A[\varphi].
\]
Since $\varphi$ satisfies \textbf{BS-1} on $S_{R_1}$, we have
\[
|\slashed d\varphi|_{\gamma[\varphi]}<\frac{\vartheta_*}{2},
\qquad
R_1\,\mathfrak a[\varphi]>\frac12\,\underline{\mathfrak a_0}.
\]
Hence $\mu^T[\varphi]>0$, $\gamma_f[\varphi]$ is positive definite, and the principal matrix at
$\sigma=0$ is uniformly positive definite on $S^2$. By continuity, after restricting to a small
$\sigma$-interval, the equation remains uniformly parabolic.

We may therefore apply the standard short-time existence theorem for quasilinear uniformly
parabolic equations on the closed manifold $S^2$ to the initial-value problem
\[
\begin{cases}
f_\sigma-a^{AB}(\sigma,p,f,\slashed D f)(\slashed D^2 f)_{AB}=F(\sigma,p,f,\slashed D f),\\
f(0,\cdot)=\varphi.
\end{cases}
\]
It yields a unique solution
\[
f\in C^{2+\alpha,1+\alpha/2}\big([0,\sigma_0]\times S^2\big)
\]
for some $\sigma_0>0$. Decreasing $\sigma_0$ if necessary, we may assume
\[
0<\sigma_0<\log\frac{R_1}{r_1}.
\]
Set
\[
\epsilon:=R_1(1-e^{-\sigma_0}).
\]
Then $0<\epsilon<R_1-r_1$, and returning to the $r$-variable gives
\[
f\in\ct(\overline{M_{R_1-\epsilon,R_1}}).
\]

Because $f\in C^{2+\alpha,1+\alpha/2}$, the quantities
\[
|\slashed d f|_\gamma,
\qquad
r\,\mathfrak a,
\qquad
\mu,
\qquad
r^2|\slashed\nabla_{\gamma}^2 f|_{\gamma}
\]
depend continuously on $(\sigma,p)$. At $\sigma=0$, the first two agree with
\[
|\slashed d\varphi|_{\gamma[\varphi]},
\qquad
R_1\mathfrak a[\varphi],
\]
which satisfy the strict inequalities in \eqref{eq:parabolic-profile-admissible}. Therefore, after
possibly decreasing $\epsilon$, the solution satisfies \eqref{eq:local-persistence-parabolic}.

If $\varphi$ satisfies \textbf{BS-2} on $S_{R_1}$, then
\[
\mu[\varphi]>0,
\qquad
R_1^2|\slashed\nabla^2_{\gamma[\varphi]}\varphi|_{\gamma[\varphi]}<1
\qquad\text{on }S_{R_1}.
\]
We now check that the spacelikeness quantity of the local solution agrees initially with
$\mu[\varphi]$. The equation in the $\sigma$-variable gives
\[
f_\sigma(0,\cdot)
=
R_1\,\mathcal A[\varphi]:(\slashed\nabla^2_{\gamma[\varphi]}\varphi)
-
R_1\,\mathcal B[\varphi],
\]
and since $f_r=-R_1^{-1}f_\sigma$ at $\sigma=0$, we obtain exactly
\[
f_r(0,\cdot)=f_r[\varphi].
\]
Thus $\mu$ agrees with $\mu[\varphi]$ on the initial sphere. Shrinking $\epsilon$ further if
necessary gives \eqref{eq:local-persistence-spacelike}. Uniqueness is the standard uniqueness for
quasilinear parabolic initial-value problems.
\end{proof}

\begin{remark}[Constant outer data]
\label{rem:constant-profile-admissible}
The special case $\varphi\equiv T_0$ will be used later. In this case
\[
\mu^T[\varphi]=1,
\qquad
R_1\,\mathfrak a[\varphi]=R_1\,\mathfrak a_0(T_0,R_1,\cdot),
\qquad
R_1^2|\slashed\nabla^2_{\gamma[\varphi]}\varphi|_{\gamma[\varphi]}=0.
\]
Moreover, since $\slashed d\varphi=0$, the TMCF equation at the initial sphere gives
\[
f_r[\varphi]=\Xi(T_0,R_1,\cdot),
\]
and therefore
\[
\mu[\varphi]
=
1-\frac{N^2}{\lambda^2}(T_0,R_1,\cdot)\,\Xi(T_0,R_1,\cdot)^2
=
1-\left(\frac{\tr_{S_{T_0,R_1}}K_{T_0}}{H_{T_0,R_1}}\right)^2
=
\frac{|{\bf H}_{T_0,R_1}|^2}{H_{T_0,R_1}^2}.
\]
In particular, for every fixed $T_0>\underline T$, asymptotic flatness implies that there exists
$R_\infty(T_0)>r_1$ such that the constant profile $\varphi\equiv T_0$ satisfies \textbf{BS-1} and
\textbf{BS-2} on every sphere $S_R$ with $R\ge R_\infty(T_0)$. Proposition~\ref{prop:local-r-extension}
therefore gives a local solution of the TMCF equation satisfying \textbf{BS-1} and \textbf{BS-2} on
a short annulus adjacent to every sufficiently large sphere. In the next subsection this is
upgraded to a solution on an entire exterior region or, after a late-time gauge reduction, on a
prescribed tail.
\end{remark}

\subsection{Statement and proof of the Existence theorem}

In this subsection we work from the outset in a general ADM chart. Using
Appendix~\ref{app:late-slab-gauge}, we will pass to the good gauge on a suitable late slab in order to
invoke the results of Sections~\ref{sub:apriori} and \ref{sub:localinr}.

We work on the spacetime
\[
\mathcal M=(\underline T,\infty)\times M,
\qquad
M=(r_0,\infty)\times S^2,
\]
with coefficient tuple
\[
\mathcal S=(N,\lambda,\beta,b,\gamma)\in \ctS_{-\tau}(\mathcal M),
\]
where the metric is written in the general ADM form \eqref{eq:metric-general-prelim}.

For a tail radius $r_1>r_0$ and $T\ge \underline T$, define
\begin{align}
G_{r_1}(T)
&:=
\int_{r_1}^\infty \frac{\mathfrak X_0(r,T)}{r}\,dr,
\label{eq:G-tail-def}
\\
\Phi_{1,r_1}(T)
&:=
\int_{r_1}^\infty \mathfrak X_1(r,T)\,dr,
\label{eq:Phi1-def}
\\
\Phi_{2,r_1}(T)
&:=
\int_{r_1}^\infty \mathfrak X_2(r,T)\,dr,
\label{eq:Phi2-def}
\\
\Omega_{r_1}(T)
&:=
\sup_{r\ge r_1}\mathfrak X_0(r,T),
\label{eq:Omega-def}
\\
\Psi_{r_1}(T)
&:=
\sup_{t\ge T}\|r\Xi(t,\cdot)\|_{\cz_{-\tau}(M_{r_1,\infty})}.
\label{eq:Psi-def}
\end{align}
By Definition~\ref{def:ctS-nonstat}, all these quantities are finite for every $r_1>r_0$ and every
$T\ge\underline T$.

Similarly, if $I\subset(\underline T,\infty)$ is a compact interval, define the corresponding slab
quantities on the tail $M_{r_1,\infty}$ by
\begin{align}
G_{r_1,I}
&:=
\int_{r_1}^\infty \frac{\mathfrak X_0(r;I)}{r}\,dr,
\label{eq:GI-def}
\\
\Phi_{1,r_1,I}
&:=
\int_{r_1}^\infty \mathfrak X_1(r;I)\,dr,
\label{eq:Phi1I-def}
\\
\Phi_{2,r_1,I}
&:=
\int_{r_1}^\infty \mathfrak X_2(r;I)\,dr,
\label{eq:Phi2I-def}
\\
\Omega_{r_1,I}
&:=
\sup_{r\ge r_1}\mathfrak X_0(r;I),
\label{eq:OmegaI-def}
\\
\Psi_{r_1,I}
&:=
\sup_{t\in I}\|r\Xi(t,\cdot)\|_{\cz_{-\tau}(M_{r_1,\infty})}.
\label{eq:PsiI-def}
\end{align}

\begin{defn}[Late-time forcing decay]
\label{def:tail-assumptions}
We define the following late-time forcing decay assumptions:
\begin{enumerate}[label=\textup{(A\arabic*)}]
\item\label{it:A1} For every $r_1>r_0$,
\[
G_{r_1}(T)\to0
\qquad\text{and}\qquad
\Phi_{1,r_1}(T)\to0
\qquad\text{as }T\to\infty.
\]

\item\label{it:A2} For every $r_1>r_0$,
\[
\Phi_{2,r_1}(T)\to0
\quad\text{and}\quad
\Omega_{r_1}(T)\to0
\qquad\text{as }T\to\infty.
\]

\item\label{it:A3} For every $r_1>r_0$,
\[
\Psi_{r_1}(T)\to0
\qquad\text{as }T\to\infty.
\]
\end{enumerate}
\end{defn}

\begin{defn}[Late-time gauge reducibility]
\label{def:late-gauge-decay}
We say that $\mathcal S$ is \emph{late-time gauge reducible} if $\mathcal{S} \in {\ctS}^\sharp_{-\tau}(\mathcal{M})$, and for every $r_1>r_0$,
\[
\mathfrak B^\sharp(T;r_1)\to 0
\qquad\text{as }T\to\infty.
\]
\end{defn}

\begin{remark}
\label{rem:tail-assumption-roles}
\begin{itemize}
\item Assumption~\ref{it:A1} is the basic continuation assumption. The smallness of $G_{r_1}$
keeps the graph inside the late slab, while the smallness of $\Phi_{1,r_1}$ controls
$|\slashed d f|$ and hence keeps $\mu^T$ and $r\mathfrak a$ away from $0$.

\item Assumption~\ref{it:A2} is the direct pointwise assumption used to control
$r^2|\slashed\nabla^2 f|$ and $\frac{N}{\lambda}|f_r|$ on solutions satisfying \textbf{BS-1} and
\textbf{BS-2}.

\item Assumption~\ref{it:A3} plays the same upgrading role as~\ref{it:A2}, but through the truncated
Schauder estimate rather than the tail integrals $\Phi_2$ and $\Omega$.

\item The late-time gauge reducibility assumption is the condition that allows us to pass from the
given general ADM chart to the good gauge on late slabs away from the horizon.
\end{itemize}
\end{remark}

Recall from Definition~\ref{def:parabolicity-conditions} that parabolically admissible means
$\mu_f^T>0$ and $\mathfrak a_f>0$, while admissible means, in addition, $\mu_f>0$. The quantitative
bootstrap assumptions \textbf{BS-1} and \textbf{BS-2} were introduced in
Definition~\ref{def:BS1-BS2}, and their boundary versions were introduced in
Definition~\ref{def:admissible-boundary-profile}.

We now state the main existence theorem of this subsection.

\begin{thm}[Existence of tangentially maximal hypersurfaces]
\label{thm:global-existence}
Let $(\mathcal M,\gtime)$ be an asymptotically flat exterior spacetime of order $\tau$ with
coefficient tuple
\[
\mathcal S=(N,\lambda,\beta,b,\gamma)\in {\ctS}^\sharp_{-\tau}(\mathcal M).
\]

Then the following hold.
\begin{enumerate}[label=\textup{(\roman*)}]
\item For every $T_0>\underline T$ there exists $R_\infty(T_0)>r_0$ and a unique admissible solution
\[
f_\infty\in \ct(M_{R_\infty(T_0),\infty})
\]
of the TMCF equation on
\[
M_{R_\infty(T_0),\infty}:=(R_\infty(T_0),\infty)\times S^2
\]
such that
\[
\lim_{r\to\infty} f_\infty(r,p)=T_0
\qquad\text{uniformly in }p\in S^2.
\]
In particular, the graph $t=f_\infty(x)$ is a spacelike tangentially maximal hypersurface.

\item Assume that $\mathcal S$ is late-time gauge reducible and satisfies~\ref{it:A1}. Fix
$r_1>r_0$. Then there exist constants
\[
h_{\mathrm{slab}}=h_{\mathrm{slab}}(r_1)>0,
\qquad
T_{\mathrm{large}}(r_1)>\underline T,
\]
such that, for every $T_0\ge T_{\mathrm{large}}(r_1)$, there is a unique parabolically admissible solution
\[
f^{(r_1)}\in \ct(M_{r_1,\infty})
\]
of the TMCF equation satisfying
\[
f^{(r_1)}(M_{r_1,\infty})
\subset
[T_0-h_{\mathrm{slab}},T_0+h_{\mathrm{slab}}],
\qquad
\lim_{r\to\infty} f^{(r_1)}(r,p)=T_0
\]
uniformly in $p\in S^2$. In particular, the graph $t=f^{(r_1)}$ is a tangentially maximal hypersurface.

\item Assume that $\mathcal S$ is late-time gauge reducible and satisfies~\ref{it:A1}. Assume
further that it satisfies~\ref{it:A2} or~\ref{it:A3}. Then, after possibly enlarging
$T_{\mathrm{large}}(r_1)$, the solution $f^{(r_1)}$ in \textup{(ii)} is admissible. In particular, the graph
$t=f^{(r_1)}(x)$ is a spacelike tangentially maximal hypersurface.

\end{enumerate}

Moreover, the construction gives the following quantitative estimates. There is a slab interval
\[
I(T_0):=[T_0-h_{\mathrm{slab}},T_0+h_{\mathrm{slab}}],
\]
and the solution in \textup{(ii)} satisfies
\begin{align}
\|f^{(r_1)}-T_0\|_{\mathc^0(M_{r_1,\infty})}
&\le
\int_{r_1}^{\infty} \frac{\mathfrak X_0(\sigma;I(T_0))}{\sigma}\,d\sigma
\le h_{\mathrm{slab}},
\label{eq:global-C0-est}
\\
|\slashed d f^{(r_1)}|_{\gamma}(r,p)
&\le
\frac{C}{r}\int_r^\infty \mathfrak X_1(\sigma;I(T_0))\,d\sigma,
\qquad (r,p) \in M_{r_1,\infty}.
\label{eq:global-grad-est}
\end{align}
If the assumptions of \textup{(iii)} hold, then
\begin{align}
|\slashed\nabla_{\gamma}^2 f^{(r_1)}|_{\gamma}(r,p)
&\le
\frac{C}{r^2}\int_r^\infty \mathfrak X_2(\sigma;I(T_0))\,d\sigma,
\qquad (r,p) \in M_{r_1,\infty},
\label{eq:global-Hess-est}
\\
\frac{N}{\lambda}|f_r^{(r_1)}|(r,p)
&\le
\frac{C}{r}\,\mathfrak X_0(r;I(T_0))
+
\frac{C}{r}\int_r^\infty \mathfrak X_2(\sigma;I(T_0))\,d\sigma,
\qquad (r,p) \in M_{r_1,\infty}.
\label{eq:global-fr-est}
\end{align}
Here $C$ depends only on $r_1$, $\tau$, and the fixed-tail bounds for the spacetime coefficients.
\end{thm}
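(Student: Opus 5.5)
The plan is a continuity (bootstrap) argument on truncated annuli $M_{r_*,R}$, followed by a limit $R\to\infty$. For a fixed outer radius $R$ we solve the backward-parabolic boundary value problem \eqref{eq:tmcf-truncated-bvp} inward from the constant profile $\varphi\equiv T_0$ on $S_R$. By Remark~\ref{rem:constant-profile-admissible} and Proposition~\ref{prop:local-r-extension}, this profile satisfies \textbf{BS-1} and \textbf{BS-2} on $S_R$ as soon as $R\ge R_\infty(T_0)$, so a solution exists on a short annulus. Let $r_*(R)$ be the infimum of the radii down to which the solution exists and satisfies \textbf{BS-1} (and, when needed, \textbf{BS-2}). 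We will show that the strict bootstrap inequalities improve on $M_{r_*,R}$, so the solution can be restarted at $S_{r_*}$ with Proposition~\ref{prop:local-r-extension}. The trace $f(r_*,\cdot)$ is $C^{2+\alpha}$ with uniform bounds by Proposition~\ref{prop:schauder-truncated}. Hence $r_*$ reaches the target radius: $R_\infty(T_0)$ in part (i), and $r_1$ in parts (ii)--(iii).

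\emph{Part (i).} Here we work in a general chart. We first need a good gauge on the far region $r\ge R_\infty$ for a bounded time window around $T_0$. I would obtain it from Appendix~\ref{app:late-slab-gauge}, using that $\beta^\perp$ is $O(r^{-\tau})$ and is therefore small on far tails, in place of the late-time smallness. Next, Theorem~\ref{thm:apriori-estimates} gives on $M_{r_*,R}$: the bound $|\slashed d f|\lesssim r^{-1}\int_r^\infty\mathfrak X_1$, the bound $r^2|\slashed\nabla^2 f|\lesssim\int_r^\infty \mathfrak X_2$, and a bound for $\frac N\lambda|f_r|$. By the finite forcing tails \eqref{eq:finite-forcing-tail}, all of these are small once $R_\infty(T_0)$ is large. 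Combined with Lemma~\ref{lem:coeff-structure}(ii), this gives $r\mathfrak a>\tfrac34\underline{\mathfrak a_0}$ and $\mu\ge\mu^T-\frac{N^2}{\lambda^2}f_r^2>0$ with room to spare, which closes the bootstrap. Since all the estimates are uniform in $R$, Arzel\`a--Ascoli on compact subsets gives a subsequence $f_R\to f_\infty$ in $C^{2,1}_{loc}$. The $C^0$ bound $|f_R-T_0|(r)\le\int_r^\infty\sup|\Xi|$ passes to the limit and gives $f_\infty\to T_0$ uniformly. For uniqueness, take two such solutions, subtract, and use the linearized equation \eqref{eq:linF} with a maximum principle run inward from $r=\infty$. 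The difference tends to $0$ at infinity, and since the linear zeroth-order coefficient decays, a comparison on $M_{r,R}$ with $R\to\infty$ forces the difference to vanish.

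\emph{Parts (ii) and (iii).} Fix $r_1$. Late-time gauge reducibility ($\mathfrak B^\sharp(T;r_1)\to0$) together with Appendix~\ref{app:late-slab-gauge} gives, for $T_0$ large, a good-gauge chart on the slab $I(T_0)\times M_{r_1,\infty}$ of fixed half-width $h_{\mathrm{slab}}$. The fixed-tail constants $\delta_*,\vartheta_*,\underline{\mathfrak a_0},\mathcal N_\tau$ are controlled uniformly in $T_0$. Choose $h_{\mathrm{slab}}$ and then $T_{\mathrm{large}}$ so that $G_{r_1}(T_0-h_{\mathrm{slab}})<h_{\mathrm{slab}}$; by Proposition~\ref{prop:C0-f} the graph then stays inside the slab, which gives \eqref{eq:global-C0-est}. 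By (A1), $\Phi_{1,r_1}$ is small, so Proposition~\ref{prop:grad-f} improves $|\slashed d f|<\vartheta_*/4$, and Lemma~\ref{lem:coeff-structure}(ii) improves $r\mathfrak a$. This closes \textbf{BS-1} alone and yields parabolic admissibility down to $r_1$.

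For (iii), under (A2) we use Proposition~\ref{prop:hessian-tail-final} and Proposition~\ref{prop:fr-tail-final}. Under (A3) we use Proposition~\ref{prop:schauder-truncated}, which bounds $r^{2}|\slashed\nabla^2 f|$ and $r|f_r|$ by $C\Psi_{r_1}$ through the $\ct_{-\tau}$ norm (Lemma~\ref{lem:logweights-derivs}). In either case we get $r^2|\slashed\nabla^2 f|<\tfrac12$ and $\frac N\lambda|f_r|<\tfrac14$, hence $\mu>0$, which closes \textbf{BS-2}. The limit $R\to\infty$ and the estimates \eqref{eq:global-grad-est}--\eqref{eq:global-fr-est} then pass over from Theorem~\ref{thm:apriori-estimates} with $R=\infty$.

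\emph{Main obstacle.} I expect the hardest step to be uniqueness, together with the case of (ii) where only \textbf{BS-1} is propagated. Without \textbf{BS-2}, Proposition~\ref{prop:schauder-truncated} is unavailable, and the uniform $C^{2+\alpha}$ compactness needed both to restart the local solution and to pass to the limit has to come from a Krylov--Safonov/Schauder estimate using only the gradient bound. I would first try the local quasilinear Schauder estimate \eqref{eq:local-quasilinear-schauder}, whose hypotheses require only bounded $(v,\partial_x v)$, and this is exactly what \textbf{BS-1} supplies.
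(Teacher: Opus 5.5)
Your proposal is correct and follows essentially the paper's route. You run a continuity argument inward from the constant profile $T_0$ on truncated annuli in a good-gauge chart: a late slab from Appendix~\ref{app:late-slab-gauge} for (ii)--(iii), and a far-out tail obtained from spatial decay for (i). You close \textbf{BS-1}/\textbf{BS-2} using Theorem~\ref{thm:apriori-estimates} and the smallness of the good-gauge forcing tails, restart via Proposition~\ref{prop:local-r-extension}, and let $R\to\infty$ by local parabolic compactness, with uniqueness from the maximum principle.

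Your fix for the obstacle in (ii), namely uniform local quasilinear Schauder bounds from the $C^0$ and gradient bounds alone, is exactly what the paper invokes for closedness. Your infimum radius with strict margins is just a repackaging of the paper's open--closed set $\mathcal U_R$ together with the trace Lemma~\ref{lem:trace-admissibility}.
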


The remainder of this subsection is devoted to the proof of Theorem~\ref{thm:global-existence}.
The proof of \textup{(i)} is the same continuation argument on a sufficiently far-out tail, where
asymptotic flatness replaces late-time decay. We therefore first prove \textup{(ii)} and
\textup{(iii)}, and return to \textup{(i)} at the end.

For the rest of the proof of \textup{(ii)} and \textup{(iii)}, we fix $r_1>r_0$ and assume that
$\mathcal S$ lies in $\ctS_{-\tau}(\mathcal{M})$, is late-time gauge reducible and satisfies~\ref{it:A1}. When proving
\textup{(iii)}, we will additionally assume~\ref{it:A2} or~\ref{it:A3}.

We first record the good-gauge consequence of Appendix~\ref{app:late-slab-gauge} that will be used
below.

\begin{prop}[Late good-gauge slabs with small forcing]\label{prop:late-good-gauge-small}
There exists a number
\[
h_{\mathrm{slab}}\in(0,1]
\]
with the following property.

For every $\varepsilon>0$ there exists $T_{\mathrm{gauge}}=T_{\mathrm{gauge}}(\varepsilon)$ such
that, for every $T_0$ satisfying
\[
T_0-h_{\mathrm{slab}}\ge T_{\mathrm{gauge}},
\]
there is a time-preserving chart on
\[
I(T_0)\times M_{r_1,\infty},
\qquad
I(T_0):=[T_0-h_{\mathrm{slab}},T_0+h_{\mathrm{slab}}],
\]
with transformed coefficient tuple
\[
\widetilde{\mathcal S}
=
(\tilde N,\tilde\lambda,\tilde\beta,\tilde b,\tilde\gamma)
\]
satisfying the slab version of the fixed-tail $\ctS_{-\tau}$ bounds on
$I(T_0)\times M_{r_1,\infty}$, and with the good gauge
\[
\tilde b\equiv0,
\qquad
\tilde\beta_{\tilde r}\equiv0.
\]
Let
\[
\widetilde G_{r_1,I(T_0)},\quad
\widetilde\Phi_{1,r_1,I(T_0)},\quad
\widetilde\Phi_{2,r_1,I(T_0)},\quad
\widetilde\Omega_{r_1,I(T_0)},\quad
\widetilde\Psi_{r_1,I(T_0)}
\]
denote the slab quantities in this good-gauge chart, computed on the new tail $\tilde r\ge r_1$
using suprema over $t\in I(T_0)$. Then
\[
\widetilde G_{r_1,I(T_0)}
+
\widetilde\Phi_{1,r_1,I(T_0)}
<\varepsilon.
\]
If, in addition, assumption~\ref{it:A2} holds, then $T_{\mathrm{gauge}}$ may be chosen so that
\[
\widetilde\Phi_{2,r_1,I(T_0)}
+
\widetilde\Omega_{r_1,I(T_0)}
<\varepsilon.
\]
If instead assumption~\ref{it:A3} holds, then $T_{\mathrm{gauge}}$ may be chosen so that
\[
\widetilde\Psi_{r_1,I(T_0)}
<\varepsilon.
\]
\end{prop}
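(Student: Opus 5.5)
The plan is to obtain the proposition by combining the late-slab gauge reduction of Appendix~\ref{app:late-slab-gauge} with a perturbation argument. The point is that the forcing $\Xi$ is a geometric quantity attached to the coordinate spheres, so it changes only slightly under a gauge change that is close to the identity.

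\textbf{Step 1: the good-gauge chart.} First I fix $h_{\mathrm{slab}}\in(0,1]$. The natural choice is the slab width provided by the appendix construction; it depends only on $r_1$ and the fixed-tail bounds $\mathcal N^\sharp_\tau(r_1)$. Given a late slab $I(T_0)$, the appendix applies a time-preserving change of the radial and angular coordinates $(r,p)\mapsto(\tilde r,\tilde p)$. This change kills $b$ and the normal shift $\beta^\perp$, so $\tilde b=0$ and $\tilde\beta_{\tilde r}=0$. The diffeomorphism is obtained by solving transport-type equations whose source terms are $\beta^\perp$ and $b$.

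I expect the appendix estimates to have the following form: the deviation of the new chart from a reference chart is controlled, in the $\ct^\sharp_{-\tau}$ norms, by $\mathfrak B^\sharp(T_0-h_{\mathrm{slab}};r_1)$ together with the fixed-tail data. Two consequences follow. The transformed tuple $\widetilde{\mathcal S}$ satisfies the slab $\ctS_{-\tau}$ bounds, with constants that are uniform once $\mathfrak B^\sharp$ is small. Moreover, the new spheres $\tilde S_{t,\tilde r}$ are $C^{2,\alpha}$-close, in the scale-invariant sense, to the old spheres. By late-time gauge reducibility, $\mathfrak B^\sharp(T;r_1)\to0$, so all of these deviations can be made small by taking $T_{\mathrm{gauge}}$ large.

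\textbf{Step 2: comparing the forcings.} Next I compare $\widetilde\Xi=\frac{\tilde\lambda}{\tilde N}\frac{\tr_{\tilde S}\tilde K}{\tilde H}$ with the original $\Xi$. Since the chart is time-preserving, $\tilde N=N$ and $\tilde K_t=K_t$. Only the foliation of each slice changes, and therefore also $\tilde\lambda$.

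I then write
\[
\widetilde\Xi-\Xi\circ\Phi=\mathcal E,
\]
where $\Phi$ is the coordinate change. The error $\mathcal E$ is a smooth function of the first and second derivatives of the graph function describing $\tilde S$ over $S$, multiplied by background quantities. Using $|K_t|=O(r^{-1-\tau})$ and $rH\sim1$, I expect to get $|r\,\mathcal E|\lesssim r^{-\tau}\,\mathfrak B^\sharp$. The $\ct^\sharp$ regularity, which includes derivatives up to order four, is what gives the matching bounds for $\slashed d\mathcal E$ and $\slashed\nabla^2\mathcal E$ in the $\mathfrak X_1$ and $\mathfrak X_2$ weights.

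\textbf{Step 3: splitting the tail quantities.} I split each tilde slab quantity into the original quantity, taken with supremum over $t\ge T_0-h_{\mathrm{slab}}$, plus an error term. The error term is an integral such as
\[
\int_{r_1}^\infty \sigma^{-1-\tau}\,d\sigma\cdot\mathfrak B^\sharp,
\]
which is finite because $\tau>0$. There is one subtlety: the new tail $\tilde r\ge r_1$ must lie inside the old region where the fixed-tail bounds hold. I will handle this by applying the appendix on the slightly larger tail $r\ge r_1/2+r_0/2$.

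Under \ref{it:A1}, both $G_{r_1}$ and $\Phi_{1,r_1}$ tend to zero, so $\widetilde G+\widetilde\Phi_1<\varepsilon$ once $T_0-h_{\mathrm{slab}}$ is large. The same argument works under \ref{it:A2} for $\Phi_2$ and $\Omega$. Under \ref{it:A3} it works for $\Psi$, where the Hölder seminorm of $r\mathcal E$ in the weighted $\cz_{-\tau}$ space is again bounded by $\mathfrak B^\sharp$.

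\textbf{Main obstacle.} The hardest step is the quantitative error bound in Step 2, especially at the $\mathfrak X_2$ and $\cz_{-\tau}$ levels, where the weights must not be lost. That is, $\mathcal E$ and its tangential derivatives must decay like $r^{-1-\tau}$ (respectively $r^{-3-\tau}$) with constants proportional to $\mathfrak B^\sharp$, uniformly over the slab. My first approach is to exploit that $\Xi$ depends only on ${\bf H}$ of each sphere and the slicing. With that, the variation of $\Xi$ under a normal deformation is given by the first-variation formulas already used in Proposition~\ref{prop:good-gauge-geometric-tmcf}, and these involve at most two tangential derivatives of the deformation. Those derivatives are controlled by the $\ct^\sharp$ norm.
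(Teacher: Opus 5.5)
Your proposal is correct and follows essentially the paper's route. The paper applies Proposition~\ref{prop:app-late-slab-gauge} on the tail with $r_-=(r_0+r_1)/2$; its Step~3 is exactly your perturbative comparison of the forcing of the transported foliation $r=\rho+u$, where $\|u\|_{{\ct}^{\sharp}_{-\tau}}\le Ch\,\mathfrak B^\sharp(T_0-h_{\mathrm{slab}};r_-)$, with the original forcing, and the paper then concludes from \ref{it:A1}--\ref{it:A3} and late-time gauge reducibility just as you do. One small correction: the angular reparametrizations that remove $b$ are not small in $\mathfrak B^\sharp$, but they preserve the leaves and hence $\Xi$, so only the radial displacement $u$ feeds into your error $\mathcal E$.
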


\begin{proof}
Apply Proposition~\ref{prop:app-late-slab-gauge} on the fixed tail $r\ge r_1$, with
$r_-=(r_0+r_1)/2$. The transformed slab quantities on the new tail $r\ge r_1$ are bounded by the
corresponding original slab quantities on the old tail $r\ge r_-$ plus
\[
h_{\mathrm{slab}}\,\mathfrak B^\sharp(T_0-h_{\mathrm{slab}};r_-).
\]
For example,
\[
\widetilde G_{r_1,I(T_0)}
\le
C\Big(
G_{r_-,I(T_0)}
+
h_{\mathrm{slab}}\mathfrak B^\sharp(T_0-h_{\mathrm{slab}};r_-)
\Big),
\]
and similarly for the other quantities. Since
\[
G_{r_-,I(T_0)}
\le
G_{r_-}(T_0-h_{\mathrm{slab}}),
\qquad
\Phi_{1,r_-,I(T_0)}
\le
\Phi_{1,r_-}(T_0-h_{\mathrm{slab}}),
\]
assumption~\ref{it:A1} and late-time gauge reducibility make the right-hand side arbitrarily small
as $T_0\to\infty$. The same argument gives the additional conclusions under~\ref{it:A2} or
\ref{it:A3}.
\end{proof}

In the continuation argument below, the good-gauge slab is fixed before any solution is
constructed. There is no circularity: we first choose the coordinate system and make the slab
quantities small; only then do we solve the TMCF equation. Once the slab has been chosen, we drop
the tildes and denote the good-gauge coefficients and forcing again by
\[
N,\lambda,\beta,\gamma,\Xi.
\]

\begin{defn}[$\textbf{BS-1}$ and $\textbf{BS-2}$ solutions on truncated annuli]
\label{def:admissible-truncated}
Let $\varepsilon>0$, let $T_{\mathrm{gauge}}(\varepsilon)$ and $h_{\mathrm{slab}}$ be as in
Proposition~\ref{prop:late-good-gauge-small}, and let $T_0$ satisfy
\[
T_0-h_{\mathrm{slab}}\ge T_{\mathrm{gauge}}(\varepsilon).
\]
Set
\[
I(T_0):=[T_0-h_{\mathrm{slab}},T_0+h_{\mathrm{slab}}],
\]
and fix one of the good-gauge charts from Proposition~\ref{prop:late-good-gauge-small}. For
$r_1\le r<R<\infty$, a function
\[
f\in \ct(\overline{M_{r,R}})
\]
is called an \emph{$(\varepsilon,T_0;r,R)$-\textbf{BS-1} solution} if
\begin{itemize}
\item $f(M_{r,R})\subset I(T_0)$;
\item $f$ satisfies the TMCF equation on $M_{r,R}$ in the chosen good-gauge chart;
\item $f(R,\cdot)=T_0$;
\item $f$ satisfies \textbf{BS-1} on $M_{r,R}$.
\end{itemize}
It is called an \emph{$(\varepsilon,T_0;r,R)$-\textbf{BS-2} solution} if, in addition, $f$ satisfies
\textbf{BS-2} on $M_{r,R}$.
\end{defn}

\begin{remark}\label{rem:apriori-parabolic-class}
Theorem~\ref{thm:apriori-estimates}\textup{(i)}--\textup{(ii)} applies to
\textbf{BS-1} solutions and gives the $C^0$ and tangential gradient estimates used in the
continuation argument. Theorem~\ref{thm:apriori-estimates}\textup{(iii)}--\textup{(iv)} applies to
\textbf{BS-2} solutions and gives the Schauder, Hessian, and radial-slope estimates.
\end{remark}

\begin{lem}[Uniqueness on truncated annuli]
\label{lem:uniqueness-truncated}
Let $\varepsilon>0$ and $T_0$ be as in Definition~\ref{def:admissible-truncated}, and let
$r_1\le r<R<\infty$. Then there is at most one $(\varepsilon,T_0;r,R)$-\textbf{BS-1} solution.
\end{lem}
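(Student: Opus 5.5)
The plan is a comparison argument for the difference of two solutions. Let $f_1,f_2$ be two $(\varepsilon,T_0;r,R)$-\textbf{BS-1} solutions and set $w:=f_1-f_2$. Both solve the good-gauge equation \eqref{eq:tmcf-rho-form} in the inward variable $\rho=R-r$, with the same initial datum $w=0$ on $S_R$ (i.e.\ at $\rho=0$). Since both take values in the slab $I(T_0)$ and satisfy \textbf{BS-1}, the arguments $(f,\slashed d f)$ of $\mathcal A$ and $\mathcal B$ stay in a compact region. On that region $\mathcal A$ is uniformly positive definite by Lemma~\ref{lem:coeff-structure}(iii), and $\mathcal A,\mathcal B$ are $C^1$ in $(t,\omega)$ with bounds depending only on the fixed-tail data and on $r\in[r,R]$.

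\textbf{Step 1: linear equation for $w$.} Write
\[
w_\rho-\mathcal A(\cdot,f_1,\slashed d f_1):\slashed\nabla^2_{\gamma|_{t=f_1}}w=\mathcal E,
\]
and use the fundamental theorem of calculus along $f_\theta=\theta f_1+(1-\theta)f_2$ to bound $\mathcal E$. It collects $\big(\mathcal A(f_1,\slashed d f_1)-\mathcal A(f_2,\slashed d f_2)\big):\slashed\nabla^2 f_2$, the change of the Hessian connection from $\gamma|_{t=f_1}$ to $\gamma|_{t=f_2}$ (which carries $\partial_t\Gamma$ times $w$ times $\slashed d f_2$), and $\mathcal B(f_2,\slashed df_2)-\mathcal B(f_1,\slashed df_1)$. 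This gives
\[
\mathcal E=c\,w+X\cdot\slashed d w,
\]
with coefficients bounded by a constant times $1+\|f_2\|_{\ct(\overline{M_{r,R}})}$. That constant is finite because $f_i\in\ct(\overline{M_{r,R}})$ on the compact annulus. We therefore get a linear uniformly parabolic equation (forward in $\rho$) with bounded coefficients and zero initial data.

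\textbf{Step 2: maximum principle.} Put $\tilde w=e^{-K\rho}w$ with $K>\sup|c|+1$. At a positive interior spatial maximum of $\tilde w$ on $S_{R-\rho}$ we have $\slashed d\tilde w=0$ and $\slashed\nabla^2\tilde w\le0$, hence
\[
\tilde w_\rho\le (c-K)\tilde w<0.
\]
The Dini-derivative argument of Proposition~\ref{prop:C0-f} then shows that $m(\rho)=\sup_{S_{R-\rho}}\tilde w$ cannot become positive, since $m(0)=0$. Applying the same argument to $-\tilde w$ gives $w\equiv0$ on $\overline{M_{r,R}}$. Equivalently, one could quote standard uniqueness for the quasilinear parabolic initial value problem on the closed manifold $S^2$, as in Proposition~\ref{prop:local-r-extension}, propagated by continuity in $\rho$.

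\textbf{Main obstacle.} The delicate point is the coefficient $c$ in Step 1. The background coefficients are evaluated at $t=f_i$, and the Hessian is taken with respect to the pulled-back metric $\gamma|_{t=f_i}$, so the difference produces zeroth-order terms involving $\partial_t$ of $\mathcal S$ times second derivatives of $f_2$. We need $\slashed\nabla^2 f_2$ bounded, which holds on the closed annulus because $f_2\in\ct(\overline{M_{r,R}})$. No uniform-in-$R$ constant is required, since uniqueness only needs finiteness of these constants on the single fixed annulus.
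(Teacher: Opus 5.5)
Your proposal is correct and follows essentially the same route as the paper. Both arguments subtract the two solutions, linearize along $f_\theta=\theta f_1+(1-\theta)f_2$ to get a linear uniformly parabolic equation for $w$ with bounded coefficients on the compact annulus, and conclude $w\equiv0$ from the parabolic maximum principle on the closed $S^2$ with zero data on $S_R$. Your exponential weight $e^{-K\rho}$ just makes explicit the standard treatment of the zeroth-order coefficient, which the paper's one-line appeal to the maximum principle leaves implicit.
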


\begin{proof}
Let $f_1$ and $f_2$ be two \textbf{BS-1} solutions on $M_{r,R}$, and set
\[
w:=f_1-f_2.
\]
In the $s=-\log r$ variable, both $f_1$ and $f_2$ solve the same uniformly parabolic quasilinear
equation on the compact cylinder $[-\log R,-\log r]\times S^2$, and
\[
w=0\qquad\text{on }S_R.
\]
Using the mean-value formula along the segment $f_\theta:=\theta f_1+(1-\theta)f_2$, one finds that
$w$ satisfies a linear uniformly parabolic equation of the form
\[
w_s-a^{AB}(s,p)(\slashed D^2 w)_{AB}-b^A(s,p)(\slashed D w)_A-c(s,p)w=0
\]
with bounded coefficients. Since the lateral manifold $S^2$ is closed, the parabolic maximum
principle gives
\[
\sup_{M_{r,R}}|w|\le \sup_{S_R}|w|=0.
\]
Hence $w\equiv0$.
\end{proof}

\begin{lem}[Boundary trace bootstrap conditions]
\label{lem:trace-admissibility}
There exists $\varepsilon_{\mathrm{tr}}>0$, depending only on $r_1$ and the fixed-tail bounds for
the spacetime coefficients, such that the following holds.

Let $0<\varepsilon\le \varepsilon_{\mathrm{tr}}$, let $T_0$ be as in
Definition~\ref{def:admissible-truncated}, and let $r_1\le r<R<\infty$.

If $f$ is an $(\varepsilon,T_0;r,R)$-\textbf{BS-1} solution, then the boundary profile
\[
\varphi:=f|_{S_r}
\]
satisfies \textbf{BS-1} on $S_r$.

If, in addition, $f$ is an $(\varepsilon,T_0;r,R)$-\textbf{BS-2} solution and the slab was chosen
with one of the upgrading decay conditions from Proposition~\ref{prop:late-good-gauge-small},
then $\varphi$ satisfies \textbf{BS-2} on $S_r$.
\end{lem}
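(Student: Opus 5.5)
The plan is to show that the strict inequalities in the profile versions of \textbf{BS-1} and \textbf{BS-2} hold on $S_r$ with room to spare. The reason is that the a priori estimates of Theorem~\ref{thm:apriori-estimates}, applied on $M_{r,R}$, give bounds controlled by the slab forcing quantities, and these are $<\varepsilon$ by Proposition~\ref{prop:late-good-gauge-small}. Since $f\in\ct(\overline{M_{r,R}})$, the interior estimates pass to $S_r$ by continuity. This applies to the pointwise bounds on $|\slashed d f|$, $r\mathfrak a$, $r^2|\slashed\nabla^2 f|$ and $\frac N\lambda|f_r|$. It also shows that the quantities $\mathfrak a[\varphi]$, $\mu^T[\varphi]$ and $\mu[\varphi]$ coincide with the limits of $\mathfrak a$, $\mu^T$, $\mu$ as the radius decreases to $r$. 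For $\mu[\varphi]$, one uses that $f_r$ on $S_r$ equals $f_r[\varphi]$ by the equation, exactly as in the proof of Proposition~\ref{prop:local-r-extension}.

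\textbf{BS-1 part.} By Theorem~\ref{thm:apriori-estimates}(ii), together with $f(M_{r,R})\subset I(T_0)$, we get $|\slashed d f|_\gamma\le C r^{-1}\widetilde\Phi_{1,r_1,I(T_0)}\le C\varepsilon/r_1$. Choosing $\varepsilon_{\mathrm{tr}}$ small makes this at most $\vartheta_*/4$, which passes to $S_r$. For the mean curvature condition, Lemma~\ref{lem:coeff-structure}(ii) gives
\[
r|\mathfrak a-\mathfrak a_0|\le C r^{-\tau}|\slashed d f|+C|\slashed d f|^2\le C\varepsilon .
\]
Combined with $r\mathfrak a_0\ge\underline{\mathfrak a_0}$, this yields $r\mathfrak a\ge\underline{\mathfrak a_0}-C\varepsilon\ge\frac34\underline{\mathfrak a_0}$ for $\varepsilon$ small. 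The constant $C$ depends only on fixed-tail data, since it is the constant from the a priori estimates. Both strict inequalities in \eqref{eq:parabolic-profile-admissible} follow at $S_r$.

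\textbf{BS-2 part.} This splits according to the upgrading assumption. Under \ref{it:A2}, Theorem~\ref{thm:apriori-estimates}(iv) gives
\[
r^2|\slashed\nabla^2 f|\le C\widetilde\Phi_{2}\le C\varepsilon<\tfrac12 .
\]
It also gives $\frac N\lambda|f_r|\le C r^{-1}(\widetilde\Omega+\widetilde\Phi_2)\le C\varepsilon$. Then $\mu=\mu^T-\frac{N^2}{\lambda^2}f_r^2$, and using $\mu^T\ge(1-(N+|\beta^T|)|\slashed d f|)^2$ with $|\slashed d f|\le C\varepsilon$, we get $\mu\ge 1-C\varepsilon>\frac12$.

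Under \ref{it:A3}, Theorem~\ref{thm:apriori-estimates}(iii) bounds $\|f-T_0\|_{\ct_{-\tau}}$ by $C\widetilde\Psi<C\varepsilon$. Lemma~\ref{lem:logweights-derivs} then gives $r^{\tau}|\slashed D^2 f|+r^{1+\tau}|f_r|\le C\varepsilon$. Converting $\slashed D^2$ to $r^2\slashed\nabla^2_\gamma$ on the fixed tail, where $r^{-2}\gamma$ is uniformly equivalent to $\gamma_{S^2}$ and the Christoffel difference terms are controlled by the gradient bound, yields the same two conclusions. The main obstacle is this conversion: it requires the gradient term $|\slashed D f|$ in the weighted norm to absorb the Christoffel difference terms.

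Finally, all thresholds on $\varepsilon$ depend only on the constants in Theorem~\ref{thm:apriori-estimates}, and hence only on $r_1$ and the fixed-tail bounds. The good-gauge chart satisfies the slab version of those bounds uniformly, by Proposition~\ref{prop:late-good-gauge-small}. This defines $\varepsilon_{\mathrm{tr}}$, independent of $T_0$, $r$ and $R$.
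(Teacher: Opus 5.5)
Your proposal is correct and follows the paper's proof essentially step for step. \textbf{BS-1} on $S_r$ comes from Theorem~\ref{thm:apriori-estimates}(ii), using the slab smallness of $\Phi_{1,r_1,I(T_0)}$, together with Lemma~\ref{lem:coeff-structure}(ii) for $r\mathfrak a$; \textbf{BS-2} comes from Theorem~\ref{thm:apriori-estimates}(iv) under \ref{it:A2} or (iii) under \ref{it:A3}, combined with $\mu=\mu^T-\frac{N^2}{\lambda^2}f_r^2$. The points you make explicit are details the paper leaves implicit, not a different route: passing the bounds to $S_r$ by continuity, identifying $f_r|_{S_r}$ with $f_r[\varphi]$ through the equation, and the Christoffel conversion from $\slashed D^2$ to $\slashed\nabla^2_\gamma$ in the \ref{it:A3} case.
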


\begin{remark}[Uniform trace threshold in nondegenerate gauges]
\label{rem:uniform-trace-threshold}
The threshold in Lemma~\ref{lem:trace-admissibility} is written as a fixed-tail threshold because
the constants in the a priori estimates are allowed to depend on $r_1$. In a horizon-penetrating
gauge with uniform nondegeneracy up to $r=r_0$, as in
Remark~\ref{rem:uniform-horizon-penetrating-gauge}, the same proof gives a threshold
independent of $r_1$. Consequently, for the existence
argument one does not need the forcing tails to decay to zero on each fixed tail; it is enough that
they are eventually smaller than this uniform threshold. This is the technical version of the
eventual-smallness observation in Remark~\ref{rem:intro-eventual-smallness}.
\end{remark}

\begin{proof}
We prove the first statement. By Theorem~\ref{thm:apriori-estimates}\textup{(ii)}, the solution
$f$ satisfies
\[
|\slashed d f|_{\gamma}(\rho,p)
\le
\frac{C}{\rho}\int_\rho^R \mathfrak X_1(\sigma;I(T_0))\,d\sigma
\le
\frac{C}{r_1}\,\Phi_{1,r_1,I(T_0)}.
\]
Since the good-gauge slab was chosen with
\[
\Phi_{1,r_1,I(T_0)}<\varepsilon,
\]
we get, after taking $\varepsilon_{\mathrm{tr}}$ sufficiently small,
\[
|\slashed d\varphi|_{\gamma[\varphi]}<\frac{\vartheta_*}{2}.
\]
Moreover,
\[
\mathfrak a[\varphi]-\mathfrak a_0
=
\mathcos(r^{-1-\tau})|\slashed d\varphi|_{\gamma[\varphi]}
+
\mathcos(r^{-1})|\slashed d\varphi|_{\gamma[\varphi]}^2,
\]
so
\[
|r\mathfrak a[\varphi]-r\mathfrak a_0|
\le
C|\slashed d\varphi|_{\gamma[\varphi]}.
\]
Since $r\mathfrak a_0\ge\underline{\mathfrak a_0}(r_1)$ on $M_{r_1,\infty}$, taking
$\varepsilon_{\mathrm{tr}}$ smaller if necessary gives
\[
r\mathfrak a[\varphi]>\frac12\,\underline{\mathfrak a_0}(r_1).
\]
Thus $\varphi$ satisfies \textbf{BS-1} on $S_r$.

Now assume $f$ is a \textbf{BS-2} solution. If the slab was chosen in the \ref{it:A2}-regime, then
\[
\Phi_{2,r_1,I(T_0)}+\Omega_{r_1,I(T_0)}<\varepsilon.
\]
Theorem~\ref{thm:apriori-estimates}\textup{(iv)} gives
\[
\rho^2|\slashed\nabla_{\gamma}^2 f|_\gamma(\rho,p)
\le
C\int_\rho^R\mathfrak X_2(\sigma;I(T_0))\,d\sigma
\le
C\Phi_{2,r_1,I(T_0)}
\]
and
\[
\frac{N}{\lambda}|f_r|(\rho,p)
\le
\frac{C}{\rho}\mathfrak X_0(\rho;I(T_0))
+
\frac{C}{\rho}\int_\rho^R\mathfrak X_2(\sigma;I(T_0))\,d\sigma.
\]
Restricting to $\rho=r$ and using the smallness above gives
\[
r^2|\slashed\nabla^2_{\gamma[\varphi]}\varphi|_{\gamma[\varphi]}<1,
\qquad
\frac{N[\varphi]}{\lambda[\varphi]}|f_r[\varphi]|<\frac{1}{\sqrt8}
\]
after decreasing $\varepsilon_{\mathrm{tr}}$ if necessary.

If instead the slab was chosen in the \ref{it:A3}-regime, then
\[
\Psi_{r_1,I(T_0)}<\varepsilon.
\]
Theorem~\ref{thm:apriori-estimates}\textup{(iii)} gives
\[
\|f-T_0\|_{\ct(M_{r,R})}\le C\Psi_{r_1,I(T_0)}.
\]
Since $r^{-2}\gamma$ is uniformly comparable to the fixed round metric on $r\ge r_1$, this implies
again
\[
r^2|\slashed\nabla^2_{\gamma[\varphi]}\varphi|_{\gamma[\varphi]}<1,
\qquad
\frac{N[\varphi]}{\lambda[\varphi]}|f_r[\varphi]|<\frac{1}{\sqrt8}
\]
for $\varepsilon_{\mathrm{tr}}$ sufficiently small.

Finally, because $\varphi$ already satisfies \textbf{BS-1}, we have
\[
\mu^T[\varphi]>\frac14
\]
by \eqref{eq:varthetastar-implies-muT}. Hence
\[
\mu[\varphi]
=
\mu^T[\varphi]
-
\frac{N[\varphi]^2}{\lambda[\varphi]^2}f_r[\varphi]^2
>
\frac14-\frac18
>
0.
\]
Thus $\varphi$ satisfies \textbf{BS-2}.
\end{proof}

\begin{proof}[Proof of Theorem~\ref{thm:global-existence}]
We first prove \textup{(ii)} and \textup{(iii)} under the assumptions fixed above.

Let $\varepsilon_{\mathrm{tr}}$ be the threshold from Lemma~\ref{lem:trace-admissibility}. Choose
\[
0<\varepsilon\le \varepsilon_{\mathrm{tr}}
\]
small enough that
\[
\varepsilon<h_{\mathrm{slab}}.
\]
If we are proving \textup{(iii)}, choose the good-gauge slab using the corresponding upgrading
decay alternative in Proposition~\ref{prop:late-good-gauge-small}. Set
\[
T_{\mathrm{large}}:=T_{\mathrm{gauge}}(\varepsilon)+h_{\mathrm{slab}}.
\]
Then, for every $T_0\ge T_{\mathrm{large}}$, we have a good-gauge chart on
\[
I(T_0)=[T_0-h_{\mathrm{slab}},T_0+h_{\mathrm{slab}}]
\]
such that
\[
G_{r_1,I(T_0)}+\Phi_{1,r_1,I(T_0)}<\varepsilon.
\]
If we are proving \textup{(iii)}, we also have either
\[
\Phi_{2,r_1,I(T_0)}+\Omega_{r_1,I(T_0)}<\varepsilon
\]
or
\[
\Psi_{r_1,I(T_0)}<\varepsilon.
\]
From now on all quantities are computed in this good-gauge chart.

We now prove existence on truncated annuli. Choose $R$ sufficiently large so that the constant
profile $\varphi\equiv T_0$ satisfies \textbf{BS-1} on $S_R$, and, in the proof of \textup{(iii)},
also \textbf{BS-2}. This is possible by Remark~\ref{rem:constant-profile-admissible}. Define
\[
\mathcal U_R
:=
\Big\{
r\in[r_1,R]\;:\;
\text{there exists an $(\varepsilon,T_0;r,R)$-\textbf{BS-1} solution}
\Big\}.
\]
We claim that $\mathcal U_R=[r_1,R]$.

Nonemptiness follows from Proposition~\ref{prop:local-r-extension} applied to the constant
\textbf{BS-1} profile on $S_R$. After shrinking the local existence interval, the local solution
also takes values in $I(T_0)$, because $f(R,\cdot)=T_0$ and $I(T_0)$ is an open neighborhood of
$T_0$ in the sense that we can keep the local solution inside its interior.

To prove openness, let $r\in\mathcal U_R$, and let $f$ be an
$(\varepsilon,T_0;r,R)$-\textbf{BS-1} solution. The trace
\[
\varphi:=f|_{S_r}
\]
satisfies \textbf{BS-1} by Lemma~\ref{lem:trace-admissibility}. Proposition~\ref{prop:local-r-extension}
therefore extends the solution inward across $S_r$. Shrinking the new local interval if necessary,
the extension still takes values in $I(T_0)$. By Lemma~\ref{lem:uniqueness-truncated}, it patches
with the old solution. Thus $\mathcal U_R$ is open.

Closedness follows from the standard continuation criterion for quasilinear parabolic equations.
Indeed, the a priori estimates from Theorem~\ref{thm:apriori-estimates}\textup{(i)}--\textup{(ii)}
keep $f$ uniformly bounded and give a uniform tangential-gradient bound. Since the equation is
quasilinear uniformly parabolic as long as \textbf{BS-1} holds, standard local Schauder estimates
give uniform $C^{2,\alpha'}$ bounds on compact subannuli. At a limiting inner radius, the trace
therefore exists in $C^{2,\alpha'}$, Lemma~\ref{lem:trace-admissibility} gives \textbf{BS-1} for
the limiting trace, and Proposition~\ref{prop:local-r-extension} restarts the solution inward.
Thus $\mathcal U_R$ is closed in $[r_1,R]$.

Hence $\mathcal U_R=[r_1,R]$. Therefore, for every sufficiently large $R$, there exists a unique
$(\varepsilon,T_0;r_1,R)$-\textbf{BS-1} solution
\[
f_R\in\ct(\overline{M_{r_1,R}})
\]
with $f_R(R,\cdot)=T_0$ and $f_R(M_{r_1,R})\subset I(T_0)$.

If \textup{(iii)} is under consideration, the same open-closed argument is applied in the
\textbf{BS-2} class. The \textbf{BS-2} trace condition follows from
Lemma~\ref{lem:trace-admissibility}, using the upgrading decay arranged when the good-gauge slab
was chosen. Hence the truncated solutions $f_R$ satisfy \textbf{BS-2}.

We now let $R\to\infty$. Choose $R_j\to\infty$ and write $f_j:=f_{R_j}$. Since each $f_j$ satisfies
\textbf{BS-1} and takes values in $I(T_0)$, Theorem~\ref{thm:apriori-estimates}
\textup{(i)}--\textup{(ii)} gives
\begin{align}
\|f_j-T_0\|_{\mathc^0_{-\tau}(M_{r_1,R_j})}
&\le
\int_{r_1}^{\infty}\frac{\mathfrak X_0(\sigma;I(T_0))}{\sigma}\,d\sigma
\le h_{\mathrm{slab}},
\label{eq:FR-C0}
\\
|\slashed d f_j|_{\gamma}(r,p)
&\le
\frac{C}{r}\int_r^{R_j} \mathfrak X_1(\sigma;I(T_0))\,d\sigma
\le
\frac{C}{r}\int_r^\infty \mathfrak X_1(\sigma;I(T_0))\,d\sigma.
\label{eq:FR-grad}
\end{align}
On every compact annulus in $M_{r_1,\infty}$, standard interior parabolic estimates and a diagonal
argument yield a limit
\[
f^{(r_1)}\in\ct(M_{r_1,\infty})
\]
with
\[
f_j\to f^{(r_1)}
\qquad\text{in }C^{2,\alpha'}_{\mathrm{loc}}(M_{r_1,\infty})
\]
for every $\alpha'<\alpha$. Passing to the limit gives a solution of the TMCF equation. The
estimates \eqref{eq:global-C0-est}--\eqref{eq:global-grad-est} follow from
\eqref{eq:FR-C0}--\eqref{eq:FR-grad}. Taking $\varepsilon$ sufficiently small, these same estimates
and the basic pointwise estimates imply that the limit satisfies \textbf{BS-1} on
$M_{r_1,\infty}$. Hence it is parabolically admissible.

The asymptotic condition follows from the same maximum-principle estimate on the tail:
\[
|f^{(r_1)}(r,p)-T_0|
\le
\int_r^\infty
\sup_{\substack{t\in I(T_0)\\ \omega\in S^2}}
|\Xi(t,\sigma,\omega)|\,d\sigma
\to0
\qquad\text{as }r\to\infty.
\]
Thus $f^{(r_1)}\to T_0$ uniformly on the spheres. Uniqueness follows from the same maximum-principle
argument as Lemma~\ref{lem:uniqueness-truncated}. This proves \textup{(ii)}.

If \textup{(iii)} is under consideration, the truncated solutions satisfy \textbf{BS-2}. Passing the
\textbf{BS-2} estimates to the limit gives that $f^{(r_1)}$ satisfies \textbf{BS-2}, after choosing
$\varepsilon$ sufficiently small as above. In particular, $\mu_f>0$, so $f^{(r_1)}$ is admissible.
Applying Theorem~\ref{thm:apriori-estimates}\textup{(iv)} to the truncated \textbf{BS-2} solutions
and passing to the limit yields \eqref{eq:global-Hess-est} and \eqref{eq:global-fr-est}. This proves
\textup{(iii)}.

It remains to prove \textup{(i)}. Fix $T_0>\underline T$. By asymptotic flatness, the forcing tails
on sufficiently far-out regions are arbitrarily small, uniformly on a fixed compact time interval
around $T_0$:
\[
\int_R^\infty \mathfrak X_1(\sigma;I_0)\,d\sigma\to0,
\qquad
\int_R^\infty \mathfrak X_2(\sigma;I_0)\,d\sigma\to0,
\qquad
\sup_{r\ge R}\mathfrak X_0(r;I_0)\to0
\]
as $R\to\infty$, where $I_0$ is any fixed sufficiently small interval containing $T_0$. The same
gauge-reduction argument as in Appendix~\ref{app:late-slab-gauge}, now using spatial decay on the
far-out tail rather than late-time decay, gives a good-gauge chart on some tail
$M_{R_\infty(T_0),\infty}$. Choosing $R_\infty(T_0)$ sufficiently large, the continuation argument
above applies on this far-out tail in the \textbf{BS-2} class and gives a unique admissible solution
\[
f_\infty\in\ct(M_{R_\infty(T_0),\infty})
\]
with uniform limit $T_0$ at infinity. This proves \textup{(i)}.
\end{proof}

%%%%%%%%%%%%%%%%%%%%%%%%%%%%%%%%%%%%%%%%%%%%%%%%%%%%%%%%%%%%%%%%%%%%%%%%%%%%%%%%%%%%%%%%%%%%%%%%%%%%%%%%%%%%%%%%%%%%%%%%%%%%%%%%%%%%%%%%%%%%%%%%%%%%%%%%%%%%%%%%%%%%%%%%%%%%%%%%%%%%%%%%%%%%%%%%%%%%%%%%%%%%%%%%%%%%%%%%%%%%%%%%%%%%%%%%%%%%%%%%

\section{The Spacetime Penrose Inequality}

In this section we explain how the existence theory for tangentially maximal tails yields a
spacetime Penrose inequality. The only additional restriction we impose on the spacetime is a
late-time settling condition, formulated below as the \emph{quasi final state hypothesis}. The
remaining ingredients are standard consequences of the positive mass theorem, the
dynamical/isolated-horizon framework, the regularity theory for outermost MOTSs, weak inverse mean
curvature flow, and the tangentially maximal construction itself
\cite{SchoenYauPMT79,WittenPET81,AshtekarKrishnan04,AshtekarGalloway05,AnderssonMarsSimon08,AnderssonMetzger09,H-I}.

\subsection{Spacetime setting and the quasi final state hypothesis}
\label{sub:spacetime-final-state}

We work in a globally hyperbolic asymptotically flat spacetime
\((\widehat{\mathcal M},\gtime)\) satisfying Einstein's equations with stress-energy tensor obeying
the spacetime dominant energy condition. By the spacetime positive mass theorem with rigidity, the
ADM four-momentum $P^\mu_{\mathrm{ADM}} = (E_{\mathrm{ADM}}, P_{\mathrm{ADM}})$ of the asymptotic end is future causal, and is future
timelike unless the data are in the rigidity case
\cite{SchoenYauPMT79,schoen-yau-jang,WittenPET81}. We exclude the flat rigidity case in the
black-hole setting considered below. Hence \(P^\mu_{\mathrm{ADM}}\) is future timelike.

We choose the asymptotic rest frame of this timelike ADM four-momentum. Equivalently, after an
asymptotic Lorentz transformation of the asymptotically flat end, the ADM linear momentum $P_{\mathrm{ADM}}$ vanishes,
and the ADM energy $E_{\mathrm{ADM}}$ in this frame is the invariant ADM mass
\cite{adm-invariant1,adm-invariant2,adm-invariant3,adm-invariant5}. We then choose a smooth Cauchy
temporal function
\[
        t:\widehat{\mathcal M}\to \mathbb R
\]
whose level sets
\[
        \Sigma_t:=\{t=\mathrm{const}\}
\]
are asymptotically Euclidean Cauchy hypersurfaces adapted to this rest-frame asymptotic coordinate
system. The existence of smooth Cauchy temporal functions on globally hyperbolic spacetimes, and
the corresponding smooth splitting by spacelike Cauchy hypersurfaces, is due to
Bernal--Sánchez; see \cite{BernalSanchez05,BernalSanchez06}. In this chosen asymptotic frame the
ADM linear momentum of the end is zero, and we denote the corresponding ADM energy by
\(m_{\mathrm{ADM}}\).

\medskip

If the relevant slices contain no marginally outer trapped surfaces, then the desired estimate
reduces to the positive mass theorem. We therefore restrict attention to the black-hole case and
choose the origin of time so that, for every \(t\ge 0\) under consideration, the slice \(\Sigma_t\)
contains a smooth compact outermost MOTS
\[
\mathcal S_t\subset \Sigma_t
\]
not assumed to be connected. Here ``outermost'' is understood in the standard apparent-horizon
sense: no homologous weakly outer trapped surface lies outside \(\mathcal S_t\) in \(\Sigma_t\).

We assume that \(\mathcal S_t\) is of future black-hole type on every connected component.
Thus, if \(\ell\) and \(\underline\ell\) denote the future-directed outward and inward null normals
to any connected component of \(\mathcal S_t\), normalized by
\[
\gtime(\ell,\underline\ell)=-2,
\]
then
\[
\theta_{(\ell)}=0,
\qquad
\theta_{(\underline\ell)}<0.
\]
Let
\[
A(t):=|\mathcal S_t|
\]
denote the total area of \(\mathcal S_t\), i.e. the sum of the areas of its connected components,
and set
\[
\mathcal H_{\mathrm{app}}
:=
\bigcup_{t\ge 0}\mathcal S_t.
\]

We work in the standard outermost-MOTS evolution regime in which \(\mathcal H_{\mathrm{app}}\) is a
piecewise smooth three-dimensional hypersurface, possibly disconnected on individual smooth
pieces, with at most finitely many jump times
\[
0<t_1<\cdots<t_N<\infty.
\]
On each smooth piece, the number of connected components is locally constant, and each
component is a smooth marginally outer trapped tube foliated by future black-hole-type
outermost MOTSs. This is the regime suggested by the local existence and regularity theory
for MOTTs and by the regularity theory for evolving outermost MOTSs
\cite{AnderssonMarsSimon05,AnderssonMarsMetzgerSimon09,AnderssonMetzger09}.

We will use the following standard facts about smooth pieces of \(\mathcal H_{\mathrm{app}}\).

\begin{prop}[Causal character and area law for smooth outermost-MOTS pieces]
\label{prop:mott-area-law}
Let $\mathcal H_0$ be a smooth piece of $\mathcal H_{\mathrm{app}}$, possibly disconnected,
foliated by outermost MOTSs $\mathcal S_t$ of future black-hole type:
\[ \theta_{(\ell)}=0, \qquad\theta_{(\underline\ell)}<0\] 
on every connected component. Then:
\begin{itemize}
\item $\mathcal H_0$ is achronal: each connected component is, at every point, spacelike or null, and is never timelike.
\item The total area $A(t)=|\mathcal S_t|$ is nondecreasing towards the future. Along spacelike connected components, the area is strictly increasing; along null connected components, the area is constant. 
\end{itemize}
\end{prop}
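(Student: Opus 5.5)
The plan follows the standard Ashtekar--Krishnan and Andersson--Mars--Simon argument, applied componentwise. Fix a connected component $\mathcal H^c$ of $\mathcal H_0$. Along it, choose the evolution vector field $V$ tangent to $\mathcal H^c$ and normal to the leaves $\mathcal S_t\cap\mathcal H^c$, normalized so that $dt(V)=1$. In the null frame, write
\[
V=\ell-C\,\underline\ell .
\]
Since $\gtime(V,V)=4C$, the component is spacelike where $C>0$, null where $C=0$, and timelike where $C<0$. Thus the causal-character claim reduces to showing $C\ge0$. The tangency condition says that $\theta_{(\ell)}$ is constant along the tube, so
\[
\delta_V\theta_{(\ell)}=0 .
\]
I would expand this with the standard first-variation formula for the outer expansion under a normal deformation $\psi\,\ell-C\,\underline\ell$. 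Using $\theta_{(\ell)}=0$, this gives the linear elliptic equation
\[
L_{\mathcal S}C=-\big(|\chi|^2+G(\ell,\ell)\big),
\]
where $L_{\mathcal S}$ is the stability operator of the MOTS, $\chi$ is the outgoing null second fundamental form, and the boost normalization has been used to absorb the $\ell$-component of $V$.

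The main step, and the main obstacle, is to show $C\ge0$ from this equation, together with strict positivity unless $|\chi|^2+G(\ell,\ell)\equiv0$. The right-hand side is $\le0$ by the dominant (hence null) energy condition. Because each leaf is outermost, it is stable in the Andersson--Mars--Simon sense, so the principal eigenvalue of $L_{\mathcal S}$ is $\ge0$. I would then invoke their maximum principle for the non-self-adjoint operator $L_{\mathcal S}$: if $\lambda_1>0$, then $L_{\mathcal S}C\le0$ forces $C>0$ unless the source vanishes identically. The borderline case $\lambda_1=0$ needs extra care. There I would use the principal eigenfunction $\phi>0$ of the formal adjoint and pair it with the equation to get
\[
\int_{\mathcal S}\phi\,\big(|\chi|^2+G(\ell,\ell)\big)=0 .
\]
This forces the source to vanish, and then $C$ is a multiple of the principal eigenfunction. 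The assumption $\theta_{(\underline\ell)}<0$ enters here: it guarantees that the tube is a genuine black-hole (achronal) MOTT and fixes the sign of the multiple, excluding $C<0$.

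With $C\ge0$ established, the area law follows by first variation of area along $V$:
\[
\frac{d}{dt}|\mathcal S_t\cap\mathcal H^c|=\int\big(\theta_{(\ell)}-C\,\theta_{(\underline\ell)}\big)\,d\mu=-\int C\,\theta_{(\underline\ell)}\,d\mu\ \ge 0 .
\]
The inequality is strict wherever $C>0$ somewhere, because $\theta_{(\underline\ell)}<0$; this covers the spacelike components. Null components have $C\equiv0$ and hence constant area. Summing over the locally constant number of components on the smooth piece gives monotonicity of the total area $A(t)$.
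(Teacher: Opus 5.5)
Your area computation is the same as the paper's. The gap is in your derivation of $C\ge 0$, and it sits exactly in the borderline case. When the principal eigenvalue vanishes, pairing with the adjoint eigenfunction does give $|\chi|^2+G(\ell,\ell)\equiv 0$ and $C=c\,\phi$, but nothing you have fixes the sign of $c$. The hypothesis $\theta_{(\underline\ell)}<0$ cannot do it. On a MOTS the first variation $\delta_V\theta_{(\ell)}$ does not involve $\theta_{(\underline\ell)}$ at all; it enters only multiplied by $\theta_{(\ell)}=0$. Saying that it ``guarantees that the tube is achronal'' assumes the conclusion. In Hayward's terminology you are conflating two conditions:
\begin{itemize}
\item the \emph{future} condition $\theta_{(\underline\ell)}<0$, which only controls the sign of the area change;
\item the \emph{outer} condition (pointwise $\delta_{-\underline\ell}\theta_{(\ell)}>0$, or a strictly positive principal eigenvalue in the Andersson--Mars--Simon version), which controls the causal character and is precisely what degenerates here.
\end{itemize}
In that case the linearized equation is genuinely silent. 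It allows $c<0$, and it does not even force the positivity of the $\ell$-component of $V$ that your ansatz $V=\ell-C\underline\ell$ presupposes. So stability of the leaves plus $\theta_{(\underline\ell)}<0$ is not enough.

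The missing input is a global use of outermostness. This is what the paper relies on when it takes achronality from the evolution theory of outermost MOTSs rather than from the stability operator. For instance:
\begin{itemize}
\item By Raychaudhuri and the null energy condition, the future outgoing null hypersurface $\mathcal N$ from $\mathcal S_t$ has expansion $\le 0$.
\item Hence, for $t'>t$ near $t$, the section $\mathcal N\cap\Sigma_{t'}$ is weakly outer trapped, and so is enclosed by the outermost $\mathcal S_{t'}$ (Andersson--Metzger).
\item The tube therefore never crosses to the inner side of $\mathcal N$. Writing $V=a\ell+b\underline\ell$, this gives $b\le 0$, and then $dt(V)>0$ forces $a>0$.
\end{itemize}
That is, $V=a(\ell-C\underline\ell)$ with $a>0$ and $C\ge 0$, independently of the spectrum.

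Separately, recheck your signs. The operator acting on $C$ is $\mathcal L C:=\delta_{-C\underline\ell}\theta_{(\ell)}$, a $-\Delta$-type operator whose principal eigenvalue is $\ge 0$ by stability. In the boost gauge where $(\ell-\underline\ell)/2$ is the unit normal of $\mathcal S_t$ in $\Sigma_t$, it equals $2L_{\mathcal S}+|\chi|^2+G(\ell,\ell)$. The correct equation is $\mathcal L C=+\big(|\chi|^2+G(\ell,\ell)\big)$. As you wrote it, $L_{\mathcal S}C\le 0$ with $\lambda_1>0$ would yield $C\le 0$.
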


\begin{proof}
This is a standard consequence of the outermost-MOTS evolution theory and the
dynamical-horizon area law; see for example
\cite{AshtekarBeetleFairhurst99,AshtekarKrishnan03,AshtekarKrishnan04,AshtekarGalloway05,
AnderssonMarsSimon05,AnderssonMarsMetzgerSimon09}. We recall the argument for
completeness.

Since each \(\mathcal S_t\) is outermost in \(\Sigma_t\), each of its connected components is
weakly stably outermost. Indeed, if the stability operator on one component had negative
principal eigenvalue, an outward deformation of that component in the direction of the
positive principal eigenfunction would produce a weakly outer trapped surface outside
\(\mathcal S_t\), contradicting outermostness.

The evolution theory for outermost MOTSs implies that any smooth MOTT formed by the
outer boundaries of the trapped regions is achronal. Hence \(\mathcal H_0\) has no timelike
portion.

On a smooth time interval the number of connected components of \(\mathcal S_t\) is locally
constant. Write
\[
\mathcal S_t=\bigcup_{i=1}^N \mathcal S_t^i,
\]
where each \(\mathcal S_t^i\) lies in a smooth connected component
\(\mathcal H_0^i\subset\mathcal H_0\). On \(\mathcal H_0^i\), let \(\ell_i\) and
\(\underline\ell_i\) be the future-directed outward and inward null normals to
\(\mathcal S_t^i\), normalized by
\[
\gtime(\ell_i,\underline\ell_i)=-2.
\]
A vector tangent to \(\mathcal H_0^i\), normal to the leaves, and future-directed with respect
to the chosen evolution can be written, after rescaling, as
\[
V_i=\ell_i-C_i\,\underline\ell_i
\]
for a function \(C_i\) on \(\mathcal S_t^i\). Since \(\mathcal H_0\) is achronal,
\(V_i\) is never timelike. With the above normalization,
\[
\gtime(V_i,V_i)=4C_i,
\]
and therefore
\[
C_i\ge 0.
\]
Thus \(\mathcal H_0^i\) is spacelike where \(C_i>0\), null where \(C_i=0\), and has no
timelike portion.

The first variation of area along \(V_i\) is
\[
\theta_{(V_i)}
=
\theta_{(\ell_i)}-C_i\,\theta_{(\underline\ell_i)}
=
-C_i\,\theta_{(\underline\ell_i)}.
\]
Since \(\theta_{(\underline\ell_i)}<0\) and \(C_i\ge0\), this quantity is nonnegative.
Therefore
\[
\frac{d}{dt}|\mathcal S_t^i|\ge0
\]
along spacelike portions, after multiplying by the positive lapse relating \(V_i\) to the
chosen time parameter. If the component is null, then \(C_i=0\), and the same formula gives
\[
\frac{d}{dt}|\mathcal S_t^i|=0.
\]
Summing over the finitely many connected components gives
\[
\frac{d}{dt}A(t)
=
\sum_{i=1}^N \frac{d}{dt}|\mathcal S_t^i|
\ge0
\]
on every smooth piece.
\end{proof}

Let $\underline T$ be a time after the last jump time, and denote the final smooth piece of the
apparent horizon by
\[
\mathcal H_{\mathrm{final}}
:=
\mathcal H_{\mathrm{app}}\cap\{t\ge \underline T\}.
\]
Thus $\mathcal H_{\mathrm{final}}$ is a smooth MOTT; we refer to it as the \emph{last smooth horizon piece}.

\begin{remark}
We do not assume any global product structure or special topology of the late-time exterior at
this stage. Those features will be part of the quasi final state hypothesis below.
\end{remark}

The expected late-time picture is that the exterior geometry settles to Kerr and the last smooth
horizon piece settles to a Kerr isolated horizon. There does not seem to be a single universally
adopted precise formulation of the black-hole final state conjecture in the mathematical
relativity literature; compare, for instance, the discussions in
\cite{GiorgiStableBH,KlainermanBHStability}. We now record a strong version of this expectation,
stated in a standard convergence language, and then formulate the weaker hypothesis actually used
in the proof.

\begin{defn}[Strong final state hypothesis]
\label{def:strong-final-state}
We say that the future exterior of $(\widehat{\mathcal M},\gtime)$ satisfies the
\emph{strong final state hypothesis} if there exist subextremal Kerr parameters
\[
m>0,
\qquad
|a|<m,
\]
a time $\underline T>0$, a number $r_0>0$, and a late-time $C^2$ coordinate system
\[
(t,r,p)\in(\underline T,\infty)\times [r_0,\infty)\times S^2
\]
on the closure of the exterior of the last smooth horizon piece, with
\[
\mathcal H_{\mathrm{final}}=\{r=r_0\},
\]
such that the coordinate system has uniform $C^2$ control on fixed collars of
$\mathcal H_{\mathrm{final}}$ for all sufficiently late $t$, the metric takes the general ADM form
on the exterior region $r>r_0$,
\[
\gtime
=
-\big(N^2-|\beta|_{g_t}^2\big)\,dt^2
+2\,\beta\odot dt
+g_t,
\qquad
g_t=(\lambda^2+|b|_\gamma^2)\,dr^2+2\,b\odot dr+\gamma,
\]
and the following hold.

\begin{enumerate}[label=\textup{(S\arabic*)}]
\item\label{it:strong-tail}
On every fixed tail $M_{r_1,\infty}$, $r_1>r_0$, the coefficient tuple
\[
\mathcal S=(N,\lambda,\beta,b,\gamma)
\]
converges to the corresponding subextremal Kerr coefficient tuple in Boyer--Lindquist gauge in a
weighted $C^5$ topology. More precisely, after identifying the angular variables with the Kerr
ones, the coefficients and their first two $t$-derivatives converge to the Kerr coefficients and
their corresponding $t$-derivatives, with the standard asymptotically flat weights of order
$\tau\in(1/2,1)$, on every fixed tail $r\ge r_1>r_0$.

\item\label{it:strong-horizon}
The horizon sections
\[
\mathcal S_t=\mathcal H_{\mathrm{final}}\cap\Sigma_t
=
\{t\}\times\{r_0\}\times S^2
\]
converge in $C^2$ as $t\to\infty$ to a cross-section of the corresponding Kerr isolated horizon,
and
\[
A(t)\to A_{\mathrm{Kerr}}
\qquad\text{as }t\to\infty.
\]
\end{enumerate}
\end{defn}

\medskip

We now formulate the weaker hypothesis used in the proof.

\begin{defn}[Quasi final state hypothesis]
\label{def:spi-QFS}
We say that $(\widehat{\mathcal M},\gtime)$ satisfies the \emph{quasi final state hypothesis} if
there exist numbers
\[
\underline T>0,
\qquad
r_0>0,
\qquad
\tau\in(1/2,1),
\]
such that the following hold.

\begin{enumerate}[label=\textup{(Q\arabic*)}]
\item\label{it:spi-QFS-tail}
\emph{Last smooth horizon piece and late-time exterior chart.}
The closure of the exterior region of $\mathcal H_{\mathrm{final}}$ admits a late-time $C^2$
coordinate system
\[
(t,r,p)\in(\underline T,\infty)\times [r_0,\infty)\times S^2
\]
with
\[
\mathcal H_{\mathrm{final}}=\{r=r_0\}.
\]
The coordinate system has uniform $C^2$ control on fixed collars of
$\mathcal H_{\mathrm{final}}$ for all sufficiently late $t$. On the exterior region $r>r_0$, the
metric takes the general ADM form
\[
\gtime
=
-\big(N^2-|\beta|_{g_t}^2\big)\,dt^2
+2\,\beta\odot dt
+g_t,
\qquad
g_t=(\lambda^2+|b|_\gamma^2)\,dr^2+2\,b\odot dr+\gamma,
\]
and the coefficient tuple belongs to the class ${\ctS}^\sharp_{-\tau}(\mathcal M)$ from
Definition~\ref{def:sharp-tail-data-main}, where
\[
\mathcal M=(\underline T,\infty)\times (r_0,\infty)\times S^2.
\]

\item\label{it:spi-QFS-forcing}
\emph{Late-time forcing decay and gauge reducibility.}
The coefficient tuple $\mathcal S$ is late-time gauge reducible in the sense of
Definition~\ref{def:late-gauge-decay}; that is for every $r_1>r_0$,
\[
\mathfrak B^\sharp(T;r_1)\to0
\qquad\text{as }T\to\infty.
\]
Moreover, the late-time forcing decay assumptions from Definition~\ref{def:tail-assumptions}
hold: namely \textup{\ref{it:A1}} holds, and in addition either \textup{\ref{it:A2}} or
\textup{\ref{it:A3}} holds.

\item\label{it:spi-QFS-horizon}
\emph{Late-time area stabilization of the horizon.}
The areas of the horizon sections
\[
\mathcal S_t=\mathcal H_{\mathrm{final}}\cap\Sigma_t
=
\{t\}\times\{r_0\}\times S^2
\]
converge:
\[
A_\infty:=\lim_{t\to\infty}A(t)\in(0,\infty).
\]
\end{enumerate}
\end{defn}

\begin{remark}[Role of the quasi final state assumptions]
The three parts of the quasi final state hypothesis have separate roles.

\begin{itemize}
\item Condition \textup{(Q1)} provides the geometric setting: a horizon-regular late-time exterior
collar for the final comparison with \(\mathcal H_{\mathrm{final}}\), together with the
asymptotically flat coefficient control needed to formulate the TMCF equation on every fixed
exterior tail. It also imposes a topological restriction on the last smooth horizon piece: in the
form stated here, \(\mathcal H_{\mathrm{final}}\) is topologically $(\underline T,\infty)\times S^2$,
with horizon sections given by 2-spheres. Thus the hypothesis is adapted to a single
late-time black-hole component.

\item Condition \textup{(Q2)} is the analytic input for the tangentially maximal construction. It
is exactly what allows one to apply the late-slab gauge reduction of
Appendix~\ref{app:late-slab-gauge} and the existence theorem from the previous section on every
fixed tail \(r\ge r_1>r_0\).

\item Condition \textup{(Q3)} is the late-time horizon-area input: it identifies the limiting area
which appears in the final Penrose comparison. If the last smooth horizon piece is null, or becomes
isolated from some time onward, then \textup{(Q3)} automatically follows from the area law in
Proposition~\ref{prop:mott-area-law}, since the areas are constant on null pieces of the horizon.
\end{itemize}
\end{remark}

\begin{remark}[Horizon-penetrating final-state gauges]
\label{rem:horizon-penetrating-final-state-gauge}
At first sight, the strong and quasi final state hypotheses may appear not
to cover even the standard stationary black-hole examples.  Indeed, the
usual Boyer--Lindquist coordinates in Kerr, and the usual Schwarzschild
coordinates in Schwarzschild, are adapted to the stationary exterior
geometry and to the tangentially maximal structure of the exterior
constant-time slices, but they do not penetrate the future horizon.  The
constant-time slices in these coordinates instead limit to the bifurcation
sphere as \(r\to r_+^+\).

This is only a coordinate issue.  The final-state chart used in the
hypotheses is required to be regular at the future horizon, but it is not
required to coincide exactly with the stationary Boyer--Lindquist gauge
all the way down to the horizon at each finite time.  Rather, one may
modify the stationary gauge in a shrinking neighborhood of the horizon.
More precisely, one constructs a horizon-penetrating chart which, for each
large time \(t\), agrees with the stationary exterior gauge on the region
\[
        r\ge r_{\mathrm{cut}}(t),
        \qquad
        r_{\mathrm{cut}}(t)>r_+,
        \qquad
        r_{\mathrm{cut}}(t)\downarrow r_+
        \quad\text{as }t\to\infty.
\]
Thus on every fixed exterior region \(\{r\ge r_1\}\), with
\(r_1>r_+\), the chart agrees with the stationary tangentially maximal
gauge for all sufficiently large \(t\).  At the same time, near the
horizon the chart has been modified so that its time slices penetrate the
future horizon and its horizon sections appear as coordinate spheres.

In Schwarzschild this construction can be carried out explicitly by
working in ingoing Eddington--Finkelstein coordinates and cutting off the
singular radial derivative of the Schwarzschild time function near the
horizon.  The resulting spacelike hypersurfaces penetrate the future
horizon, foliate it by MOTS cross-sections, and agree with the standard
Schwarzschild time slices outside the moving cutoff radius
\(r_{\mathrm{cut}}(t)\), with \(r_{\mathrm{cut}}(t)\downarrow 2m\).
We include this model construction in
Appendix~\ref{app:horizon-penetrating-schwarzschild-model}.  The same
idea applies to Kerr, starting from a horizon-penetrating Kerr coordinate
system and gluing back to the Boyer--Lindquist time function outside a
shrinking collar of the future horizon.  Thus the stationary Kerr and
Schwarzschild exteriors are compatible with the final-state hypotheses
after this harmless late-time gauge adjustment.
\end{remark}

\medskip

For later use, recall that the spacetime Hawking mass of a closed spacelike surface
$\Sigma\subset (\widehat{\mathcal M},\gtime)$ is
\[
m_H^{\mathrm{ST}}(\Sigma)
:=
\sqrt{\frac{|\Sigma|}{16\pi}}
\left(
1-\frac{1}{16\pi}\int_\Sigma |{\bf H}_\Sigma|^2\,d\mu_\Sigma
\right).
\]
\begin{lem}[Late-horizon continuity of the spacetime Hawking mass]
\label{lem:late-horizon-hawking}
Assume \textup{(Q1)} and \textup{(Q3)}. Then for every $\varepsilon>0$ there exist
\[
T^{\mathrm{hor}}(\varepsilon)\ge \underline T,
\qquad
\rho_\varepsilon>0
\]
with the following property. In the late-time coordinates from \textup{(Q1)}, let
\[
\Sigma_{t,u,v}:=
        \{(t+u(p),\, r_0+v(p),\,p):p\in S^2\}
\]
be any spacelike graph over $\mathcal S_t$ whose time and transverse graph functions
$u,v\in C^2(S^2)$ satisfy
\[
\|u\|_{C^2(S^2)}+\|v\|_{C^2(S^2)}<\rho_\varepsilon,
\qquad
v\ge0.
\]
Then, for all $t\ge T^{\mathrm{hor}}(\varepsilon)$,
\begin{equation}\label{eq:late-horizon-hawking}
\left|
m_H^{\mathrm{ST}}(\Sigma_{t,u,v})
-
\sqrt{\frac{A_\infty}{16\pi}}
\right|
<\varepsilon.
\end{equation}
\end{lem}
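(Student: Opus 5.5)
The plan is a continuity argument that treats the spacetime Hawking mass as a functional of the embedding in the horizon-regular late-time chart from (Q1). For a spacelike surface given by $p\mapsto(t+u(p),r_0+v(p),p)$, the area $|\Sigma_{t,u,v}|$ is an integral of a quantity depending only on the spacetime metric $\gtime$ and the first derivatives of $(u,v)$. The squared norm $|{\bf H}_\Sigma|^2$ depends on the metric and its first derivatives, together with up to second derivatives of $(u,v)$, all evaluated along the surface. The uniform $C^2$ control of the chart on a fixed collar $\{r_0\le r\le r_0+\rho_1\}$ for all late $t$, assumed in (Q1), means these coefficient quantities lie in a fixed compact set, with a uniform modulus of continuity, independently of $t\ge\underline T$. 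I would fix such a collar at the outset; the restriction $v\ge0$ keeps the perturbed surfaces inside the closed exterior region where the chart is defined.

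\emph{Step 1: comparison with the horizon section.} Write $\Sigma_{t,0,0}=\mathcal S_t$. The goal is to show that for every $\varepsilon'>0$ there is $\rho>0$, independent of $t$, such that $\|u\|_{C^2}+\|v\|_{C^2}<\rho$ implies
\[
\big||\Sigma_{t,u,v}|-|\mathcal S_t|\big|+\Big|\int_{\Sigma_{t,u,v}}|{\bf H}|^2-\int_{\mathcal S_t}|{\bf H}|^2\Big|<\varepsilon'.
\]
This follows by writing both integrals over $S^2$ in coordinates and applying the mean value theorem in the jet variables $(u,\partial u,\partial^2u,v,\partial v,\partial^2 v)$, using the uniform bounds on the metric and its first derivatives. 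One point needs care: the induced metric must remain uniformly positive definite for small $(u,v)$, so that the inverse metric entering ${\bf H}$ is uniformly controlled. This holds because $\mathcal S_t$ is a coordinate sphere that is spacelike with induced metric uniformly bounded below on late times; I take this as part of the uniform $C^2$ control in (Q1).

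\emph{Step 2: the horizon sections themselves.} Each $\mathcal S_t$ is a MOTS, so $\theta_{(\ell)}=0$. Decomposing ${\bf H}$ in the null frame normalized by $\gtime(\ell,\underline\ell)=-2$ gives ${\bf H}=-\tfrac12(\theta_{(\underline\ell)}\ell+\theta_{(\ell)}\underline\ell)$, hence
\[
|{\bf H}|^2=-\theta_{(\ell)}\theta_{(\underline\ell)}=0.
\]
Therefore $m_H^{\mathrm{ST}}(\mathcal S_t)=\sqrt{A(t)/16\pi}$. By (Q3) there is $T^{\mathrm{hor}}$ with $|\sqrt{A(t)/16\pi}-\sqrt{A_\infty/16\pi}|<\varepsilon/2$ for $t\ge T^{\mathrm{hor}}$.

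\emph{Step 3: combination.} The areas $A(t)$ are bounded above and away from $0$ for late $t$, since they converge to $A_\infty\in(0,\infty)$. The map $(a,w)\mapsto\sqrt{a/16\pi}\,(1-w/16\pi)$ is uniformly continuous on the compact set this determines, together with a neighbourhood of $w=0$. Choosing $\varepsilon'$ accordingly and taking $\rho_\varepsilon:=\min(\rho,\rho_1)$ gives \eqref{eq:late-horizon-hawking} by the triangle inequality.

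The main obstacle is Step 1's uniformity in $t$. Pointwise continuity is routine, but the whole argument depends on reading (Q1)'s "uniform $C^2$ control on fixed collars" as providing $t$-independent bounds, and a modulus of continuity, for $\gtime$ and its first derivatives on the collar, including across the horizon. If only $C^2$ control of the chart map is available, I would first establish the uniform $C^1$ metric bounds needed for ${\bf H}$, and only then run the mean value estimate.
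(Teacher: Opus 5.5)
Your proposal is correct and follows essentially the same route as the paper. Both arguments use the uniform collar control in (Q1) to get a $t$-independent modulus of continuity for $m_H^{\mathrm{ST}}$ of $C^2$-small graphs over $\mathcal S_t$, then the MOTS identity $|{\bf H}|^2=-\theta_{(\ell)}\theta_{(\underline\ell)}=0$ to get $m_H^{\mathrm{ST}}(\mathcal S_t)=\sqrt{A(t)/16\pi}$, and finally (Q3) with the triangle inequality. Your splitting of Step~1 into separate area and $\int|{\bf H}|^2$ estimates is just a more explicit version of the paper's single modulus $\omega$.
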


\begin{proof}
By the uniform $C^2$ control of collars as part of \textup{(Q1)}, the induced metrics, second fundamental forms, and
mean curvature vectors of $C^2$-small graph spheres over $\mathcal S_t$ depend continuously on the
graph functions, uniformly for all sufficiently late $t$. Therefore there exists a modulus of
continuity $\omega$, independent of $t$, such that for all sufficiently small spacelike graphs
\begin{equation}\label{eq:hawking-uniform-continuity}
\left|
m_H^{\mathrm{ST}}(\Sigma_{t,u,v})-m_H^{\mathrm{ST}}(\mathcal S_t)
\right|
\le
\omega\!\left(\|u\|_{C^2(S^2)}+\|v\|_{C^2(S^2)}\right).
\end{equation}

Each $\mathcal S_t$ is a MOTS, so its mean curvature vector is null. Hence
\[
|{\bf H}_{\mathcal S_t}|^2=0,
\]
and therefore
\begin{equation}\label{eq:hawking-on-mots}
m_H^{\mathrm{ST}}(\mathcal S_t)
=
\sqrt{\frac{A(t)}{16\pi}}.
\end{equation}
By \textup{(Q3)}, choose $T^{\mathrm{hor}}(\varepsilon)\ge\underline T$ so that
\[
\left|
\sqrt{\frac{A(t)}{16\pi}}
-
\sqrt{\frac{A_\infty}{16\pi}}
\right|
<
\frac{\varepsilon}{2}
\qquad\text{for all }t\ge T^{\mathrm{hor}}(\varepsilon).
\]
Then choose $\rho_\varepsilon>0$ so small that $\omega(\rho_\varepsilon)<\varepsilon/2$.
Combining \eqref{eq:hawking-uniform-continuity} and \eqref{eq:hawking-on-mots} gives
\eqref{eq:late-horizon-hawking}.
\end{proof}

\begin{prop}[Strong final state implies quasi final state]
\label{prop:strong-implies-weak}
If the strong final state hypothesis holds, then the quasi final state hypothesis holds.
\end{prop}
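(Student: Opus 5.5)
We verify \textup{(Q1)}--\textup{(Q3)} in turn. \textup{(Q1)} and \textup{(Q3)} are essentially built into Definition~\ref{def:strong-final-state}. The same late-time chart, with $\mathcal H_{\mathrm{final}}=\{r=r_0\}$ and uniform $C^2$ collar control, serves for \textup{(Q1)}. Area convergence $A(t)\to A_{\mathrm{Kerr}}\in(0,\infty)$ in \ref{it:strong-horizon} gives \textup{(Q3)} with $A_\infty=A_{\mathrm{Kerr}}$.

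For the coefficient class in \textup{(Q1)}, we note that the Kerr tuple in Boyer--Lindquist gauge, restricted to a tail $r\ge r_1>r_+$, satisfies all the bounds of Definitions~\ref{def:ctS-nonstat} and \ref{def:sharp-tail-data-main}. It is asymptotically flat of any order $\tau<1$ (indeed of order $1$). It is nondegenerate on fixed tails, since $N,\lambda,rH$ are bounded above and below for $r\ge r_1>r_+$. It is also $t$-independent. Weighted $C^5$ convergence in \ref{it:strong-tail} then transfers the $\ct^\sharp_{-\tau-k}$ bounds to $\mathcal S$ on each fixed tail for all sufficiently late $t$. Here $C^5$ is needed to cover up to four tangential and three mixed radial derivatives in $\ct^\sharp$. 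For a possible bounded initial time range, we use that the chart is $C^2$ with the stated regularity there, or simply enlarge $\underline T$.

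A point to handle carefully is that $\ct^\sharp$ asks for $k\le 3$ time derivatives, while \ref{it:strong-tail} only gives two. The plan is to read the weighted $C^5$ topology as a spacetime topology, which supplies the third $t$-derivative, and to note that the Kerr limit has all $t$-derivatives zero. We must also check that the nondegeneracy constants $\delta_*(r_1),\vartheta_*(r_1),c_*(r_1)$ are uniform in $t$. This follows from convergence for late $t$ together with compactness on the remaining time range, after possibly enlarging $\underline T$. The finiteness of the forcing tails in \eqref{eq:finite-forcing-tail} comes from the decay estimate for $\Xi$ derived below.

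For \textup{(Q2)}, the key observation is Proposition~\ref{prop:kerr}: in Boyer--Lindquist gauge, $\beta^\perp_{\mathrm{Kerr}}\equiv0$ because the shift is purely azimuthal and $b=0$, and $\tr_{S_r}K_{\mathrm{Kerr}}\equiv0$, so $\Xi_{\mathrm{Kerr}}\equiv0$. The quantities $\beta^\perp=\beta({\bf n}_t)$ and $\Xi=\frac{\lambda}{N}\frac{\tr K}{H}$ are smooth algebraic expressions in $\mathcal S$, $\partial_t\mathcal S$ and first spatial derivatives of $\mathcal S$, with denominators $\lambda,N,H$ bounded below on fixed tails. They are therefore locally Lipschitz maps in the weighted norms, and each vanishes at the Kerr tuple. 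Hence
\[
\sup_{t\ge T}\|\beta^\perp(t)\|_{\ct^\sharp_{-\tau}}\le C\sup_{t\ge T}\|\mathcal S(t)-\mathcal S_{\mathrm{Kerr}}\|_{C^5_{-\tau}}\to0,
\]
which gives $\mathfrak B^\sharp(T;r_1)\to0$. Similarly $|\Xi|,\ r|\slashed d\Xi|,\ r^2|\slashed\nabla^2\Xi|\le C\,r^{-1-\tau}\eta(T)$ on the tail $r\ge r_1$, with $\eta(T)\to0$. Here $\Xi$ carries one extra power of $r^{-1}$: $\tr K$ is a first derivative of the coefficients, of size $r^{-1-\tau}$, while $H\sim 2/r$, so $\Xi\sim r\cdot r^{-1-\tau}=r^{-\tau}$.

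The main obstacle is the radial integrability of the forcing tails. With the naive decay $|\Xi|\lesssim r^{-\tau}\eta(T)$ and $\tau<1$, the quantities $\mathfrak X_0\sim r^{1-\tau}$ and $\int\mathfrak X_0/r$ fail to be finite. We therefore need the weighted convergence in \ref{it:strong-tail} to control the difference $\mathcal S-\mathcal S_{\mathrm{Kerr}}$ with the additional decay implicit in the statement's ``standard asymptotically flat weights''. Concretely, we would interpret those weights so that $r\Xi\in\cz_{-\tau}$ uniformly, with norm tending to zero, which gives $\Psi_{r_1}(T)\to0$, i.e.\ \ref{it:A3}. Given that, $|\Xi|\le r^{-1-\tau}\Psi$ yields $G_{r_1}(T)\le\tau^{-1}r_1^{-\tau}\Psi_{r_1}(T)\to0$, and the corresponding tangential-derivative weights give $\Phi_{1,r_1}(T)\to0$, i.e.\ \ref{it:A1}. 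If such decay were unavailable, the fallback is the Corvino--Schoen reduction of Remark~\ref{rem:finite-forcing-tail-assumption}: it makes $\Xi$ compactly supported in $r$ on far tails, after which convergence on compact radial ranges suffices for all of \ref{it:A1}--\ref{it:A3}.
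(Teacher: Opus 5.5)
Your outline matches the paper's proof. \textup{(Q1)} and \textup{(Q3)} are read off from Definition~\ref{def:strong-final-state}. \textup{(Q2)} comes from $b\equiv0$, $\beta_r\equiv0$, $\beta^\perp\equiv0$ and $\Xi\equiv0$ in Boyer--Lindquist Kerr (Proposition~\ref{prop:kerr}), combined with the tail convergence in \textup{(S1)}. Your argument for $\mathfrak B^\sharp(T;r_1)\to0$ is fine as stated: $\beta^\perp=\lambda^{-1}(\beta_r-\langle b,\beta^T\rangle_\gamma)$ is algebraic in the coefficients and vanishes at Kerr, so order-$\tau$ weights suffice. The paper's proof is no more detailed than yours. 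It asserts that the convergence makes $G_{r_1},\Phi_{1,r_1},\Phi_{2,r_1},\Omega_{r_1},\Psi_{r_1}$ tend to zero ``with the weights required'', and says nothing about either point you flag. The first is the $k=3$ time derivative in ${\ctS}^\sharp_{-\tau}$, which one can only obtain by reading ``weighted $C^5$'' as a spacetime topology despite the ``first two $t$-derivatives'' clause. The second is radial integrability of the forcing.

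The integrability point is a genuine gap, and your proposal identifies it without closing it. Under the literal order-$\tau$ weights of \textup{(S1)}, $\tr_\gamma\partial_t\gamma=o(1)\,r^{-1-\tau}$ and $H\sim 2/r$. So in the given chart one only gets $\Xi=o(1)\,r^{-\tau}$. Then $G_{r_1}$, $\Phi_{1,r_1}$, $\Omega_{r_1}$ and $\Psi_{r_1}$ need not even be finite, and condition (iii) of Definition~\ref{def:ctS-nonstat}, which is part of \textup{(Q1)}, can fail. Your display $|\Xi|\le C r^{-1-\tau}\eta(T)$ contradicts the computation you give right after it. Reading the weights so that $r\Xi\in\cz_{-\tau}$ with norm tending to zero is an extra hypothesis, not a consequence of \textup{(S1)}. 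Moreover, $\Psi_{r_1}$ controls no derivatives, so \textup{(A1)} needs a further assumption of the same kind: integrability in $r$ of $\sup_t|\slashed D\Xi|_{\gamma_{S^2}}$. The honest repair is to build the required $r^{-1-\epsilon}$ decay of the forcing, or of the relevant components of $\mathcal S-\mathcal S_{\mathrm{Kerr}}$, into \textup{(S1)}.

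The Corvino--Schoen fallback does not help. Gluing replaces the spacetime by a different one, so it cannot show that the given spacetime satisfies the quasi final state hypothesis. The paper uses it (Remark~\ref{rem:finite-forcing-tail-assumption}) only to argue that finite forcing tails are harmless for the Penrose inequality itself.

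A smaller point of the same kind concerns the nondegeneracy constants. Compactness of the early time range gives upper bounds, but not the lower bound $rH_{t,r}\ge c_*(r_1)$ required for all $t>\underline T$. And $\underline T$ cannot be enlarged separately for each tail $r_1$, since the quasi final state hypothesis fixes it once.
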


\begin{proof}
Condition \textup{(Q1)} is part of the strong final state hypothesis: the late-time coordinate
system is $C^2$ up to $\mathcal H_{\mathrm{final}}$, has uniform $C^2$ control on fixed collars of
the horizon for all sufficiently late $t$, and the metric takes the required ADM form on the
exterior. The weighted coefficient bounds in ${\ctS}^\sharp_{-\tau}$ on the exterior follow from
\textup{(S1)}.

We next verify \textup{(Q2)}. In Boyer--Lindquist Kerr one has
\[
b\equiv0,
\qquad
\beta_r\equiv0,
\qquad
\beta^\perp\equiv0,
\]
and the constant-$t$ slices are tangentially maximal by Proposition~\ref{prop:kerr}. Thus
\[
\Xi_{\mathrm{Kerr}}\equiv0.
\]
The convergence in \textup{(S1)}, on every fixed exterior tail $r\geq r_1$, therefore implies that the
normal-shift tails $\mathfrak B^\sharp(T;r_1)$ tend to zero and that the forcing quantities appearing in
\[
        G_{r_1},
        \qquad
        \Phi_{1,r_1},
        \qquad
        \Phi_{2,r_1},
        \qquad
        \Omega_{r_1},
        \qquad
        \Psi_{r_1}
\]
tend to zero with the weights required in Definition~\ref{def:tail-assumptions}. Hence the
late-time gauge reducibility and the forcing decay assumptions \textup{\ref{it:A1}},
\textup{\ref{it:A2}}, and \textup{\ref{it:A3}} hold.

Finally, \textup{(Q3)} is exactly the area convergence in \textup{(S2)}.

\end{proof}

\begin{remark}
The quasi final state hypothesis is strictly weaker than the strong final state hypothesis: it asks
only for late-time gauge reducibility, the forcing decay \textup{\ref{it:A1}} and either
\textup{\ref{it:A2}} or \textup{\ref{it:A3}}, and stabilization of the horizon area. It does not
require convergence of the entire late-time exterior geometry to Kerr. The same implication
argument works with Kerr replaced by any stationary black-hole exterior admitting a horizon-regular
late-time chart which approaches a stationary gauge with $\beta^\perp=0$ and $\Xi=0$ on every fixed
exterior tail; in particular this includes static exteriors such as Schwarzschild and
Reissner--Nordstr\"om
\cite{AshtekarBeetleFairhurst99,AshtekarKrishnan04,GiorgiStableBH,KlainermanBHStability}.
\end{remark}

\medskip

The following describes a criterion for a given spacetime to satisfy the quasi final state
hypothesis.

\begin{prop}[Stability criterion for the quasi final state hypothesis]
\label{prop:stability-implies-QFS}
Let \(\mathcal S_*\) be the coefficient tuple of a stationary black-hole exterior on
\((r_0,\infty)\times S^2\), written in a stationary exterior gauge satisfying
\[
        \beta_*^\perp\equiv0,
        \qquad
        \Xi_*\equiv0 .
\]
Let \((\widehat{\mathcal M},\gtime)\) be a globally hyperbolic asymptotically flat spacetime
satisfying the dominant energy condition, with a smooth last horizon piece
\(\mathcal H_{\mathrm{final}}\). Assume that the future exterior of
\(\mathcal H_{\mathrm{final}}\) is equipped with a late-time exterior chart satisfying the
geometric part of \textup{(Q1)}, and denote its ADM coefficient tuple on \(r>r_0\) by
\[
        \mathcal S=(N,\lambda,\beta,b,\gamma).
\]
Assume further that:
\begin{enumerate}[label=\textup{(\roman*)}]
\item for every fixed tail \(r\ge r_1>r_0\),
\[
        \mathcal S\longrightarrow \mathcal S_*
        \qquad\text{as }t\to\infty
\]
in the \({\ctS}^\sharp_{-\tau}\)-topology on that tail; equivalently,
\[
        \lim_{T\to\infty}
\|\mathcal S-\mathcal S_*\|_{{\ctS}^\sharp_{-\tau}((T,\infty)\times M_{r_1,\infty})}
=
0;
\]

\item the horizon areas converge:
\[
        A(t)\to A_\infty\in(0,\infty).
\]
\end{enumerate}
Then \((\widehat{\mathcal M},\gtime)\) satisfies the quasi final state hypothesis.
\end{prop}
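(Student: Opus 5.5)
The plan is to verify (Q1), (Q2), (Q3) in turn. (Q3) is hypothesis (ii) verbatim. For (Q1), the geometric part is assumed. It remains to show $\mathcal S\in{\ctS}^\sharp_{-\tau}(\mathcal M)$, that is, finiteness of $\mathcal N^\sharp_\tau(r_1)$ and of the tail quantities in Definition~\ref{def:ctS-nonstat} on each fixed tail. By hypothesis (i), choose $T_1$ with $\|\mathcal S-\mathcal S_*\|_{{\ctS}^\sharp_{-\tau}((T_1,\infty)\times M_{r_1,\infty})}\le 1$. The stationary tuple $\mathcal S_*$ lies in the class on every fixed tail, being an asymptotically flat black-hole exterior. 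The triangle inequality then gives finiteness for $t\ge T_1$. On the compact time range $(\underline T,T_1]$ we use the regularity and asymptotic flatness of the given chart; if necessary, $\underline T$ is enlarged, which is harmless since only late times matter. The nondegeneracy constants $\delta_*(r_1),\vartheta_*(r_1),c_*(r_1)$ pass to $\mathcal S$ for late $t$ because $\lambda,N,rH_{t,r}$ converge uniformly to their positive starred counterparts. This convergence holds since $rH_{t,r}$ depends continuously, with the weights, on $(\lambda,b,\gamma)$ and their first radial and angular derivatives.

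For (Q2), first treat gauge reducibility. Since $\beta^\perp=\lambda^{-1}(\beta_r-\langle b,\beta^T\rangle_\gamma)$ is a smooth algebraic expression in the coefficients, it is locally Lipschitz in the ${\ct}^\sharp_{-\tau}$ topology on a fixed tail. This uses that the weighted spaces are algebras up to weights, and that $\lambda$ is bounded below. Together with $\beta_*^\perp=0$, this gives $\mathfrak B^\sharp(T;r_1)\le C\|\mathcal S-\mathcal S_*\|_{(T,\infty)\times M_{r_1,\infty}}\to0$.

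Next treat the forcing. We write $\Xi=\frac{\lambda}{N}\frac{\tr_{S_{t,r}}K_t}{H_{t,r}}$. Using the formula for $\tr_{S_{t,r}}K_t$ from Lemma~\ref{lem:basic-geom} with $f$ constant, $\Xi$ becomes a smooth function of the coefficients and their first derivatives, including $\partial_t\gamma$. It involves division by $N$ and $H_{t,r}$, which are bounded below on the tail. Hence $r\Xi-r\Xi_*=r\Xi$ is controlled in $\cz_{-\tau}$, with its first two angular derivatives controlled in the corresponding weighted $C^0$ norms, by $C\|\mathcal S-\mathcal S_*\|$. Here the sharp space supplies the extra angular regularity needed for $\slashed\nabla^2\Xi$. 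Consequently $\Psi_{r_1}(T)\to0$, which is (A3), and $\mathfrak X_j(r,T)\le C r^{-\tau}\epsilon(T)$ for $j=0,1,2$ with $\epsilon(T)\to0$. Since $\tau>0$, the weights $r^{-1-\tau}$ and $r^{-\tau}$ integrate as required, and $G_{r_1},\Phi_{1,r_1}\to0$, which is (A1).

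For (A2), $\Omega_{r_1}\to0$ is immediate. However, $\Phi_{2,r_1}=\int_{r_1}^\infty \mathfrak X_2\,d\sigma$ needs decay faster than $r^{-1}$, so for $\tau<1$ the same estimate does not directly integrate. This is why we aim to establish (A3), which suffices since the hypothesis asks for (A1) together with either (A2) or (A3).

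The main obstacle is the weighted bookkeeping in this last step. We must check that the map $\mathcal S\mapsto r\Xi$ loses at most one derivative and gains the right power of $r$. This holds because $\partial_r\gamma\sim 2\gamma/r$ gives $H\sim 2/r$, while $\tr K$ is built from $\partial_t\gamma$ and $\slashed\nabla\beta$, which decay like $r^{-1-\tau}$ relative to $\gamma$. We must also check that the $\cz$ parabolic Hölder seminorm of $r\Xi$ is controlled by the $\ct^\sharp$ norms of the coefficient differences, including the time derivatives encoded in $\mathcal N^\sharp_\tau$. We would handle this by a first-order Taylor expansion around $\mathcal S_*$ combined with the fact that $\Xi_*\equiv0$.
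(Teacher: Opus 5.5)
Your skeleton matches the paper's proof. (Q3) is hypothesis (ii), (Q1) follows from the convergence after enlarging $\underline T$, and (Q2) comes from $\beta_*^\perp\equiv0$ and $\Xi_*\equiv0$. Your Lipschitz argument for $\mathfrak B^\sharp(T;r_1)\to0$ is fine.

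The paper's proof only asserts that the convergence ``also gives the decay of the forcing quantities.'' Your attempt to justify that step is where the proposal fails.

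\textbf{The weighted estimate for $r\Xi$ is off by a full power of $r$.} Set $\epsilon:=\|\mathcal S-\mathcal S_*\|_{{\ctS}^\sharp_{-\tau}((T,\infty)\times M_{r_1,\infty})}$. Then $\beta^\perp$ is $\epsilon$-small in ${\ct}^\sharp_{-\tau}$ and $\partial_t(r^{-2}\gamma)$ is $\epsilon$-small in ${\ct}^\sharp_{-\tau-1}$. So the terms $H_{t,r}\beta^\perp$ and $\tfrac12\tr_\gamma\partial_t\gamma$ in $N\,\tr_{S_{t,r}}K_t$ are only $O(\epsilon r^{-1-\tau})$. Dividing by $H_{t,r}\sim 2/r$ gives $|\Xi|\le C\epsilon r^{-\tau}$. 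This is exactly what your own heuristic in the last paragraph produces, and it is what the paper itself uses ($|\Xi|=\mathcos(r^{-\tau})$ in the proof of Proposition~\ref{prop:grad-f}).

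Consequently $\|r\Xi\|_{\cz_{-\tau}}=\|r^{1+\tau}\Xi\|_{\cz}$ is not bounded at all. The true tail sizes are $\mathfrak X_0\lesssim\epsilon r^{1-\tau}$ and $\mathfrak X_1,\mathfrak X_2\lesssim\epsilon r^{-\tau}$. With these, none of $G_{r_1},\Phi_{1,r_1},\Phi_{2,r_1},\Omega_{r_1},\Psi_{r_1}$ is even finite, let alone small.

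\textbf{The (A1) step is inconsistent on its own terms.} With your claimed bound $\mathfrak X_1\lesssim\epsilon r^{-\tau}$, the integral $\Phi_{1,r_1}=\int_{r_1}^\infty\mathfrak X_1\,d\sigma$ diverges for $\tau<1$. That is precisely the objection you raise against $\Phi_{2,r_1}$.

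\textbf{The same issue affects your (Q1) paragraph.} The triangle inequality controls $\mathcal N^\sharp_\tau$. However, the finite forcing-tail condition (iii) of Definition~\ref{def:ctS-nonstat} is not a norm quantity and is not continuous in this topology. The paper's remark on finite forcing tails obtains it by gluing, not from asymptotic flatness.

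\textbf{What can be salvaged.} Add (iii) for $\mathcal S$ at $T=\underline T$ as an explicit hypothesis on the given chart; it does not follow from (i). Then for each fixed $\sigma\ge r_1$, the convergence (i) gives $\sup_{t\ge T}$ of $\Xi$ and its first two angular derivatives on $\{\sigma\}\times S^2$ tending to $0$. This uses the $\sharp$-regularity of $\partial_r\gamma$, $b$, $\beta$ and $\partial_t\gamma$. Hence $\mathfrak X_j(\sigma,T)\to0$. Since $\mathfrak X_j(\sigma,T)\le\mathfrak X_j(\sigma,\underline T)$, which is integrable (with weight $\sigma^{-1}$ for $j=0$), dominated convergence gives $G_{r_1},\Phi_{1,r_1},\Phi_{2,r_1}\to0$. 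That yields (A1) and the $\Phi_2$-half of (A2).

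The remaining quantities $\Omega_{r_1}$ and $\Psi_{r_1}$ are suprema in $r$ with weights $r$ and $r^{1+\tau}$. They need uniform-in-late-$t$ decay of $\Xi$ faster than $r^{-1}$, which (i) does not give. For example, $|\Xi|\le Ar^{-1-\tau}$ for all late $t$, combined with the local convergence, gives $\Omega_{r_1}\to0$. So to close (Q2) you must add such a decay assumption, or use a topology that includes the forcing functionals.
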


\begin{proof}
The geometric part of \textup{(Q1)} holds by assumption. The convergence in
\textup{(i)} gives the required \({\ctS}^\sharp_{-\tau}\)-control on every fixed exterior tail,
after increasing the initial late time if necessary. Hence \textup{(Q1)} holds.

We next verify \textup{(Q2)}. Since the stationary exterior model satisfies
\[
        \beta_*^\perp\equiv0,
        \qquad
        \Xi_*\equiv0,
\]
and since the convergence in \textup{(i)} is in the strengthened tail topology, the normal-shift
tail satisfies
\[
        \mathfrak B^\sharp(T;r_1)\to0
        \qquad\text{as }T\to\infty
\]
for every \(r_1>r_0\). The same convergence also gives the decay of the forcing quantities
appearing in Definition~\ref{def:tail-assumptions}. Thus the late-time gauge reducibility
condition and the assumptions \textup{\ref{it:A1}} and either \textup{\ref{it:A2}} or
\textup{\ref{it:A3}} hold. Therefore \textup{(Q2)} holds.

Finally, assumption \textup{(ii)} is precisely the area stabilization condition \textup{(Q3)}.
Thus all parts of the quasi final state hypothesis are satisfied.
\end{proof}

\begin{remark}[Stability and examples]
\label{rem:stability-examples}
Proposition~\ref{prop:stability-implies-QFS} is meant as an interface with nonlinear black-hole
stability results.  The condition that \(\mathcal S\to\mathcal S_*\) in the strengthened weighted
tail topology is a quantitative version of the statement that the exterior settles to a stationary
black-hole exterior, but only in the geometric quantities and norms needed for the tangentially
maximal comparison argument.

\begin{itemize}
\item \emph{Stationary backgrounds.}
Exact stationary exteriors satisfying
\[
        \beta^\perp\equiv0,
        \qquad
        \Xi\equiv0
\]
satisfy the quasi final state hypothesis after the horizon-penetrating
gauge adjustment described in
Remark~\ref{rem:horizon-penetrating-final-state-gauge}.  This applies to
Schwarzschild by the explicit construction in
Appendix~\ref{app:horizon-penetrating-schwarzschild-model}, and to Kerr on
fixed exterior tails by Proposition~\ref{prop:kerr}, where the stationary
exterior gauge is the usual Boyer--Lindquist gauge.  The same observation
applies to Reissner--Nordstr\"om in the Einstein--Maxwell setting, where
the stress-energy tensor satisfies the dominant energy condition.

\item \emph{Perturbations of Schwarzschild.}
The nonlinear exterior stability theorem of Dafermos--Holzegel--Rodnianski--Taylor
\cite{DafermosHolzegelRodnianskiTaylorSchwarzschildNonlinear}, building on the linear stability
theory of Dafermos--Holzegel--Rodnianski
\cite{DafermosHolzegelRodnianskiSchwarzschildLinear}, gives asymptotic convergence to a
Schwarzschild exterior for the corresponding finite-codimension Schwarzschild stability class.
For such perturbations, Proposition~\ref{prop:stability-implies-QFS} applies once the stability
estimates are expressed in a horizon-compatible gauge controlling the above
\({\ctS}^\sharp_{-\tau}\)-tail topology, and once the final-horizon regularity and area-stabilization
assumptions are verified.

\item \emph{Perturbations of slowly rotating Kerr.}
The nonlinear stability theory for small-angular-momentum Kerr, developed by
Klainerman--Szeftel and Giorgi--Klainerman--Szeftel, together with the associated GCM
hypersurface constructions, provides the natural source of the analogous convergence to a Kerr
exterior; see
\cite{KlainermanSzeftelKerrStabilitySmallA,GiorgiKlainermanSzeftelKerrWaveEstimates,ShenKerrStability}
and the surveys \cite{GiorgiStableBH,KlainermanBHStability}.  As above, the criterion applies
provided the convergence estimates are translated into the horizon-compatible gauge and weighted
tail norms used here, and provided the final MOTT regularity and horizon-area stabilization
assumptions hold.
\end{itemize}

Thus the quasi final state hypothesis is designed to capture precisely the portion of late-time
black-hole stability needed for the Penrose argument, rather than to require full convergence of
every geometric quantity in the spacetime.
\end{remark}

\subsection{Statement and proof of the spacetime Penrose inequality}

We now prove the spacetime Penrose inequality under the quasi final state hypothesis.

\begin{thm}[Spacetime Penrose inequality]
\label{thm:SPI}
Let \((\widehat{\mathcal M},\gtime)\) be a globally hyperbolic asymptotically flat spacetime
satisfying Einstein's equations and the spacetime dominant energy condition. Let
\[
        t:\widehat{\mathcal M}\to\R,
        \qquad
        \Sigma_t:=\{t=\mathrm{const}\},
        \qquad
        \mathcal H_{\mathrm{app}}=\bigcup_{t\ge0}\mathcal S_t
\]
be as defined in the beginning of Section~\ref{sub:spacetime-final-state}.

Let \((M_\ast,g_\ast,K_\ast)\hookrightarrow(\widehat{\mathcal M},\gtime)\) be an
asymptotically flat initial data set. Assume that its boundary $S_\ast:=\partial M_\ast$
is a MOTS and a smooth cross-section of \(\mathcal H_{\mathrm{app}}\).

Assume that the portion of \(\mathcal H_{\mathrm{app}}\) to the future of \(S_\ast\) is piecewise
smooth, has only finitely many jump times, and has a last smooth piece
\(\mathcal H_{\mathrm{final}}\) satisfying the quasi final state hypothesis. Assume furthermore that
the outermost-horizon area does not decrease across any jump. Then
\begin{equation}\label{eq:SPI-final}
        m_{ADM}(M_\ast,g_\ast,K_\ast)
        \ge
        \sqrt{\frac{|S_\ast|}{16\pi}}.
\end{equation}
\end{thm}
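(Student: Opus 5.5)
The plan follows the two-step sketch from the introduction, made precise with the results above. First I record the horizon area comparison $A_\infty\ge|S_\ast|$. Along the portion of $\mathcal H_{\mathrm{app}}$ to the future of $S_\ast$, each smooth piece is foliated by future black-hole-type outermost MOTSs. Hence Proposition~\ref{prop:mott-area-law} gives that $A(t)$ is nondecreasing on each smooth piece. At the finitely many jump times the area does not decrease, by hypothesis. Chaining these finitely many intervals yields $A(t)\ge|S_\ast|$ for all late $t$, and (Q3) then gives $A_\infty\ge|S_\ast|$. It therefore suffices to prove $m_{ADM}\ge\sqrt{A_\infty/16\pi}$. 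Here $m_{ADM}$ is the invariant mass of the common asymptotic end. It is the same for $M_\ast$ and for every asymptotically flat spacelike hypersurface sharing the end, which I will justify by invariance of the ADM four-momentum under change of asymptotically flat slice together with the rest-frame choice.

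Next I construct the comparison tails. Fix $r_j\downarrow r_0$. For each $j$, (Q1)--(Q2) place us exactly in the hypotheses of Theorem~\ref{thm:global-existence}(iii). So for every $T_0\ge T_{\mathrm{large}}(r_j)$ there is an admissible TMCF graph $f_j$ on $M_{r_j,\infty}$, with $f_j\to T_0$ at infinity. The graph $M_{f_j}$ is spacelike and tangentially maximal, and by Proposition~\ref{prop:R-nonneg-tm} it has $R_{g_{f_j}}\ge0$. To apply Corollary~\ref{cor:HI-comparison}, I need $M_{f_j}$ to be asymptotically flat with the same ADM mass, and its boundary $\Sigma_j=S_{f_j,r_j}$ to be connected and outward minimizing. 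Asymptotic flatness and mass agreement should follow from the weighted decay $f_j-T_0\in\ct_{-\tau}$ from the Schauder estimate, since $\tau>1/2$ is the right order. Connectedness is automatic, because $\Sigma_j$ is a sphere. Outward minimizing is the delicate point. I would try to deduce it from the TMCF foliation $\{S_{f_j,r}\}_{r\ge r_j}$, whose leaves all have $H_{f,r}>0$ by admissibility. A foliation by mean-convex leaves of a region diffeomorphic to $[r_j,\infty)\times S^2$ gives a calibration-type argument, via the divergence theorem applied to ${\bf n}_f$ with $\mathrm{div}\,{\bf n}_f=H\ge0$, showing that each leaf is outward minimizing. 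If this is not enough, I would instead replace $\Sigma_j$ by its strictly minimizing hull as in Huisken--Ilmanen, which only increases the Hawking mass bound. The outcome is $m_{ADM}\ge m_H^{\mathrm{Riem}}(\Sigma_j)=m_H^{\mathrm{ST}}(\Sigma_j)$.

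Finally I pass to the horizon, which I expect to be the main obstacle because the order of limits matters. Given $\varepsilon>0$, Lemma~\ref{lem:late-horizon-hawking} supplies $\rho_\varepsilon$ and $T^{\mathrm{hor}}(\varepsilon)$. I first pick $j$ with $r_j-r_0<\rho_\varepsilon/3$. With $j$ fixed, the estimates \eqref{eq:global-C0-est}--\eqref{eq:global-Hess-est} on the tail $r\ge r_j$ bound $\|f_j-T_0\|_{C^0}$, $|\slashed d f_j|$ and $|\slashed\nabla^2 f_j|$ by slab forcing quantities. These tend to zero as $T_0\to\infty$ by Proposition~\ref{prop:late-good-gauge-small} and (A1)--(A2)/(A3), though possibly with $r_j$-dependent constants, and this is why $j$ is chosen first. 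One must translate back from the good-gauge chart to the original (Q1) chart. The gauge change of Appendix~\ref{app:late-slab-gauge} is $C^2$-close to the identity, with error controlled by $\mathfrak B^\sharp$, which also decays. As a result, $\Sigma_j$ becomes the graph $\Sigma_{T_0,u,v}$ with $v\equiv r_j-r_0$ (up to small gauge corrections) and $\|u\|_{C^2}$ small. Taking $T_0\ge T^{\mathrm{hor}}(\varepsilon)$ then gives $m_H^{\mathrm{ST}}(\Sigma_j)>\sqrt{A_\infty/16\pi}-\varepsilon$. Hence $m_{ADM}>\sqrt{A_\infty/16\pi}-\varepsilon$ for every $\varepsilon>0$, and combining with the first step yields \eqref{eq:SPI-final}.
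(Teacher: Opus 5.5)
Your outline follows the paper's proof almost step for step. The shared steps are: the area law on smooth pieces together with the no-loss-at-jumps hypothesis; identification of $m_{ADM}$ through invariance of the ADM four-momentum and the rest-frame choice; admissible tails from Theorem~\ref{thm:global-existence}(iii) with $R_{g_f}\ge0$ by Proposition~\ref{prop:R-nonneg-tm}; the mean-convex calibration argument, which is exactly Lemma~\ref{lem:outerminimizing}; Huisken--Ilmanen; and the order of limits in which $r_j$ is fixed before the late time is pushed out, with the return to the (Q1) chart handled by \eqref{eq:app-r1-C2-close}--\eqref{eq:app-r1-time-derivative-bounds} and Lemma~\ref{lem:late-horizon-hawking}.

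The one genuine omission is at the start of the area comparison. Proposition~\ref{prop:mott-area-law} controls only the areas $A(t)=|\mathcal S_t|$ of the preferred leaves $\mathcal S_t=\mathcal H_{\mathrm{app}}\cap\Sigma_t$. The theorem, however, only assumes that $S_\ast$ is a MOTS cross-section of $\mathcal H_{\mathrm{app}}$. Since $M_\ast$ is an arbitrary asymptotically flat hypersurface rather than one of the $\Sigma_t$, $S_\ast$ need not coincide with any $\mathcal S_t$; it may cut obliquely across the leaves. Your ``chaining'' therefore does not yet yield $A(t)\ge|S_\ast|$: you first need a preferred leaf $\mathcal S_{t_\ast}$ with $|\mathcal S_{t_\ast}|=|S_\ast|$. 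The paper supplies this in two ways. On spacelike pieces it uses uniqueness of the MOTS foliation of a dynamical horizon (Ashtekar--Galloway), so a MOTS cross-section must be a leaf. On null (isolated) pieces it uses that all cross-sections have the same area. Once you insert that step, the rest of your argument goes through as in the paper.

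Separately, your fallback of replacing $\Sigma_j$ by its strictly minimizing hull is not justified as stated. The hull can have strictly smaller area, so its Hawking mass need not exceed that of $\Sigma_j$, and you would then have to control the hull rather than $\Sigma_j$ in the horizon limit. The fallback is unnecessary, though: the TMCF leaves with $H_{f,r}>0$ foliate all of $M_{f_j}$ outside $\Sigma_j$, so the calibration already settles outward minimization.
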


\begin{remark}[Disconnected outermost MOTSs]
\label{rem:disconnected-outermost-mots}
We do not assume that the outermost MOTS \(\mathcal S_t\subset\Sigma_t\) is connected. This is
natural in multi--black-hole initial data. In general, the outermost MOTS should be viewed as the
outer boundary of the weakly outer trapped region, and this boundary may have several connected
components. The existence and regularity theory for outermost MOTSs is formulated in this
generality; in particular, the outer boundary of the weakly outer trapped region is a smooth
outermost MOTS, and the corresponding area estimates do not require connectedness
\cite{AnderssonMetzger09}. Similarly, under evolution, different components may jump or merge; this
behavior is part of the standard picture for evolving outermost MOTSs
\cite{AnderssonMarsMetzgerSimon09}.

Accordingly, throughout the paper \(A(t)=|\mathcal S_t|\) denotes the total area of
\(\mathcal S_t\), i.e. the sum of the areas of its connected components. On a smooth piece of
\(\mathcal H_{\mathrm{app}}\), the area law is applied componentwise and then summed. Connectedness
is imposed only where it is explicitly required, for instance on the last smooth piece of the
horizon due to the topological restriction
\[
        \mathcal H_{\mathrm{final}}
        =
        (\underline T,\infty)\times\{r_0\}\times S^2
\]
in the quasi final state hypothesis.
\end{remark}

\begin{remark}[Bondi-mass version]
\label{rem:bondi-mass-version}
The conclusion of Theorem~\ref{thm:SPI} can be strengthened, under the additional assumption that
the asymptotic end admits a future Bondi--Sachs structure near \(\mathscr I^+\), to an inequality
with the final Bondi mass in place of the ADM mass. More precisely, for a retarded time \(u\), if
\[
        m_B^+:=\lim_{u\to+\infty}m_B(u)
\]
exists, then Appendix~\ref{app:bondi-annuli} proves
\[
        m_B^+
        \ge
        \sqrt{\frac{|S_\ast|}{16\pi}}.
\]
The argument in Appendix~\ref{app:bondi-annuli} is a localized version of the proof below: instead
of passing to complete TMCF tails and comparing with ADM mass at spatial infinity, one solves the
TMCF equation on finite annuli whose outer boundaries are large spheres with retarded times \(u\)
tending to \(+\infty\). The outer Hawking masses then converge to the final Bondi mass, while a
Shi--Tam/Brown--York extension argument and the Huisken--Ilmanen comparison replace the
complete-tail ADM comparison. Thus our method does not assume the absence of Bondi mass loss;
rather, it allows
\[
        m_{ADM}(M_\ast,g_\ast,K_\ast)>m_B^+
\]
and still proves the Penrose bound with the smaller final Bondi mass.
\end{remark}

\medskip

The rest of this subsection is devoted to the proof of Theorem~\ref{thm:SPI}. We first record the
comparison facts for the tangentially maximal tails produced by
Theorem~\ref{thm:global-existence}. Throughout this subsection, if \(f\) is one of these tails, then
\(M_f\) denotes its graph hypersurface, \((g_f,K_f)\) denotes the induced initial data, and
\(\Sigma_{r_1}^{(T_0)}\) denotes the boundary leaf of the tail over the fixed radius \(r=r_1\) in
the auxiliary tail coordinates used to construct it.

For a closed surface \(\Sigma\subset(M_f,g_f)\), let
\[
m_H^{\mathrm{Riem}}(\Sigma)
:=
\sqrt{\frac{|\Sigma|}{16\pi}}
\left(
1-\frac{1}{16\pi}\int_\Sigma H_\Sigma^2\,d\mu_\Sigma
\right)
\]
be its Riemannian Hawking mass. Since the leaves of a tangentially maximal tail satisfy
\[
        \tr_\Sigma K_f=0,
\]
one has
\begin{equation}\label{eq:hawking-masses-agree}
        m_H^{\mathrm{Riem}}(\Sigma)
        =
        m_H^{\mathrm{ST}}(\Sigma)
\end{equation}
for every leaf \(\Sigma\) of the TMCF foliation.

\begin{lem}[ADM compatibility of tangentially maximal tails]
\label{lem:ADM-compat}
Let \(f=f^{(r_1,T_0)}\) be an admissible tangentially maximal tail furnished by
Theorem~\ref{thm:global-existence}\textup{(iii)}. Then the asymptotically Euclidean end of
\((M_f,g_f)\) has ADM mass equal to \(m_{ADM}\) in the chosen rest frame.
\end{lem}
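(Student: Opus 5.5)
The plan is to show that the graph $M_f=\{t=f(x)\}$ is asymptotically Euclidean with the same asymptotic end as the rest-frame slices $\Sigma_t$. Then the ADM energy--momentum of $(M_f,g_f,K_f)$ equals the invariant ADM four-momentum of the spacetime end. The rest-frame choice then identifies its energy with $m_{ADM}$.

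\emph{Step 1: decay of $f-T_0$ at infinity.} By Theorem~\ref{thm:global-existence}\textup{(iii)}, $f$ satisfies \eqref{eq:global-C0-est}--\eqref{eq:global-fr-est}. The tail form of the maximum-principle estimate from Proposition~\ref{prop:C0-f} gives
\[
|f-T_0|(r,p)\le\int_r^\infty\sup_{t\in I(T_0)}|\Xi(t,\sigma,\cdot)|\,d\sigma .
\]
Under the $\ctS_{-\tau}$ bounds, $\Xi=O(r^{-1-\tau})$, so $f-T_0=O(r^{-\tau})$. Combining this with the gradient, Hessian and radial-slope bounds (or, in the \ref{it:A3} regime, with the Schauder bound \eqref{eq:thm-schauder}) gives $f-T_0\in\ct_{-\tau}$. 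Hence $|df|_{g_t}=O(r^{-1-\tau})$ and $|\nabla^2 f|=O(r^{-2-\tau})$.

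\emph{Step 2: asymptotics of $(g_f,K_f)$.} Use \eqref{eq:g_f-def}, $g_f=g_t|_{t=f}+2\beta\odot df-\alpha\,df^2$. The correction terms are $O(r^{-2\tau-1})$, with one more power of decay per derivative. Also $g_t|_{t=f}-g_{T_0}=O(|f-T_0|\,\partial_t g)=O(r^{-2\tau-1})$ by the $\partial_t$-weights in $\mathcal N_\tau$. Thus $g_f-g_{T_0}=O_2(r^{-1-2\tau})$, which is better than $O(r^{-1})$ because $\tau>1/2$. For $K_f$, the ADM formula \eqref{eq:ADM-K-formula} in the $(\tau,r,p)$ chart expresses $K_f-K_{T_0}$ through $\nabla^2 f$, products of $df$ with first derivatives of the coefficients, and time differences. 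All of these are $O_1(r^{-2-\tau})$, which is again faster than $r^{-2}$. I will pass to the good-gauge chart of Proposition~\ref{prop:late-good-gauge-small} and back. The chart change is time-preserving and tends to the identity at infinity at the asymptotically flat rate, so it does not affect these orders. I also need $\mu_f\to1$, so that $M_f$ is uniformly spacelike at infinity.

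\emph{Step 3: equality of the ADM charges.} The ADM energy and linear momentum are boundary integrals at infinity, linear in the first derivatives of $g$ and in $K$. Perturbations of order $O_1(r^{-1-2\tau})$ in $g$ and $O(r^{-2-\tau})$ in $K$ contribute $o(1)$ flux through large spheres, because $\tau>1/2$ keeps the quadratic error terms integrable. Hence $(E,P)(M_f)=(E,P)(\Sigma_{T_0})=(m_{ADM},0)$. Time independence of the ADM four-momentum for $\Sigma_t$ follows from the Einstein equations and the decay of $\partial_t$ of the coefficients; alternatively, one can invoke the invariance of the ADM four-momentum under asymptotically trivial deformations of the slice (\cite{adm-invariant1,adm-invariant2}). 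Since $P=0$ in the rest frame, the ADM mass of $(M_f,g_f,K_f)$ is $m_{ADM}$.

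The main obstacle is Step~2's control of $K_f$ and of the $t$-shift. It needs second tangential and mixed derivatives of $f$ with the full weight $r^{-2-\tau}$, uniformly up to infinity, and not just on compact annuli. That weighted control is exactly what \textup{(iii)} supplies. A secondary point is checking that the good-gauge coordinate change of Appendix~\ref{app:late-slab-gauge} is asymptotic to the identity at order $r^{-\tau}$ or better, so that it preserves the asymptotic Euclidean structure and the frame.
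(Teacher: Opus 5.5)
Your route is the paper's: write $g_f=g_t|_{t=f}+2\beta\odot df-\alpha\,df^2$, compare $g_t|_{t=f}$ with $g(T_0)$ by the mean value theorem in $t$, note that the good-gauge coordinate changes are time-preserving and asymptotic to the identity, and use the rest-frame choice to identify the energy of $(\Sigma_{T_0},g(T_0))$ with $m_{ADM}$.

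However, Step~1 is wrong as stated. The $\ctS_{-\tau}$ bounds give only $\tr_{S_{t,r}}K_t=O(r^{-1-\tau})$. Dividing by $H_{t,r}\sim 2/r$ then yields $\Xi=O(r^{-\tau})$, not $O(r^{-1-\tau})$. For $\tau<1$ this is not even integrable in $r$, which is why integrability of $\sup|\Xi|$ is a separate hypothesis in Definition~\ref{def:ctS-nonstat}(iii).

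Consequently, in the \textup{(A2)} regime the estimates of Theorem~\ref{thm:global-existence} give only $f-T_0=o(1)$, $|\slashed d f|_\gamma=o(r^{-1})$, $|\slashed\nabla^2 f|_\gamma=o(r^{-2})$ and $f_r=O(r^{-1})$. The power rates, and $f-T_0\in\ct_{-\tau}$, hold only under \textup{(A3)}, where finiteness of $\Psi$ forces $|\Xi|\le Cr^{-1-\tau}$. Nor does (iii) supply the mixed or radial second derivatives you say it does: it bounds $\slashed\nabla^2 f$ and $f_r$ only.

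The repair is the paper's. The weaker rates already give $g_f-g(T_0)=o(r^{-1})$, with the corresponding $o(r^{-2})$ derivative bounds, because:
\begin{itemize}
\item the time-shift term carries a factor $\partial_t g=O(r^{-1-\tau})$;
\item the cross term carries a factor $\beta=O(r^{-\tau})$;
\item the quadratic term is $O(|df|^2)=O(r^{-2})$.
\end{itemize}
That is all the energy flux needs.

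The part you single out as the main obstacle, control of $K_f$ and of the full ADM four-momentum, is unnecessary. The lemma concerns the Riemannian ADM mass of $(M_f,g_f)$, which is all that the Huisken--Ilmanen comparison in Corollary~\ref{cor:HI-comparison} uses. The energy of $(\Sigma_{T_0},g(T_0))$ equals $m_{ADM}$ directly from the rest-frame choice.

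As written, that part is also unsupported. A pointwise bound $K_f-K_{T_0}=O(r^{-2-\tau})$ needs all second derivatives of $f$ at that rate. But since $f_r=\Xi|_{t=f}+\mathcal B_2-\mathcal A:\slashed\nabla^2 f$, the radial derivative $f_{rr}$ inherits the term $\partial_r\Xi$. On $\partial_r\Xi$ the forcing hypotheses give nothing beyond the asymptotically flat rate $O(r^{-1-\tau})$, which is not $o(r^{-2})$. Drop that part. If you want the momentum statement anyway, appeal to the invariance of the ADM four-momentum under such slice deformations (your stated alternative) rather than to pointwise estimates.
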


\begin{proof}
We work in asymptotically Euclidean coordinates on the end, chosen in the
asymptotic rest frame.  In this proof, for a tensor \(E\) on the end, we write $E=\mathcal O_1(r^{-q})$ to mean that, in these coordinates,
\[
        E=\mathcal O(r^{-q}),
        \qquad
        \partial E=\mathcal O(r^{-q-1}).
\]
We write $E=o_1(r^{-q})$ to mean that
\[
        E=o(r^{-q}),
        \qquad
        \partial E=o(r^{-q-1}).
\]

The estimates from Theorem~\ref{thm:global-existence}, together with the
asymptotic flatness of the ambient spacetime coefficients, give
\[
        f=T_0+o(1),
        \qquad
        df=\mathcal O_1(r^{-1})
\]
as \(r\to\infty\). In the Schauder alternative \((A3)\) this improves to
\(f=T_0+\mathcal O(r^{-\tau})\) and \(df=\mathcal O_1(r^{-1-\tau})\), but the
weaker estimates displayed above are sufficient for the ADM comparison.

We compare the induced metric \(g_f\) on the graph \(t=f(x)\) with the spatial
metric \(g(T_0)\) of the coordinate slice \(\Sigma_{T_0}\). In the ADM
decomposition,
\[
g_f
=
g(t)\big|_{t=f}
+
2\,\beta\big|_{t=f}\odot df
-
\big(N^2-|\beta|_{g(t)}^2\big)\big|_{t=f}\,df^2 .
\]
The asymptotic flatness bounds give, on the time slab containing the graph,
\[
        \beta=\mathcal O_1(r^{-\tau}),
        \qquad
        N^2-|\beta|_{g(t)}^2=1+\mathcal O_1(r^{-\tau}).
\]
Therefore the two terms containing \(df\) satisfy
\[
        2\,\beta\big|_{t=f}\odot df
        =
        \mathcal O_1(r^{-1-\tau})
\]
and
\[
        \big(N^2-|\beta|_{g(t)}^2\big)\big|_{t=f}\,df^2
        =
        \mathcal O_1(r^{-2}).
\]
Both are \(o_1(r^{-1})\), and hence are too small to contribute to the ADM
surface integral.

It remains to compare \(g(t)|_{t=f}\) with \(g(T_0)\). Here we use the
time-derivative part of the weighted coefficient assumptions. In asymptotically
Euclidean coordinates these give
\[
        \partial_t g_{ij}=\mathcal O_1(r^{-1-\tau}),
        \qquad
        \partial_t^2 g_{ij}=\mathcal O(r^{-2-\tau})
\]
on the relevant time slab. Set $$v:= f-T_0.$$ By the mean value theorem,
\[
\begin{aligned}
        g_{ij}(f,x)-g_{ij}(T_0,x)
        &=
        v(x)
        \int_0^1
        \partial_t g_{ij}\big(T_0+\theta v(x),x\big)\,d\theta .
\end{aligned}
\]
Since \(v=o(1)\) and \(\partial_t g_{ij}=\mathcal O(r^{-1-\tau})\), this gives
\[
        g(t)\big|_{t=f}-g(T_0)
        =
        o(r^{-1-\tau})
        =
        o(r^{-1}).
\]

For one spatial derivative, differentiate the preceding integral formula:
\[
\begin{aligned}
\partial_k\big(g_{ij}(f,x)-g_{ij}(T_0,x)\big)
&=
(\partial_k v)
\int_0^1
\partial_t g_{ij}\big(T_0+\theta v,x\big)\,d\theta
\\
&\quad
+v
\int_0^1
\partial_k\partial_t g_{ij}\big(T_0+\theta v,x\big)\,d\theta
\\
&\quad
+v
\int_0^1
\theta\,\partial_t^2 g_{ij}\big(T_0+\theta v,x\big)\,\partial_k v\,d\theta .
\end{aligned}
\]
Using
\[
        \partial_k v=\mathcal O(r^{-1}),
        \qquad
        \partial_t g_{ij}=\mathcal O(r^{-1-\tau}),
        \qquad
        \partial_k\partial_t g_{ij}=\mathcal O(r^{-2-\tau}),
        \qquad
        \partial_t^2 g_{ij}=\mathcal O(r^{-2-\tau}),
\]
we obtain
\[
        \partial_k\big(g_{ij}(f,x)-g_{ij}(T_0,x)\big)
        =
        o(r^{-2}).
\]
Thus
\[
        g(t)\big|_{t=f}-g(T_0)
        =
        o_1(r^{-1}).
\]

Combining the preceding estimates gives
\[
        g_f-g(T_0)=o_1(r^{-1}).
\]
Therefore the difference between the ADM integrands of \(g_f\) and \(g(T_0)\)
has surface integral tending to zero on coordinate spheres at infinity. Hence
\[
        m_{ADM}(M_f,g_f)
        =
        m_{ADM}(\Sigma_{T_0},g(T_0)).
\]

It remains only to account for the auxiliary coordinate changes used in the
tail construction. These changes are time-preserving and asymptotic to the
identity in the asymptotically Euclidean end: in spatial coordinates they have
the form
\[
        x^i \longmapsto x^i+\zeta^i(x),
        \qquad
        \zeta=\mathcal O(r^{-\tau}),
        \qquad
        \partial\zeta=\mathcal O(r^{-1-\tau}),
\]
with the corresponding higher derivative decay required for the ADM surface
integral. Since \(\tau>1/2\), these are admissible asymptotically Euclidean
coordinate changes. By the standard invariance of the ADM energy-momentum under
such coordinate changes, they do not alter the ADM surface integral
\cite{adm-invariant1,adm-invariant2,adm-invariant3,adm-invariant5}.

Finally, the global time function was chosen so that the slices \(\Sigma_t\)
are adapted to the asymptotic rest frame of the ADM four-momentum. In this
frame the ADM linear momentum of \(\Sigma_{T_0}\) vanishes, and its ADM energy
is the invariant ADM mass \(m_{ADM}\). Hence
\[
        m_{ADM}(M_f,g_f)=m_{ADM}.
\]
\end{proof}

\begin{lem}[Outward minimizing boundary spheres]
\label{lem:outerminimizing}
Let \(f=f^{(r_1,T_0)}\) be an admissible tangentially maximal tail furnished by
Theorem~\ref{thm:global-existence}\textup{(iii)}. Then the boundary leaf
\[
        \Sigma_{r_1}^{(T_0)}\subset(M_f,g_f)
\]
is connected and outward minimizing.
\end{lem}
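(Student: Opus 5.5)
Connectedness is immediate: the boundary leaf is the image of the sphere $\{r_1\}\times S^2$ under the graph map $p\mapsto (f(r_1,p),r_1,p)$, so it is a connected embedded $2$-sphere. The content of the lemma is outward minimization. I plan to obtain it from the foliation $\{S_r\}_{r\ge r_1}$ of $(M_f,g_f)$ by the TMCF leaves. On an admissible solution, Lemma~\ref{lem:H-over-lambda} gives $H_{f,r}/\lambda_f=\mathfrak a_f$, and \textbf{BS-1} gives $r\,\mathfrak a_f>\tfrac12\underline{\mathfrak a_0}>0$. Hence every leaf has strictly positive mean curvature in $(M_f,g_f)$: $H_{f,r}>0$ for all $r\ge r_1$. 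The plan is a standard calibration argument based on a mean-convex foliation that fills the whole end.

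Let $\nu:={\bf n}_f$ be the unit normal field of the foliation, which is defined on all of $M_f\cong[r_1,\infty)\times S^2$. Its divergence satisfies $\mathrm{div}_{g_f}\nu=H_{f,r}>0$. Now take any surface $\Sigma'$ enclosing $\Sigma_{r_1}^{(T_0)}$, and let $\Omega$ be the region between the two surfaces. The divergence theorem gives
\[
0<\int_\Omega H_{f,r}\,d\mu_{g_f}=\int_{\Sigma'}\langle\nu,\nu_{\Sigma'}\rangle\,d\mu-|\Sigma_{r_1}^{(T_0)}|\le|\Sigma'|-|\Sigma_{r_1}^{(T_0)}|.
\]
Here $\nu_{\Sigma'}$ is the outward normal to $\Sigma'$, and the inner boundary term is exactly the area because $\nu$ is the unit normal of the boundary leaf. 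I would use the flux formulation for finite-perimeter sets, so that the argument applies to the competitors used in Huisken--Ilmanen's definition of outward minimizing, not only to smooth surfaces. Competitors with components that do not enclose the boundary, or with multiple sheets, are handled the same way, since $|\langle\nu,\nu_{\Sigma'}\rangle|\le1$. If a competitor is unbounded or has infinite area there is nothing to prove, so only bounded regions need to be considered.

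The main obstacle is regularity and global definition. I must check that $\nu$ is at least $C^1$ on the closed tail and that the foliation genuinely covers $M_f$ with no degeneration. By Theorem~\ref{thm:global-existence}, $f\in\ct(M_{r_1,\infty})$, and the bound on $|\slashed df|$ together with $\mu_f>0$ show that $\lambda_f>0$ and that $r\mapsto S_r$ is a smooth foliation. However, $\nu$ involves $f_r$ and $\slashed\nabla f$, so it is only $C^{\alpha}$ in the radial direction, while its divergence $H_{f,r}$ is continuous. To deal with this I would first apply the divergence theorem on smooth truncated subregions, or in weak form: $H_{f,r}$ is a bona fide continuous distributional divergence because $\nu$ is a $C^{0,\alpha}$ vector field with continuous divergence. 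If needed I would upgrade regularity through interior parabolic bootstrapping, which is available since the coefficients are smooth. A secondary point is strictness: the inequality $H_{f,r}>0$ gives strict outward minimization, which is more than Huisken--Ilmanen require.
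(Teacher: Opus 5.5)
Your proposal is correct and follows the paper's proof. Connectedness comes from the sphere topology, and outward minimization comes from the calibration argument: apply the divergence theorem to the unit normal field ${\bf n}_f$ of the mean-convex TMCF foliation, using $\operatorname{div}_{g_f}{\bf n}_f=H_{f,r}>0$ and $\langle {\bf n}_f,\nu\rangle\le 1$ on the competitor. Your additional care (Gauss--Green for finite-perimeter competitors instead of the paper's ``usual approximation argument'', and the weak-divergence treatment of the limited radial regularity of ${\bf n}_f$) fills in details the paper leaves implicit without changing the argument.
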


\begin{proof}
This is the standard calibration argument showing that a boundary leaf of a mean-convex foliation
is outward minimizing; related forms of this argument appear, for instance, in
\cite{carla-outward}. We include the proof for completeness.

Connectedness is immediate, since \(\Sigma_{r_1}^{(T_0)}\) is a sphere. Since \(f\) is admissible,
every leaf \(S_r\subset M_f\) satisfies
\[
        H_{f,r}>0
        \qquad\text{for }r\ge r_1.
\]
Let
\[
        X:={\bf n}_f=\frac{\nabla^{g_f}r}{|\nabla^{g_f}r|_{g_f}}.
\]
Then \(|X|_{g_f}=1\), and on each leaf \(S_r\),
\[
        \operatorname{div}_{g_f}X=H_{f,r}\ge0.
\]

Let \(\Sigma\subset M_f\) be any smooth closed surface enclosing
\(\Sigma_{r_1}^{(T_0)}\), and let \(\Omega\) be the compact region between them. Denote by \(\nu\)
the outward unit normal to \(\Sigma\). The divergence theorem gives
\[
\int_\Sigma \langle X,\nu\rangle\,d\mu_\Sigma
-
\int_{\Sigma_{r_1}^{(T_0)}}1\,d\mu
=
\int_\Omega \operatorname{div}_{g_f}X\,d\mu_{g_f}
\ge0.
\]
Since \(|X|_{g_f}=1\), we have \(\langle X,\nu\rangle\le1\) on \(\Sigma\), and hence
\[
|\Sigma|
\ge
\int_\Sigma \langle X,\nu\rangle\,d\mu_\Sigma
\ge
|\Sigma_{r_1}^{(T_0)}|.
\]
Thus \(\Sigma_{r_1}^{(T_0)}\) is outward minimizing. The usual approximation argument extends the
conclusion from smooth enclosing surfaces to the standard outward-minimizing class.
\end{proof}

\begin{prop}[Nonnegative scalar curvature on tangentially maximal tails]
\label{prop:R-nonneg-tm}
Let \(f=f^{(r_1,T_0)}\) be an admissible tangentially maximal tail furnished by
Theorem~\ref{thm:global-existence}\textup{(iii)}. Then
\[
        R_{g_f}\ge0
\]
on \(M_f\).
\end{prop}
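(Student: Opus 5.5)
The plan is to combine the Hamiltonian constraint on the spacelike hypersurface $M_f$ with the dominant energy condition, and then to reduce $R_{g_f}\ge0$ to a pointwise algebraic inequality for $K_f$. Since $f$ is admissible, $M_f$ is spacelike, so $(M_f,g_f,K_f)$ is an initial data set. The Gauss equation (Hamiltonian constraint) gives
\[
R_{g_f}=16\pi\mu_{\mathrm{en}}+|K_f|_{g_f}^2-(\tr_{g_f}K_f)^2,
\]
where $\mu_{\mathrm{en}}=T({\bf T}_f,{\bf T}_f)$ is the energy density measured by the unit normal ${\bf T}_f$. The spacetime dominant energy condition gives $\mu_{\mathrm{en}}\ge|J|\ge0$. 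It therefore suffices to prove that
\[
|K_f|^2-(\tr K_f)^2\ge0
\]
pointwise, using only the TMCF condition $\tr_{S_r}K_f=0$.

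For the algebraic step, fix a point and choose a $g_f$-orthonormal frame $\{{\bf n}_f,e_1,e_2\}$ with $e_1,e_2$ tangent to the leaf $S_r$. Write $a:=K_f({\bf n}_f,{\bf n}_f)$, $w_A:=K_f({\bf n}_f,e_A)$, and let $k_{AB}$ be the restriction of $K_f$ to the leaf. The condition $\tr_{S_r}K_f=k_{11}+k_{22}=0$ gives $\tr K_f=a$. Hence
\[
|K_f|^2-(\tr K_f)^2=a^2+2|w|^2+|k|^2-a^2=2|w|^2+|k|^2\ge0.
\]
Combined with $\mu_{\mathrm{en}}\ge0$, this yields $R_{g_f}\ge0$.

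The only point requiring care is that the TMCF condition, which \eqref{eq:trKf-nonstat} defines through the graph formula, really is the vanishing of the tangential trace of the second fundamental form $K_f$ of $M_f$. This holds because Lemma~\ref{lem:basic-geom}(3) identifies the left side of the TMCF equation with $N_f\,\tr_{S_r}K_f$, and $N_f>0$ since $\mu_f>0$. The regularity $f\in\ct$ gives $C^2$ graphs, so $g_f$ is $C^{1}$ with H\"older second tangential derivatives. I would note that the constraint holds pointwise where $g_f$ is $C^2$, or in the weak sense, which is sufficient for Huisken--Ilmanen. I expect no genuine obstacle beyond this bookkeeping of conventions (sign of $K$, normalization of $\mu_{\mathrm{en}}$), since the decomposition above makes the inequality manifest.
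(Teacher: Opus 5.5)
Your proposal is correct and follows the paper's proof essentially step for step. Both use the Hamiltonian constraint, the splitting $TM_f=\mathrm{span}\{{\bf n}_f\}\oplus TS_r$ with $\tr_{S_r}K_f=0$ to get $|K_f|^2-(\tr_{g_f}K_f)^2=2|K_f({\bf n}_f,\cdot)|^2_{\gamma_f}+|K_f|^2_{\gamma_f}\ge0$, and the dominant energy condition for $\mu\ge0$.
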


\begin{proof}
Let \(K_f\) be the second fundamental form of
\(M_f\hookrightarrow(\widehat{\mathcal M},\gtime)\), and let \(\mu_f\) be the energy density of the
induced initial data. The Hamiltonian constraint gives
\[
        R_{g_f}+(\tr_{g_f}K_f)^2-|K_f|_{g_f}^2
        =
        16\pi\mu_f.
\]
Since \(f\) solves the TMCF equation, the leaves \(S_r\subset M_f\) satisfy
\[
        \tr_{S_r}K_f=0.
\]
With respect to the splitting
\[
        TM_f=\mathrm{span}\{{\bf n}_f\}\oplus TS_r,
\]
we therefore have
\[
        \tr_{g_f}K_f=K_f({\bf n}_f,{\bf n}_f),
\]
and
\[
        |K_f|_{g_f}^2-(\tr_{g_f}K_f)^2
        =
        2\,|K_f({\bf n}_f,\cdot)|_{\gamma_f}^2
        +
        |K_f|_{\gamma_f}^2
        \ge0.
\]
Thus
\[
        R_{g_f}
        =
        16\pi\mu_f
        +
        |K_f|_{g_f}^2-(\tr_{g_f}K_f)^2
        \ge16\pi\mu_f.
\]
The dominant energy condition implies \(\mu_f\ge0\), and hence \(R_{g_f}\ge0\).
\end{proof}

\begin{cor}[Hawking-mass comparison on tangentially maximal tails]
\label{cor:HI-comparison}
Let \(f=f^{(r_1,T_0)}\) be an admissible tangentially maximal tail furnished by
Theorem~\ref{thm:global-existence}\textup{(iii)}. Then
\[
        m_{ADM}
        \ge
        m_H^{\mathrm{Riem}}\big(\Sigma_{r_1}^{(T_0)}\big)
        =
        m_H^{\mathrm{ST}}\big(\Sigma_{r_1}^{(T_0)}\big).
\]
\end{cor}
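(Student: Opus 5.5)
The corollary follows by assembling the three preceding facts and applying the Huisken--Ilmanen Riemannian Penrose inequality to $(M_f,g_f)$. The key steps are as follows.

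First, by Proposition~\ref{prop:R-nonneg-tm}, the Riemannian manifold $(M_f,g_f)$ has $R_{g_f}\ge0$. Second, by Lemma~\ref{lem:ADM-compat}, its asymptotically Euclidean end has ADM mass $m_{ADM}$. Third, by Lemma~\ref{lem:outerminimizing}, its boundary $\partial M_f=\Sigma_{r_1}^{(T_0)}$ is a connected, outward minimizing sphere. These are exactly the hypotheses of the Huisken--Ilmanen weak inverse mean curvature flow comparison \cite{H-I}: on a complete asymptotically flat $3$-manifold with nonnegative scalar curvature and connected outward minimizing boundary, the weak IMCF starting at the boundary has monotone nondecreasing Geroch--Hawking mass, and this mass is bounded above by the ADM mass at infinity. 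This yields
\[
m_{ADM}(M_f,g_f)\ge m_H^{\mathrm{Riem}}\big(\Sigma_{r_1}^{(T_0)}\big).
\]

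The remaining equality is \eqref{eq:hawking-masses-agree}. Since $\Sigma_{r_1}^{(T_0)}$ is a leaf of the TMCF foliation, $\tr_{\Sigma}K_f=0$, so \eqref{eq:intro-H-decomp} gives $|{\bf H}_\Sigma|^2=H_\Sigma^2$. The two Hawking mass integrands therefore coincide. Combining this with $m_{ADM}(M_f,g_f)=m_{ADM}$ gives the claim.

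I expect the main obstacle to be verifying the regularity and completeness hypotheses of \cite{H-I}. We only know $f\in\ct$, so $g_f$ is $C^{1,\alpha}$-type in $r$ on the radial side, and Huisken--Ilmanen is formulated for smooth (or at least $C^2$) metrics. I would first try to bootstrap: on a spacelike admissible solution, interior parabolic Schauder theory applied with smoother coefficients should upgrade $f$ to $C^{3,\alpha}$ locally, which suffices for weak IMCF. Alternatively, one can apply \cite{H-I} to smooth approximating metrics and pass to the limit using continuity of the Hawking mass of the fixed boundary sphere.

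Two smaller points also need care. The asymptotic decay must match the hypotheses in \cite{H-I}; the order $\tau>1/2$ together with $g_f-g(T_0)=o_1(r^{-1})$, obtained in the proof of Lemma~\ref{lem:ADM-compat}, suffices. Huisken--Ilmanen also require the boundary to be a minimal surface or, in the general version of their comparison (their Section 6), merely outward minimizing. I would invoke the latter version explicitly.
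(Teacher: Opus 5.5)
Your proposal is correct and is exactly the paper's argument: Proposition~\ref{prop:R-nonneg-tm} gives $R_{g_f}\ge0$, Lemma~\ref{lem:outerminimizing} gives a connected outward-minimizing boundary, and Huisken--Ilmanen's comparison then yields $m_{ADM}(M_f,g_f)\ge m_H^{\mathrm{Riem}}(\Sigma_{r_1}^{(T_0)})$. Lemma~\ref{lem:ADM-compat} and \eqref{eq:hawking-masses-agree} then finish the proof.

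The paper simply cites \cite{H-I} and does not discuss the regularity and decay caveats you raise. Two of your side remarks on them are slightly off. First, $f\in\ct$ only guarantees that $f_r$ is H\"older of order $\alpha/2$ in $r$, so $g_f$ is initially merely H\"older in the radial direction, not $C^{1,\alpha}$. Second, order $\tau\in(1/2,1)$ decay is weaker than the $O(|x|^{-1})$ asymptotics under which \cite{H-I} is formulated. Since $g(T_0)$ itself is only $O(r^{-\tau})$-flat, the estimate $g_f-g(T_0)=o_1(r^{-1})$ alone does not place $(M_f,g_f)$ in their setting.
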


\begin{proof}
By Proposition~\ref{prop:R-nonneg-tm}, the Riemannian manifold \((M_f,g_f)\) has nonnegative scalar
curvature. By Lemma~\ref{lem:outerminimizing}, its boundary leaf
\(\Sigma_{r_1}^{(T_0)}\) is connected and outward minimizing. Huisken--Ilmanen's weak inverse mean
curvature flow comparison in \cite{H-I} gives
\[
        m_{ADM}(M_f,g_f)
        \ge
        m_H^{\mathrm{Riem}}\big(\Sigma_{r_1}^{(T_0)}\big)
\]
 Lemma~\ref{lem:ADM-compat} gives
\[
        m_{ADM}(M_f,g_f)=m_{ADM},
\]
and \eqref{eq:hawking-masses-agree} gives equality of the Riemannian and spacetime Hawking masses
on the boundary leaf. This proves the claim.
\end{proof}

\begin{proof}[Proof of Theorem~\ref{thm:SPI}]
Set
\[
        A_\ast:=|S_\ast|.
\]

We first compare \(A_\ast\) with the limiting area \(A_\infty\) of the last smooth horizon piece.
The cross-section assumption on \(S_\ast\) allows us to compare it with the preferred MOTS leaves
\(\mathcal S_t\) of the apparent-horizon tube. On a spacelike dynamical-horizon piece, the
uniqueness of the dynamical-horizon foliation implies that a MOTS cross-section agrees with one of
the preferred leaves \cite{AshtekarGalloway05}. On a null isolated-horizon piece, all smooth
cross-sections have the same area \cite{AshtekarBeetleFairhurst99,AshtekarKrishnan04}. Therefore,
for the purpose of the area comparison, we may start the future evolution from a preferred leaf
\(\mathcal S_{t_\ast}\) with
\[
        |\mathcal S_{t_\ast}|=|S_\ast|=A_\ast .
\]

On each smooth piece of \(\mathcal H_{\mathrm{app}}\), Proposition~\ref{prop:mott-area-law} implies
that the total area is nondecreasing toward the future. By assumption, the total outermost-horizon
area does not decrease across jump times. Therefore the total area along the future portion of
\(\mathcal H_{\mathrm{app}}\) starting from \(S_\ast\) is nondecreasing. Since the last smooth piece
has limiting area \(A_\infty\), we obtain
\begin{equation}\label{eq:Ainfty-ge-Aast}
        A_\infty\ge A_\ast.
\end{equation}

We first identify the ADM mass of the embedded initial data set with the
fixed ADM mass of the spacetime end.  Since
\((M_\ast,g_\ast,K_\ast)\) is an admissible asymptotically flat hypersurface
in the same asymptotic end of \((\widehat{\mathcal M},\gtime)\), it
determines the same ADM energy-momentum four-vector as the reference
Cauchy slices \(\Sigma_t\), up to the standard asymptotic Lorentz
transformation law
\cite{adm-invariant1,adm-invariant2,adm-invariant3,adm-invariant4,adm-invariant5}.
Thus its invariant ADM mass agrees with the invariant ADM mass \(m_{ADM}\)
fixed above:
\begin{equation}\label{eq:adm-original-rest-frame}
        m_{ADM}(M_\ast,g_\ast,K_\ast)=m_{ADM}.
\end{equation}

It remains to prove
\[
        m_{ADM}\ge \sqrt{\frac{A_\infty}{16\pi}}.
\]
We do this by constructing a sequence of admissible tangentially maximal tails whose inner boundary
leaves approach the final horizon.

Let $\varepsilon_j\downarrow0$. By Lemma~\ref{lem:late-horizon-hawking}, for each \(j\) there exist
\[
        T^{\mathrm{hor}}(\varepsilon_j):= T_j^{\mathrm{hor}}\ge\underline T,
        \qquad
        \rho_j>0,
\]
such that every spacelike \(C^2\)-graph of size less than \(\rho_j\) over a horizon section
\(\mathcal S_t\), with \(t\ge T_j^{\mathrm{hor}}\), has spacetime Hawking mass within
\(\varepsilon_j\) of \(\sqrt{A_\infty/(16\pi)}\).

Choose a sequence of radii
\[
        r_j>r_0,
        \qquad
        r_j\downarrow r_0,
\]
so that, in the original late-time collar coordinates from the quasi final state hypothesis, the
coordinate cylinder \(r=r_j\) is a \(C^2\)-graph over the horizon cylinder \(r=r_0\) of size less
than \(\rho_j/4\). Equivalently, for every sufficiently late \(t\), the sphere $S_{t,r_j}$ is \(C^2\)-close, with size less than \(\rho_j/4\), to the horizon section $\mathcal S_t=S_{t,r_0}$.

Now fix \(j\). Theorem~\ref{thm:global-existence}\textup{(iii)} applies on the fixed exterior
tail \(M_{r_j,\infty}\). The constants in the existence and gauge-reduction estimates may depend
on \(r_j\), but this is harmless because \(r_j\) has now been fixed. Apply
Proposition~\ref{prop:app-late-slab-gauge} with \(r_1=r_j\). If
\(R(t,\tilde r,\tilde q)\) denotes the original radial coordinate expressed in the final
good-gauge chart, then \eqref{eq:app-r1-C2-close} gives
\[
\sup_{t\in I_{h_j}(T)}
\|R(t,r_j,\cdot)-r_j\|_{C^2(S^2)}
\le
C_j h_j\,\mathfrak B^\sharp(T-h_j;r_-^{(j)}),
\]
where \(r_-^{(j)}=(r_0+r_j)/2\). Moreover,
\eqref{eq:app-r1-time-derivative-bounds} gives uniform bounds, depending on the fixed tail
\(r\ge r_j\), for \(\partial_tR\) in \(C^1(S^2)\) and \(\partial_t^2R\) in \(C^0(S^2)\) along this
inner cylinder.

Using the time-derivative bounds for \(R\), choose \(0<\delta_j<h_j\), depending only on the fixed
tail \(r\ge r_j\), such that the following implication holds: whenever
\(\|\eta\|_{C^2(S^2)}<\delta_j\) and \(T\) is large enough that
\[
\sup_{t\in I_{h_j}(T)}
\|R(t,r_j,\cdot)-r_j\|_{C^2(S^2)}
<
\frac{1}{2}\min\left\{\frac{\rho_j}{4},\,\frac{r_j-r_0}{2}\right\},
\]
then the chain rule gives
\[
\|R(T+\eta(\cdot),r_j,\cdot)-r_j\|_{C^2(S^2)}
<
\min\left\{\frac{\rho_j}{4},\,\frac{r_j-r_0}{2}\right\}.
\]
Since \(\mathfrak B^\sharp(T-h_j;r_-^{(j)})\to0\), the displayed smallness of
\(R(t,r_j,\cdot)-r_j\) holds for all sufficiently large \(T\). Note that the condition \(\delta_j<h_j\) ensures that \(T+\eta(\cdot)\in I_{h_j}(T)\).

We now choose
\[
        T_j\ge \max\{j,T_j^{\mathrm{hor}},T_{\mathrm{large}}(r_j)\}
\]
large enough so that the preceding implication applies and, in addition, the boundary time graph
of the TMCF tail satisfies
\[
\eta_j(\cdot):=
f^{(r_j,T_j)}(r_j,\cdot)-T_j,
\qquad
\|\eta_j\|_{C^2(S^2)}
<
\min\left\{\delta_j,\frac{\rho_j}{4}\right\}.
\]
Here the \(C^2\)-smallness of \(\eta_j\) follows from the estimates in
Theorem~\ref{thm:global-existence}, using the upgrading assumptions in the quasi final state
hypothesis, namely either \textup{\ref{it:A2}} or \textup{\ref{it:A3}}.

With this choice of \(T_j\), the boundary leaf of the tangentially maximal tail, pulled back to the
original late-time collar coordinates, has original radial coordinate
\[
R(T_j+\eta_j(\cdot),r_j,\cdot),
\]
and therefore
\[
\|R(T_j+\eta_j(\cdot),r_j,\cdot)-r_j\|_{C^2(S^2)}
<
\min\left\{\frac{\rho_j}{4},\,\frac{r_j-r_0}{2}\right\}.
\]
The second bound ensures that the boundary leaf remains on the exterior side of the horizon.

Let $\Sigma_j:=\Sigma_{r_j}^{(T_j)}$ be the inner boundary leaf of the admissible tangentially maximal tail produced at central time
\(T_j\). By the preceding choices, \(\Sigma_j\), expressed in the original late-time collar
coordinates, is a spacelike \(C^2\)-graph of size less than \(\rho_j\) over the horizon section $\mathcal S_{T_j}.$ Moreover, its transverse graph function is nonnegative, so \(\Sigma_j\) lies on the exterior side
of the horizon. Since \(T_j\ge T_j^{\mathrm{hor}}\), Lemma~\ref{lem:late-horizon-hawking} gives
\[
\left|
        m_H^{\mathrm{ST}}(\Sigma_j)
        -
        \sqrt{\frac{A_\infty}{16\pi}}
\right|
<
        \varepsilon_j.
\]
Thus
\[
        m_H^{\mathrm{ST}}(\Sigma_j)
        \longrightarrow
        \sqrt{\frac{A_\infty}{16\pi}}.
\]

On the other hand, each \(\Sigma_j\) is the boundary leaf of an admissible tangentially maximal
tail. Therefore Corollary~\ref{cor:HI-comparison} gives
\[
        m_{ADM}
        \ge
        m_H^{\mathrm{ST}}(\Sigma_j)
        \qquad
        \text{for every }j.
\]
Passing to the limit yields
\[
        m_{ADM}
        \ge
        \sqrt{\frac{A_\infty}{16\pi}}.
\]
Using \eqref{eq:Ainfty-ge-Aast} and \eqref{eq:adm-original-rest-frame}, we conclude
\[
        m_{ADM}(M_\ast,g_\ast,K_\ast)
        =
        m_{ADM}
        \ge
        \sqrt{\frac{A_\infty}{16\pi}}
        \ge
        \sqrt{\frac{A_\ast}{16\pi}}.
\]
This is exactly \eqref{eq:SPI-final}.
\end{proof}

\begin{remark}
A key point in the limiting argument is that the late time is chosen after the radius
\(r_\varepsilon>r_0\) has been fixed. The constants in the TMCF existence and gauge-reduction
estimates need not remain uniform as \(r_\varepsilon\downarrow r_0\), but the late-time decay in the
quasi final state hypothesis allows one to compensate by pushing the comparison tail further into
the future. If one wanted to replace the decay assumptions by a weaker eventual-smallness
hypothesis near the horizon, one would need estimates for the TMCF construction and the gauge
reduction that remain uniform as the inner radius approaches \(r_0\).
\end{remark}

\begin{remark}
The proof shows that the Penrose argument reduces to two late-time questions:
\begin{enumerate}[label=\textup{(\roman*)}]
\item proving late-time gauge reducibility together with the forcing decay assumption
\textup{\ref{it:A1}} and either of the upgrading assumptions \textup{\ref{it:A2}} or
\textup{\ref{it:A3}};
\item proving that the last smooth horizon piece has a limiting area \(A_\infty\).
\end{enumerate}
The tangentially maximal existence theory from the previous section then provides the comparison
hypersurfaces needed to turn these late-time statements into the Penrose inequality.
\end{remark}

\subsection{The Ben--Dov counterexamples and the future black-hole condition}
\label{sub:bendov-counterexamples}

Ben--Dov's construction \cite{ben-dov} appears at first reading to produce a direct counterexample to Theorem~\ref{thm:SPI}: he gives an asymptotically flat, spherically symmetric initial data set satisfying the dominant energy condition whose outermost MOTS $\mathcal S$ violates
\[
m_{ADM}\ge\sqrt{\frac{|\mathcal S|}{16\pi}}.
\]
Since Theorem~\ref{thm:SPI} is an inequality of exactly this form, we must identify the hypothesis that fails in his example and the step of our proof at which this failure becomes visible.

The short answer is: Ben--Dov's exceptional MOTS lies on a \emph{white-hole-type} branch of a marginally outer trapped tube, on which $\theta_{(\underline\ell)}>0$. This is exactly the sign that breaks Proposition~\ref{prop:mott-area-law}; consequently the area inequality $A_\infty\ge A_\ast$ used in the area-comparison step of the proof of Theorem~\ref{thm:SPI} fails along his tube. Throughout this subsection we use the convention $\gtime(\ell,\underline\ell)=-2$ for the future-directed null normals to a MOTS, consistent with the rest of the paper.

\subsubsection*{Recalling Ben--Dov's construction}

The spacetime is obtained by matching four pieces
\[
\text{Schwarzschild}(M),
\qquad
\text{RW},
\qquad
\text{Schwarzschild}(M_1),
\qquad
\text{RW},
\]
with matching parameters
\[
M=\tfrac12a_m\sin^3\chi_0,
\qquad
M_1=\tfrac12a_m\sin^3\chi_1,
\qquad
\tfrac{\pi}{2}\le\chi_1<\chi_0<\pi.
\]
Since $\sin\chi$ is strictly decreasing on $[\pi/2,\pi]$, $\sin\chi_1>\sin\chi_0$, hence $M_1>M$. The closed dust Robertson--Walker piece has metric
\[
\mathbf g_{RW}=-d\tau^2+a(\tau)^2\bigl(d\chi^2+\sin^2\chi\,d\omega^2\bigr),
\]
where, in terms of the parameter $\eta$,
\[
a(\eta)=\tfrac{a_m}{2}(1-\cos\eta),
\qquad
\tau(\eta)=\tfrac{a_m}{2}(\eta-\sin\eta),
\qquad 0<\eta<2\pi,
\]
and
\[
\dot a:=\tfrac{da}{d\tau}=\tfrac{\sin\eta}{1-\cos\eta}=\cot\tfrac{\eta}{2}.
\]

Ben--Dov chooses a spacelike initial slice that intersects the marginally outer branch in the \emph{second} Schwarzschild region. The outermost MOTS on that slice is a coordinate sphere of areal radius $2M_1$, hence of area
\[
A_{BD}=16\pi M_1^2,
\]
whereas the ADM mass of the chosen asymptotically flat end is $M$. Therefore
\[
m_{ADM}=M<M_1=\sqrt{\tfrac{A_{BD}}{16\pi}},
\]
which is the advertised violation of the raw apparent-horizon Penrose inequality.

\subsubsection*{The exceptional MOTS is not future black-hole type}

In the RW piece, define the future-directed radial null normals
\[
\ell=\partial_\tau+\tfrac{1}{a}\partial_\chi,
\qquad
\underline\ell=\partial_\tau-\tfrac{1}{a}\partial_\chi,
\]
so that $\gtime(\ell,\underline\ell)=-2$ and $\ell$ points toward increasing $\chi$. Here ``outward'' is taken relative to the asymptotically flat end of mass $M$ in Ben--Dov's construction, which sits at $\chi\to\pi^-$ in the matching convention used there. For a coordinate sphere $\{\tau=\mathrm{const},\chi=\mathrm{const}\}$, the areal radius is $R=a(\tau)\sin\chi$, and for either null normal $k$, $\theta_{(k)}=\tfrac{2}{R}k(R)$. A short computation gives
\[
\theta_{(\ell)}=\tfrac{2}{a}(\dot a+\cot\chi),
\qquad
\theta_{(\underline\ell)}=\tfrac{2}{a}(\dot a-\cot\chi).
\]
The marginally outer tube $\theta_{(\ell)}=0$ in the RW piece is therefore
\[
\cot\chi=-\dot a=-\cot\tfrac{\eta}{2}
\quad\Longleftrightarrow\quad
\chi=\chi_\ell(\eta):=\pi-\tfrac{\eta}{2},
\qquad 0<\chi<\pi.
\]
Along this tube,
\[
\theta_{(\underline\ell)}
=\tfrac{2}{a}(\dot a-\cot\chi_\ell)
=\tfrac{4}{a}\cot\tfrac{\eta}{2},
\]
so
\[
\theta_{(\underline\ell)}>0 \text{ on the expanding epoch } 0<\eta<\pi,
\qquad
\theta_{(\underline\ell)}<0 \text{ on the collapsing epoch } \pi<\eta<2\pi.
\]
Ben--Dov's construction uses the expanding epoch, so $\theta_{(\underline\ell)}>0$ along the relevant marginally outer tube.

This sign is preserved across the matching into the second Schwarzschild region: the Darmois matching is shell-free, so the null expansions of the matched symmetry spheres agree across the junction. Hence the continuation of the marginally outer tube in the second Schwarzschild piece must be the white-hole branch $U=0,\,V<0$ of its Kruskal horizon, on which $\theta_{(\underline\ell)}>0$ as well. In particular, Ben--Dov's exceptional MOTS satisfies
\[
\theta_{(\ell)}=0,
\qquad
\theta_{(\underline\ell)}>0,
\]
not the future black-hole condition $\theta_{(\underline\ell)}<0$ assumed in this paper.

\begin{remark}[Causal character of the tube]
\label{rem:ben-dov-tube-timelike}
For completeness, the underlying marginally outer trapped tube is in fact timelike, hence not even achronal. Indeed, in $(\eta,\chi)$ coordinates the two-dimensional radial part of $\mathbf g_{RW}$ is $a(\eta)^2(-d\eta^2+d\chi^2)$, and the tangent to $\chi_\ell(\eta)=\pi-\eta/2$ is $X=\partial_\eta-\tfrac{1}{2}\partial_\chi$, so
\[
\mathbf g_{RW}(X,X)=a(\eta)^2\bigl(-1+\tfrac14\bigr)=-\tfrac34 a(\eta)^2<0.
\]
Thus the RW marginally outer tube is timelike, and a fortiori is not a future outer dynamical horizon of the kind used in the area-law argument.
\end{remark}

\subsubsection*{Where the proof of Theorem~\ref{thm:SPI} fails for Ben--Dov}

The future black-hole condition $\theta_{(\underline\ell)}<0$ is built into our standing assumption that $\mathcal S_t$ is of future black-hole type on every connected component (Section~\ref{sub:spacetime-final-state}). Ben--Dov's exceptional MOTS violates this condition directly, so it is not an admissible $S_\ast$ for Theorem~\ref{thm:SPI}, and the construction does not enter the class of initial data covered by our result.

It is nevertheless instructive to locate the single point in the proof of Theorem~\ref{thm:SPI} at which dropping the sign condition would actually destroy the argument. The future black-hole hypothesis is used essentially only for the area comparison step, where we prove
\[
A_\infty\;\ge\;A_\ast,
\]
which is obtained from the area law on smooth pieces of $\mathcal H_{\mathrm{app}}$ (Proposition~\ref{prop:mott-area-law}). The proof of that proposition uses, on each connected component, that achronality of $\mathcal H_0$ gives $C\ge 0$ for the foliation normal $V=\ell-C\underline\ell$, combined with $\theta_{(\underline\ell)}<0$, to conclude
\[
\theta_{(V)}=\theta_{(\ell)}-C\,\theta_{(\underline\ell)}=-C\,\theta_{(\underline\ell)}\;\ge\;0.
\]
Ben--Dov's marginally outer tube fails both inputs simultaneously: by Remark~\ref{rem:ben-dov-tube-timelike} it is timelike (not achronal), and we have just shown $\theta_{(\underline\ell)}>0$ on it. Neither the achronality used to set up the calculation nor the sign that turns $-C\,\theta_{(\underline\ell)}$ into something nonnegative is available.

A direct computation confirms that the conclusion of Proposition~\ref{prop:mott-area-law} cannot be salvaged here: along the RW tube $\chi_\ell(\eta)=\pi-\eta/2$,
\[
A_{RW}(\eta)
=4\pi a(\eta)^2\sin^2\chi_\ell(\eta)
=\tfrac{\pi}{2}a_m^2(1-\cos\eta)^3,
\qquad
\tfrac{dA_{RW}}{d\eta}
=\tfrac{3\pi}{2}a_m^2(1-\cos\eta)^2\sin\eta,
\]
so the area varies non-monotonically (increasing on $0<\eta<\pi$, decreasing on $\pi<\eta<2\pi$); and the variation is along a timelike curve, so it carries no causal meaning compatible with an apparent-horizon area law.

The chain
\[
A_\infty\;\ge\;A(t_\ast)\;=\;A_\ast
\]
that drives the area comparison step of the proof therefore breaks down. The remaining ingredients of the proof --- ADM-mass identification, Huisken--Ilmanen comparison on tangentially maximal tails, and the limit $m_H^{\mathrm{ST}}(\Sigma_j)\to\sqrt{A_\infty/(16\pi)}$ --- are insensitive to the sign of $\theta_{(\underline\ell)}$ on $S_\ast$ and would remain valid. It is precisely the area-law step that singles out future black-hole horizons, and it is there, and only there, that Ben--Dov's construction would derail the argument.

\subsubsection*{Conclusion}

Ben--Dov's example shows that the Penrose inequality is false if formulated purely as an initial-data statement for the raw area of an outermost apparent horizon, with no condition on the surface's status within a future apparent-horizon evolution.  Once one imposes that the outermost MOTS lie on a future black-hole-type smooth piece of the apparent-horizon tube --- equivalently, $\theta_{(\underline\ell)}<0$ on every connected component --- the area law of Proposition~\ref{prop:mott-area-law} is recovered, the area-comparison step of the proof of Theorem~\ref{thm:SPI} goes through. Ben--Dov's construction therefore identifies precisely the hypothesis that any spacetime Penrose inequality of this kind cannot do without.
\subsection{The equality case}
\label{sub:equality-case}

We finally record what the present method gives in the equality case. The main point is that the
argument proves a sharp inequality through a limiting procedure. Thus equality forces strong
consequences for the horizon area and for the sequence of tangentially maximal comparison tails,
but a full Schwarzschild rigidity statement requires one additional compactness or stability input.

\begin{prop}[Consequences of equality]
\label{prop:equality-consequences}
Assume the hypotheses of Theorem~\ref{thm:SPI}. Let \(S_\ast\) be the horizon cross-section
appearing in the theorem, and set
\[
        A_\ast:=|S_\ast|.
\]
Suppose equality holds in the Penrose inequality:
\[
        m_{ADM}
        =
        \sqrt{\frac{A_\ast}{16\pi}}.
\]
Then
\[
        A_\infty=A_\ast=16\pi m_{ADM}^2.
\]
Moreover, the total area of the future apparent-horizon tube from \(S_\ast\) to
\(\mathcal H_{\mathrm{final}}\) is constant. In particular, every allowed jump between
\(S_\ast\) and \(\mathcal H_{\mathrm{final}}\) has zero total area jump, and every smooth
future black-hole-type MOTT piece between \(S_\ast\) and \(\mathcal H_{\mathrm{final}}\) is null.

Finally, if \(\Sigma_j\) denotes the sequence of boundary leaves of admissible tangentially maximal
comparison tails constructed in the proof of Theorem~\ref{thm:SPI}, then
\[
        m_H^{\mathrm{Riem}}(\Sigma_j)
        =
        m_H^{\mathrm{ST}}(\Sigma_j)
        \longrightarrow
        m_{ADM}.
\]
Thus the Riemannian Penrose comparisons associated with the tangentially maximal tails are
asymptotically sharp.
\end{prop}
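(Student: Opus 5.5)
The plan is to reread the chain of inequalities from the proof of Theorem~\ref{thm:SPI} under the equality assumption. That proof established
\[
m_{ADM}\ \ge\ m_H^{\mathrm{ST}}(\Sigma_j)\ \xrightarrow{j\to\infty}\ \sqrt{\frac{A_\infty}{16\pi}}\ \ge\ \sqrt{\frac{A_\ast}{16\pi}},
\]
with $m_{ADM}(M_\ast,g_\ast,K_\ast)=m_{ADM}$ by \eqref{eq:adm-original-rest-frame}. If the two ends of this chain are equal, every intermediate inequality must be an equality in the limit. This gives $\sqrt{A_\infty/16\pi}=\sqrt{A_\ast/16\pi}=m_{ADM}$, so $A_\infty=A_\ast=16\pi m_{ADM}^2$. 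The same squeeze, using Corollary~\ref{cor:HI-comparison} and Lemma~\ref{lem:late-horizon-hawking}, gives $m_H^{\mathrm{Riem}}(\Sigma_j)=m_H^{\mathrm{ST}}(\Sigma_j)\to\sqrt{A_\infty/16\pi}=m_{ADM}$. This settles the first and last assertions.

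For the constancy of area I would revisit the area-comparison step. There the total area $A(t)$ along the future part of $\mathcal H_{\mathrm{app}}$, started from the preferred leaf $\mathcal S_{t_\ast}$ with $|\mathcal S_{t_\ast}|=A_\ast$, was shown to be nondecreasing in $t$. Two facts feed into this: Proposition~\ref{prop:mott-area-law} on smooth pieces, and the no-decrease assumption across jumps. Since the function is monotone, starts at $A_\ast$, and converges to $A_\infty=A_\ast$, it must equal $A_\ast$ for all $t\ge t_\ast$. It follows that every jump has zero total area jump, because the left and right limits of a constant function agree. It also follows that $\frac{d}{dt}A(t)=0$ on each smooth piece.

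To deduce nullness I would split this derivative componentwise. Each component derivative is nonnegative, so every $\frac{d}{dt}|\mathcal S^i_t|$ must vanish. By the first-variation formula in the proof of Proposition~\ref{prop:mott-area-law}, this derivative is a positive-lapse multiple of
\[
\int_{\mathcal S_t^i}-C_i\,\theta_{(\underline\ell_i)}\,d\mu.
\]
The integrand is pointwise nonnegative and $\theta_{(\underline\ell_i)}<0$ strictly. Hence $C_i\equiv0$, so $V_i=\ell_i$ is null and the piece is null.

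I expect this last step to be the main obstacle. The delicate points are the positivity of the lapse relating $V_i$ to $\partial_t$ and the continuity needed to pass from a vanishing integral to a pointwise conclusion, and both require care. I would address them by working on each smooth piece, where $C_i$ is continuous.

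A second subtlety is the identification of $S_\ast$ with a preferred leaf. If $S_\ast$ sits in a null piece, equal area is automatic. If it sits in a spacelike piece, the argument above already forces that piece to be null near $t_\ast$, which contradicts its being spacelike. This case therefore does not occur, and the statement remains consistent.
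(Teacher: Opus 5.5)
Your proposal is correct and follows the paper's proof essentially step for step: the chain from Theorem~\ref{thm:SPI} collapses to equalities, monotonicity together with the no-decrease-at-jumps hypothesis forces the total area to be constant, and $\theta_{(V)}=-\varphi C\,\theta_{(\underline\ell)}\ge0$ with $\theta_{(\underline\ell)}<0$ gives $C\equiv0$, while $m_H^{\mathrm{ST}}(\Sigma_j)\to\sqrt{A_\infty/16\pi}=m_{ADM}$ together with \eqref{eq:hawking-masses-agree} gives the last claim. Your explicit componentwise splitting of $\tfrac{d}{dt}A$ sharpens a step the paper leaves implicit; your closing remark about $S_\ast$ lying in a spacelike piece is unnecessary, since the identification of $S_\ast$ with a preferred leaf comes from the dynamical-horizon uniqueness already used in the theorem's proof, and the nullness you derive there is a consequence of that identification rather than a substitute for it.
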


\begin{proof}
The proof of Theorem~\ref{thm:SPI} gives the chain
\[
        m_{ADM}
        \ge
        \sqrt{\frac{A_\infty}{16\pi}}
        \ge
        \sqrt{\frac{A_\ast}{16\pi}}.
\]
If
\[
        m_{ADM}
        =
        \sqrt{\frac{A_\ast}{16\pi}},
\]
then both inequalities in the chain must be equalities. Hence
\[
        A_\infty=A_\ast=16\pi m_{ADM}^2.
\]

By Proposition~\ref{prop:mott-area-law}, the total area is nondecreasing along every smooth
future black-hole-type piece of \(\mathcal H_{\mathrm{app}}\). By the no-area-decreasing jump
assumption in Theorem~\ref{thm:SPI}, the total area is also nondecreasing across the allowed
jumps. Since the initial area \(A_\ast\) and the limiting final area \(A_\infty\) are equal, no
increase can occur on any smooth piece or across any jump between \(S_\ast\) and
\(\mathcal H_{\mathrm{final}}\). Thus every jump has zero total area jump, and the total area is
constant on every smooth piece.

Let \(\mathcal H_0\) be one such smooth piece. On a connected component of \(\mathcal H_0\), let
\(V\) be a vector field tangent to \(\mathcal H_0\), normal to the MOTS leaves, and oriented so
that
\[
        dt(V)>0.
\]
Then, after multiplying by a positive lapse factor, we may write
\[
        V=\varphi(\ell-C\,\underline\ell),
        \qquad
        \varphi>0.
\]
Here \(\ell\) and \(\underline\ell\) are the future-directed outgoing and ingoing null normals,
normalized by
\[
        \gtime(\ell,\underline\ell)=-2.
\]
Since the leaves are of future black-hole type,
\[
        \theta_{(\ell)}=0,
        \qquad
        \theta_{(\underline\ell)}<0.
\]
The area variation along \(V\) is
\[
        \theta_{(V)}
        =
        \varphi\big(\theta_{(\ell)}-C\theta_{(\underline\ell)}\big)
        =
        -\varphi C\,\theta_{(\underline\ell)}.
\]
By the causal-character statement in Proposition~\ref{prop:mott-area-law}, \(C\ge0\). Hence
\[
        \theta_{(V)}\ge0.
\]
Since the area is constant on \(\mathcal H_0\), the integral of \(\theta_{(V)}\) over each leaf
vanishes. The integrand is nonnegative and continuous, while
\[
        \varphi>0,
        \qquad
        \theta_{(\underline\ell)}<0.
\]
Therefore
\[
        C=0
\]
on each connected component. Consequently
\[
        \gtime(\ell-C\underline\ell,\ell-C\underline\ell)=4C=0,
\]
and the smooth piece is null.

It remains to prove the asymptotic sharpness assertion. The sequence \(\Sigma_j\) constructed in
the proof of Theorem~\ref{thm:SPI} satisfies
\[
        m_H^{\mathrm{ST}}(\Sigma_j)
        \longrightarrow
        \sqrt{\frac{A_\infty}{16\pi}}.
\]
Since \(A_\infty=16\pi m_{ADM}^2\), this gives
\[
        m_H^{\mathrm{ST}}(\Sigma_j)\to m_{ADM}.
\]
On tangentially maximal leaves,
\[
        m_H^{\mathrm{ST}}(\Sigma_j)
        =
        m_H^{\mathrm{Riem}}(\Sigma_j),
\]
by \eqref{eq:hawking-masses-agree}. Hence
\[
        m_H^{\mathrm{Riem}}(\Sigma_j)
        =
        m_H^{\mathrm{ST}}(\Sigma_j)
        \longrightarrow
        m_{ADM}.
\]
\end{proof}

\begin{cor}[Vanishing flux along smooth equality pieces]
\label{cor:equality-null-flux}
Assume equality in Theorem~\ref{thm:SPI}. Let \(\mathcal H_0\) be a smooth horizon piece between
\(S_\ast\) and \(\mathcal H_{\mathrm{final}}\). Then each connected component of
\(\mathcal H_0\) is a null hypersurface with vanishing outgoing expansion. Moreover, if \(L\) is
any future-directed null generator of such a component, then
\[
        \sigma_{(L)}=0,
        \qquad
        T(L,L)=0.
\]
Here \(\sigma_{(L)}\) is the null shear, i.e. the trace-free part of the null second fundamental
form
\[
        \chi_{(L)}(X,Y):=\gtime(\nabla_XL,Y),
        \qquad
        X,Y\in TS_t,
\]
and, by Einstein's equations,
\[
        \operatorname{Ric}(L,L)=8\pi T(L,L).
\]
\end{cor}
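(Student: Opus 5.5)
Proposition~\ref{prop:equality-consequences} already shows that each connected component of $\mathcal H_0$ is null, with $C=0$ in the decomposition $V=\varphi(\ell-C\underline\ell)$. It also shows that the total area is constant along $\mathcal H_0$. So the leaves $\mathcal S_t$ are cross-sections of a null hypersurface whose null generator is proportional to $\ell$. Write $L=\psi\ell$ with $\psi>0$. The leaves are MOTS, so $\theta_{(L)}=\psi\theta_{(\ell)}=0$ on every leaf. This is the vanishing outgoing expansion statement. Since the foliation by the $\mathcal S_t$ sweeps out the component, $\theta_{(L)}\equiv0$ on the entire null hypersurface, not just on one leaf.

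Next I apply the Raychaudhuri equation along the generators. For a null hypersurface with generator $L$ satisfying $\nabla_LL=\kappa L$ and transverse sections of dimension two, we have
\[
L(\theta_{(L)})=\kappa\,\theta_{(L)}-\tfrac12\theta_{(L)}^2-|\sigma_{(L)}|^2-\operatorname{Ric}(L,L).
\]
This identity holds for any choice of sections, since $\theta$ and $\sigma$ of a null hypersurface are computed in the screen bundle. I will check that $L$ is tangent to the generators and that the leaves $\mathcal S_t$ give valid screens, and then use the identity. Because $\theta_{(L)}\equiv0$ on the hypersurface, its derivative along $L$ also vanishes. Therefore $|\sigma_{(L)}|^2+\operatorname{Ric}(L,L)=0$.

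By Einstein's equations, $\operatorname{Ric}(L,L)=8\pi T(L,L)$. The dominant energy condition implies the null energy condition, so $T(L,L)\ge0$. Both terms are then nonnegative with zero sum, so each vanishes: $\sigma_{(L)}=0$ and $T(L,L)=0$.

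These conclusions are independent of the normalization of $L$, since rescaling $L$ only rescales $\sigma$ and $T(L,L)$ by positive factors. So they hold for any future-directed null generator.

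The main obstacle is the first step: passing from nullness of the tube to $\theta_{(L)}\equiv0$ on the whole hypersurface, which is what allows us to differentiate along $L$. This requires the MOTS leaves to foliate the component smoothly, which is guaranteed by the standing smooth-piece assumption. The remaining steps are routine.
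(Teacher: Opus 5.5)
Your proposal is correct and follows essentially the same argument as the paper. You take nullness of each component (with generator proportional to $\ell$) from Proposition~\ref{prop:equality-consequences}, get $\theta_{(L)}\equiv0$ from the MOTS leaves, and then use the non-affinely parametrized Raychaudhuri equation together with $\operatorname{Ric}(L,L)=8\pi T(L,L)$ and the dominant energy condition to force $\sigma_{(L)}=0$ and $T(L,L)=0$; your remark that these conclusions are invariant under positive rescaling of $L$ is a small point the paper leaves implicit.
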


\begin{proof}
By Proposition~\ref{prop:equality-consequences}, every smooth horizon piece between
\(S_\ast\) and \(\mathcal H_{\mathrm{final}}\) is null. Hence, on each connected component, the
future-directed null generator \(L\) is tangent to the horizon and normal to the MOTS leaves.
Since the leaves are marginally outer trapped,
\[
        \theta_{(L)}=0
\]
after the harmless rescaling from \(\ell\) to \(L\).

The Raychaudhuri equation for the null congruence generated by \(L\), allowing a non-affine
parametrization \(\nabla_LL=\kappa_LL\), is
\[
        L(\theta_{(L)})
        =
        \kappa_L\theta_{(L)}
        -
        \frac12\theta_{(L)}^2
        -
        |\sigma_{(L)}|^2
        -
        \operatorname{Ric}(L,L).
\]
Using Einstein's equations and the fact that \(L\) is null,
\[
        \operatorname{Ric}(L,L)=8\pi T(L,L).
\]
Since \(\theta_{(L)}\equiv0\) on the horizon piece, the left-hand side and the first two terms on
the right-hand side vanish. Therefore
\[
        |\sigma_{(L)}|^2+8\pi T(L,L)=0.
\]
The dominant energy condition implies
\[
        T(L,L)\ge0.
\]
Hence both terms vanish:
\[
        \sigma_{(L)}=0,
        \qquad
        T(L,L)=0.
\]
\end{proof}

\begin{remark}
Thus equality forces the future horizon portion from \(S_\ast\) to the final horizon to be
non-expanding: there is no area growth, no positive jump in total area, and no null matter or shear
flux through the smooth horizon pieces. This is the expected spacetime analogue of the rigidity
mechanism in the Riemannian Penrose inequality. Notice, however, that this conclusion concerns the
horizon tube and the asymptotically sharp comparison sequence. It does not yet identify the full
exterior spacetime.
\end{remark}

\begin{prop}[Equality under a strong Kerr final state]
\label{prop:equality-strong-final-state}
Assume, in addition to the hypotheses of Theorem~\ref{thm:SPI}, that the strong final state
hypothesis holds with limiting subextremal Kerr parameters
\[
        m>0,
        \qquad
        |a|<m.
\]
If equality holds in Theorem~\ref{thm:SPI}, then the limiting Kerr state is Schwarzschild:
\[
        a=0,
        \qquad
        m=m_{ADM}.
\]
\end{prop}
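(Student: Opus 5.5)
The plan is to combine Proposition~\ref{prop:equality-consequences} with the explicit Kerr area--mass relations. By that proposition, equality forces
\[
A_\infty=A_\ast=16\pi m_{ADM}^2 .
\]
Under the strong final state hypothesis, condition \textup{(S2)} identifies the limiting horizon area with the Kerr area:
\[
A_\infty=A_{\mathrm{Kerr}}=8\pi\bigl(m^2+\sqrt{m^4-a^2m^2}\bigr)=8\pi\bigl(r_+^2+a^2\bigr),\qquad r_+=m+\sqrt{m^2-a^2}.
\]
The goal is to show $m\le m_{ADM}$. Once that is in hand, the elementary bound $A_{\mathrm{Kerr}}\le 16\pi m^2$, with equality if and only if $a=0$, gives the chain
\[
16\pi m_{ADM}^2=A_{\mathrm{Kerr}}\le 16\pi m^2\le 16\pi m_{ADM}^2 .
\]
Every inequality in it is then an equality, which forces $a=0$ and $m=m_{ADM}$.

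The main step is therefore the mass comparison $m_{ADM}\ge m$, and I would get it by reusing the TMCF comparison tails rather than invoking Bondi mass loss.

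First, for fixed $r_1>r_+$, the coefficients converge to Boyer--Lindquist Kerr on the tail $r\ge r_1$ by \textup{(S1)}. Consequently the tails $f^{(r_1,T_0)}$ from Theorem~\ref{thm:global-existence}\textup{(iii)} converge, as $T_0\to\infty$, to the constant graph. Indeed, the a priori bounds \eqref{eq:global-C0-est}--\eqref{eq:global-fr-est} are controlled by forcing tails that tend to zero, because $\Xi_{\mathrm{Kerr}}\equiv0$.

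Second, the boundary leaf $\Sigma^{(T_0)}_{r_1}$ then converges in $C^2$ to the Boyer--Lindquist sphere $\{r=r_1\}$ in Kerr. Its spacetime Hawking mass therefore converges to the Kerr Hawking mass $m_H^{\mathrm{Kerr}}(r_1)$ of that sphere, and Corollary~\ref{cor:HI-comparison} gives
\[
m_{ADM}\ge m_H^{\mathrm{Kerr}}(r_1)\qquad\text{for every } r_1>r_+ .
\]

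Third, I let $r_1\to\infty$. The Boyer--Lindquist spheres are asymptotically round on the asymptotically flat Kerr end, so $m_H^{\mathrm{Kerr}}(r_1)\to m$, which yields $m_{ADM}\ge m$.

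This last limit is the step I expect to need the most care. One must check that the Hawking mass of large Boyer--Lindquist spheres in a constant-$t$ Kerr slice tends to the Kerr ADM mass $m$. The slice has $\tr_{S_r}K=0$ by Proposition~\ref{prop:kerr}, so the spacetime and Riemannian Hawking masses coincide on these spheres. The Riemannian statement is then a standard expansion: $H=\tfrac{2}{r}\bigl(1-\tfrac{m}{r}\bigr)+O(r^{-3})$ together with an area of $4\pi r^2+O(1)$. An alternative route for the same step is monotonicity of the Hawking mass along the Kerr foliation, but the direct expansion suffices.

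Finally, the algebraic step is immediate. Writing $s=\sqrt{m^2-a^2}\le m$, we have $A_{\mathrm{Kerr}}=8\pi m(m+s)\le16\pi m^2$, with equality exactly when $s=m$, that is, when $a=0$.
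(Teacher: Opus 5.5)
Your proof is correct, but its decisive step differs from the paper's.

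The paper takes $m=m_{ADM}$ as an input. It justifies this by asserting that under \textup{(S1)} the coefficients converge to Kerr in the weighted topology used to compute the ADM surface integral, and then solves $m^2=\tfrac12 m\bigl(m+\sqrt{m^2-a^2}\bigr)$ to get $a=0$.

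You prove only the one-sided bound $m_{ADM}\ge m$, in three moves:
\begin{itemize}
\item you apply Corollary~\ref{cor:HI-comparison} to tails with a fixed inner radius $r_1$;
\item you let $T_0\to\infty$ to obtain $m_{ADM}\ge m_H^{\mathrm{Kerr}}(r_1)$;
\item you let $r_1\to\infty$.
\end{itemize}
The sandwich $16\pi m_{ADM}^2=A_{\mathrm{Kerr}}\le 16\pi m^2\le 16\pi m_{ADM}^2$ then gives both $a=0$ and $m=m_{ADM}$ as conclusions.

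The paper's route is shorter, but its key identification is delicate. With weights $\tau<1$, convergence on fixed tails as $t\to\infty$ does not control the $r^{-1}$ coefficient at spatial infinity uniformly in $t$. Positive energy moving outward can therefore keep $m_{ADM}>m$ while the weighted norms of $\mathcal S(t)-\mathcal S_{\mathrm{Kerr}}$ still tend to zero. This is exactly the mass-loss picture $m_{ADM}\ge m_f$ that the introduction and Appendix~\ref{app:bondi-annuli} explicitly allow.

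Your route never needs continuity of the ADM mass. It reuses the comparison-tail machinery together with the large-sphere Hawking-mass expansion in Kerr, which you carry out correctly. On Boyer--Lindquist spheres the spacetime and Riemannian Hawking masses agree by Proposition~\ref{prop:kerr}. As a bonus, your route shows that equality forces the absence of mass loss.

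A few details to tidy:
\begin{itemize}
\item When you let $T_0\to\infty$, the boundary leaf lives in the good-gauge chart. Its $C^2$-closeness to $\{t=T_0,\ r=r_1\}$ in the original chart should come from \eqref{eq:app-r1-C2-close}--\eqref{eq:app-r1-time-derivative-bounds}, using $\mathfrak B^\sharp\to0$ and the $C^2$-smallness of $\eta=f(r_1,\cdot)-T_0$, exactly as in the proof of Theorem~\ref{thm:SPI}. Only the embedding of the leaf enters $|{\bf H}|^2$, so you do not need $f_r$ here.
\item The identity $A_{\mathrm{Kerr}}=8\pi(r_+^2+a^2)$ is off by a factor of two; it should be $4\pi(r_+^2+a^2)=8\pi m r_+$. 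You never use it, so nothing downstream is affected.
\item Drop the suggested alternative via Hawking-mass monotonicity along the Boyer--Lindquist foliation. That foliation is not an inverse mean curvature flow, so no such monotonicity is available. The direct expansion is the right tool.
\end{itemize}
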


\begin{proof}
Under the strong final state hypothesis as formulated here, the ADM mass of the asymptotic end
agrees with the mass parameter of the limiting Kerr exterior:
\[
        m=m_{ADM}.
\]
Indeed, the late-time exterior coefficients converge on every fixed asymptotically flat tail to
the Kerr coefficients in the weighted topology used to compute the ADM surface integral.

The area of a Kerr horizon is
\[
        A_{\mathrm{Kerr}}
        =
        8\pi m\left(m+\sqrt{m^2-a^2}\right).
\]
By the strong final state hypothesis,
\[
        A_\infty=A_{\mathrm{Kerr}}.
\]
If equality holds in Theorem~\ref{thm:SPI}, Proposition~\ref{prop:equality-consequences} gives
\[
        m_{ADM}
        =
        \sqrt{\frac{A_\infty}{16\pi}}.
\]
Using \(m=m_{ADM}\) and \(A_\infty=A_{\mathrm{Kerr}}\), we obtain
\[
        m
        =
        \sqrt{
        \frac{8\pi m(m+\sqrt{m^2-a^2})}{16\pi}
        }.
\]
Squaring and using \(m>0\) gives
\[
        m^2
        =
        \frac12 m\left(m+\sqrt{m^2-a^2}\right).
\]
Hence
\[
        2m=m+\sqrt{m^2-a^2}.
\]
Therefore
\[
        \sqrt{m^2-a^2}=m,
\]
and so
\[
        a=0.
\]
Thus the limiting Kerr state is Schwarzschild.
\end{proof}

The preceding results are the unconditional equality consequences of the method. They show that
equality forces the horizon to have no future area growth and forces the comparison tails to become
asymptotically sharp for the Riemannian Penrose comparison. The remaining step in a full rigidity
theorem would be to upgrade this asymptotic sharpness to an exact equality statement on a limiting
Riemannian manifold. We isolate this additional input in the following conditional proposition.

\begin{prop}[Rigidity of an exact limiting tangentially maximal tail]
\label{prop:exact-tail-rigidity}
Assume equality holds in Theorem~\ref{thm:SPI}. Suppose, in addition, that the comparison tails
\[
        (M_{f_j},g_{f_j},K_{f_j})
\]
used in the proof subconverge, after the natural identifications of the exterior regions, smoothly
on compact subsets and up to the inner boundary to an admissible spacelike tangentially maximal
initial data set
\[
        (M_\infty,g_\infty,K_\infty),
\]
with connected inner boundary
\[
        S_\infty,
\]
where \(S_\infty\) is a MOTS cross-section of \(\mathcal H_{\mathrm{final}}\). Assume also that
the limiting boundary is outward minimizing in \((M_\infty,g_\infty)\), that the ADM mass of
\((M_\infty,g_\infty)\) is \(m_{ADM}\), and that
\[
        |S_\infty|=A_\infty.
\]
Then \((M_\infty,g_\infty)\) is isometric to the exterior region of the spatial Schwarzschild
manifold of mass \(m_{ADM}\). Moreover,
\[
        K_\infty\equiv0
\]
on \(M_\infty\).
\end{prop}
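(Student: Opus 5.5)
The plan is to reduce to the rigidity statement of the Riemannian Penrose inequality of Huisken--Ilmanen \cite{H-I}, applied to the limiting manifold $(M_\infty,g_\infty)$. The first step is to verify the hypotheses of that theorem on $(M_\infty,g_\infty)$. Since the limit is an admissible spacelike tangentially maximal initial data set in a spacetime satisfying the dominant energy condition, the argument of Proposition~\ref{prop:R-nonneg-tm} applies verbatim and gives
\[
R_{g_\infty}\ge 16\pi\mu_\infty+2|K_\infty(\mathbf n,\cdot)|^2_{\gamma}+|K_\infty|^2_{\gamma}\ge 0 .
\]
The boundary $S_\infty$ is connected and outward minimizing by assumption, and the ADM mass is $m_{ADM}$. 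Next I would check that $S_\infty$ is minimal in $(M_\infty,g_\infty)$. Because $S_\infty$ is a leaf of the limiting TMCF foliation, $\tr_{S_\infty}K_\infty=0$. Because it is a MOTS, ${\bf H}_{S_\infty}$ is null. The decomposition \eqref{eq:intro-H-decomp} then gives $H_{S_\infty}^2=(\tr_{S_\infty}K_\infty)^2=0$, so $H_{S_\infty}=0$. Consequently $m_H^{\mathrm{Riem}}(S_\infty)=\sqrt{|S_\infty|/16\pi}=\sqrt{A_\infty/16\pi}$. Proposition~\ref{prop:equality-consequences} gives $A_\infty=16\pi m_{ADM}^2$, so equality holds in the Riemannian Penrose inequality $m_{ADM}(M_\infty,g_\infty)\ge\sqrt{|S_\infty|/16\pi}$.

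The Huisken--Ilmanen rigidity statement then says that $(M_\infty,g_\infty)$, being the region outside a connected outward minimizing minimal boundary with equality, is isometric to the exterior of spatial Schwarzschild of mass $m_{ADM}$. This identifies $g_\infty$. It also yields vanishing of the monotonicity defect along weak IMCF, which I will use in the next step.

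To prove $K_\infty\equiv0$, I would extract it from the equality case as follows. The Schwarzschild metric has $R_{g_\infty}\equiv0$. Combined with the pointwise identity above and $\mu_\infty\ge0$, this forces
\[
\mu_\infty=0,\qquad K_\infty(\mathbf n,\cdot)|_{TS}=0,\qquad K_\infty|_{TS\times TS}=0 .
\]
So the only possibly surviving component is $K_\infty(\mathbf n,\mathbf n)=\tr_{g_\infty}K_\infty$, and this is the main obstacle, since the scalar-curvature argument does not see it. I would first try the momentum constraint together with the dominant energy condition: $|J|\le\mu=0$ gives $\operatorname{div}_{g_\infty}(K_\infty-(\tr K_\infty)g_\infty)=0$. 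For $K_\infty=k\,\mathbf n^\flat\otimes\mathbf n^\flat$ in Schwarzschild, where the $\mathbf n$-integral curves are radial geodesics and the TMCF leaves are the level sets of the IMCF (i.e. the round spheres), a short computation shows the tangential component of this equation is $-\slashed d k=0$. The normal component gives an ODE in $r$,
\[
\partial_{\mathbf n}k+Hk-\partial_{\mathbf n}k=Hk=0 .
\]
Since $H>0$ on the exterior leaves, this forces $k=0$. The delicate point is confirming that the TMCF leaves of the limit coincide with the Schwarzschild coordinate spheres. I would argue this by noting that equality in the IMCF monotonicity forces the leaves to be umbilic with constant $H$, which gives round spheres in Schwarzschild. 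If that identification fails, I would fall back on the ADM decay of $K_\infty$ together with the ODE along $\mathbf n$ to conclude $k\equiv0$.
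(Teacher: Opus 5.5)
Your outline is the paper's: $R_{g_\infty}\ge0$ from the tangentially maximal constraint identity, and $H_{S_\infty}=0$ from $\tr_{S_\infty}K_\infty=0$ plus the MOTS condition. Your nullity-of-$\mathbf H$ version of the latter is a convention-free form of the paper's $\theta_{(\ell)}=H+\tr_S K$. Then come saturation of the Riemannian Penrose inequality, Huisken--Ilmanen rigidity, $R_{g_\infty}=0$, $\mu_\infty=0$, $K_\infty=k\,\mathbf n^\flat\otimes\mathbf n^\flat$, and the momentum constraint with $J_\infty=0$.

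However, the ``main obstacle'' you identify at the end does not exist, and the argument you propose for it is wrong. For $K=k\,\mathbf n^\flat\otimes\mathbf n^\flat$ and \emph{any} foliation with unit normal $\mathbf n$,
\[
\operatorname{div}(K-k\,g)=-\slashed d k+k(\nabla_{\mathbf n}\mathbf n)^\flat+kH\,\mathbf n^\flat .
\]
Here $H$ is the mean curvature of the TMCF leaves themselves, so the $\mathbf n$-component is exactly $kH$. Admissibility gives $H>0$ off $S_\infty$, hence $k=0$ there, and by continuity also on $S_\infty$. This is precisely the paper's argument. It requires no identification of the TMCF leaves with Schwarzschild coordinate spheres. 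Only your tangential-component formula $-\slashed d k=0$ used that identification; it is false in general because of the $k\nabla_{\mathbf n}\mathbf n$ term, and it is not needed anyway.

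The identification argument would also fail on its own terms. Equality in Geroch monotonicity constrains the level sets of the weak IMCF. The TMCF leaves are instead fixed by the spacetime TMCF equation and have no reason to coincide with those level sets. Likewise the ``ODE along $\mathbf n$'' fallback is moot, since the normal component is algebraic in $k$.
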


\begin{proof}
Since \(S_\infty\) is a MOTS cross-section of \(\mathcal H_{\mathrm{final}}\),
\[
        \theta_{(\ell)}=0.
\]
Since \(S_\infty\) is also a leaf of the limiting tangentially maximal foliation,
\[
        \tr_{S_\infty}K_\infty=0.
\]
With the convention
\[
        \theta_{(\ell)}=H_{S_\infty}+\tr_{S_\infty}K_\infty,
\]
we therefore obtain
\[
        H_{S_\infty}=0.
\]
Thus \(S_\infty\) is a minimal surface in \((M_\infty,g_\infty)\).

The limiting data are admissible and tangentially maximal, so Proposition~\ref{prop:R-nonneg-tm}
applies to give
\[
        R_{g_\infty}\ge0.
\]
By assumption, \(S_\infty\) is outward minimizing, and the ADM mass is \(m_{ADM}\). Equality in
Theorem~\ref{thm:SPI}, together with \(|S_\infty|=A_\infty=A_\ast\), gives
\[
        m_{ADM}
        =
        \sqrt{\frac{|S_\infty|}{16\pi}}.
\]
Therefore the Riemannian Penrose inequality is saturated by
\((M_\infty,g_\infty)\). By the rigidity statement in the Riemannian Penrose inequality
\cite{H-I,bray-conformal}, \((M_\infty,g_\infty)\) is isometric to the exterior region of the
spatial Schwarzschild manifold of mass \(m_{ADM}\). In particular,
\[
        R_{g_\infty}=0.
\]

It remains to show that the limiting second fundamental form vanishes. Let \({\bf n}\) be the
outward unit normal to the leaves of the limiting TMCF foliation, and decompose
\[
        TM_\infty=\mathrm{span}\{{\bf n}\}\oplus TS_r.
\]
The Hamiltonian constraint gives
\[
        R_{g_\infty}
        +
        (\tr_{g_\infty}K_\infty)^2
        -
        |K_\infty|_{g_\infty}^2
        =
        16\pi\mu_\infty.
\]
Since the limiting foliation is tangentially maximal,
\[
        \tr_{S_r}K_\infty=0.
\]
Therefore
\[
        \tr_{g_\infty}K_\infty=K_\infty({\bf n},{\bf n}),
\]
and
\[
        |K_\infty|_{g_\infty}^2-(\tr_{g_\infty}K_\infty)^2
        =
        2\,|K_\infty({\bf n},\cdot)|_{\gamma_\infty}^2
        +
        |K_\infty|_{\gamma_\infty}^2.
\]
Using \(R_{g_\infty}=0\), we obtain
\[
        0
        =
        16\pi\mu_\infty
        +
        2\,|K_\infty({\bf n},\cdot)|_{\gamma_\infty}^2
        +
        |K_\infty|_{\gamma_\infty}^2.
\]
The dominant energy condition gives \(\mu_\infty\ge0\). Hence all terms vanish:
\[
        \mu_\infty=0,
        \qquad
        K_\infty({\bf n},X)=0\quad\text{for all }X\in TS_r,
        \qquad
        K_\infty|_{TS_r}=0.
\]
Thus
\[
        K_\infty=\kappa\,{\bf n}^\flat\otimes{\bf n}^\flat
\]
for some function \(\kappa\).

The dominant energy condition also gives
\[
        |J_\infty|_{g_\infty}\le \mu_\infty=0,
\]
so
\[
        J_\infty=0.
\]
The momentum constraint is therefore
\[
        \operatorname{div}_{g_\infty}K_\infty
        -
        d(\tr_{g_\infty}K_\infty)
        =
        0.
\]
Since \(\tr_{g_\infty}K_\infty=\kappa\), taking the \({\bf n}\)-component gives
\[
\begin{aligned}
0
&=
\left(
\operatorname{div}_{g_\infty}K_\infty
-
d\kappa
\right)({\bf n})
\\
&=
\kappa\,H_{S_r}.
\end{aligned}
\]
The limiting tail is admissible, so
\[
        H_{S_r}>0
\]
on the exterior leaves \(r>r_0\). Hence
\[
        \kappa=0
\]
on \(M_\infty\setminus S_\infty\), and by continuity also on the boundary. Therefore
\[
        K_\infty\equiv0.
\]
\end{proof}

\begin{remark}[What is missing for full rigidity]
\label{rem:missing-full-rigidity}
The present proof of Theorem~\ref{thm:SPI} does not produce a fixed comparison manifold on which
equality is attained. Instead, it produces a sequence of admissible tangentially maximal tails
whose boundary Hawking masses converge to the ADM mass:
\[
        m_H^{\mathrm{Riem}}(\Sigma_j)
        =
        m_H^{\mathrm{ST}}(\Sigma_j)
        \longrightarrow
        m_{ADM}.
\]
This asymptotic sharpness does not by itself give a single Riemannian manifold satisfying equality
in the Riemannian Penrose inequality.

There are two natural ways to obtain a full Schwarzschild rigidity theorem from the present
method.

First, one could prove the compactness hypothesis in
Proposition~\ref{prop:exact-tail-rigidity}: the asymptotically sharp comparison tails would have
to converge, in a topology strong enough to pass the scalar curvature inequality, ADM mass,
outward-minimizing property, and boundary geometry to the limit, to an exact limiting
tangentially maximal tail. The Riemannian Penrose rigidity theorem would then identify the limiting
Riemannian metric as spatial Schwarzschild, and the tangentially maximal scalar-curvature identity
together with the momentum constraint would force \(K_\infty=0\).

Second, one could instead prove a quantitative stability theorem for the Riemannian Penrose
inequality adapted to the particular class of tangentially maximal tails constructed here. Such a
stability theorem would need uniform control of the geometry of the tails as their inner
boundaries approach the final horizon, together with a mechanism preventing loss of mass or area in
the limit.

Without one of these additional ingredients, the present method proves the sharp inequality and the
horizon-level equality consequences above, but it does not by itself imply that the original
spacetime exterior is Schwarzschild. Under a strong Kerr final state, equality identifies the
limiting stationary parameters as those of Schwarzschild by
Proposition~\ref{prop:equality-strong-final-state}. Identifying the entire exterior spacetime, or
even the entire past exterior region preceding the final state, with Schwarzschild would require an
additional rigidity, no-radiation, or unique-continuation input beyond the hypotheses used in this
paper.
\end{remark}

\appendix

\section{Gauge reduction on late-time slabs}\label{app:late-slab-gauge}

The goal of this appendix is to show that, on every fixed tail $r\ge r_1>r_0$, a sufficiently
late finite time slab can be placed in the good gauge
\[
b\equiv 0,
\qquad
\beta_r\equiv 0
\]
by a time-preserving change of coordinates. The relevant small quantity is the geometric normal
component of the shift,
\[
\beta^\perp:=\beta({\bf n}),
\]
where ${\bf n}$ is the outward unit normal to the coordinate spheres in the spatial slice. The
argument has three steps. We first remove $b$ by a slice-wise angular reparametrization which is
asymptotic to the identity at infinity. We then transport the radial coordinate along the shift,
which makes the transported radial foliation have zero normal shift. Finally, we compare the
forcing tails before and after this change of foliation.

We now state the result used in the main body of the paper.

\begin{prop}[Late-time slab gauge reduction]\label{prop:app-late-slab-gauge}
Let
\[
M=(r_0,\infty)\times S^2,
\qquad
\mathcal M=(\underline T,\infty)\times M,
\]
and let $(\mathcal M,\gtime)$ be written in a general ADM chart
\[
\gtime
=
-\big(N^2-|\beta|_{g(t)}^2\big)\,dt^2
+2\,\beta\odot dt
+g(t),
\qquad
g(t)=\big(\lambda^2+|b|_\gamma^2\big)\,dr^2+2\,b\odot dr+\gamma.
\]
Assume that $\mathcal S=(N,\lambda,\beta,b,\gamma)\in {\ctS}^{\sharp}_{-\tau}(\mathcal M)$ and that
$\mathcal S$ is late-time gauge reducible, i.e. for every $r_*>r_0$,
\[
\mathfrak B^\sharp(T;r_*)\to0
\qquad\text{as }T\to\infty.
\]
Fix $r_1>r_0$ and set
\[
r_-:=\frac{r_0+r_1}{2}.
\]
Then there exist constants
\[
h\in(0,1],
\qquad
C>0,
\]
depending only on
\[
r_1,\ 
\delta_*(r_-)^{-1},\
\vartheta_*(r_-)^{-1},\
c_*(r_-)^{-1},\
\mathcal N_\tau^\sharp(r_-),
\]
and a time
\[
T_{\mathrm{gauge}}>\underline T
\]
depending additionally on the tail function $\mathfrak B^\sharp(\cdot;r_-)$, with the following
property.

For every central time $T$ satisfying
\[
T-h\ge T_{\mathrm{gauge}},
\]
there is a time-preserving coordinate system
\[
(t,\tilde r,q)
\]
on the slab
\[
I_h(T)\times M_{r_1,\infty},
\qquad
I_h(T):=[T-h,T+h],
\]
such that, if
\[
\widetilde{\mathcal S}
=
(\tilde N,\tilde\lambda,\tilde\beta,\tilde b,\tilde\gamma)
\]
denotes the transformed coefficient tuple, then
\[
\|\widetilde{\mathcal S}\|_{\ctS_{-\tau}(I_h(T)\times M_{r_1,\infty})}\le C,
\]
where this denotes the slab version of the norm in Definition~\ref{def:ctS-nonstat}, and the metric
is in good gauge:
\begin{equation}\label{eq:app-good-gauge-conclusion}
\tilde b\equiv 0,
\qquad
\tilde\beta_{\tilde r}\equiv 0.
\end{equation}

Let
\[
\widetilde G_I,\quad \widetilde\Phi_{1,I},\quad \widetilde\Phi_{2,I},\quad
\widetilde\Omega_I,\quad \widetilde\Psi_I,
\qquad I=I_h(T),
\]
denote the slab quantities in the new chart on the new tail $\tilde r\ge r_1$, computed using
suprema over $t\in I$ and the forcing of the new radial foliation. Let
\[
G_I,\quad \Phi_{1,I},\quad \Phi_{2,I},\quad \Omega_I,\quad \Psi_I
\]
denote the corresponding slab quantities in the original chart, computed on the old tail
$r\ge r_-$ from the geometric forcing of the original $r$-foliation. Then
\begin{align}
\widetilde G_I
&\le
C\Big(
G_I+h\,\mathfrak B^\sharp(T-h;r_-)
\Big),
\label{eq:app-new-G-bound}
\\
\widetilde\Phi_{1,I}
&\le
C\Big(
\Phi_{1,I}+h\,\mathfrak B^\sharp(T-h;r_-)
\Big),
\label{eq:app-new-Phi1-bound}
\\
\widetilde\Phi_{2,I}
&\le
C\Big(
\Phi_{2,I}+h\,\mathfrak B^\sharp(T-h;r_-)
\Big),
\label{eq:app-new-Phi2-bound}
\\
\widetilde\Omega_I
&\le
C\Big(
\Omega_I+h\,\mathfrak B^\sharp(T-h;r_-)
\Big),
\label{eq:app-new-Omega-bound}
\\
\widetilde\Psi_I
&\le
C\Big(
\Psi_I+h\,\mathfrak B^\sharp(T-h;r_-)
\Big).
\label{eq:app-new-Psi-bound}
\end{align}

Moreover, if
\(r=R(t,\tilde r, q)\) denotes the original radial coordinate expressed in the new coordinates, then
\begin{equation}\label{eq:app-r1-C2-close}
\sup_{t\in I_h(T)}
\|R(t,r_1,\cdot)-r_1\|_{C^2(S^2)}
\le
C h\,\mathfrak B^\sharp(T-h;r_-).
\end{equation}
In addition, the time derivatives of \(R\) needed for composition
with \(C^2\)-small time graphs are uniformly bounded:
\begin{equation}\label{eq:app-r1-time-derivative-bounds}
\sup_{t\in I_h(T)}
\left(
\|\partial_t R(t,r_1,\cdot)\|_{C^1(S^2)}
+
\|\partial_t^2 R(t,r_1,\cdot)\|_{C^0(S^2)}
\right)
\le C .
\end{equation}

\end{prop}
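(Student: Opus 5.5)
The appendix introduction already outlines the construction in three steps. I would carry them out on the slab $I_h(T)\times M_{r_-,\infty}$, with $h$ chosen small (depending only on the fixed-tail data at $r_-$), and then restrict to the new tail $\tilde r\ge r_1$.

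\textbf{Step 1: removing $b$.} On each slice $\{t\}\times M$ I define an angular reparametrization $p=\Phi_t(r,q)$ by integrating the radial vector field $\partial_r-b^{\sharp_\gamma}$, normalized to the identity at $r=\infty$. Concretely, $\Phi_t(r,\cdot)$ is the flow of $-\lambda{\bf n}_t+\partial_r$ started from infinity. This is well defined because $|b|_\gamma=O(r^{-\tau})$, so $\int^\infty |b^{\sharp_\gamma}|\,dr$ converges with weight $r^{-1-\tau}$ relative to the round metric. In the coordinates $(t,r,q)$ the radial direction is then $g(t)$-orthogonal to the spheres, so $\tilde b=0$. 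Because $\Phi_t$ is $C^{3}$-close to the identity on the tail, the $\ctS^\sharp$ bounds transfer to $\ctS_{-\tau}$ bounds, at the cost of the derivatives built into Definition~\ref{def:ctsharp-main}. This step is where the extra regularity in that definition gets used up. The step changes neither the foliation nor $\beta^\perp$, $H_{t,r}$, $\tr_{S_{t,r}}K_t$, hence it leaves $\Xi$ unchanged, since these are geometric quantities of the $r$-spheres.

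\textbf{Step 2: transporting the radius.} To kill $\beta_r$ I look for a new radial function $\tilde r$ with $\tilde r(T-h,\cdot)=r$ whose level sets are transported by the normal part of the shift. That is, I require $\tilde\beta_{\tilde r}=0$, which after Step 1 is equivalent to
\[
\partial_t R=-\frac{\beta^\perp}{\lambda}\circ(\ldots)\,\cdot\,(\text{normal speed factor}).
\]
Writing $r=R(t,\tilde r,q)$, this is a first-order transport PDE in $(t,q)$ on each sphere. I solve it as an ODE in $t$ for the graph function of the moving spheres over the old ones, via the implicit function theorem in $C^{2+\alpha}$ on $S^2$.

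Integrating over a time interval of length at most $2h$ gives
\[
\|R-\tilde r\|\le Ch\,\sup_{t\ge T-h}\|\beta^\perp\|_{\ct^\sharp_{-\tau}},
\]
which is \eqref{eq:app-r1-C2-close}. Here $h$ is chosen so that the new spheres with $\tilde r\ge r_1$ stay in the region $r\ge r_-$. The time derivatives $\partial_tR$ and $\partial_t^2R$ are read off directly from the ODE, using the $\partial_t$ bounds in $\mathcal N^\sharp_\tau$; this gives \eqref{eq:app-r1-time-derivative-bounds}.

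After the new radius is chosen, I repeat Step 1 for the new foliation to restore $\tilde b=0$. Since the angular reparametrization is tangential, it does not affect $\tilde\beta_{\tilde r}=0$. The nondegeneracy bounds for $\tilde\lambda$, $\tilde N$ and $r\tilde H$ persist for $T_{\mathrm{gauge}}$ large, because the perturbation is $O(h\mathfrak B^\sharp)$.

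\textbf{Step 3: comparing the forcing.} This is the main obstacle. The new forcing $\tilde\Xi$ is built from the geometric quantities $\tr_{\tilde S}K_t$ and $\tilde H$ of the new spheres, which are $C^2$-graphs over the old spheres with graph function $R-\tilde r=O(h\mathfrak B^\sharp r^{-\tau})$ in the weighted norms. I expand each quantity to first order in the graph function and in $\beta^\perp$: the deviation is controlled by derivatives up to third order of the graph function times the coefficient bounds. This yields
\[
|\tilde\Xi-\Xi\circ R|\lesssim h\,\mathfrak B^\sharp\,r^{-1-\tau},
\]
with analogous bounds for one and two angular derivatives, and in the Hölder norm. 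Integrating these bounds in $r$ gives \eqref{eq:app-new-G-bound}--\eqref{eq:app-new-Psi-bound}. In that integration, the comparison of $\sup$ over the new tail with $\sup$ over the old tail $r\ge r_-$ uses the inclusion from Step 2.

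The delicate point is losing at most the derivatives supplied by $\ct^\sharp$. $\tilde\Xi$ involves second derivatives of the embedding, and $\mathfrak X_2$ involves two more angular derivatives. The transport ODE must therefore preserve $C^4$ angular regularity of $\beta^\perp$ with weights, and I would verify this by differentiating the ODE and applying Gronwall over the short interval $[T-h,T+h]$.
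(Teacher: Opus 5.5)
Your architecture (orthogonalize, transport the radius, re-orthogonalize, then compare the forcings by linearizing in the graph function) is the paper's, and Steps~1 and~3 match it closely. The gap is in Step~2, which you leave with placeholders at exactly the point where the construction lives.

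\textbf{The transport equation.} Given $\tilde b=0$, the requirement $\tilde\beta_{\tilde r}=0$ is the geometric condition ${\bf T}_t(\tilde r)=0$, i.e.\ $\partial_t\rho-\beta^{\sharp_{g(t)}}(\rho)=0$. In terms of $r=R(t,\tilde r,q)$ (after Step~1) this reads $\partial_tR-(\beta^T)^{\sharp_\gamma}(R)=-\beta^\perp/\lambda$, with the coefficients evaluated at $(t,R,q)$. The advection term $(\beta^T)^{\sharp_\gamma}(R)$ is not a factor multiplying $\beta^\perp$: it survives where $\beta^\perp=0$ as soon as $\slashed dR\neq0$. If you literally transport the level sets by the normal part of the shift relative to the old spheres, you are left with $\tilde\beta^{\sharp}(\tilde r)=-\partial_r\rho\,(\beta^T)^{\sharp_\gamma}(R)$. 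This is of size $h\,\mathfrak B^\sharp|\beta^T|$, so the exact good gauge \eqref{eq:app-good-gauge-conclusion} fails.

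\textbf{How to solve it.} With the advection term included, the equation cannot be solved as an ODE in $t$ on $C^{2+\alpha}(S^2)$ via the implicit function theorem, because $R\mapsto(\beta^T)^{\sharp_\gamma}(R)$ loses an angular derivative. In any case $C^{2+\alpha}$ falls short of what your own Step~3 needs: four angular derivatives of $R$ and its $\partial_{\tilde r}\slashed D^2$-derivative. The paper instead solves the \emph{linear} transport equation for $\rho$, in the coordinates of Step~1, by characteristics (the normal observers). Then $w=\rho-r$ satisfies $\partial_tw-\beta^{\sharp_g}(w)=\beta^\perp/\lambda$ with $w(T,\cdot)=0$. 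The differentiated equations plus Gronwall on $I_h(T)$ give $\|w\|_{{\ct}^{\sharp}_{-\tau}}\le Ch\,\mathfrak B^\sharp(T-h;r_-)$, and one inverts using $\partial_r\rho\ge\frac12$. The zero normal shift is then the identity $(\beta^{\sharp_g}+R_t\partial_r)(\rho)=0$, obtained by differentiating $\rho(t,R,q)=\tilde r$ in $t$.

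\textbf{Regularity bookkeeping.} Your derivative count also does not close. You say Step~1 uses up the extra regularity of Definition~\ref{def:ctsharp-main}, but Steps~2--3 need it afterwards: $u\in{\ct}^{\sharp}_{-\tau}$ requires $\beta^\perp/\lambda$ and $\beta^{\sharp_g}$ to lie in the $\sharp$ class in the orthogonalized chart. Moreover, the slab $\ctS_{-\tau}$ bound includes $\partial_t^2\tilde\beta$. Since the time-dependent coordinate changes add $g(t)(\partial_tp,\cdot)$ and $R_t\partial_r$ to the shift, this involves $\partial_t^3p$ and $R_{ttt}$. You only discuss $\partial_tR$ and $\partial_t^2R$, which give \eqref{eq:app-r1-time-derivative-bounds} but not the coefficient bound. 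This is why $\mathcal N^\sharp_\tau$ carries three time derivatives. It is also why Lemma~\ref{lem:app-kill-b} records that the orthogonalized coefficients stay in ${\ct}^{\sharp}$, with $k\le3$ for $N,\lambda,\gamma$ and $k\le2$ for $\beta$. Before Step~3 can be closed, you need a statement of this kind. Alternatively, argue that the chart-independent objects ${\bf T}(\rho)=0$ and $\tilde\Xi$ can be estimated without passing through the coordinate expressions in which you lose derivatives.
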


The tail quantities appearing above are those defined in
\eqref{eq:X0-slab-def}--\eqref{eq:X2-slab-def} and
\eqref{eq:GI-def}--\eqref{eq:PsiI-def}, with the forcing understood geometrically as
\[
\Xi=\frac{\lambda}{N}\frac{\tr_{S_{t,r}}K_t}{H_{t,r}}.
\]

We now prove Proposition~\ref{prop:app-late-slab-gauge}. The proof is divided into three steps:
orthogonalize the original radial foliation, transport the radial coordinate to remove the normal
shift, and compare the forcing tails.

\subsection*{Step 1: Orthogonalizing the original radial foliation}

We first remove the mixed $dr\odot dp$ term in the spatial metric by changing only the angular
variables. This does not change the leaves $S_{t,r}$, so all geometric quantities attached to the
$r$-foliation are preserved. Throughout this step, the spacetime and the fixed radii $r_-$ and
$r_1$ are those of Proposition~\ref{prop:app-late-slab-gauge}.

\begin{lem}[Slice-wise orthogonalization]\label{lem:app-kill-b}
On $(\underline T,\infty)\times M_{r_-,\infty}$ there is a time-preserving angular change of
variables
\[
(t,r,q)\mapsto (t,r,p(t,r,q)),
\qquad
p(t,r,q)\to q\quad\text{as }r\to\infty,
\]
such that, in the new coordinates $(t,r,q)$, the spatial metric takes the form
\[
g(t)=\lambda^2\,dr^2+\bar\gamma(t,r).
\]
The leaves $S_{t,r}$ are unchanged as subsets of spacetime. Consequently,
\[
\lambda,\qquad H_{t,r},\qquad \tr_{S_{t,r}}K_t,\qquad \beta^\perp,\qquad \Xi
\]
are unchanged up to angular pull-back.

Moreover, writing the transformed coefficients as
\[
\bar{\mathcal S}
=
(\bar N,\bar\lambda,\bar\beta,\bar b,\bar\gamma),
\]
one has $\bar b\equiv0$ and, uniformly for $t>\underline T$,
\[
\partial_t^k(\bar N-1)(t),\quad
\partial_t^k(\bar\lambda-1)(t),\quad
\partial_t^k\big(r^{-2}\bar\gamma(t)-\gamma_{S^2}\big)
\in
{\ct}^{\sharp}_{-\tau-k}(M_{r_-,\infty}),
\qquad
0\le k\le3,
\]
while
\[
\partial_t^k\bar\beta(t)
\in
{\ct}^{\sharp}_{-\tau-k}(M_{r_-,\infty}),
\qquad
0\le k\le2.
\]
The corresponding norms are bounded by a constant depending only on $r_1$ and
$\mathcal N_\tau^\sharp(r_-)$.

\end{lem}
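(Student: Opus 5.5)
The plan is to construct the angular reparametrization by flowing along the $g(t)$-normal direction to the coordinate spheres, inward from infinity. On each fixed slice $M_t$, the vector field ${\bf n}_t=\lambda^{-1}(\partial_r-b^{\sharp_\gamma})$ is the unit normal to $S_{t,r}$. Let $Y:=\lambda\,{\bf n}_t=\partial_r-b^{\sharp_\gamma}$, which satisfies $dr(Y)=1$. The flow of $Y$ therefore maps $S_{t,r}$ to $S_{t,r'}$, and we define the new angular coordinate $q$ of a point to be its angular label ``at $r=\infty$'' under this flow. Concretely, $p(t,r,q)$ solves the ODE
\[
\partial_r p(t,r,q)=-b^{\sharp_\gamma}(t,r,p(t,r,q)),\qquad \lim_{r\to\infty}p(t,r,q)=q,
\]
with $t$ entering only as a parameter.

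Existence of the limit at infinity comes from integrability. Measured in the round metric, $|b^{\sharp_\gamma}|_{\gamma_{S^2}}\lesssim r^{-1}|b|_\gamma\lesssim r^{-1-\tau}$, since $\gamma\sim r^2\gamma_{S^2}$ and $b\in\ct_{-\tau}$. This is integrable on $[r_-,\infty)$, so the backward problem from $r=\infty$ is well posed by Picard iteration in the variable $1/r$. It also gives
\[
\operatorname{dist}_{S^2}\big(p(t,r,q),q\big)\lesssim r^{-\tau}\,\mathcal N_\tau^\sharp(r_-).
\]

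Because $\partial_r$ in the new coordinates equals $Y$, and $Y$ is $g(t)$-orthogonal to $TS_{t,r}$, the cross term $\bar b$ vanishes. Moreover $g(Y,Y)=\lambda^2$, so the new spatial metric is $\lambda^2dr^2+\bar\gamma$ with $\bar\lambda=\lambda\circ p$. The leaves $\{t\}\times\{r\}\times S^2$ are mapped to themselves, since only the angular variable changes. Hence $H_{t,r}$, $\tr_{S_{t,r}}K_t$ and $\lambda$ are unchanged up to pull-back. The normal ${\bf n}_t$ is also unchanged, and so are $\beta^\perp=\beta({\bf n}_t)$ and $\Xi$. The new lapse is $\bar N=N\circ p$, because $dt$ is unchanged. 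The new shift is obtained from $\partial_t|_{\rm new}=\partial_t+(\partial_tp)^A\partial_A$, so $\bar\beta=p^*\beta+g(\partial_tp,\cdot)$ restricted to the slice, and $\bar\gamma=p^*\gamma$.

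The main work, and the main obstacle, is the weighted regularity of $p$ and the resulting loss of derivatives. I will differentiate the ODE in $q$, $t$ and $r$ and run Grönwall-type estimates from $r=\infty$.
\begin{itemize}
\item The linearization $\partial_r(\slashed D_q p)=-(\slashed Db^{\sharp_\gamma})\,\slashed D_qp$ has coefficient $O(r^{-1-\tau})$, which is integrable. This gives $\slashed D_qp-\mathrm{Id}=O(r^{-\tau})$, with analogous bounds up to four angular derivatives, using the $\slashed D^3,\slashed D^4$ control contained in ${\ct}^\sharp$.
\item Each $\partial_t$ gains one power of decay, since $\partial_t^kb\in{\ct}^\sharp_{-\tau-k}$. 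This gives $\partial_t^kp=O(r^{-\tau-k})$ in the round metric.
\item The mixed radial estimates follow directly from the ODE itself, because $\partial_r p$ is given explicitly by $-b^{\sharp_\gamma}\circ p$.
\end{itemize}

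Composition with $p$ costs no derivatives for the terms $N,\lambda,\gamma$ when the count is made in ${\ct}^\sharp$. This yields the stated bounds for $\bar N,\bar\lambda,r^{-2}\bar\gamma-\gamma_{S^2}$ for $k\le3$.

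The shift $\bar\beta$ contains $\partial_tp$. Its $k$-th time derivative therefore involves $\partial_t^{k+1}b$, and this is why the statement only claims $k\le2$ for $\bar\beta$: we need $\partial_t^3b$, which is available in $\mathcal N_\tau^\sharp$. I will check that $g(\partial_tp,\cdot)\sim r^2\cdot r^{-1-\tau}\cdot r^{-1}$ has the correct weight $r^{-\tau}$ in the component sense of $\ct_{-\tau}$. I will also verify that the Hölder seminorms in the logarithmic variable $s$ are preserved under composition, using $|\partial_s p|\lesssim r^{-\tau}$. All constants depend only on $r_1$ and $\mathcal N_\tau^\sharp(r_-)$.
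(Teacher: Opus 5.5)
Your construction is the same as the paper's. You use the same flow $\partial_r p=-b^{\sharp_\gamma}\circ p$ normalized by $p\to q$ at infinity; the paper takes the limit of flows $p_R$ with $p_R(t,R,q)=q$, which is equivalent to your Picard iteration in $1/r$. You also use the same orthogonality check via $\partial_r|_q=\partial_r-b^{\sharp_\gamma}$, the same shift formula $\bar\beta=p^*\beta+g(\partial_tp,\cdot)$, and the same reason why $\bar\beta$ only gets $k\le2$. Two of your steps, however, do not hold as written.

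\emph{Weights of the shift correction.} In the paper's convention, tangential tensors are measured by their components in a fixed $\gamma_{S^2}$-orthonormal frame. So $b\in\ct_{-\tau}$ means $b_A=O(r^{-\tau})$, hence $|b|_\gamma=O(r^{-1-\tau})$. Raising the index with $\gamma^{-1}\approx r^{-2}\gamma_{S^2}^{-1}$ then gives $b^{\sharp_\gamma}=O(r^{-2-\tau})$ in angular coordinates.

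Your bound $|b^{\sharp_\gamma}|_{\gamma_{S^2}}\lesssim r^{-1}|b|_\gamma\lesssim r^{-1-\tau}$ discards a factor of $r$. This propagates to $\operatorname{dist}(p,q)=O(r^{-\tau})$ and $\partial_tp=O(r^{-1-\tau})$. The loss is harmless for $\bar N$, $\bar\lambda$, $\bar\gamma$. It matters exactly where $\gamma\sim r^2$ multiplies $\partial_tp$: with your bound, the components of $\gamma(\partial_tp,\cdot)$ are only $O(r^2\cdot r^{-1-\tau})=O(r^{1-\tau})$.

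The extra factor $r^{-1}$ in your count $r^2\cdot r^{-1-\tau}\cdot r^{-1}$ converts to a $\gamma$-norm; it is not a gain in components. So membership in $\ct_{-\tau}$ ``in the component sense'' does not follow. Keep the sharp bounds, as the paper does: $p-q=O(r^{-1-\tau})$ and $\partial_t^kp=O(r^{-1-\tau-k})$. These give $\gamma(\partial_tp,\cdot)=O(r^{2}\cdot r^{-2-\tau})=O(r^{-\tau})$ componentwise. If you measured every tangential form by $|\cdot|_\gamma$, your numbers would be self-consistent, but that is not the convention defining ${\ct}^{\sharp}_{-\tau}$.

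\emph{Top-order angular derivatives.} The claim that composition with $p$ costs no derivatives holds for the scalars $N$ and $\lambda$, but not for the tensors. We have $\bar\gamma_{AB}=(\gamma_{CD}\circ p)\,\partial_Ap^C\,\partial_Bp^D$ and $\bar\beta_A=(\beta_C\circ p)\,\partial_Ap^C+(\gamma_{CD}\circ p)\,\partial_tp^C\,\partial_Ap^D$. So the $j=4$ term $\|r^{\sigma}\slashed D^4(\cdot)\|_{\mathcal C^0}$ in the ${\ct}^{\sharp}$ norm of $\bar\gamma$ and $\bar\beta$ involves $\slashed D^5p$. Differentiating the ODE, $\slashed D^5p$ requires $\slashed D^5b^{\sharp_\gamma}$, that is, five angular derivatives of $b$ and $\gamma$. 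But $\mathcal N^\sharp_\tau(r_-)$ controls only four. In general the flow of a vector field with four angular derivatives has only four, and the pulled-back tensor only three.

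The paper's proof asserts these bounds ``directly from composition'' and has the same omission, so this is not a departure from the paper. Still, neither argument yields the claimed $\slashed D^4$ bounds for $\bar\gamma$ and $\bar\beta$. You need either one more angular derivative of $b$, $\gamma$, $\beta$ in the hypothesis, or a weaker top-order conclusion. In the latter case you must check that Steps 2--3, and the second application of the lemma after the transport, use only the orders actually controlled.
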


\begin{proof}
For each fixed $t$, define the $r$-dependent vector field on $S^2$
\[
V_t(r,\cdot):=-b^{\sharp_\gamma}(t,r,\cdot).
\]
In any fixed finite atlas on $S^2$, the components of $V_t$ satisfy
\[
\slashed D^j V_t=O(r^{-2-\tau})
\qquad 0\le j\le4,
\]
and, for the radial and time derivatives needed below,
\[
\partial_r^\ell\slashed D^j V_t
=
O(r^{-2-\tau-\ell})
\qquad
1\le \ell\le3,\quad 0\le j\le 3-\ell,
\]
while
\[
\partial_t^k\slashed D^j V_t
=
O(r^{-2-\tau-k})
\qquad
0\le k\le3,\quad 0\le j\le4,
\]
with the corresponding H\"older bounds in the norms of
Definition~\ref{def:ctsharp-main}. Indeed, $b$ is a tangential one-form in the strengthened tail
class, and
\[
\gamma^{-1}=r^{-2}\gamma_{S^2}^{-1}+O(r^{-2-\tau})
\]
as a $(2,0)$-tensor in fixed angular coordinates. Raising the index in
$V_t=-b^{\sharp_\gamma}$ therefore gives the additional factor $r^{-2}$.

We now define $p(t,r,q)$ by integrating the non-autonomous ODE on $S^2$
\begin{equation}\label{eq:app-theta-ode}
\frac{d}{dr}p(t,r,q)=V_t(r,p(t,r,q)),
\qquad
\lim_{r\to\infty}p(t,r,q)=q.
\end{equation}
Here $r\mapsto p(t,r,q)$ is a curve on $S^2$, and the right-hand side is a tangent vector at
$p(t,r,q)$. The condition at infinity is understood by solving, for $R>r_-$,
\[
\frac{d}{dr}p_R(t,r,q)=V_t(r,p_R(t,r,q)),
\qquad
p_R(t,R,q)=q,
\]
on $[r_-,R]$, and then letting $R\to\infty$. Since $V_t$ is integrable in $r$ with the estimates
above, the maps $p_R$ converge in the required $C^4$ angular norms on every fixed tail. The
standard existence, uniqueness, and smooth dependence theorem for time-dependent vector fields on
compact manifolds applies. The limiting map
$q\mapsto p(t,r,q)$ is a diffeomorphism of $S^2$ for each $t$ and $r$, with inverse obtained by the
corresponding limiting inverse flows.

The estimates above also give
\[
p(t,r,q)-q=O(r^{-1-\tau}),
\qquad
\slashed D^j(p-q)=O(r^{-1-\tau})
\qquad 0\le j\le4,
\]
and
\[
\partial_r^\ell\slashed D^j(p-q)
=
O(r^{-1-\tau-\ell})
\qquad
1\le \ell\le3,\quad 0\le j\le 3-\ell.
\]
Similarly,
\[
\partial_t^k\slashed D^j(p-q)
=
O(r^{-1-\tau-k})
\qquad
1\le k\le3,\quad 0\le j\le4.
\]
These estimates are the precise sense in which the angular change is asymptotic to the identity.

In the new angular coordinates $(t,r,q)$, we have
\[
\partial_r\big|_q
=
\partial_r\big|_p+(\partial_r p)^A\partial_{p^A}
=
\partial_r\big|_p-b^{\sharp_\gamma}.
\]
Thus, for every vector $Y$ tangent to $S_{t,r}$,
\[
g(t)(\partial_r|_q,Y)
=
g(t)(\partial_r-b^{\sharp_\gamma},Y)
=
b(Y)-\gamma(b^{\sharp_\gamma},Y)
=
0.
\]
Also,
\[
g(t)(\partial_r|_q,\partial_r|_q)
=
\lambda^2.
\]
Hence, in the coordinates $(t,r,q)$,
\[
g(t)=\lambda^2\,dr^2+\bar\gamma(t,r),
\]
so the transformed mixed coefficient is $\bar b\equiv0$.

Because the leaves $S_{t,r}$ are unchanged, the geometric radial lapse, the leaf mean curvature,
the tangential trace of $K_t$, the normal shift $\beta^\perp$, and the forcing $\Xi$ are unchanged
up to angular pull-back.

It remains to record the regularity of the transformed coefficients. Let
\[
\Theta(t,r,q):=(t,r,p(t,r,q)),
\]
and write the transformed coefficient tuple as
\[
\bar{\mathcal S}
=
(\bar N,\bar\lambda,\bar\beta,\bar b,\bar\gamma).
\]
Thus
\[
\bar N=N\circ\Theta,
\qquad
\bar\lambda=\lambda\circ\Theta,
\qquad
\bar b\equiv0,
\qquad
\bar\gamma=\Theta^*\gamma\big|_{TS^2}.
\]
The transformed coefficients satisfy
\begin{equation}\label{eq:app-step1-regularity-spatial}
\partial_t^k(\bar N-1),\quad
\partial_t^k(\bar\lambda-1),\quad
\partial_t^k(r^{-2}\bar\gamma-\gamma_{S^2})
\in
{\ct}^{\sharp}_{-\tau-k}(M_{r_-,\infty}),
\qquad
0\le k\le3,
\end{equation}
and
\begin{equation}\label{eq:app-step1-regularity-shift}
\partial_t^k\bar\beta
\in
{\ct}^{\sharp}_{-\tau-k}(M_{r_-,\infty}),
\qquad
0\le k\le2.
\end{equation}
Equivalently, in fixed angular coordinates, if
\[
\bar A\in
\{\bar N-1,\ \bar\lambda-1,\ r^{-2}\bar\gamma-\gamma_{S^2}\},
\]
then for $0\le k\le3$ one has
\[
\slashed D^j\partial_t^k\bar A=O(r^{-\tau-k})
\qquad 0\le j\le4,
\]
and
\[
\partial_r^\ell\slashed D^j\partial_t^k\bar A
=
O(r^{-\tau-k-\ell})
\qquad
1\le \ell\le3,\quad 0\le j\le3-\ell,
\]
with the corresponding H\"older bounds. For the transformed shift, the same estimates hold for
$\partial_t^k\bar\beta$ for $0\le k\le2$.

The estimates for $\bar N$, $\bar\lambda$, and $\bar\gamma$ follow directly from composition with
the asymptotically identity angular diffeomorphism $\Theta$ and the estimates for $p-q$. The only
point requiring comment is the shift, because the coordinate change depends on $t$. If $Y$ is
tangent to the new coordinate spheres, then
\[
\bar\beta(Y)
=
\beta(d\Theta(Y))
+
g(t)(\partial_t p,Y).
\]
The first term is controlled by pull-back under $\Theta$. For the second term, the estimates for
the angular flow give
\[
\partial_t p=O(r^{-2-\tau})
\]
as an angular vector field, and since $\gamma=O(r^2)$ as an angular metric,
\[
g(t)(\partial_t p,\cdot)=O(r^{-\tau})
\]
as a tangential $1$-form, with the same angular and radial derivative bounds. Time differentiating
this term $k$ times uses $\partial_t^{k+1}p$. Thus controlling $\partial_t^2\bar\beta$ requires
control of $\partial_t^3p$, which is exactly supplied by the strengthened coefficient assumptions.

For the radial component of the transformed shift, note that
\[
\bar\beta(\partial_r|_q)
=
\beta(\partial_r-b^{\sharp_\gamma})
=
\lambda\,\beta^\perp,
\]
pulled back by $\Theta$. Hence the radial component has the same strengthened control as
$\lambda\beta^\perp$.

Consequently $\bar{\mathcal S}$ satisfies the standard $\ctS_{-\tau}$ bounds on
$(\underline T,\infty)\times M_{r_-,\infty}$. In addition, it has the strengthened spatial
regularity recorded in \eqref{eq:app-step1-regularity-spatial}--\eqref{eq:app-step1-regularity-shift},
which is the regularity package used in Steps 2 and 3 below.

\end{proof}

After Step 1, we work in the orthogonalized chart and write
\[
g(t)=\lambda^2\,dr^2+\gamma,
\qquad
\beta=\beta_r\,dr+\beta^T,
\qquad
\beta^\perp=\lambda^{-1}\beta_r.
\]
The transformed coefficient tuple satisfies the standard fixed-tail $\ctS_{-\tau}$ bounds on
$(\underline T,\infty)\times M_{r_-,\infty}$, and its coefficients together with their first two
time derivatives have the strengthened spatial regularity needed in Steps 2 and 3.

\subsection*{Step 2: Transporting the radial coordinate}

We now define a new radial coordinate whose level sets are transported by the shift. This kills the
normal component of the shift for the transported foliation.

\begin{lem}[Transport of the radial coordinate]\label{lem:app-rho-transport}
There exist constants
\[
h\in(0,1],
\qquad
C>0,
\qquad
T_{\mathrm{gauge}}>\underline T,
\]
where $h$ and $C$ depend only on
\[
r_1,\ 
\delta_*(r_-)^{-1},\
\mathcal N_\tau^\sharp(r_-),
\]
and $T_{\mathrm{gauge}}$ may also depend on the tail function $\mathfrak B^\sharp(\cdot;r_-)$,
such that for every $T$ with
\[
T-h\ge T_{\mathrm{gauge}},
\]
there is a time-preserving coordinate system
\[
(t,\rho,q)
\]
on $I_h(T)\times M_{r_1,\infty}$ with the following properties.

\begin{enumerate}[label=\textup{(\roman*)}]
\item The shift is tangent to the leaves $\rho=\mathrm{const}$; equivalently, the transported
radial foliation has zero normal shift.

\item If the inverse radial change is written as
\[
r=R(t,\rho,q)=\rho+u(t,\rho,q),
\]
then
\begin{equation}\label{eq:app-u-estimate-sharp}
\sup_{t\in I_h(T)}
\|u(t,\cdot)\|_{{\ct}^{\sharp}_{-\tau}(M_{r_1,\infty})}
\le
C h\,\mathfrak B^\sharp(T-h;r_-).
\end{equation}

\item Let
\[
\mathcal S^\rho
=
(N^\rho,\lambda^\rho,\beta^\rho,b^\rho,\gamma^\rho)
\]
denote the coefficient tuple in the transported coordinates $(t,\rho,q)$. Then, on
$I_h(T)\times M_{r_1,\infty}$,
\[
\partial_t^k(N^\rho-1),\quad
\partial_t^k(\lambda^\rho-1),\quad
\partial_t^k b^\rho,\quad
\partial_t^k\big(\rho^{-2}\gamma^\rho-\gamma_{S^2}\big)
\in
{\ct}^{\sharp}_{-\tau-k}(M_{r_1,\infty}),
\qquad
0\le k\le3,
\]
and
\[
\partial_t^k\beta^\rho
\in
{\ct}^{\sharp}_{-\tau-k}(M_{r_1,\infty}),
\qquad
0\le k\le2.
\]
The corresponding norms are bounded by $C$.

\item The original radial coordinate of the inner cylinder satisfies
\begin{equation}\label{eq:app-inner-cylinder-close-sharp}
\sup_{t\in I_h(T)}
\|R(t,r_1,\cdot)-r_1\|_{C^2(S^2)}
\le
C h\,\mathfrak B^\sharp(T-h;r_-).
\end{equation}

\item The time derivatives of the original radial coordinate along the inner cylinder satisfy
\begin{equation}\label{eq:app-inner-cylinder-time-derivative-bounds}
\sup_{t\in I_h(T)}
\left(
\|\partial_t R(t,r_1,\cdot)\|_{C^1(S^2)}
+
\|\partial_t^2 R(t,r_1,\cdot)\|_{C^0(S^2)}
\right)
\le C.
\end{equation}
\end{enumerate}
\end{lem}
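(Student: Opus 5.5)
The plan is to construct $\rho$ as the solution of a transport equation along the spacetime vector field $\partial_t+\beta^{\sharp}$, or more precisely along the flow that moves spatial points with the shift. The intuition is simple. The normal shift of the foliation $\{\rho=\mathrm{const}\}$ vanishes exactly when $\beta^{\sharp_{g(t)}}$ is tangent to the leaves, that is when $d\rho(\beta^{\sharp_{g(t)}})=0$ on each slice. So I would prescribe $\rho$ on the central slice $t=T-h$ by $\rho(T-h,r,q)=r$ (the orthogonalized chart of Lemma~\ref{lem:app-kill-b}) and require
\[
\beta^{\sharp_{g(t)}}(\rho)=0
\]
on each slice $t\in I_h(T)$. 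Read literally, this is a condition slice by slice and not an evolution equation. The honest formulation is therefore the following. For each $t$, define the leaves of the new foliation as the images of the old leaves under the time-$(t-(T-h))$ map of the flow of the slice-wise shift field $X_t:=\beta^{\sharp_{g(t)}}$, viewed as a time-dependent vector field on $M$. Concretely, solve
\[
\partial_t R(t,\rho,q)=\lambda^{-1}\beta^\perp\big(t,R(t,\rho,q),q\big)\cdot(\text{normal speed factor})
\]
for the inverse radial map $R=\rho+u$, with $u(T-h,\cdot)=0$, and arrange the angular coordinates to move with the tangential flow. In the orthogonalized chart this reduces to a scalar ODE in $t$ for $u$, with forcing $\lambda\beta^\perp\cdot\lambda^{-2}=\beta^\perp/\lambda$ composed with $R$. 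The vanishing of the normal shift for the transported foliation should then follow from the standard computation that $\partial_t|_{(\rho,q)}$ differs from $\partial_t|_{(r,p)}$ by exactly the normal part of $\beta$. I expect to have to redo this computation carefully, because it is where the sign conventions and the choice of the tangential motion of $q$ enter.

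Once the ODE is set up, items (ii), (iv) and (v) should follow by Gronwall. Integrating over a time interval of length at most $2h$ gives
\[
\|u(t)\|\le\int_{T-h}^{t}\big\|\beta^\perp(s)/\lambda\big\|\,ds+C\int_{T-h}^t\|u(s)\|\,ds .
\]
The supremum over $s\ge T-h$ is controlled by $\mathfrak B^\sharp(T-h;r_-)$, so the bound is $Ch\,\mathfrak B^\sharp(T-h;r_-)$ in the ${\ct}^{\sharp}_{-\tau}$ norm. This requires estimating compositions $F\circ R$ in that norm, which uses the extra derivatives in Definition~\ref{def:ctsharp-main}. The small loss of weight between tails $r\ge r_-$ and $\rho\ge r_1$ is harmless, because $u$ is small: choosing $T_{\mathrm{gauge}}$ so that $Ch\,\mathfrak B^\sharp<(r_1-r_-)/2$ keeps $R(t,\rho,q)\ge r_-$ for $\rho\ge r_1$. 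Item (iv) is (ii) restricted to $\rho=r_1$, using that ${\ct}^{\sharp}$ controls two angular derivatives. Item (v) reads $\partial_tR$ directly off the ODE. Differentiating the ODE once more in $t$ gives $\partial_t^2R$, which needs one time derivative of $\beta^\perp$ and is bounded by $C$ via $\mathcal N^\sharp_\tau(r_-)$; this bound is not small, which matches the statement.

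For item (iii), I would write the metric in $(t,\rho,q)$ by substituting $dr=(1+u_\rho)d\rho+\slashed d u+u_t\,dt$. The new coefficients are algebraic in the old ones composed with $R$, together with $u_\rho$, $\slashed d u$ and $u_t$. The spatial metric acquires a mixed term $b^\rho\sim\lambda^2\slashed d u$, which lies in ${\ct}^{\sharp}_{-\tau}$ with better decay. The new $\beta^\rho$ collects $\lambda^2u_t$ into its radial part; by construction this part cancels against $\beta_r$. Each coefficient loses at most one derivative of $u$ relative to $\mathcal S$. This is why the strengthened norms, which control up to three time derivatives and mixed derivatives, are needed, and why $\beta^\rho$ is stated only for $k\le2$.

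The main obstacle is this derivative bookkeeping. The quantity $\partial_t^k\beta^\rho$ contains $\partial_t^{k+1}u$, and hence $\partial_t^{k}\beta^\perp\circ R$. Closing the estimate for $k\le2$ in ${\ct}^{\sharp}_{-\tau-k}$ therefore needs $\partial_t^3$ control of the original data, which is exactly what $\mathcal N_\tau^\sharp$ supplies. Before trusting the scheme, I would first verify that this count closes at $k=2$ without any further loss.
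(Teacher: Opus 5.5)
Your construction is the paper's: make $\rho$ constant along the normal flow of the slices, note that only $\beta^\perp$ moves points radially, so that $u=O(h\,\mathfrak B^\sharp)$ over a time interval of length $O(h)$, and read off (iv)--(v) by restricting to $\rho=r_1$ and differentiating $\rho(t,R(t,\rho,q),q)=\rho$ in $t$. The paper solves the Eulerian problem $\partial_t\rho-\beta^{\sharp_{g(t)}}(\rho)=0$, $\rho(T,\cdot)=r$, and then inverts; working with $R$ directly and starting at $T-h$ makes no difference.

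\textbf{The sign of the transport is wrong.} As written, (i) fails. In the paper's convention $\gtime(\partial_t,X)=\beta(X)$ for spatial $X$, so $\partial_t=N\mathbf T_t+\beta^{\sharp}$. Under $(t,\rho,q)\mapsto(t,R,q)$ one has $\partial_t|_{\rho,q}=\partial_t|_{r,q}+R_t\partial_r$. The new shift vector is therefore $\beta^\sharp+R_t\partial_r$. Since $\rho_rR_t=-\rho_t$, its normal component is proportional to $\beta^\sharp(\rho)-\rho_t$. You need $\rho_t=\beta^\sharp(\rho)$: $\rho$ is constant along $\partial_t-\beta^\sharp=N\mathbf T_t$, and the leaves are carried by the flow of $-\beta^\sharp$. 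Transporting along $\partial_t+\beta^\sharp$, as you propose, gives normal component $2\beta^\sharp(\rho)$. That doubles the normal shift instead of removing it. The same computation shows why the literal condition $\beta^\sharp(\rho)=0$ is the wrong one: it ignores the coordinate velocity $R_t\partial_r$.

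\textbf{The reduction to a scalar ODE does not hold.} A scalar ODE for $u$ with forcing evaluated at $(t,R,q)$ is wrong in either reading of your set-up.

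If the Step~1 angles are kept fixed, as in the paper, differentiating $\rho(t,R,q)=\rho$ in $t$ and in $q$ gives
\[
\partial_t u-(\beta^T)^A(t,\rho+u,q)\,\partial_{q^A}u=-\frac{\beta^\perp}{\lambda}(t,\rho+u,q),
\]
where $(\beta^T)^A$ are the components of $(\beta^T)^{\sharp_\gamma}$. This is a transport equation on $S^2$ at each fixed $\rho$. The tangential term does not drop, though it is harmless: integrate along characteristics and use Gronwall for the differentiated equations.

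If instead the angular coordinates are comoving, the forcing must be evaluated at the moving point: $\partial_tR=-\frac{\beta^\perp}{\lambda}(t,R,P)$ with $\partial_tP=-(\beta^T)^{\sharp_\gamma}(t,R,P)$. That variant is legitimate. It produces the same leaves as the paper and a chart with $\beta^\rho\equiv0$, which is stronger than (i).

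\textbf{On the derivative count, time derivatives are not the binding constraint.} $\partial_t^{k+1}u$ involves only $\partial_t^j$ of the data with $j\le k$. So $\partial_t^2\beta^\rho$ needs only $\partial_t^2\beta$, and $\partial_t^3$ of the data enters only through the $k=3$ bounds for $N^\rho,\lambda^\rho,\gamma^\rho$.

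The uncompensated loss is spatial. In the fixed-angle chart, $b^\rho=\lambda^2R_\rho\,\slashed dR$ and $\gamma^\rho=\gamma(t,R,q)+\lambda^2(\slashed dR)^2$. These need one more tangential (and mixed) derivative of $u$ than $u\in\ct^\sharp_{-\tau}$ provides, and transport does not create it. The paper's proof asserts (iii) ``by smooth algebraic operations'' and passes over the same point, so this is not a departure from the paper. Still, closing (iii) exactly as stated needs one more derivative in $\mathcal N^\sharp_\tau$, or a correspondingly weaker (iii).
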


\begin{proof}
We construct $\rho$ by solving
\begin{equation}\label{eq:app-rho-transport-main}
\partial_t\rho-\beta^{\sharp_{g(t)}}(\rho)=0,
\qquad
\rho(T,r,q)=r.
\end{equation}
Since we are working after Step 1, the spatial metric has the form
\[
g(t)=\lambda^2\,dr^2+\gamma,
\]
and therefore
\[
\beta^{\sharp_{g(t)}}
=
\frac{\beta_r}{\lambda^2}\partial_r+(\beta^T)^{\sharp_\gamma}
=
\frac{\beta^\perp}{\lambda}\partial_r+(\beta^T)^{\sharp_\gamma}.
\]
Thus the radial velocity of the characteristics is $\beta^\perp/\lambda$. On the fixed tail
$r\ge r_-$, this is controlled by $\mathfrak B^\sharp(T-h;r_-)$ and therefore becomes uniformly
small on late slabs. After increasing $T_{\mathrm{gauge}}$ if necessary, every characteristic
starting in $M_{r_1,\infty}$ remains in $M_{r_-,\infty}$ for $|t-T|\le h$. Standard characteristic
theory gives a smooth solution of \eqref{eq:app-rho-transport-main} on the desired slab.

Set
\[
w:=\rho-r.
\]
Then $w$ satisfies
\begin{equation}\label{eq:app-w-transport}
\partial_t w-\beta^{\sharp_{g(t)}}(w)
=
\frac{\beta_r}{\lambda^2}
=
\frac{\beta^\perp}{\lambda},
\qquad
w(T,\cdot)=0.
\end{equation}
The characteristic formula and the lower bound $\lambda\ge \delta_*(r_-)$ give
\[
\sup_{t\in I_h(T)}
\|w(t,\cdot)\|_{\mathc^0_{-\tau}(M_{r_1,\infty})}
\le
C h\,\mathfrak B^\sharp(T-h;r_-).
\]

The same estimate holds in the full ${\ct}^{\sharp}_{-\tau}$ norm. Indeed, applying
$\slashed D^j$, $0\le j\le4$, and applying $\partial_r^\ell\slashed D^j$ with
$1\le\ell\le3$ and $0\le j\le3-\ell$, to \eqref{eq:app-w-transport} gives linear transport
equations for the differentiated quantities. Their source terms contain the corresponding
derivatives of $\beta^\perp/\lambda$ and lower-order derivatives of $w$, with coefficients
controlled by the fixed-tail regularity from Step 1. Since all differentiated initial data vanish
at $t=T$, Gronwall's inequality on the fixed interval $I_h(T)$ yields
\[
\sup_{t\in I_h(T)}
\|w(t,\cdot)\|_{{\ct}^{\sharp}_{-\tau}(M_{r_1,\infty})}
\le
C h\,\mathfrak B^\sharp(T-h;r_-).
\]

After increasing $T_{\mathrm{gauge}}$ once more, we may assume that the right-hand side is small
enough to ensure
\[
\partial_r\rho=1+\partial_r w\ge\frac12.
\]
Thus $(r,q)\mapsto(\rho,q)$ is a diffeomorphism on each time slice over the tail. Let
\[
r=R(t,\rho,q)=\rho+u(t,\rho,q)
\]
be its inverse. The inverse function theorem, applied in the same weighted Hölder norms, gives
\[
\sup_{t\in I_h(T)}
\|u(t,\cdot)\|_{{\ct}^{\sharp}_{-\tau}(M_{r_1,\infty})}
\le
C
\sup_{t\in I_h(T)}
\|w(t,\cdot)\|_{{\ct}^{\sharp}_{-\tau}(M_{r_1,\infty})},
\]
which proves \eqref{eq:app-u-estimate-sharp}. Evaluating this bound at $\rho=r_1$ gives
\eqref{eq:app-inner-cylinder-close-sharp}.

We next prove the time-derivative bound
\eqref{eq:app-inner-cylinder-time-derivative-bounds}. Differentiating the identity
\[
\rho(t,R(t,\rho,q),q)=\rho
\]
at fixed $(\rho,q)$ gives
\[
R_t=-\frac{\rho_t}{\rho_r}.
\]
Using the transport equation $\rho_t=\beta^{\sharp_g}(\rho)$, this becomes
\[
R_t=-\frac{\beta^{\sharp_g}(\rho)}{\rho_r}.
\]
The denominator is uniformly bounded away from zero, since
$\rho_r=1+w_r\ge 1/2$. The numerator and its first angular derivatives are uniformly bounded on
the fixed tail by the coefficient bounds from Step 1 and the transport estimates for $\rho$.
Therefore
\[
\sup_{t\in I_h(T)}
\|\partial_tR(t,r_1,\cdot)\|_{C^1(S^2)}
\le C.
\]
Differentiating the identity $\rho(t,R(t,\rho,q),q)=\rho$ twice in $t$ gives
\[
R_{tt}
=
-\frac{\rho_{tt}+2\rho_{tr}R_t+\rho_{rr}R_t^2}{\rho_r}.
\]
The differentiated transport equations for $\rho$, together with the fixed-tail coefficient
bounds, give uniform control of the numerator on $\rho=r_1$. Since $\rho_r\ge1/2$, we obtain
\[
\sup_{t\in I_h(T)}
\|\partial_t^2R(t,r_1,\cdot)\|_{C^0(S^2)}
\le C.
\]
This proves \eqref{eq:app-inner-cylinder-time-derivative-bounds}.

We now record the regularity of the transported coefficients. The coordinate change is
\[
(t,\rho,q)\mapsto (t,R(t,\rho,q),q),
\qquad
R=\rho+u.
\]
The estimate for $u$ gives
\[
R-\rho=O(r^{-\tau}),
\qquad
R_\rho-1=O(r^{-1-\tau}),
\qquad
\slashed D R=O(r^{-\tau}),
\]
with the corresponding strengthened spatial bounds. Time derivatives of $R$ are obtained by
differentiating the identity
\[
\rho(t,R(t,\rho,q),q)=\rho.
\]
For example,
\[
R_t=-\frac{\rho_t}{\rho_r}
=
-\frac{\beta^{\sharp_g}(\rho)}{\rho_r}.
\]
Using the transport equation and the coefficient bounds from Step 1, one obtains the corresponding
bounds for $R_t$, $R_{tt}$, and $R_{ttt}$ required below. These bounds are uniform on the slab; only
the zeroth-order displacement estimate \eqref{eq:app-u-estimate-sharp} is required to be small.

The transported spatial metric is the pull-back of
\[
\lambda^2\,dr^2+\gamma
\]
under $r=R(t,\rho,q)$. Hence
\[
g^\rho(t)
=
\lambda^2(t,R,q)\,(dR)^2+\gamma(t,R,q).
\]
Writing it in the usual form
\[
g^\rho(t)
=
\big((\lambda^\rho)^2+|b^\rho|_{\gamma^\rho}^2\big)\,d\rho^2
+
2b^\rho\odot d\rho
+
\gamma^\rho,
\]
one sees that $N^\rho$, $\lambda^\rho$, $b^\rho$, and $\gamma^\rho$ are obtained from the original
coefficients and from
\[
R_\rho,\qquad \slashed D R,
\]
by smooth algebraic operations. The regularity estimates for $R$ therefore imply
\[
\partial_t^k(N^\rho-1),\quad
\partial_t^k(\lambda^\rho-1),\quad
\partial_t^k b^\rho,\quad
\partial_t^k\big(\rho^{-2}\gamma^\rho-\gamma_{S^2}\big)
\in
{\ct}^{\sharp}_{-\tau-k}(M_{r_1,\infty}),
\qquad
0\le k\le3.
\]

The transformed shift contains the usual pull-back contribution together with the velocity of the
time-dependent radial coordinate change. Since the new time vector is
\[
\partial_t\big|_{\rho,q}
=
\partial_t\big|_{r,q}+R_t\,\partial_r,
\]
the new shift is obtained from the old shift by adding the spatial vector $R_t\partial_r$ before
lowering with the transported spatial metric. The estimates for $R_t$ and for the old coefficients
therefore give
\[
\partial_t^k\beta^\rho
\in
{\ct}^{\sharp}_{-\tau-k}(M_{r_1,\infty}),
\qquad
0\le k\le2.
\]
This proves the coefficient regularity asserted in \textup{(iii)}.

It remains to check the zero-normal-shift property. In the coordinates $(t,\rho,q)$, the
transformed shift vector is the spatial vector
\[
\beta^{\sharp_{g(t)}}+X_t,
\qquad
X_t:=R_t\,\partial_r.
\]
Since
\[
0=
\frac{d}{dt}\rho(t,R(t,\rho,q),q)
=
\partial_t\rho+X_t(\rho),
\]
and since the transport equation gives $\partial_t\rho=\beta^{\sharp_{g(t)}}(\rho)$, we obtain
\[
(\beta^{\sharp_{g(t)}}+X_t)(\rho)=0.
\]
Thus the transported shift is tangent to the leaves $\rho=\mathrm{const}$, which proves
\textup{(i)}.
\end{proof}

After Step 2, the metric may no longer be orthogonal with respect to $\rho$. We therefore apply
Lemma~\ref{lem:app-kill-b} once more, now to the transported $\rho$-foliation on the slab
$I_h(T)\times M_{r_1,\infty}$. The regularity statement in Lemma~\ref{lem:app-rho-transport} is
exactly the regularity needed for this second angular orthogonalization.

This second angular change preserves the leaves $\rho=\mathrm{const}$ and hence preserves the
zero-normal-shift property. Denoting the final radial coordinate by
\[
\tilde r:=\rho,
\]
we obtain a time-preserving chart $(t,\tilde r,\tilde q)$. If
\[
\widetilde{\mathcal S}
=
(\tilde N,\tilde\lambda,\tilde\beta,\tilde b,\tilde\gamma)
\]
denotes the final transformed coefficient tuple, then
\[
\tilde b\equiv0,
\qquad
\tilde\beta_{\tilde r}\equiv0.
\]
Moreover,
\[
\partial_t^k(\tilde N-1),\quad
\partial_t^k(\tilde\lambda-1),\quad
\partial_t^k\big(\tilde r^{-2}\tilde\gamma-\gamma_{S^2}\big)
\in
{\ct}^{\sharp}_{-\tau-k}(M_{r_1,\infty}),
\qquad
0\le k\le3,
\]
and
\[
\partial_t^k\tilde\beta
\in
{\ct}^{\sharp}_{-\tau-k}(M_{r_1,\infty}),
\qquad
0\le k\le2.
\]
The corresponding norms are bounded by a constant depending only on the fixed-tail data. In
particular, $\widetilde{\mathcal S}$ satisfies the slab version of the standard
$\ctS_{-\tau}$ bounds required in Proposition~\ref{prop:app-late-slab-gauge}. This proves the
good-gauge conclusion \eqref{eq:app-good-gauge-conclusion}.

We also record that the inner-cylinder estimates pass to the final good-gauge chart. Indeed, the
last angular orthogonalization preserves the leaves $\rho=\mathrm{const}$ and only reparametrizes
each such leaf by a uniformly controlled angular diffeomorphism, with uniformly controlled first
two time derivatives. Hence the estimates
\eqref{eq:app-inner-cylinder-close-sharp} and
\eqref{eq:app-inner-cylinder-time-derivative-bounds}, after composing with this angular
diffeomorphism and changing the constant \(C\), give the estimates
\eqref{eq:app-r1-C2-close} and \eqref{eq:app-r1-time-derivative-bounds} in the final
coordinates $(t,\tilde r,\tilde q)$.

\subsection*{Step 3: Comparing the forcing tails}

It remains to compare the forcing tails of the transported foliation with those of the original
foliation.

\begin{lem}[Forcing comparison for the transported foliation]\label{lem:app-Xi-stability}
Let $I=I_h(T)$ and let $(t,\rho,q)$ be the transported coordinates from Step 2, with inverse radial
representation
\[
r=\rho+u(t,\rho,q).
\]
Let
\[
\widehat G_I,\quad \widehat\Phi_{1,I},\quad \widehat\Phi_{2,I},\quad
\widehat\Omega_I,\quad \widehat\Psi_I
\]
denote the slab quantities for the transported $\rho$-foliation on $\rho\ge r_1$, computed using
suprema over $t\in I$. Then
\begin{align}
\widehat G_I
&\le
C\Big(
G_I+h\,\mathfrak B^\sharp(T-h;r_-)
\Big),
\label{eq:app-G-comparison}
\\
\widehat\Phi_{1,I}
&\le
C\Big(
\Phi_{1,I}+h\,\mathfrak B^\sharp(T-h;r_-)
\Big),
\label{eq:app-Phi1-comparison}
\\
\widehat\Phi_{2,I}
&\le
C\Big(
\Phi_{2,I}+h\,\mathfrak B^\sharp(T-h;r_-)
\Big),
\label{eq:app-Phi2-comparison}
\\
\widehat\Omega_I
&\le
C\Big(
\Omega_I+h\,\mathfrak B^\sharp(T-h;r_-)
\Big),
\label{eq:app-Omega-comparison}
\\
\widehat\Psi_I
&\le
C\Big(
\Psi_I+h\,\mathfrak B^\sharp(T-h;r_-)
\Big).
\label{eq:app-Psi-comparison}
\end{align}
The same estimates hold for the final good-gauge chart obtained after the last angular
orthogonalization. Namely, if
\[
\widetilde G_I,\quad \widetilde\Phi_{1,I},\quad \widetilde\Phi_{2,I},\quad
\widetilde\Omega_I,\quad \widetilde\Psi_I
\]
denote the corresponding slab quantities in the final good-gauge chart, then
\begin{align*}
\widetilde G_I
&\le
C\Big(
G_I+h\,\mathfrak B^\sharp(T-h;r_-)
\Big),
\\
\widetilde\Phi_{1,I}
&\le
C\Big(
\Phi_{1,I}+h\,\mathfrak B^\sharp(T-h;r_-)
\Big),
\\
\widetilde\Phi_{2,I}
&\le
C\Big(
\Phi_{2,I}+h\,\mathfrak B^\sharp(T-h;r_-)
\Big),
\\
\widetilde\Omega_I
&\le
C\Big(
\Omega_I+h\,\mathfrak B^\sharp(T-h;r_-)
\Big),
\\
\widetilde\Psi_I
&\le
C\Big(
\Psi_I+h\,\mathfrak B^\sharp(T-h;r_-)
\Big).
\end{align*}
\end{lem}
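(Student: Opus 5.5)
The plan is to write the forcing of the transported foliation as the forcing of the original foliation plus a remainder that is linear in the displacement $u$. For $\Xi=\frac{\lambda}{N}\frac{\tr_{S}K_t}{H}$, every ingredient is an algebraic or differential expression in the induced data on the leaf and its normal, taken on a fixed slice $M_t$. The slices $M_t$ do not change under a time-preserving chart change, so $K_t$, $g(t)$ and $N$ are unchanged as geometric objects. Only the leaves change: the new leaf $\{\rho=\mathrm{const}\}$ is the radial graph $r=\rho+u(t,\rho,\cdot)$ over the old sphere $S_{t,\rho}$.

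I would therefore express $\widehat\Xi(t,\rho,q)$ through the first variation formulas for a normal graph. These give
\[
\widehat H=H_{t,R}+\mathcal L_1[u],\qquad \tr_{\widehat S}K_t=\tr_{S_{t,R}}K_t+\mathcal L_2[u],
\]
and a similar expansion for $\widehat\lambda=|\nabla^{g(t)}\rho|^{-1}$. Here $\mathcal L_i[u]$ are smooth functions of $(\slashed D u,\slashed D^2 u,\partial_\rho u)$ with coefficients built from $K_t$, $\partial K_t$ and the curvature of the old leaves. They vanish when those derivatives vanish, and they carry the weights $\mathcal L_1=O(r^{-1}|\slashed D^2u|+\dots)$ and $\mathcal L_2=O(r^{-1-\tau}|\slashed Du|+\dots)$.

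Dividing by $\widehat H\ge c_*/2r$ (valid once $u$ is small) gives
\[
\widehat\Xi(t,\rho,q)=\Xi(t,R(t,\rho,q),p)+\mathcal E,\qquad |\mathcal E|\le C\,\rho^{-1-\tau}\,\|u(t)\|_{{\ct}^{\sharp}_{-\tau}},
\]
together with the analogous bounds for $\slashed D\mathcal E$, $\slashed D^2\mathcal E$ and the H\"older seminorm. The extra derivatives of $u$ are available because the $\sharp$-space controls $\slashed D^4u$ and the mixed derivatives $\partial_r\slashed D^j u$. The first term is the old forcing evaluated at a nearby radius. Since $R\ge r_1-Ch\mathfrak B^\sharp\ge r_-$ on the late slab and $R/\rho\in[1/2,2]$, its contributions to $\widehat{\mathfrak X}_j(\rho;I)$ are bounded by $C\,\mathfrak X_j(R;I)$. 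I would convert the integrals $\int\widehat{\mathfrak X}_j\,d\rho$ into integrals over the old tail $r\ge r_-$ by the change of variables $r=R$, using $|R_\rho-1|\le 1/2$. Angular derivatives pick up chain-rule factors $\slashed DR$, $\slashed D^2R$, which are bounded, and the lower-order terms produced by $\slashed D R$ are absorbed using $\mathfrak X_1\lesssim\mathfrak X_2$ as in \eqref{eq:X1-by-X2}.

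The remainder $\mathcal E$ contributes $C\int_{r_1}^\infty\rho^{-\tau}\cdot\rho^{-1}\,d\rho\,\|u\|$ to each tail quantity, with the weights matched per quantity. By \eqref{eq:app-u-estimate-sharp}, $\|u\|\le Ch\,\mathfrak B^\sharp(T-h;r_-)$, which gives \eqref{eq:app-G-comparison}–\eqref{eq:app-Psi-comparison}. For the final good-gauge chart the last step is an angular reparametrization that preserves the leaves, so $\widetilde\Xi$ is a pullback of $\widehat\Xi$ by a $C^4$-controlled diffeomorphism. The tail quantities then change by at most a constant factor, with the $\slashed D^2$ term absorbed via $\mathfrak X_1\lesssim\mathfrak X_2$ as before.

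The main obstacle is making the remainder estimate precise at the level needed for $\widehat{\mathfrak X}_2$ and $\widehat\Psi$. Two angular derivatives of $\mathcal E$ involve $\slashed D^4 u$ and $\partial_\rho\slashed D^2u$, as well as the time derivative of $\Xi$ through $R$'s $t$-dependence. I would first check that each term has the right $\rho$-weight, namely integrable $\rho^{-1-\tau}$ after multiplying by $\rho^2$. I would also check that $H_{t,r}$ depends only on quantities controlled in $\ctS^\sharp$ with one extra derivative, since $\mathcal L_1$ involves $\partial_r$ of the old leaf data.
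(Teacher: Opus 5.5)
Your overall route is the paper's. You split the transported forcing into the old forcing pulled back to the displaced leaf plus a remainder controlled by $u$ in ${\ct}^{\sharp}_{-\tau}$; the paper obtains that remainder by the fundamental theorem of calculus along the graphs $r=\rho+su$, which is your first-variation expansion. You then bound the remainder by \eqref{eq:app-u-estimate-sharp}, compare the pulled-back term with the old tails, and treat the final orthogonalization as a leafwise pullback.

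The step that fails as you state it is the differentiation of the pulled-back term $q\mapsto\Xi(t,R(t,\rho,q),q)$. Your justification (bounded factors $\slashed DR$, $\slashed D^2R$, with lower-order terms absorbed via $\mathfrak X_1\lesssim\mathfrak X_2$) is the one that fits an angular reparametrization. Here the displacement is radial. The chain rule therefore produces $(\partial_r\Xi)\,\slashed dR$ at first order, and $(\partial_r\Xi)\,\slashed\nabla^2R$, $2(\partial_r\slashed d\Xi)\odot\slashed dR$, $(\partial_r^2\Xi)\,\slashed dR\otimes\slashed dR$ at second order. These are \emph{radial} derivatives of $\Xi$, which none of $\mathfrak X_0,\mathfrak X_1,\mathfrak X_2$ measures, so \eqref{eq:X1-by-X2} does not help. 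If $\slashed dR$ and $\slashed\nabla^2R$ are only known to be bounded, these terms leave an additive error that is not of the form $C\,h\,\mathfrak B^\sharp(T-h;r_-)$. The bounds for $\widehat\Phi_{1,I}$ and $\widehat\Phi_{2,I}$ then do not follow, and the same radial-derivative terms appear in the H\"older difference quotients that enter $\widehat\Psi_I$.

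The missing point is that these factors are small, not merely bounded. Since $\slashed dR=\slashed du$ and $\slashed\nabla^2R=\slashed\nabla^2u$, \eqref{eq:app-u-estimate-sharp} gives $|\slashed du|_\gamma\le C h\,\mathfrak B^\sharp(T-h;r_-)\,\rho^{-1-\tau}$ and $|\slashed\nabla^2u|_\gamma\le C h\,\mathfrak B^\sharp(T-h;r_-)\,\rho^{-2-\tau}$. The radial derivatives $\partial_r\Xi$, $\partial_r\slashed d\Xi$, $\partial_r^2\Xi$ are bounded, with weights $\rho^{-1-\tau}$, $\rho^{-2-\tau}$, $\rho^{-2-\tau}$, by the fixed-tail data: $\mathcal N^\sharp_\tau(r_-)$ together with the lower bounds on $N$ and $rH_{t,r}$. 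This is where the mixed derivatives $\partial_r^\ell\slashed D^j$ of the $\sharp$-class are used. Each chain-rule term then contributes $O(h\,\mathfrak B^\sharp\,\rho^{-1-2\tau})$ to $\widehat{\mathfrak X}_1$ and $\widehat{\mathfrak X}_2$. That is integrable on $\rho\ge r_1$, so it belongs in the $h\,\mathfrak B^\sharp$ error term. This is exactly how the paper argues.
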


\begin{proof}
The transported foliation is the graph foliation
\[
\Sigma_{t,\rho}
=
\{(t,r,q):r=\rho+u(t,\rho,q)\}.
\]
By \eqref{eq:app-u-estimate-sharp},
\[
\sup_{t\in I}
\|u(t,\cdot)\|_{{\ct}^{\sharp}_{-\tau}(M_{r_1,\infty})}
\le
C h\,\mathfrak B^\sharp(T-h;r_-).
\]
After increasing $T_{\mathrm{gauge}}$ if necessary, we may assume that $C h\,\mathfrak B^\sharp(T-h;r_-)$
is below a fixed smallness threshold. Then the image of $\{\rho\ge r_1\}$ lies in the old tail
$\{r\ge r_-\}$, and $r=\rho+u$ is a uniformly controlled perturbation of the identity.

The sharp norm gives the scale-invariant estimates
\[
|u|\le C h\,\mathfrak B^\sharp(T-h;r_-)\,\rho^{-\tau},
\qquad
|\slashed d u|_{\gamma}
\le
C h\,\mathfrak B^\sharp(T-h;r_-)\,\rho^{-1-\tau},
\]
\[
|\slashed\nabla^2u|_{\gamma}
\le
C h\,\mathfrak B^\sharp(T-h;r_-)\,\rho^{-2-\tau},
\]
and the corresponding estimates for the tangential derivatives needed below, including the radial
derivatives
\[
|\partial_\rho\slashed\nabla^j u|_{\gamma}
\le
C h\,\mathfrak B^\sharp(T-h;r_-)\,\rho^{-1-\tau-j},
\qquad
0\le j\le2.
\]
Here all norms are taken with respect to the old coordinate-sphere metrics, and replacing $r$ by
$\rho$ changes the bounds only by a fixed multiplicative constant.

Let $\Xi_\rho$ be the forcing of the transported $\rho$-foliation, and define the old forcing
pulled back to the transported coordinates by
\[
\Xi_{\mathrm{old}}^\rho(t,\rho,q)
:=
\Xi(t,\rho+u(t,\rho,q),q).
\]
The geometric quantities entering the forcing,
\[
\lambda[\Sigma_{t,\rho}],
\qquad
H[\Sigma_{t,\rho}],
\qquad
\tr_{\Sigma_{t,\rho}}K_t,
\]
are smooth expressions in the old coefficients and in
\[
u,\quad
\slashed D u,\quad
\slashed D^2u,\quad
\partial_\rho u,\quad
\partial_\rho\slashed D u,\quad
\partial_\rho\slashed D^2u.
\]
Taking two tangential derivatives of these expressions uses at most four tangential derivatives of
$u$ and at most one radial derivative of two tangential derivatives of $u$, precisely the
regularity included in ${\ct}^{\sharp}_{-\tau}$. Since
\[
N\ge \vartheta_*(r_-),
\qquad
rH_{t,r}\ge c_*(r_-)
\]
on the old tail, and since $u$ is small, we also have
\[
N[\Sigma_{t,\rho}]\ge \frac12\vartheta_*(r_-),
\qquad
\rho\,H[\Sigma_{t,\rho}]\ge \frac12c_*(r_-).
\]
Thus all denominators in the forcing are uniformly controlled.

Applying the fundamental theorem of calculus to the one-parameter family of graph foliations
$r=\rho+s u$, $0\le s\le1$, and using the standard product and composition estimates in H\"older
spaces, gives
\begin{equation}\label{eq:app-Xi-difference-pointwise}
\begin{split}
&|\Xi_\rho-\Xi_{\mathrm{old}}^\rho|
+\rho\,|\slashed d(\Xi_\rho-\Xi_{\mathrm{old}}^\rho)|
+\rho^2\,|\slashed\nabla^2(\Xi_\rho-\Xi_{\mathrm{old}}^\rho)|
\\
&\qquad\qquad
\le
C\,\rho^{-1-\tau}\,
h\,\mathfrak B^\sharp(T-h;r_-).
\end{split}
\end{equation}
The same argument at the H\"older level gives
\begin{equation}\label{eq:app-Xi-difference-holder}
\|\rho(\Xi_\rho-\Xi_{\mathrm{old}}^\rho)\|_{\cz_{-\tau}(M_{r_1,\infty})}
\le
C h\,\mathfrak B^\sharp(T-h;r_-).
\end{equation}
For the product and composition estimates used here, one may use the standard tame estimates for
H\"older spaces; see, for example, \cite{TaylorPDEIII,parabolic2}.

We now compare the pulled-back old forcing with the original old forcing. Because
$r=\rho+u$ is a small ${\ct}^{\sharp}_{-\tau}$ perturbation of the identity, we have
\[
\rho\sim r,
\qquad
d\rho\sim dr,
\]
on the relevant tails, with constants depending only on the fixed-tail data. Moreover, when
differentiating
\[
\Xi_{\mathrm{old}}^\rho(t,\rho,q)=\Xi(t,R(t,\rho,q),q),
\]
the terms involving radial derivatives of $\Xi$ are multiplied by $\slashed D R=\slashed D u$ or
$\slashed\nabla^2R=\slashed\nabla^2u$. The radial derivatives of $\Xi$ are controlled by the fixed
coefficient bounds, while the factors involving $u$ are controlled by
$h\,\mathfrak B^\sharp(T-h;r_-)$ in the scale-invariant norms. Therefore
\begin{align}
\mathfrak X_0[\Xi_{\mathrm{old}}^\rho;I](\rho)
&\le
C\,\mathfrak X_0[\Xi;I](R(t,\rho,\cdot))
+
C\rho^{-\tau}h\,\mathfrak B^\sharp(T-h;r_-),
\label{eq:app-X0-pullback}
\\
\mathfrak X_1[\Xi_{\mathrm{old}}^\rho;I](\rho)
&\le
C\,\mathfrak X_1[\Xi;I](R(t,\rho,\cdot))
+
C\rho^{-1-\tau}h\,\mathfrak B^\sharp(T-h;r_-),
\label{eq:app-X1-pullback}
\\
\mathfrak X_2[\Xi_{\mathrm{old}}^\rho;I](\rho)
&\le
C\,\mathfrak X_2[\Xi;I](R(t,\rho,\cdot))
+
C\rho^{-1-\tau}h\,\mathfrak B^\sharp(T-h;r_-).
\label{eq:app-X2-pullback}
\end{align}
After changing variables from $\rho$ to $r=R(t,\rho,q)$ and using that the image of
$\rho\ge r_1$ is contained in $r\ge r_-$, these inequalities give
\begin{align}
G[\Xi_{\mathrm{old}}^\rho;I]
&\le
C\Big(G_I+h\,\mathfrak B^\sharp(T-h;r_-)\Big),
\label{eq:app-G-old-pullback}
\\
\Phi_1[\Xi_{\mathrm{old}}^\rho;I]
&\le
C\Big(\Phi_{1,I}+h\,\mathfrak B^\sharp(T-h;r_-)\Big),
\label{eq:app-Phi1-old-pullback}
\\
\Phi_2[\Xi_{\mathrm{old}}^\rho;I]
&\le
C\Big(\Phi_{2,I}+h\,\mathfrak B^\sharp(T-h;r_-)\Big),
\label{eq:app-Phi2-old-pullback}
\\
\Omega[\Xi_{\mathrm{old}}^\rho;I]
&\le
C\Big(\Omega_I+h\,\mathfrak B^\sharp(T-h;r_-)\Big).
\label{eq:app-Omega-old-pullback}
\end{align}
Similarly,
\begin{equation}\label{eq:app-Psi-old-pullback}
\|\rho\,\Xi_{\mathrm{old}}^\rho\|_{\cz_{-\tau}(M_{r_1,\infty})}
\le
C\Big(\Psi_I+h\,\mathfrak B^\sharp(T-h;r_-)\Big).
\end{equation}

Combining \eqref{eq:app-Xi-difference-pointwise} with
\eqref{eq:app-G-old-pullback}--\eqref{eq:app-Omega-old-pullback}, and combining
\eqref{eq:app-Xi-difference-holder} with \eqref{eq:app-Psi-old-pullback}, yields
\eqref{eq:app-G-comparison}--\eqref{eq:app-Psi-comparison}.

Finally, the last angular orthogonalization changes only the angular parametrization of the same
leaves $\rho=\mathrm{const}$. Thus the forcing is unchanged as a geometric function on each leaf
and is merely pulled back by a uniformly controlled angular diffeomorphism. The corresponding slab
norms are therefore equivalent, with constants depending only on the fixed-tail data. Hence the
same estimates hold in the final good-gauge chart.
\end{proof}

Combining Steps 1--3 proves Proposition~\ref{prop:app-late-slab-gauge}. Indeed, Step 1
orthogonalizes the original foliation without changing the geometric forcing. Step 2 transports
the radial coordinate, removes the normal shift, and gives the inner-cylinder estimates
\eqref{eq:app-inner-cylinder-close-sharp} and
\eqref{eq:app-inner-cylinder-time-derivative-bounds}. Reapplying Step 1 to the transported
foliation gives the good gauge \eqref{eq:app-good-gauge-conclusion}, and the preceding paragraph
shows that the inner-cylinder estimates pass to the final good-gauge chart as
\eqref{eq:app-r1-C2-close} and \eqref{eq:app-r1-time-derivative-bounds}. Step 3 gives
\eqref{eq:app-new-G-bound}--\eqref{eq:app-new-Psi-bound}.

\begin{remark}\label{rem:app-what-is-needed}
The normal-shift decay is used only to transport the radial coordinate without moving the fixed
tail too far. The strengthened ${\ct}^{\sharp}_{-\tau}$ regularity is used only to compare the
forcing of two nearby radial foliations.
\end{remark}

\section{A horizon-penetrating Schwarzschild final-state gauge}
\label{app:horizon-penetrating-schwarzschild-model}

In this appendix we record a simple Schwarzschild model for the kind of late-time final-state
coordinates used in Section~\ref{sub:spacetime-final-state}; see also
Remark~\ref{rem:horizon-penetrating-final-state-gauge}. The point is to construct a
horizon-penetrating time function whose level sets foliate the future horizon by MOTS
cross-sections, while agreeing exactly with the standard Schwarzschild time slices on an exterior
region whose inner radius tends to the horizon. Thus, on every fixed tail $r\ge r_1>2m$, the slices
are eventually the usual time-symmetric, hence tangentially maximal, Schwarzschild slices. The
analogous construction in Kerr can be carried out in horizon-regular ingoing Kerr coordinates by
cutting off the Boyer--Lindquist time function near the future horizon; we give the Schwarzschild
model because the formulas are completely transparent.

Let $m>0$, and write the Schwarzschild metric in ingoing Eddington--Finkelstein coordinates
$(v,r,\omega)\in\mathbb R\times(0,\infty)\times S^2$ as
\begin{equation}\label{eq:schw-ef-metric}
\gtime
=
-\left(1-\frac{2m}{r}\right)dv^2
+2\,dv\,dr
+r^2\,d\omega^2.
\end{equation}
Set
\[
A(r):=1-\frac{2m}{r}.
\]
On the exterior region $r>2m$, the standard Schwarzschild time is
\[
t_{\mathrm{Sch}}=v-r_*(r),
\]
where
\[
r_*(r)
:=
r+2m\log\left(\frac{r}{2m}-1\right),
\qquad r>2m.
\]
Thus
\[
r_*'(r)=\frac{1}{1-2m/r}=\frac1{A(r)}.
\]

\begin{prop}[Horizon-penetrating Schwarzschild comparison foliation]
\label{prop:horizon-penetrating-schwarzschild-foliation}
There exists a smooth one-parameter family of spacelike hypersurfaces
\[
\{\Sigma_T\}_{T\in\mathbb R}
\]
in the Schwarzschild spacetime \eqref{eq:schw-ef-metric} with the following properties.

\begin{enumerate}[label=\textup{(\roman*)}]
\item The hypersurfaces $\Sigma_T$ form a smooth foliation of the ingoing
Eddington--Finkelstein region $r>0$. Equivalently, $(T,r,\omega)$ is a smooth
horizon-penetrating coordinate system, where $\Sigma_T=\{T=\mathrm{constant}\}$.

\item Each $\Sigma_T$ intersects the future horizon
\[
\mathcal H^+=\{r=2m\}
\]
in a round sphere
\[
S_T:=\Sigma_T\cap\mathcal H^+,
\]
and the family $\{S_T\}_{T\in\mathbb R}$ foliates $\mathcal H^+$. Each $S_T$ is a MOTS.

\item There is a smooth function
\[
r_{\mathrm{cut}}(T)=2m+2\delta_0e^{-T/(4m)}
\]
for some fixed constant $\delta_0>0$, such that
\[
r_{\mathrm{cut}}(T)\downarrow 2m
\qquad\text{as }T\to+\infty,
\]
and
\[
\Sigma_T\cap\{r\ge r_{\mathrm{cut}}(T)\}
=
\{t_{\mathrm{Sch}}=T\}\cap\{r\ge r_{\mathrm{cut}}(T)\}.
\]
In particular, $\Sigma_T$ is tangentially maximal on $r\ge r_{\mathrm{cut}}(T)$.
\end{enumerate}
\end{prop}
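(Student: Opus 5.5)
We will look for the foliation as the level sets of a single function $\mathcal T = v - F(r,v)$. The function is chosen so that outside a moving cutoff it equals $t_{\mathrm{Sch}} = v - r_*(r)$, and near and inside the horizon it equals $v - \phi(r)$ with $\phi$ smooth across $r=2m$.

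\textbf{Step 1 (spacelikeness criterion).} For the Schwarzschild metric \eqref{eq:schw-ef-metric} the inverse metric gives
\[
\gtime^{-1}(d\mathcal T,d\mathcal T) = -\partial_v\mathcal T\,\big(2\partial_r\mathcal T + A\,\partial_v\mathcal T\big) \cdot (\text{sign conventions aside})
\]
for functions of $(v,r)$. In particular, for $\mathcal T = v-\psi(r)$ the level sets are spacelike iff $\psi'(2-A\psi')>0$, that is, iff $0<\psi'<2/A$ where $A>0$. Where $A\le 0$ the condition is simply $\psi'>0$.

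\textbf{Step 2 (the gluing).} The Schwarzschild choice $\psi=r_*$ has $\psi'=1/A$, which lies strictly inside the admissible window $(0,2/A)$ and blows up at $2m$. We therefore interpolate $\psi'$ between $1/A$ and a bounded positive function, say the constant $1$, using a cutoff in the variable $(r-2m)/\delta$. Any convex combination of $1/A$ and $1$ lies in $(0,2/A)$ for $r>2m$, since $1<2/A$ there. Inside $r\le 2m$ we keep $\psi'=1$, which is positive.

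To make the cutoff radius move with time, the scale $\delta$ must depend on the level. The natural scaling comes from the symmetry $v\mapsto v+c$: the hypersurfaces $\{v-\psi(r)=T\}$ are translates of one another. We take a single profile $\psi_0$ that equals $r_*$ for $r\ge 2m+2\delta_0$ and is regular across the horizon, and set
\[
\Sigma_T := \{\, v - \psi_0(r) = T \,\}.
\]
Translation invariance alone does not shrink the cutoff. To do that, we instead define $\Sigma_T$ as a graph $v = T + \Psi(r,T)$, with $\partial_r\Psi$ equal to a blend of $1/A$ and $1$ via the cutoff $\chi\big((r-2m)/(\delta_0 e^{-T/(4m)})\big)$, and normalize $\Psi(r,T) = r_*(r)$ for large $r$.

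\textbf{Step 3 (checking the conclusions).} This normalization gives (iii) immediately. The leaf $\Sigma_T$ meets $r=2m$ at $v = T + \Psi(2m,T)$. Up to bounded terms, $\Psi(2m,T)$ equals $r_*(r_{\mathrm{cut}})$, which is approximately $2m\log\big(\delta_0 e^{-T/(4m)}/m\big)$, giving $v\approx T/2 + \text{const}$. This is strictly increasing in $T$, so the horizon spheres $S_T$ foliate $\mathcal H^+$. This monotonicity is exactly where the exponential rate $e^{-T/(4m)}$ is used: it makes $\Psi(2m,T)$ grow like $-T/2$, which is slower than $T$. The spheres $S_T$ are the round spheres $\{v=\text{const}, r=2m\}$, on which $\theta_{(\ell)}=0$ and $\theta_{(\underline\ell)}<0$, so each is a MOTS; this gives (ii). Tangential maximality on $r\ge r_{\mathrm{cut}}$ follows because there the slices coincide with the time-symmetric Schwarzschild slices, for which $K=0$.

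\textbf{Main obstacle.} The hardest step is (i): that $T\mapsto\Sigma_T$ is a genuine smooth foliation, meaning $\partial_T\big(T+\Psi(r,T)\big)>0$ for every $r$, while each leaf remains spacelike. We would compute $\partial_T\Psi$ explicitly from the cutoff derivative. It is supported in the transition annulus, where $r-2m\sim\delta_0 e^{-T/(4m)}$. There $\partial_T$ of $\int^r(1/A-1)\chi\,dr$ is bounded by roughly $\tfrac{1}{4m}\cdot 2m\,\sup|\chi'|\cdot(\text{log-width})$. We expect this to be at most $1/2$ for a suitably chosen, sufficiently gradual cutoff $\chi$ (for example, one varying over a long logarithmic interval), which gives $\partial_T(T+\Psi)\ge 1/2$.
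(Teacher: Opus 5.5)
Your construction is the paper's. With $\delta(T)=\delta_0e^{-T/(4m)}$ and $x_T(r)=(r-2m)/\delta(T)$, the blend $\partial_r\Psi=(1-\chi(x_T))+\chi(x_T)/A$, normalized to $r_*$ beyond $r_{\mathrm{cut}}(T)=2m+2\delta(T)$, integrates to
\[
T+\Psi(r,T)=T+r+2m\log\frac{\delta(T)}{m}+2m\int_2^{x_T(r)}\frac{\chi(y)}{y}\,dy \qquad (r>2m).
\]
This is exactly the paper's $F(T,r)$, continued with slope $1$ for $r\le 2m$. Your convex-combination spacelikeness argument, the exterior agreement, and the MOTS check are also the paper's.

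What you misjudge is the step you call the main obstacle: there is nothing to estimate. Since $\partial_T\log\delta=-1/(4m)$ and $\partial_T x_T=x_T/(4m)$, differentiating the display gives
\[
\partial_T\big(T+\Psi\big)=1-\tfrac12+\tfrac12\,\chi(x_T(r))=\tfrac12\big(1+\chi(x_T(r))\big)\in\big[\tfrac12,1\big].
\]
This holds for every cutoff with $0\le\chi\le1$. For $r\le 2m$ the value is $\tfrac12$.

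Equivalently, fix $R$ beyond the cutoff. The $T$-derivative of $\int_r^R(1/A-1)\chi(x_T(r'))\,dr'$ telescopes to $\tfrac12\int_{x_T(r)}^{x_T(R)}\chi'(y)\,dy=\tfrac12\big(1-\chi(x_T(r))\big)$. The reason is that $T$ enters only through a rescaling of $y$, and the measure $dy/y$ is scale invariant.

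So the choice of a ``gradual'' cutoff is irrelevant. Moreover, a $\sup|\chi'|\times$width bound of the kind you sketch cannot give ``at most $\tfrac12$''. Because $\chi$ rises by $1$ across the transition, any such bound is at least $\tfrac12$, and stretching the transition over a longer logarithmic interval does not change it.

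The exact formula also supplies what your outline omits for (i): $T\mapsto T+\Psi(r,T)$ is onto $\mathbb R$ for each $r$, so the leaves cover $r>0$. It also makes your approximate horizon computation exact: $T+\Psi(2m,T)=T/2+C_\chi$.

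A minor slip: since $\gtime^{-1}(dv,dv)=0$, $\gtime^{-1}(dv,dr)=1$ and $\gtime^{-1}(dr,dr)=A$, the general formula in Step 1 should read $\gtime^{-1}(d\mathcal T,d\mathcal T)=\partial_r\mathcal T\,\big(2\partial_v\mathcal T+A\,\partial_r\mathcal T\big)$. The criterion $\psi'(2-A\psi')>0$ that you actually use is correct.
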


\begin{proof}
Choose a smooth nondecreasing cutoff function
\[
\chi:[0,\infty)\to[0,1]
\]
such that
\[
\chi(x)=0\quad\text{for }0\le x\le1,
\qquad
\chi(x)=1\quad\text{for }x\ge2.
\]
Fix a constant $\delta_0>0$, and define
\[
\delta(T):=\delta_0e^{-T/(4m)}.
\]
For $r>2m$, set
\[
x_T(r):=\frac{r-2m}{\delta(T)}.
\]
We define $F(T,r)$ for $r>2m$ by
\begin{equation}\label{eq:F-schwarzschild-construction}
F(T,r)
:=
T+r+2m\log\left(\frac{\delta(T)}{m}\right)
+
2m\int_2^{x_T(r)}\frac{\chi(y)}{y}\,dy.
\end{equation}
Since $\chi(y)=0$ for $0\le y\le1$, the integrand $\chi(y)/y$ extends smoothly across $y=0$ by
zero. Thus the integral term has a smooth one-sided limit as $r\downarrow2m$. We extend $F$ to
$0<r\le2m$ by
\begin{equation}\label{eq:F-inside-extension}
F(T,r):=F(T,2m)+(r-2m).
\end{equation}
Finally, define
\begin{equation}\label{eq:SigmaT-definition}
\Sigma_T:=\{(v,r,\omega):v=F(T,r)\}.
\end{equation}

We now verify the stated properties.

First, we check agreement with the standard Schwarzschild time slices outside the transition
region. For $r>2m$, differentiating \eqref{eq:F-schwarzschild-construction} in $r$ gives
\begin{equation}\label{eq:Fr-schwarzschild}
F_r(T,r)
=
1+\frac{2m}{r-2m}\chi\left(\frac{r-2m}{\delta(T)}\right).
\end{equation}
If
\[
r\ge 2m+2\delta(T),
\]
then $x_T(r)\ge2$, and hence $\chi(x_T(r))=1$. Therefore
\[
F_r(T,r)
=
1+\frac{2m}{r-2m}
=
\frac{r}{r-2m}
=
\frac1{1-2m/r}
=
r_*'(r).
\]
Thus $F(T,r)-r_*(r)$ is constant in $r$ on $r\ge2m+2\delta(T)$. We compute this constant directly.
For $x_T(r)\ge2$,
\[
\int_2^{x_T(r)}\frac{\chi(y)}{y}\,dy
=
\int_2^{x_T(r)}\frac{dy}{y}
=
\log\left(\frac{x_T(r)}{2}\right).
\]
Hence, for $r\ge2m+2\delta(T)$,
\[
\begin{aligned}
F(T,r)
&=
T+r+2m\log\left(\frac{\delta(T)}{m}\right)
+
2m\log\left(\frac{x_T(r)}{2}\right)
\\
&=
T+r+2m\log\left(\frac{\delta(T)}{m}\cdot\frac{r-2m}{2\delta(T)}\right)
\\
&=
T+r+2m\log\left(\frac{r-2m}{2m}\right)
\\
&=
T+r_*(r).
\end{aligned}
\]
Therefore, if
\[
r_{\mathrm{cut}}(T):=2m+2\delta(T)=2m+2\delta_0e^{-T/(4m)},
\]
then
\begin{equation}\label{eq:Sigma-equals-Schwarzschild-outside}
\Sigma_T\cap\{r\ge r_{\mathrm{cut}}(T)\}
=
\{v=T+r_*(r)\}\cap\{r\ge r_{\mathrm{cut}}(T)\}
=
\{t_{\mathrm{Sch}}=T\}\cap\{r\ge r_{\mathrm{cut}}(T)\}.
\end{equation}
Since $r_{\mathrm{cut}}(T)\downarrow2m$ as $T\to+\infty$, this proves the exterior agreement.

On the exterior region $r\ge r_{\mathrm{cut}}(T)$, the hypersurface $\Sigma_T$ is exactly a standard
Schwarzschild time slice. These standard time slices are time-symmetric, so their second
fundamental form vanishes:
\[
K=0.
\]
Thus $\tr_{S_r}K=0$ for the coordinate spheres in this region. Therefore $\Sigma_T$ is
tangentially maximal on $r\ge r_{\mathrm{cut}}(T)$.

We next verify smoothness and horizon penetration. If
\[
0<r-2m\le \delta(T),
\]
then $x_T(r)\le1$, and hence $\chi(x_T(r))=0$. By \eqref{eq:Fr-schwarzschild},
\[
F_r(T,r)=1
\qquad\text{for }2m<r\le2m+\delta(T).
\]
In this same region, the integral term in \eqref{eq:F-schwarzschild-construction} is independent
of $r$, so $F(T,r)$ is affine in $r$ near $r=2m$ from the exterior side. The extension
\eqref{eq:F-inside-extension} therefore matches smoothly across $r=2m$. Hence each $\Sigma_T$ is a
smooth hypersurface crossing the future horizon.

We now check spacelikeness. The tangent vectors to $\Sigma_T$ are
\[
X_r=\partial_r+F_r\partial_v,
\qquad
X_A=\partial_A,
\]
where $A$ is a spherical index. Using \eqref{eq:schw-ef-metric}, the induced metric on $\Sigma_T$
is
\[
g_{\Sigma_T}
=
\left(2F_r-A(r)F_r^2\right)dr^2
+
r^2d\omega^2.
\]
Thus $\Sigma_T$ is spacelike if and only if
\[
2F_r-A(r)F_r^2>0.
\]
For $r>2m$, \eqref{eq:Fr-schwarzschild} gives
\[
A(r)F_r
=
A(r)+\chi(x_T(r))\frac{2m}{r}
=
A(r)+\chi(x_T(r))(1-A(r)).
\]
Since $0<A(r)<1$ for $r>2m$ and $0\le\chi\le1$, we have
\[
0<A(r)F_r\le1.
\]
Therefore
\[
2F_r-A(r)F_r^2
=
F_r\big(2-A(r)F_r\big)
\ge
F_r>0.
\]
For $0<r\le2m$, the extension gives $F_r=1$, while $A(r)\le0$. Hence
\[
2F_r-A(r)F_r^2
=
2-A(r)>0.
\]
Thus every $\Sigma_T$ is spacelike.

We now show that the hypersurfaces form a foliation. For fixed $r>2m$, differentiating
\eqref{eq:F-schwarzschild-construction} in $T$ gives
\[
\delta'(T)=-\frac{1}{4m}\delta(T),
\qquad
\partial_T x_T(r)=\frac{x_T(r)}{4m}.
\]
Therefore
\[
\begin{aligned}
F_T(T,r)
&=
1+2m\frac{\delta'(T)}{\delta(T)}
+
2m\frac{\chi(x_T(r))}{x_T(r)}\partial_T x_T(r)
\\
&=
1-\frac12+\frac12\chi(x_T(r))
\\
&=
\frac{1+\chi(x_T(r))}{2}.
\end{aligned}
\]
Thus
\begin{equation}\label{eq:FT-positive}
\frac12\le F_T(T,r)\le1
\qquad\text{for }r>2m.
\end{equation}
For $0<r\le2m$, the affine extension gives
\[
F_T(T,r)=F_T(T,2m)=\frac12.
\]
Hence
\[
F_T(T,r)>0
\qquad\text{for all }r>0.
\]
So, for every fixed $r>0$, the map $T\mapsto F(T,r)$ is strictly increasing.

It remains to show that this map has image all of $\mathbb R$. Fix $r>2m$. As $T\to+\infty$, we
have $\delta(T)\to0$, hence $x_T(r)\to+\infty$. Thus, for all sufficiently large $T$,
\[
F(T,r)=T+r_*(r),
\]
and so $F(T,r)\to+\infty$. As $T\to-\infty$, we have $\delta(T)\to+\infty$, hence
$x_T(r)\to0$. Since $\chi=0$ near $0$,
\[
F(T,r)
=
T+r+2m\log\left(\frac{\delta(T)}{m}\right)+O(1)
=
\frac{T}{2}+O(1)
\to-\infty.
\]
For $0<r\le2m$, the affine extension gives the same asymptotic behavior:
\[
F(T,r)=\frac{T}{2}+O(1).
\]
Therefore, for every fixed $r>0$,
\[
F(\mathbb R,r)=\mathbb R.
\]
It follows that the hypersurfaces $\Sigma_T$ are pairwise disjoint and cover the whole ingoing
Eddington--Finkelstein region $r>0$. Hence $\{\Sigma_T\}_{T\in\mathbb R}$ is a smooth spacelike
foliation.

Finally, we check the horizon foliation and the MOTS condition. The future horizon is
\[
\mathcal H^+=\{r=2m\}.
\]
The intersection of $\Sigma_T$ with $\mathcal H^+$ is the round sphere
\[
S_T
=
\{r=2m,\ v=F(T,2m)\}.
\]
At $r=2m$ one has $x_T(2m)=0$, and therefore
\[
F(T,2m)
=
T+2m+2m\log\left(\frac{\delta(T)}{m}\right)
+
2m\int_2^0\frac{\chi(y)}{y}\,dy.
\]
Since $\delta(T)=\delta_0e^{-T/(4m)}$, this becomes
\[
F(T,2m)
=
\frac{T}{2}
+
C_\chi,
\]
where
\[
C_\chi
=
2m
+
2m\log\left(\frac{\delta_0}{m}\right)
+
2m\int_2^0\frac{\chi(y)}{y}\,dy.
\]
Thus $T\mapsto F(T,2m)$ is a diffeomorphism from $\mathbb R$ to $\mathbb R$. Therefore the spheres
$S_T$ foliate the future horizon.

Each $S_T$ is a MOTS. Indeed, on the future horizon $r=2m$, the outgoing future null normal is the
horizon generator
\[
\ell=\partial_v.
\]
The outgoing null expansion of a round sphere is
\[
\theta_\ell=\frac{2}{r}\ell(r).
\]
Since $\ell(r)=\partial_v r=0$ on $\mathcal H^+$, we obtain
\[
\theta_\ell=0.
\]
Therefore each $S_T$ is a marginally outer trapped surface. This completes the proof.
\end{proof}

\begin{remark}
The construction glues the \emph{radial derivative} of the Schwarzschild time function, rather than
the tortoise coordinate $r_*$ itself. This avoids the logarithmic singularity of $r_*$ at
$r=2m$, keeps the slices smooth across the future horizon, and still makes the slices exactly equal
to Schwarzschild time slices on the exterior region $r\ge r_{\mathrm{cut}}(T)$.
\end{remark}

\begin{remark}
The transition region has width comparable to $\delta(T)$ and therefore moves into the horizon as
$T\to+\infty$. Consequently, on every fixed exterior tail $r\ge r_1>2m$, the construction agrees
exactly with the stationary Schwarzschild gauge for all sufficiently large $T$. This is the feature
needed in the final-state discussion; the appendix does not require uniform estimates in the
shrinking transition region.
\end{remark}

\section{A Penrose Inequality on the final Bondi mass}
\label{app:bondi-annuli}

The proof of Theorem~\ref{thm:SPI} constructs complete tangentially
maximal comparison hypersurfaces by solving the TMCF equation on exterior
tails and then passing to the limit as the outer radius tends to infinity.
This is the natural version when one wants to compare with the ADM mass.  In
this appendix we record a localized variant of the same argument.  The TMCF
equation is solved only on finite annuli whose outer boundaries are large
spheres near timelike infinity.  The result is a Penrose inequality with the
final Bondi mass in place of the ADM mass. This provides a stronger inequality since the final Bondi mass is at most equal to the ADM mass and is strictly smaller in the presence of a positive amount of energy radiated to null infinity.

The point of the construction is that the outer boundary is not sent to
spatial infinity along a fixed late slice.  Instead, we choose large spheres
whose retarded times tend to \(+\infty\).  Thus the comparison is made after
the outgoing radiation has passed through the relevant large sphere.  This
shows that the method is compatible with Bondi mass loss; indeed, in the
formulation below the final Bondi mass itself bounds the horizon area.

To formulate a Bondi mass, we assume in this appendix that the asymptotic
end admits a future Bondi--Sachs structure near \(\mathscr I^+\).  Thus
there are coordinates \((u,\rho,y^A)\) near future null infinity, where
\(u\) labels outgoing null hypersurfaces and \(\rho\) is the luminosity
radius, normalized so that the cuts \(S_{u,\rho}\) have area
\(4\pi\rho^2+o(\rho^2)\), and in these coordinates the metric has the usual
Bondi--Sachs expansion from which the Bondi mass \(m_B(u)\) is defined; see
\cite{BondiSachs62,Sachs62,MadlerWinicour16}.  This is only a
null-asymptotic regularity assumption needed to define \(m_B(u)\), and it
does not assume the absence of Bondi mass loss.
\begin{thm}[Bondi mass Penrose bound]
\label{thm:bondi-penrose}
Let
\[
(\widehat{\mathcal M},\gtime),\qquad
t:\widehat{\mathcal M}\to\mathbb R,\qquad
\Sigma_t=\{t=\mathrm{const}\},\qquad
\mathcal H_{\mathrm{app}}=\bigcup_{t\ge0}\mathcal S_t,
\]
and
\[
(M_\ast,g_\ast,K_\ast)\hookrightarrow(\widehat{\mathcal M},\gtime),
\qquad
S_\ast=\partial M_\ast,
\]
be as in Theorem~\ref{thm:SPI}.  Thus \(S_\ast\) is a MOTS and a smooth
cross-section of \(\mathcal H_{\mathrm{app}}\), the future portion of
\(\mathcal H_{\mathrm{app}}\) is piecewise smooth with finitely many jumps,
the last smooth piece \(\mathcal H_{\mathrm{final}}\) satisfies the quasi final state hypothesis, and the outermost-horizon area does not decrease
across jumps.

Assume, in addition, that the asymptotic end admits a future Bondi--Sachs
structure for which the Bondi mass \(m_B(u)\) is defined and has a finite
future limit
\[
m_B^+:=\lim_{u\to+\infty}m_B(u).
\]
Then
\begin{equation}\label{eq:bondi-penrose-Sstar}
m_B^+
\ge
\sqrt{\frac{|S_\ast|}{16\pi}}.
\end{equation}
\end{thm}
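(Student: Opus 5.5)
Since the horizon half of the proof of Theorem~\ref{thm:SPI} does not depend on what happens at infinity, I will keep it unchanged. Specifically, I reuse the area comparison $A_\infty\ge |S_\ast|$ and the late-horizon continuity Lemma~\ref{lem:late-horizon-hawking}. It then suffices to prove
\[
m_B^+\ \ge\ \sqrt{A_\infty/(16\pi)},
\]
with the ADM comparison replaced by a comparison on finite TMCF annuli. Fix $\varepsilon>0$ and choose $r_j\downarrow r_0$ and the tolerance $\rho_j$ exactly as in the proof of Theorem~\ref{thm:SPI}. For each $j$ I pick a retarded time $u_j\to+\infty$ with $|m_B(u)-m_B^+|<\varepsilon_j$ for $u\ge u_j$. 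Next I pick a Bondi cut $S_{u_j,\rho}$ with luminosity radius $\rho$ so large that its spacetime Hawking mass is within $\varepsilon_j$ of $m_B(u_j)$. This uses the standard convergence $m_H^{\mathrm{ST}}(S_{u,\rho})\to m_B(u)$ as $\rho\to\infty$ in a Bondi--Sachs frame.

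I then solve the TMCF equation on a finite annulus $M_{r_j,R_j}$, prescribing outer data on $S_{R_j}$ given by the graph function $\varphi_j$ whose graph is (a small perturbation of) this Bondi cut. A spacelike sphere near $\mathscr I^+$ at retarded time $u_j$ corresponds to a large $R_j$ and a central time $T_0\approx u_j+R_j$; this time is large, so it lies in the late-time regime. The local existence Proposition~\ref{prop:local-r-extension} starts the inward evolution, and the open--closed continuation from the proof of Theorem~\ref{thm:global-existence} carries it down to $r_j$. The a priori estimates of Theorem~\ref{thm:apriori-estimates} apply once $\varphi_j$ satisfies \textbf{BS-1} and \textbf{BS-2}. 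I arrange this by noting that, in the late-slab good gauge of Appendix~\ref{app:late-slab-gauge}, the Bondi cut is $C^2$-close to a coordinate sphere $\{t=T_0, r=R_j\}$, because its angular oscillation decays like $O(R_j^{-\tau})$ in scale-invariant norms. The estimates must be rerun with nonconstant outer data: the maximum-principle bounds acquire an extra boundary term $\|\varphi_j-T_0\|_{C^2}$, which is small. This lets me choose $T_0$ so large that the inner leaf $\Sigma_j$ is a $\rho_j$-small graph over $\mathcal S_{T_0}$. Then $m_H^{\mathrm{ST}}(\Sigma_j)$ is within $\varepsilon_j$ of $\sqrt{A_\infty/16\pi}$.

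The comparison on the finite annulus $(M_{f_j},g_{f_j})$ goes as follows. By Proposition~\ref{prop:R-nonneg-tm}, $R_{g_{f_j}}\ge 0$. Admissibility gives $H_{f,r}>0$, so the calibration argument of Lemma~\ref{lem:outerminimizing} shows that $\Sigma_j$ is outward minimizing in the annulus. I glue a Shi--Tam (or Bartnik--Shi--Tam) asymptotically flat extension of nonnegative scalar curvature to the outer boundary $\Sigma^{\rm out}_j$. This extension requires $\Sigma^{\rm out}_j$ to have positive Gauss curvature and $H>0$, which holds for nearly round large spheres. Its ADM mass is at most the Brown--York-type quantity
\[
\frac{1}{8\pi}\int (H_0-H)\;\le\; m_H^{\mathrm{Riem}}(\Sigma^{\rm out}_j)+o(1),
\]
since the outer sphere is nearly round. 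Huisken--Ilmanen applied to the glued manifold, with the standard corner smoothing across $H$-matching, then yields
\[
m_H^{\mathrm{Riem}}(\Sigma_j)\le m_{ADM}(\text{ext})\le m_H^{\mathrm{Riem}}(\Sigma^{\rm out}_j)+o(1).
\]
By the tangentially maximal identity \eqref{eq:hawking-masses-agree}, both Riemannian Hawking masses equal spacetime ones. Finally, $m_H^{\mathrm{ST}}(\Sigma_j^{\rm out})$ differs from $m_H^{\mathrm{ST}}(S_{u_j,\rho})$ by $o(1)$, by continuity of the spacetime Hawking mass under $C^2$-small deformations with scale-invariant bounds. Chaining these gives $\sqrt{A_\infty/16\pi}\le m_B^+ + C\varepsilon_j$, and letting $j\to\infty$ proves \eqref{eq:bondi-penrose-Sstar}.

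The main obstacle is the outer boundary step. There are two difficulties: compatibility of the Bondi-cut data with the late-slab gauge and the bootstrap conditions, and the Shi--Tam/Brown--York bound by the Hawking mass with only $o(1)$ error. The first needs a quantitative relation between the Bondi frame near $\mathscr I^+$ and the ADM chart of (Q1) along the region $u\approx u_j$, $\rho\to\infty$. I would first try to impose this relation as part of the Bondi--Sachs assumption, namely that the cuts are $C^{2}$-graphs over coordinate spheres with $O(R^{-\tau})$ scale-invariant deviation. For the second, I would use the near-roundness of large Bondi cuts, where $H_0-H$ expands so that the Brown--York mass and the Hawking mass agree to leading order.
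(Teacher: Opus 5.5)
Your overall architecture matches the paper's. You use finite TMCF annuli with $R_g\ge0$ and an outward-minimizing inner leaf, a Shi--Tam extension glued along the outer leaf with $m_{\mathrm{ADM}}(E_j)\le m_{BY}\le m_H^{\mathrm{Riem}}+o(1)$, and Miao's corner smoothing. Then you apply Huisken--Ilmanen on the glued manifold, use equality of the Riemannian and spacetime Hawking masses on TMCF leaves, pass the inner leaves to $\sqrt{A_\infty/16\pi}$, and finish with the area law.

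You depart from the paper only at the outer boundary, and that is where the argument has a gap. You prescribe nonconstant Dirichlet data $\varphi_j$ read off from a Bondi cut. You then assert that its angular oscillation is $O(R_j^{-\tau})$ in scale-invariant norms. Under the stated hypotheses this is false in general. Condition (Q1) only gives $O(r^{-\tau})$ decay with $\tau<1$, so along an outgoing null generator $dt/dr=1+O(r^{-\tau})$ with an angle-dependent error. Integrating, the restriction of $t$ to $\{u=u_j\}\cap\{r=R_j\}$ has angular oscillation of order $R_j^{1-\tau}$, which is unbounded. The discrepancy between luminosity radius and $r$ on a Bondi cut is of the same size.

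Such a $\varphi_j$ does not fit in the good-gauge slab $I(T_0)$ of width $h_{\mathrm{slab}}\le1$, so the range condition in Definition~\ref{def:admissible-truncated} fails. Moreover, the estimates of Theorem~\ref{thm:apriori-estimates} all use $f=T_0$ on $S_R$ to start their Dini-derivative integrations from zero. With your data they would acquire boundary terms $\sup|\varphi_j-T_0|$ and $R_j|\slashed d\varphi_j|_\gamma$ that are not small, and the $C^2$-closeness of $\Sigma_j$ to the horizon section is lost. You recognize this and propose adding a Bondi/chart compatibility hypothesis, but that proves a weaker statement than the theorem.

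Separately, your final continuity step is too crude as written. $m_H^{\mathrm{ST}}$ carries the factor $\sqrt{|\Sigma|/16\pi}\sim R_j$, so a scale-invariant $C^2$ deformation of size $\delta\sim R_j^{-\tau}$ changes it by $O(R_j\delta)$, which diverges. To get an $o(1)$ error you need the near-criticality of $\int|\mathbf H|^2$ at large nearly round spheres, which yields $O(R_j^{1-2\tau})$.

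The missing idea is to keep constant outer data, as the paper does. Take coordinate spheres $\Sigma_j^+=\{t=T_j,\ r=R(T_j)\}$ with $R(T_j)\to\infty$ and $\inf_{S^2}u(T_j,R(T_j),\cdot)\to+\infty$, and impose $f_j(R(T_j),\cdot)=T_j$. Then the truncated-annulus construction in the proof of Theorem~\ref{thm:global-existence} applies verbatim with $r_1=r_j$, $R=R(T_j)$, $T_0=T_j$. No estimate needs to be redone, and no relation between the Bondi frame and the chart of (Q1) enters the PDE part. The Bondi structure enters only through the outer limit $m_H^{\mathrm{ST}}(\Sigma_j^+)\to m_B^+$, together with $m_{BY}(\Sigma_j^+)-m_H^{\mathrm{Riem}}(\Sigma_j^+)\to0$, both of which the paper takes from the literature.

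Your ordering of limits (fixed cut $u_j$, then $\rho\to\infty$) identifies $m_B(u_j)$ through the textbook Bondi-cut limit rather than a diagonal limit toward timelike infinity. That is a genuine advantage in principle, but it cannot be realized without the compatibility hypothesis you had to add.
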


\begin{remark}
The additional Bondi--Sachs regularity in Theorem~\ref{thm:bondi-penrose}
is not a no-mass-loss assumption.  It is only the standard null-infinity
structure needed to define \(m_B(u)\).  Under the Bondi mass loss formula,
\(m_B(u)\) is nonincreasing in \(u\), and in general
\[
m_{ADM}(M_\ast,g_\ast,K_\ast)\ge m_B^+,
\]
with strict inequality when radiation carries energy to null infinity.
Thus \eqref{eq:bondi-penrose-Sstar} is stronger than the ADM-mass inequality
whenever mass loss occurs.
\end{remark}

\begin{proof}
The only new point is the replacement of the complete-tail
Huisken--Ilmanen comparison in Corollary~\ref{cor:HI-comparison} by a
localized comparison on compact TMCF annuli.  Since the TMCF annulus is not
an inverse-mean-curvature-flow annulus, the Hawking mass of its coordinate
leaves is not automatically monotone.  We therefore attach an auxiliary
nonnegative-scalar-curvature extension outside the large outer sphere and
then apply the usual Huisken--Ilmanen comparison to the glued manifold.

\smallskip

\noindent\textbf{Step 1: choose large finite annuli near timelike infinity.}
We assume that the late exterior contains a Bondi--Sachs end.  Thus there is
an exterior region
\[
\mathcal U_{\mathscr I^+}\simeq (u_0,\infty)\times(\rho_0,\infty)\times S^2
\]
with Bondi--Sachs coordinates \((u,\rho,y^A)\), where \(u\) labels the cuts
of future null infinity, \(\rho\) is the luminosity radius, and \(y^A\) are
angular variables.  The Bondi mass \(m_B(u)\) is the spherical average of the
Bondi mass aspect on the cut \(u=\mathrm{const}\), and we assume that
\[
m_B(u)\longrightarrow m_B^+
\qquad\text{as }u\to+\infty .
\]
See, for example, \cite{BondiSachs62,Sachs62,MadlerWinicour16}.

Let \((t,r,p)\) be the late exterior coordinates from
Definition~\ref{def:spi-QFS}.  In the overlap between the late exterior chart
and the Bondi--Sachs end, let
\[
u=u(t,r,p)
\]
denote the Bondi retarded time.  We choose outer boundary spheres
\[
\Sigma_T^+ := \{t=T,\ r=R(T)\}
\]
with
\[
R(T)\to\infty,
\qquad
\inf_{p\in S^2} u(T,R(T),p)\to+\infty
\qquad\text{as }T\to+\infty .
\]
Thus the outer boundaries are taken both to large radius and to late retarded
time, so that the large-sphere limits see the final Bondi mass \(m_B^+\).

In standard asymptotically flat coordinates one should think of
\[
u \sim t-r_\ast(r),
\]
where \(r_\ast\) is the usual tortoise, or optical, radial coordinate.  Hence
one may keep in mind the model choice \(R(T)=\sqrt T\), or more generally any
choice with \(R(T)\to\infty\) and \(r_\ast(R(T))=o(T)\).

Choose sequences
\[
r_j\downarrow r_0,
\qquad
T_j\to+\infty,
\]
so that
\[
R(T_j)\to\infty,
\qquad
\inf_{p\in S^2} u(T_j,R(T_j),p)\to+\infty.
\]
We choose \(T_j\) sufficiently large, after fixing \(r_j\), so that the
finite-annulus construction in the proof of Theorem~\ref{thm:global-existence}
applies on
\[
M_{r_j,R(T_j)}:=(r_j,R(T_j))\times S^2.
\]
Indeed, before passing to the exterior-tail limit \(R\to\infty\), the proof
of Theorem~\ref{thm:global-existence} constructs a truncated solution on
each finite annulus with prescribed outer boundary value; see
Definition~\ref{def:admissible-truncated} and the open-closed argument in
the proof of Theorem~\ref{thm:global-existence}.  Applying that construction
with
\[
r_1=r_j,
\qquad
R=R(T_j),
\qquad
T_0=T_j,
\]
gives a finite-annulus solution
\[
f_j\in\mathcal C\big(\overline{M_{r_j,R(T_j)}}\big)
\]
satisfying
\[
f_j(R(T_j),\cdot)=T_j.
\]

Let
\[
M_j:=\{(t,r,p): t=f_j(r,p),\ r_j\le r\le R(T_j),\ p\in S^2\}
\]
be the corresponding tangentially maximal annulus.  We write \(g_j\) for
the induced Riemannian metric on \(M_j\), and define its inner and outer
boundary leaves by
\[
\Sigma_j^-:=\{r=r_j\}\cap M_j,
\qquad
\Sigma_j^+:=\{r=R(T_j)\}\cap M_j.
\]
The outer boundary condition gives
\[
\Sigma_j^+=\{t=T_j,\ r=R(T_j)\}.
\]

Since \(f_j\) solves the TMCF equation, every leaf \(\Sigma_r\subset M_j\)
satisfies
\[
\operatorname{tr}_{\Sigma_r}K_j=0.
\]
Therefore the spacetime and Riemannian Hawking masses agree on the leaves:
\begin{equation}\label{eq:bondi-app-hawking-agree}
m_H^{\mathrm{ST}}(\Sigma_r)
=
m_H^{\mathrm{Riem}}(\Sigma_r),
\end{equation}
as in \eqref{eq:hawking-masses-agree}.

Moreover, since the finite-annulus solution is admissible,
Definition~\ref{def:parabolicity-conditions} implies that the graph is
spacelike and that the leaves satisfy
\[
H_{f_j,r}>0.
\]
Thus the coordinate-sphere foliation of \(M_j\) is mean-convex.  The same
calibration argument as in Lemma~\ref{lem:outerminimizing} shows that
\(\Sigma_j^-\) is connected and outward minimizing in the compact annulus
\(M_j\).  Finally, Proposition~\ref{prop:R-nonneg-tm} applies locally to
the tangentially maximal annulus and gives
\begin{equation}\label{eq:bondi-app-R-nonnegative}
R_{g_j}\ge0
\qquad\text{on }M_j.
\end{equation}

\smallskip
\noindent\textbf{Step 2: identify the outer large-sphere limit.}
Because
\[
R(T_j)\to\infty,
\qquad
\inf_{\Sigma_j^+}u\to+\infty,
\]
the outer boundaries \(\Sigma_j^+\) are large spheres whose Bondi retarded
times tend to \(+\infty\).  The usual large-sphere limit of the spacetime
Hawking energy at null infinity gives
\begin{equation}\label{eq:outer-hawking-to-bondi}
m_H^{\mathrm{ST}}(\Sigma_j^+)\to m_B^+;
\end{equation}
see, for example, \cite{ChruscielJezierskiKijowski,MarsSoria15}.  If
necessary, the sequence \(T_j\) is chosen diagonally so that the large-sphere
limit error and the late-\(u\) Bondi error both tend to zero.

We shall also use the corresponding large-sphere comparison between
Brown--York and Riemannian Hawking mass.  Let
\[
\gamma_j:=g_j|_{\Sigma_j^+}
\]
and let \(H_j\) be the mean curvature of
\(\Sigma_j^+\subset(M_j,g_j)\) with respect to the outward unit normal.
For \(j\) sufficiently large, the standard large-sphere estimates in the
asymptotic end give
\[
K_{\gamma_j}>0,
\qquad
H_j>0.
\]
Hence \((\Sigma_j^+,\gamma_j)\) admits a unique strictly convex isometric
embedding into \(\mathbb R^3\), by the Weyl--Nirenberg--Pogorelov theorem;
see \cite{Nirenberg53,Pogorelov52}.  Denote by \(H_{0,j}\) the Euclidean
mean curvature of the embedded surface.  The Brown--York mass is
\[
m_{BY}(\Sigma_j^+)
:=
\frac1{8\pi}
\int_{\Sigma_j^+}(H_{0,j}-H_j)\,d\mu_{\gamma_j}.
\]
For large nearly round spheres, the Brown--York and Hawking masses have the
same large-sphere limit; see \cite{FanShiTam09}.  In the present notation,
\begin{equation}\label{eq:BY-Hawking-outer}
m_{BY}(\Sigma_j^+)
-
m_H^{\mathrm{Riem}}(\Sigma_j^+)
\to0.
\end{equation}
Combining \eqref{eq:bondi-app-hawking-agree},
\eqref{eq:outer-hawking-to-bondi}, and \eqref{eq:BY-Hawking-outer}, we get
\begin{equation}\label{eq:BY-to-bondi}
m_{BY}(\Sigma_j^+)\to m_B^+.
\end{equation}

\smallskip
\noindent\textbf{Step 3: attach an auxiliary extension outside the large sphere.}
We now attach an auxiliary nonnegative-scalar-curvature end outside
\(\Sigma_j^+\).  This end is not the physical exterior of the spacetime.  It
is only a comparison device.

The required extension is supplied by the Shi--Tam quasi-spherical
construction \cite{ShiTam02}.  Applied to the Bartnik boundary data
\[
(\Sigma_j^+,\gamma_j,H_j),
\]
it gives an asymptotically flat scalar-flat Riemannian extension \(E_j\)
outside \(\Sigma_j^+\), inducing the boundary metric \(\gamma_j\) and the
boundary mean curvature \(H_j\), such that
\begin{equation}\label{eq:ShiTam-extension-bound}
m_{\mathrm{ADM}}(E_j)
\le
m_{BY}(\Sigma_j^+).
\end{equation}
Therefore, by \eqref{eq:BY-Hawking-outer},
\begin{equation}\label{eq:extension-ADM-Hawking}
m_{\mathrm{ADM}}(E_j)
\le
m_H^{\mathrm{Riem}}(\Sigma_j^+)+o(1).
\end{equation}

Glue \(E_j\) to \(M_j\) along \(\Sigma_j^+\).  The induced metric and mean
curvature agree across the gluing surface, so the glued metric has
nonnegative scalar curvature in the weak corner sense.  By Miao's smoothing
theorem for asymptotically flat manifolds with corners along a hypersurface
\cite{Miao02Corners}, we may smooth the corner, changing the ADM mass by
\(o(1)\), and obtain a smooth asymptotically flat Riemannian manifold
\[
(\widehat M_j,\widehat g_j)
\]
with
\[
R_{\widehat g_j}\ge0
\]
and
\begin{equation}\label{eq:glued-ADM-bound}
m_{\mathrm{ADM}}(\widehat M_j,\widehat g_j)
\le
m_H^{\mathrm{Riem}}(\Sigma_j^+)+o(1).
\end{equation}

The inner boundary \(\Sigma_j^-\) remains outward minimizing in the glued
manifold.  On the TMCF annulus this is the calibration argument from
Lemma~\ref{lem:outerminimizing}.  On the Shi--Tam extension, the
quasi-spherical construction gives an outward mean-convex foliation.  These
foliations fit together at \(\Sigma_j^+\), after the smoothing above, and
give the same outward-minimizing conclusion for any surface enclosing
\(\Sigma_j^-\) in \(\widehat M_j\).

\smallskip
\noindent\textbf{Step 4: apply Huisken--Ilmanen.}
We now apply Huisken--Ilmanen's weak inverse mean curvature flow comparison
\cite{H-I} to the complete asymptotically flat manifold
\((\widehat M_j,\widehat g_j)\), starting from the connected outward
minimizing surface \(\Sigma_j^-\).  Since
\[
R_{\widehat g_j}\ge0,
\]
we obtain
\begin{equation}\label{eq:HI-localized}
m_H^{\mathrm{Riem}}(\Sigma_j^-)
\le
m_{\mathrm{ADM}}(\widehat M_j,\widehat g_j).
\end{equation}
Combining \eqref{eq:HI-localized} with \eqref{eq:glued-ADM-bound} gives
\[
m_H^{\mathrm{Riem}}(\Sigma_j^-)
\le
m_H^{\mathrm{Riem}}(\Sigma_j^+)+o(1).
\]
Using \eqref{eq:bondi-app-hawking-agree} on the two boundary leaves, this
becomes
\begin{equation}\label{eq:localized-ST-comparison}
m_H^{\mathrm{ST}}(\Sigma_j^-)
\le
m_H^{\mathrm{ST}}(\Sigma_j^+)+o(1).
\end{equation}
Taking limits and using \eqref{eq:outer-hawking-to-bondi}, we get
\begin{equation}\label{eq:inner-limsup-bondi}
\limsup_{j\to\infty}m_H^{\mathrm{ST}}(\Sigma_j^-)
\le
m_B^+.
\end{equation}

\smallskip
\noindent\textbf{Step 5: pass to the final horizon.}
It remains to identify the limit of the inner Hawking masses.  This is the
same horizon comparison used in the proof of Theorem~\ref{thm:SPI}.  The
proof applies
verbatim to the present finite annuli, because it uses only the fixed inner
radius \(r_j\), the late-time estimates, and the collar regularity near
\(\mathcal H_{\mathrm{final}}\).  It does not use the existence of a
complete exterior tail after the outer radius has been sent to infinity.

Thus, after choosing \(T_j\) sufficiently large for each fixed \(r_j\), the
inner leaves \(\Sigma_j^-\) satisfy
\begin{equation}\label{eq:inner-hawking-horizon-limit}
m_H^{\mathrm{ST}}(\Sigma_j^-)
\to
\sqrt{\frac{A_\infty}{16\pi}}.
\end{equation}
Combining \eqref{eq:inner-limsup-bondi} with
\eqref{eq:inner-hawking-horizon-limit} gives
\[
m_B^+
\ge
\sqrt{\frac{A_\infty}{16\pi}}.
\]

Finally, since \(S_\ast\) is a smooth cross-section of
\(\mathcal H_{\mathrm{app}}\), Proposition~\ref{prop:mott-area-law}
together with the assumption that the outermost-horizon area does not
decrease across jumps gives
\[
A_\infty\ge |S_\ast|.
\]
Therefore
\[
m_B^+
\ge
\sqrt{\frac{|S_\ast|}{16\pi}}.
\]
This proves equation \eqref{eq:bondi-penrose-Sstar} and completes the proof of
Theorem~\ref{thm:bondi-penrose}.
\end{proof}

\begin{remark}
The auxiliary extension \(E_j\) should not be identified with the physical
exterior of the spacetime outside \(\Sigma_j^+\).  The physical exterior on
a late spacelike slice may register outgoing radiation and need not be a
nonnegative-scalar-curvature Riemannian extension of the TMCF annulus.  The
role of \(E_j\) is only to supply an asymptotically flat
nonnegative-scalar-curvature end whose ADM mass is controlled by the
Brown--York, hence asymptotically by the Hawking mass of the large outer
sphere and, in turn, by the final Bondi mass.
\end{remark}

\section*{Notation, terminology, and definitions}
\addcontentsline{toc}{section}{Notation, terminology, and definitions}
\label{sec:notation}

For convenience, this section collects the principal notation, terminology,
and definitions used in the paper, organized by category. Each entry is
followed by a reference to the equation, definition, lemma, or proposition
where the object is introduced or characterized.

\subsection*{Standing conventions}

\begin{itemize}
\item Throughout, $(\widehat{\mathcal M},\gtime)$ denotes a globally
hyperbolic asymptotically flat spacetime of dimension $3+1$ satisfying the
spacetime dominant energy condition. The Lorentzian metric $\gtime$ has
signature $(-,+,+,+)$.

\item For radial null normals to a closed spacelike $2$-surface, $\ell$ and
$\underline\ell$ denote the future-directed outgoing and ingoing null
normals, respectively, normalized so that $\gtime(\ell,\underline\ell)=-2$.
When the surface is disconnected, these choices are made componentwise.

\item The symbol $\odot$ denotes the symmetrized tensor product:
$\alpha\odot\omega:=\tfrac12(\alpha\otimes\omega+\omega\otimes\alpha)$.
The symbol $:$ denotes full contraction of tensors with the relevant metric.

\item $\sharp_g$ denotes the musical isomorphism associated to a metric $g$,
mapping $1$-forms to vector fields. The corresponding flat is $\flat_g$.

\item $\slashed d,\ \slashed\nabla,\ \cancel{div},\ \slashed\Delta$ denote
the exterior derivative, Levi--Civita connection, divergence, and
Laplace--Beltrami operator on a sphere $S_{t,r}$ equipped with the induced
metric $\gamma(t,r)$. When these operators are applied on a graph $t=f(r,p)$,
the background coefficients are understood to be pulled back to the graph.

\item Unless explicitly stated otherwise, repeated horizon-area notation such
as $A(t)=|\mathcal S_t|$ denotes total area. Thus, if
$\mathcal S_t=\bigcup_i\mathcal S_t^i$ is disconnected, then
\[
A(t):=|\mathcal S_t|=\sum_i |\mathcal S_t^i|.
\]
\end{itemize}

\subsection*{1.\ Spacetime, foliation, and ambient coordinates}

\begin{description}
\item[$\widehat{\mathcal M}$] The full ambient globally hyperbolic
asymptotically flat spacetime; see Section~\ref{sub:spacetime-final-state}.

\item[$\mathcal M=(\underline T,\infty)\times M$] The late-time exterior
region used in the existence theory, with
$M:=\R^3\setminus\overline{B_{r_0}}\cong(r_0,\infty)\times S^2$;
see~\eqref{eq:metric-nonstat} and the surrounding text.

\item[$\mathcal H$] The formal limiting inner cylinder
$(\underline T,\infty)\times\{r_0\}\times S^2$ appearing in the late-time
exterior chart; see~\eqref{eq:metric-nonstat}ff. This notation is separate
from the apparent-horizon tube $\mathcal H_{\mathrm{app}}$.

\item[$t$] A globally defined time function on $\widehat{\mathcal M}$
adapted to the asymptotic rest frame, with Cauchy level sets $\Sigma_t$;
see Section~\ref{sub:spacetime-final-state}.

\item[$(t,r,p)$] Late-time exterior chart on the closure of the exterior of
the last smooth horizon piece; here $p\in S^2$. See~\eqref{eq:metric-nonstat}
and Definition~\ref{def:spi-QFS}\ref{it:spi-QFS-tail}.

\item[$M_t,\ \Sigma_t$] Time slices in the exterior chart and in
$\widehat{\mathcal M}$, respectively; see~\eqref{eq:nt-general}ff. and
Section~\ref{sub:spacetime-final-state}.

\item[$S_r,\ S_{t,r}$] Coordinate spheres
$\{r\}\times S^2\subset M_t$ and $\{t\}\times\{r\}\times S^2\subset
\mathcal M$; see~\eqref{eq:metric-nonstat}ff.

\item[$S_{f,r}$] The graph sphere
$M_f\cap\big((\underline T,\infty)\times\{r\}\times S^2\big)$; when working
on $M_f$, the subscript $f$ is often suppressed and we simply write $S_r$;
see the graph notation preceding~\eqref{eq:graph-spacelike}.

\item[$M_{r_*,R}$] The cylindrical annulus $(r_*,R]\times S^2$ for
$r_0\le r_*<R\le\infty$; see preliminaries of
Section~\ref{sub:apriori}.

\item[$\mathcal M_{r_1}$] The fixed-tail spacetime
$(\underline T,\infty)\times M_{r_1,\infty}$;
see~\eqref{eq:metric-general-prelim}ff.

\item[$s=-\log r$] The logarithmic radial variable in which the TMCF
equation becomes uniformly parabolic on truncated cylinders;
see~\eqref{eq:sdef}.

\item[$I(T_0)$] The late-time slab
$[T_0-h_{\mathrm{slab}},T_0+h_{\mathrm{slab}}]$ used in the good-gauge
construction and in the truncated-annulus existence theory; see
Definition~\ref{def:admissible-truncated}.
\end{description}

\subsection*{2.\ ADM coefficients and induced geometry of time slices}

\begin{description}
\item[$\gtime$] The spacetime metric in general ADM form
\[
\gtime
=
-(N^2-|\beta|_{g_t}^2)\,dt^2+2\,\beta\odot dt+g_t,
\quad
 g_t=(\lambda^2+|b|_\gamma^2)\,dr^2+2\,b\odot dr+\gamma;
\]
see~\eqref{eq:metric-nonstat} and~\eqref{eq:metric-general-prelim}.

\item[$N$] Spacetime lapse; positive function on $\mathcal M$.

\item[$\lambda$] Geometric foliation lapse of the $r$-level sets in
$(M_t,g_t)$; equivalently $\lambda=|\nabla^{g_t}r|^{-1}$.

\item[$\beta$] Spacetime shift $1$-form on $\mathcal M$ with no $dt$-component.
Decomposed as $\beta=\beta_r\,dr+\beta^T$ with $\beta^T$ tangent to
$S_{t,r}$; see~\eqref{eq:theta-decomp}.

\item[$b$] $1$-form on $\mathcal M$ tangent to $S_{t,r}$ (no $dt$ or $dr$
component), encoding the off-diagonal part of $g_t$.

\item[$\gamma=\gamma(t,r)$] Two-parameter family of metrics on $S^2$
realizing the induced metric on the spheres $S_{t,r}$.

\item[$\alpha$] Auxiliary scalar $\alpha:=N^2-|\beta|_{g_t}^2$; equals
$-\gtime(\partial_t,\partial_t)$ and may change sign;
see~\eqref{eq:beta-perp}ff.

\item[${\bf n}_t$] Outward unit normal to $S_{t,r}$ in $(M_t,g_t)$:
${\bf n}_t=\lambda^{-1}(\partial_r-b^{\sharp_\gamma})$;
see~\eqref{eq:nt-general}.

\item[${\bf T}_t$] Future-pointing timelike unit normal to $M_t$ in
$(\mathcal M,\gtime)$:
${\bf T}_t=N^{-1}(\partial_t-\beta^{\sharp_{g_t}})$;
see Section~\ref{sub:spacetime-final-state}ff.

\item[$\beta^\perp$] Geometric normal component of the shift,
$\beta^\perp:=\beta({\bf n}_t)$;
see~\eqref{eq:beta-perp} and~\eqref{eq:beta-perp-def-main}.

\item[$K_t$] Second fundamental form of $M_t\hookrightarrow
(\mathcal M,\gtime)$, with the ADM sign convention
$K_t(X,Y)=-\gtime(\nabla^{(4)}_X{\bf T}_t,Y)$.

\item[$H_{t,r}$] Spacelike mean curvature of $S_{t,r}$ in $(M_t,g_t)$;
see~\eqref{eq:Htr-general}.

\item[$\tr_{S_{t,r}}K_t$] Timelike mean curvature of $S_{t,r}$, equal to
the $\gamma$-trace of $K_t$ along $S_{t,r}$.

\item[${\bf H}_{t,r}$] Mean curvature vector of $S_{t,r}$ in
$(\mathcal M,\gtime)$, decomposing as
${\bf H}_{t,r}=H_{t,r}{\bf n}_t-(\tr_{S_{t,r}}K_t){\bf T}_t$,
with Lorentzian norm
$|{\bf H}_{t,r}|^2=H_{t,r}^2-(\tr_{S_{t,r}}K_t)^2$;
see~\eqref{eq:intro-H-decomp}.

\item[$Q_r(\omega),\ Q_t(\omega)$] Good-gauge tensors
\[
Q_r(\omega)=\partial_r\gamma+2\,\partial_r\beta^T\odot\omega
-\partial_r\alpha\,\omega^2,
\qquad
Q_t(\omega)=\partial_t\gamma+2\,\partial_t\beta^T\odot\omega
-\partial_t\alpha\,\omega^2;
\]
see~\eqref{eq:QrQt-gauge}.

\item[$\mathfrak a_0$] Background mean-curvature coefficient
$\mathfrak a_0:=H_{t,r}/\lambda$; see~\eqref{eq:mu0T-b0-def}.

\item[$\underline{\mathfrak a_0}(r_1),\ \overline{\mathfrak a_0}(r_1)$]
Tail-uniform lower and upper bounds for $r\,\mathfrak a_0$;
see~\eqref{eq:tail-constants-def}.
\end{description}

\subsection*{3.\ Graph quantities and induced data on $M_f$}

For $f\in C^2(M)$, $M_f:=\{(t,x)\in\mathcal M:t=f(x)\}$ is the graph
hypersurface, identified with $M$ via the graph map. Background coefficients
$N,\lambda,\beta,b,\gamma$ are usually evaluated at $t=f(r,p)$ when used on
$M_f$.

\begin{description}
\item[$M_f$] Graph hypersurface of $f$;
see~\eqref{eq:Wf}ff.

\item[$\tau=t-f(x)$] Defining function for the graph. The graph is spacelike
exactly when $d\tau$ is timelike; see~\eqref{eq:graph-spacelike}.

\item[$\mu_f$] Spacelikeness indicator
$\mu_f:=(1+\langle\beta,df\rangle_{g_t})^2-N^2|df|_{g_t}^2$; the graph
$M_f$ is spacelike iff $\mu_f>0$; see~\eqref{eq:Wf}.

\item[$N_f$] Lapse of the time function $\tau=t-f(x)$ on $M_f$:
$N_f=N/\sqrt{\mu_f}$; see~\eqref{eq:Nf}.

\item[$g_f,\ K_f$] Pullback metric on $M_f$ and, when $M_f$ is spacelike, its
second fundamental form in $(\mathcal M,\gtime)$;
see~\eqref{eq:g_f-def}--\eqref{eq:g_f-split}.

\item[$\gamma_f(r)$] Induced metric on $S_r\subset M_f$;
see~\eqref{eq:gammaf}.

\item[$\lambda_f,\ b_f$] Foliation lapse and tangential shift of the
$r$-foliation of $(M_f,g_f)$; see~\eqref{eq:bf-lambdaf}.

\item[$\mu_f^T$] Tangential spacelikeness indicator,
\[
\mu_f^T:=\det\gamma_f/\det\gamma
=(1+\langle\beta^T,\slashed df\rangle_\gamma)^2
-(N^2-(\beta^\perp)^2)|\slashed df|_\gamma^2;
\]
$\gamma_f$ is Riemannian iff $\mu_f^T>0$;
see~\eqref{eq:muTf-general}. In the good gauge this reduces to
\[
\mu_f^T
=(1+\langle\beta^T,\slashed df\rangle_\gamma)^2
-N^2|\slashed df|_\gamma^2.
\]

\item[${\bf n}_f$] Unit normal to $S_r$ in $(M_f,g_f)$:
${\bf n}_f=\lambda_f^{-1}(\partial_r-b_f^{\sharp_{\gamma_f}})$;
see~\eqref{eq:nf}.

\item[${\bf T}_f$] Future-pointing timelike unit normal to $M_f$ in
$(\mathcal M,\gtime)$, defined when $M_f$ is spacelike;
see Lemma~\ref{lem:basic-geom}.

\item[$H_{f,r}$] Spacelike mean curvature of $S_r$ in $(M_f,g_f)$;
see~\eqref{eq:Hf}.

\item[$\tr_{S_r}K_f$] Timelike mean curvature of $S_r$ on a spacelike graph
$M_f$, governing the spacelike TMCF equation; see~\eqref{eq:trKf-nonstat}.

\item[$\beta_f$] Shifted spacetime $1$-form on $M_f$,
$\beta_f:=\beta-\alpha\,df$, decomposed as
$(\beta_f)_r\,dr+\beta_f^T$; see the notation following~\eqref{eq:Nf}.

\item[$V_f$] Tangential vector field used in the good-gauge divergence
calculation,
\[
V_f=\big(1+\langle\beta^T,\slashed d f\rangle_\gamma\big)(\beta^T)^{\sharp_\gamma}
-N^2(\slashed d f)^{\sharp_\gamma};
\]
see the proof of Proposition~\ref{prop:gammab-quasilinear}.

\item[$\mathfrak a_f$] Auxiliary scalar
$\mathfrak a_f:=\frac{1}{2\lambda^2}\tr_{\gamma_f}Q_r(\slashed df)$, equal
to $H_{f,r}/\lambda_f$ on spacelike solutions of the TMCF equation;
see~\eqref{eq:hhf-def} and Lemma~\ref{lem:H-over-lambda}.
\end{description}

\subsection*{4.\ The TMCF equation and its parabolic structure}

\begin{description}
\item[TMCF foliation] A foliation $\{S_r\}_{r\ge r_0}$ of a hypersurface
$M$, not necessarily spacelike, by closed spacelike two-surfaces such that the
spacetime mean curvature vector of each leaf is tangent to $M$:
\[
{\bf H}_{S_r}\in TM.
\]
At non-null points of $M$ this is equivalent to the vanishing of the trace of
the second fundamental form of $M$ over $S_r$; in the spacelike case it is
exactly $\tr_{S_r}K=0$. See Definition~\ref{def:intro-tmcf}.

\item[Tangentially maximal hypersurface / initial data set]
A hypersurface $M$ equipped with a TMCF foliation. If $M$ is spacelike, the
induced data $(M,g,K)$ is called a tangentially maximal initial data set;
see Definition~\ref{def:intro-tmcf}.

\item[TMCF equation, spacelike geometric form]
For a spacelike graph, the condition $\tr_{S_r}K_f=0$ is equivalent to
\[
\cancel{div}_{\gamma_f}(\beta_f^T)+H_{f,r}\,\beta_f({\bf n}_f)
-\tfrac12\tr_{\gamma_f}(\partial_t\gamma_f)=0;
\]
see~\eqref{eq:TMCF-equation}, \eqref{eq:TMCF-nonlinear}, and
\eqref{eq:intro-tmcf}.

\item[Good gauge] The chart condition $b\equiv 0$, $\beta_r\equiv 0$ on
$\mathcal M$; produced on late slabs by
Appendix~\ref{app:late-slab-gauge}; see~\eqref{eq:good-gauge}.

\item[TMCF equation, good-gauge quasilinear form]
\[
f_r+\mathcal A(r,p,f,\slashed\nabla f):\slashed\nabla_\gamma^2 f
=\mathcal B(r,p,f,\slashed\nabla f),
\]
with $\mathcal A$ given by~\eqref{eq:A-explicit-gauge} and $\mathcal B$
by~\eqref{eq:B-explicit-gauge}; see
Proposition~\ref{prop:gammab-quasilinear}.

\item[Analytic good-gauge equation]
The explicit good-gauge equation makes sense under the parabolic conditions
$\mu_f^T>0$ and $\mathfrak a_f>0$, even before imposing full spacelikeness
$\mu_f>0$; see Definition~\ref{def:parabolicity-conditions}. In the good
gauge, on regions where $\mu_f^T>0$ and $\mathfrak a_f\neq0$, this analytic
equation is equivalent to the coordinate-free tangency condition
${\bf H}_{S_{f,r}}\in TM_f$. If $\mu_f>0$ also holds, then $M_f$ is spacelike
and the same condition is equivalently $\tr_{S_r}K_f=0$.

\item[Parabolically admissible] A function $f$ on
$(r_1,r_2]\times S^2$ with $\mu_f^T>0$ and $r\mathfrak a_f>0$;
see Definition~\ref{def:parabolicity-conditions}.

\item[Admissible] A parabolically admissible function with the additional
property $\mu_f>0$ (full spacelikeness);
see Definition~\ref{def:parabolicity-conditions}.

\item[$\Xi$] Background forcing
$\Xi(t,r,p):=(\lambda/N)\cdot\tr_{S_{t,r}}K_t/H_{t,r}$;
see~\eqref{eq:intro-Xi} and~\eqref{eq:Xi-background-def}.

\item[$\mathcal B_2$] Component of the lower-order term in the good-gauge
form vanishing when $\slashed df=0$; the decomposition
$\mathcal B=\Xi|_{t=f}+\mathcal B_2$ is recorded in
Remark~\ref{rem:c1c2}.

\item[$\vartheta_*$] Tail-controlled smallness constant used to guarantee
$\mu_f^T>0$ from a gradient bound and to formulate \textbf{BS-1};
see~\eqref{eq:varthetastar-implies-muT} and
Definition~\ref{def:BS1-BS2}.

\item[\textbf{BS-1}, \textbf{BS-2}] Quantitative bootstrap assumptions used
in the a priori estimates;
$|\slashed df|_\gamma<\vartheta_*/2$ and
$r\mathfrak a>\tfrac12\underline{\mathfrak a_0}$ for
\textbf{BS-1};
$\mu>0$ and $r^2|\slashed\nabla^2 f|_\gamma<1$ for \textbf{BS-2}; see
Definition~\ref{def:BS1-BS2}. Their boundary versions for a profile
$\varphi$ on a sphere are in
Definition~\ref{def:admissible-boundary-profile}.

\item[$(\varepsilon,T_0;r,R)$-\textbf{BS-1} / \textbf{BS-2} solution]
A TMCF solution on $M_{r,R}$ in a good-gauge chart, with image in the
late slab $I(T_0)$, prescribed boundary value $T_0$ on $S_R$, and
satisfying the corresponding bootstrap;
see Definition~\ref{def:admissible-truncated}.

\item[$\mathcos(r^\delta)$] Tail-controlled estimate notation: a quantity
$\mathcal G$ satisfies $\mathcal G=\mathcos(r^\delta)$ if
$|\mathcal G|\le C r^\delta$, where $C$ depends only on the fixed tail data
listed in Section~\ref{sub:apriori}; see the notation paragraph in the proof
of Theorem~\ref{thm:apriori-estimates}.
\end{description}

\subsection*{5.\ Function spaces}

\begin{description}
\item[$d_p$] Parabolic distance in the $s$-variable,
$d_p((s,x),(s',y))=d_{\gamma_{S^2}}(x,y)+|s-s'|^{1/2}$;
see~\eqref{eq:dp}.

\item[$\cz(M_{r_*,R}),\ \ct(M_{r_*,R})$]
Parabolic H\"older spaces $\mathcal C^{\alpha,\alpha/2}$ and
$\mathcal C^{2+\alpha,1+\alpha/2}$ in the logarithmic radial variable;
see Definition~\ref{def:cts}.

\item[$\ct_{-\sigma}(M_{r_*,R}),\ \cz_{-\sigma}(M_{r_*,R})$]
Weighted versions, equivalent to weight $r^{-\sigma}$ at infinity;
see Definition~\ref{def:weightedlog}.

\item[${\ct}^\sharp_{-\sigma}(M_{r_*,R})$]
Strengthened weighted tail space encoding additional tangential and mixed
radial--tangential regularity;
see Definition~\ref{def:ctsharp-main}.

\item[$\ctS_{-\tau}(\mathcal M)$] Coefficient class on exterior tails for
$\tau\in(1/2,1)$: tuples $\mathcal S=(N,\lambda,\beta,b,\gamma)$ with
weighted asymptotic flatness on every fixed tail, tail nondegeneracy of
$\lambda,N,rH_{t,r}$, and finite forcing tail integrals;
see Definition~\ref{def:ctS-nonstat}.

\item[${\ctS}^\sharp_{-\tau}(\mathcal{M})$] Strengthened tail-coefficient class controlled, for each $r_1>r_0$, by the $\sharp$-norm
$\mathcal N_\tau^\sharp(r_1)$; see
Definition~\ref{def:sharp-tail-data-main}.

\item[$\mathcal N_\tau(r_1),\ \mathcal N_\tau^\sharp(r_1)$] Tail norms
controlling the ordinary and strengthened coefficient classes on the fixed
tail $r\ge r_1$; see~\eqref{eq:ctS-nonstat-decay} and
\eqref{eq:Nsharp-def-main}.

\item[Asymptotically flat exterior spacetime of order $\tau$]
A spacetime $(\mathcal M,\gtime)$ written in general ADM form with
coefficient tuple in $\ctS_{-\tau}(\mathcal M)$;
see Definition~\ref{def:AFspacetime}.
\end{description}

\subsection*{6.\ Forcing tails and late-time decay conditions}

\begin{description}
\item[$\mathfrak X_0(r,T),\ \mathfrak X_1(r,T),\ \mathfrak X_2(r,T)$]
Pointwise forcing tails $r$-, $r$-, and $r^2$-rescaled $\sup$ over
$\{t\ge T\}\times S^2$ of $|\Xi|$, $|\slashed d\Xi|_\gamma$, and
$|\slashed\nabla^2\Xi|_\gamma$, respectively;
see~\eqref{eq:X0-def}--\eqref{eq:X2-def}.

\item[$\mathfrak X_j(r;I)$] Slab versions over a compact time interval $I$;
see~\eqref{eq:X0-slab-def}--\eqref{eq:X2-slab-def}.

\item[$G_{r_1}(T)$] Tail integral
$\int_{r_1}^\infty\mathfrak X_0(\sigma,T)/\sigma\,d\sigma$;
see~\eqref{eq:G-tail-def}.

\item[$\Phi_{1,r_1}(T),\ \Phi_{2,r_1}(T)$] Tail integrals of $\mathfrak X_1$
and $\mathfrak X_2$;
see~\eqref{eq:Phi1-def}--\eqref{eq:Phi2-def}.

\item[$\Omega_{r_1}(T)$] Pointwise tail $\sup_{r\ge r_1}\mathfrak X_0(r,T)$;
see~\eqref{eq:Omega-def}.

\item[$\Psi_{r_1}(T)$] H\"older-norm tail
$\sup_{t\ge T}\|r\Xi(t,\cdot)\|_{\cz_{-\tau}(M_{r_1,\infty})}$;
see~\eqref{eq:Psi-def}.

\item[$G_{r_1,I},\ \Phi_{j,r_1,I},\ \Omega_{r_1,I},\ \Psi_{r_1,I}$]
Slab analogues over a compact interval $I$;
see~\eqref{eq:GI-def}--\eqref{eq:PsiI-def}.

\item[$\mathfrak B^\sharp(T;r_1)$] Geometric normal-shift tail
$\sup_{t\ge T}\|\beta^\perp(t,\cdot)\|_{{\ct}^\sharp_{-\tau}(M_{r_1,\infty})}$;
see~\eqref{eq:Bsharp-def-main}.

\item[\textup{(A1)}, \textup{(A2)}, \textup{(A3)}] Late-time forcing
decay assumptions: $G_{r_1}(T),\Phi_{1,r_1}(T)\to 0$ for \textup{(A1)};
$\Phi_{2,r_1}(T),\Omega_{r_1}(T)\to 0$ for \textup{(A2)};
$\Psi_{r_1}(T)\to 0$ for \textup{(A3)};
see Definition~\ref{def:tail-assumptions}.

\item[Late-time gauge reducibility]
The condition that $\mathcal S\in{\ctS}^\sharp_{-\tau}(\mathcal M)$ and on every fixed tail $r_1>r_0$, $\mathfrak B^\sharp(T;r_1)\to 0$ as $T\to\infty$;
see Definition~\ref{def:late-gauge-decay}.
\end{description}

\subsection*{7.\ Apparent horizon and final state}

\begin{description}
\item[MOTS] A closed spacelike two-surface with vanishing future outgoing
null expansion, $\theta_{(\ell)}=0.$
See Section~\ref{sub:spacetime-final-state}.

\item[Weakly outer trapped surface] A closed spacelike two-surface satisfying
$\theta_{(\ell)}\le0$. The word ``outermost'' is used in the standard
apparent-horizon sense: no homologous weakly outer trapped surface lies outside
the given surface in the relevant Cauchy slice; see
Section~\ref{sub:spacetime-final-state}.

\item[Future black-hole-type MOTS] A MOTS satisfying, componentwise, $\theta_{(\ell)}=0$ and $\theta_{(\underline\ell)}<0,$
where $\ell$ and $\underline\ell$ are future-directed outgoing and ingoing null
normals; see Section~\ref{sub:spacetime-final-state} and
Proposition~\ref{prop:mott-area-law}.

\item[MOTT] A marginally outer trapped tube, i.e. a three-dimensional
hypersurface foliated by MOTSs; see Section~\ref{sub:spacetime-final-state}.

\item[$\mathcal S_t$] Smooth compact outermost future black-hole-type MOTS in
$\Sigma_t$, not assumed connected unless explicitly stated;
see Section~\ref{sub:spacetime-final-state}. On the final horizon piece in
the QFS chart, the sections are connected spheres
$\{t\}\times\{r_0\}\times S^2$.

\item[$A(t)=|\mathcal S_t|$] Total area of the outermost MOTS on $\Sigma_t$,
i.e. the sum of the areas of its connected components.

\item[$\mathcal H_{\mathrm{app}}=\bigcup_t\mathcal S_t$] Future
black-hole-type outermost apparent-horizon tube; assumed piecewise smooth with
finitely many jump times, each smooth piece a smooth MOTT;
see Section~\ref{sub:spacetime-final-state}.

\item[Jump times] Times at which the outermost MOTS changes discontinuously in
the chosen foliation. In the spacetime Penrose theorem, the total outermost
horizon area is assumed not to decrease across such jumps; see
Theorem~\ref{thm:SPI}.

\item[$\mathcal H_{\mathrm{final}}$] Last smooth piece of
$\mathcal H_{\mathrm{app}}$, $\mathcal H_{\mathrm{app}}\cap\{t\ge\underline T\}$;
see Section~\ref{sub:spacetime-final-state}.

\item[$A_\infty$] Limiting area
$\lim_{t\to\infty}A(t)\in(0,\infty)$ on
$\mathcal H_{\mathrm{final}}$ under the QFS hypothesis;
see Definition~\ref{def:spi-QFS}\ref{it:spi-QFS-horizon}.

\item[$M_\ast,\ S_\ast$] The asymptotically flat initial data set appearing
in the spacetime Penrose theorem and its boundary
$S_\ast:=\partial M_\ast$, assumed to be a smooth MOTS cross-section of
$\mathcal H_{\mathrm{app}}$; see Theorem~\ref{thm:SPI}.

\item[Strong final state hypothesis]
Convergence of late-time exterior coefficients to a subextremal Kerr tuple
in Boyer--Lindquist gauge in a weighted $C^5$ topology, plus $C^2$
convergence of horizon sections to a Kerr cross-section;
see Definition~\ref{def:strong-final-state}.

\item[quasi final state hypothesis (QFS)]
Three conditions: \textup{(Q1)} a late-time $C^2$ chart on the closure
of the exterior with $\mathcal H_{\mathrm{final}}=\{r=r_0\}$ and
coefficient tuple in $\ctS_{-\tau}(\mathcal M)$; \textup{(Q2)}
late-time gauge reducibility plus assumption \textup{(A1)} and either
\textup{(A2)} or \textup{(A3)}; \textup{(Q3)} area stabilization
$A(t)\to A_\infty\in(0,\infty)$;
see Definition~\ref{def:spi-QFS}.
\end{description}

\subsection*{8.\ Mass functionals and asymptotic charges}

\begin{description}

\item[$m_{ADM}$] Invariant ADM mass of the asymptotic end,
\[
m_{ADM}=\sqrt{E_{ADM}^2-|P_{ADM}|^2},
\]
which equals the ADM energy in the chosen asymptotic rest frame. In the proof
this same symbol denotes the ADM mass of the comparison tails in that rest
frame; see Section~\ref{sub:spacetime-final-state} and
Lemma~\ref{lem:ADM-compat}.

\item[$m_H^{\mathrm{ST}}(\Sigma)$] Spacetime Hawking mass of a closed
spacelike $2$-surface $\Sigma$,
\[
m_H^{\mathrm{ST}}(\Sigma)
=
\sqrt{|\Sigma|/16\pi}
\Big(1-\tfrac{1}{16\pi}\textstyle\int_\Sigma|{\bf H}_\Sigma|^2\,d\mu_\Sigma\Big);
\]
see~\eqref{eq:hawking-on-mots} and Section~\ref{sub:spacetime-final-state}.

\item[$m_H^{\mathrm{Riem}}(\Sigma)$] Riemannian Hawking mass computed in
the ambient hypersurface containing $\Sigma$,
\[
m_H^{\mathrm{Riem}}(\Sigma)
=
\sqrt{|\Sigma|/16\pi}
\Big(1-\tfrac{1}{16\pi}\textstyle\int_\Sigma H_\Sigma^2\,d\mu_\Sigma\Big);
\]
see~\eqref{eq:hawking-masses-agree}. On leaves of a spacelike TMCF foliation
the two masses agree.
\end{description}

\subsection*{9.\ Comparison tails and boundary leaves}

\begin{description}
\item[$f^{(r_1,T_0)}$] The admissible tangentially maximal tail produced by
Theorem~\ref{thm:global-existence} on the fixed exterior tail $r\ge r_1$, with
asymptotic boundary value $T_0$ at infinity; see the proof of
Theorem~\ref{thm:SPI}.

\item[$\Sigma_{r_1}^{(T_0)}$] Boundary leaf of the comparison tail
$f^{(r_1,T_0)}$ over the fixed radius $r=r_1$; see the proof of
Theorem~\ref{thm:SPI} and Corollary~\ref{cor:HI-comparison}.

\item[Outward minimizing] A boundary surface is outward minimizing if every
surface enclosing it has area at least as large. Boundary leaves of admissible
tangentially maximal tails are outward minimizing by the mean-convex foliation
calibration argument; see Lemma~\ref{lem:outerminimizing}.

\item[Huisken--Ilmanen comparison] The application of weak inverse mean
curvature flow to the nonnegative-scalar-curvature comparison tail, giving
\[
m_{ADM}\ge m_H^{\mathrm{Riem}}(\Sigma_{r_1}^{(T_0)})
=m_H^{\mathrm{ST}}(\Sigma_{r_1}^{(T_0)});
\]
see Corollary~\ref{cor:HI-comparison}.
\end{description}

\subsection*{10.\ Theorems and propositions referred to throughout}

\begin{description}
\item[Theorem~\ref{thm:global-existence}]
Existence of good-gauge TMCF tails on late-time exterior tails, with three
levels of conclusion: a solution from a sufficiently large outer sphere; a
parabolically admissible fixed inner tail under \textup{(A1)}; and an
admissible spacelike tangentially maximal tail under \textup{(A1)} and either
\textup{(A2)} or \textup{(A3)}.

\item[Theorem~\ref{thm:apriori-estimates}]
A priori $C^0$, gradient, Schauder, Hessian, and radial-slope estimates
for solutions on truncated annuli.

\item[Proposition~\ref{prop:late-good-gauge-small}]
Late good-gauge slabs with small forcing, derived from the gauge
reduction of Appendix~\ref{app:late-slab-gauge}.

\item[Proposition~\ref{prop:local-r-extension}]
Local-in-$r$ existence of TMCF solutions from a sphere, propagating the
admissibility condition inward.

\item[Proposition~\ref{prop:mott-area-law}]
Causal character and area law for smooth pieces of
$\mathcal H_{\mathrm{app}}$ under the future black-hole-type MOTT condition.

\item[Theorem~\ref{thm:SPI}]
The spacetime Penrose inequality under the quasi final state hypothesis.

\item[Lemma~\ref{lem:ADM-compat}]
Compatibility of the ADM mass of the tangentially maximal comparison tails
with the ADM mass of the original asymptotic end in the chosen rest frame.

\item[Lemma~\ref{lem:outerminimizing}]
Outward minimizing property of the boundary spheres of admissible comparison
tails.

\item[Proposition~\ref{prop:R-nonneg-tm}]
Nonnegative scalar curvature on spacelike tangentially maximal tails under the
dominant energy condition.

\item[Corollary~\ref{cor:HI-comparison}]
$m_{ADM}\ge m_H^{\mathrm{Riem}}(\Sigma_{r_1}^{(T_0)})
=m_H^{\mathrm{ST}}(\Sigma_{r_1}^{(T_0)})$ on tangentially maximal tails,
via Huisken--Ilmanen weak inverse mean curvature flow.
\end{description}

\bibliographystyle{amsplain}
\bibliography{refsPenroseTMCF}

\end{document}